\newtheorem{theorem}{Theorem}
\newtheorem{lemma}[theorem]{Lemma}
\newtheorem{corollary}[theorem]{Corollary}
\DeclarePairedDelimiterX\braket[2]{\langle}{\rangle}{#1 \delimsize\vert #2}
\newcolumntype{C}[1]{>{\centering\arraybackslash}m{#1}}
\newcommand{\abs}[1]{\left\lvert#1\right\rvert}
\newcommand{\norm}[1]{\left\lVert#1\right\rVert}
\newcommand\footnoteref[1]{\protected@xdef\@thefnmark{\ref{#1}}\@footnotemark}
\newcommand{\ketbra}[2]{\ket{#1}\!\bra{#2}}
\def\bf#1{\mathbf{#1}}
\newcommand{\eq}[1]{(\ref{eq:#1})}
\newcommand{\thm}[1]{\hyperref[thm:#1]{Theorem~\ref*{thm:#1}}}
\newcommand{\defn}[1]{\hyperref[defn:#1]{Definition~\ref*{defn:#1}}}
\newcommand{\lem}[1]{\hyperref[lem:#1]{Lemma~\ref*{lem:#1}}}
\newcommand{\prop}[1]{\hyperref[prop:#1]{Proposition~\ref*{prop:#1}}}
\newcommand{\fig}[1]{\hyperref[fig:#1]{Figure~\ref*{fig:#1}}}
\newcommand{\tab}[1]{\hyperref[tab:#1]{Table~\ref*{tab:#1}}}
\renewcommand{\sec}[1]{\hyperref[sec:#1]{Section~\ref*{sec:#1}}}
\newcommand{\app}[1]{\hyperref[app:#1]{Appendix~\ref*{app:#1}}}
\newcommand{\cor}[1]{\hyperref[cor:#1]{Corollary~\ref*{cor:#1}}}
\newcommand{\obs}[1]{\hyperref[obs:#1]{Observation~\ref*{obs:#1}}}
\newcommand{\paragraphsentence}[1]{}
\renewcommand{\paragraphsentence}[1]{\emph{#1}}
\newcommand{\arccot}{\,{\rm arccot}}
\begin{document}

\newcommand{\Google}{\affiliation{%
Google Research, Venice, CA 90291, United States}}

\newcommand{\Columbia}{\affiliation{%
Department of Chemistry, Columbia University, USA}}
\newcommand{\Toronto}{\affiliation{%
Department of Computer Science, University of Toronto, Canada M5S 1A4}}

\newcommand{\PNNL}{\affiliation{%
Pacific Northwest National Laboratory, Richland WA, USA 99354}}

\newcommand{\BI}{\affiliation{%
Quantum Lab, Boehringer Ingelheim, 55218 Ingelheim am Rhein, Germany}}

\newcommand{\BIMedChem}{\affiliation{%
Boehringer Ingelheim Pharma GmbH \& Co KG, Birkendorfer Strasse 65, 88397 Biberach, Germany}}

\newcommand{\Macquarie}{\affiliation{%
Department of Physics and Astronomy, Macquarie University, New South Wales 2109, Australia}}

\title{Efficient quantum computation of molecular forces and other energy gradients}

\author{Thomas E.~O'Brien}
\email{teobrien@google.com}
\Google

\author{Michael Streif}
\email{michael.streif@boehringer-ingelheim.com}
\BI

\author{Nicholas C.~Rubin}
\email{nickrubin@google.com}
\Google

\author{Raffaele Santagati}
\email{raffaele.santagati@boehringer-ingelheim.com}
\BI

\author{Yuan Su}
\Google

\author{William J.~Huggins}
\Google

\author{Joshua J.~Goings}
\Google

\author{Nikolaj Moll}
\BI

\author{Elica Kyoseva}
\BI

\author{Matthias Degroote}
\BI

\author{Christofer S.~Tautermann}
\BIMedChem

\author{Joonho Lee}
\Google
\Columbia

\author{Dominic W.~Berry}
\Macquarie

\author{Nathan Wiebe}
\email{nawiebe@cs.toronto.edu}
\Toronto
\PNNL

\author{Ryan Babbush}
\email{babbush@google.com}
\Google

\begin{abstract}
While most work on the quantum simulation of chemistry has focused on computing energy surfaces, a similarly important application requiring subtly different algorithms is the computation of energy derivatives. Almost all molecular properties can be expressed an energy derivative, including molecular forces, which are essential for applications such as molecular dynamics simulations. Here, we introduce new quantum algorithms for computing molecular energy derivatives with significantly lower complexity than prior methods. Under cost models appropriate for noisy-intermediate scale quantum devices we demonstrate how low rank factorizations and other tomography schemes can be optimized for energy derivative calculations. We perform numerics revealing that our techniques reduce the number of circuit repetitions required by many orders of magnitude for even modest systems. In the context of fault-tolerant algorithms, we develop new methods of estimating energy derivatives with Heisenberg limited scaling incorporating state-of-the-art techniques for block encoding fermionic operators. Our results suggest that the calculation of forces on a single nucleus may be of similar cost to estimating energies of chemical systems, but that further developments are needed for quantum computers to meaningfully assist with molecular dynamics simulations.
\end{abstract}

\maketitle

\tableofcontents

\section{Introduction}

Quantum chemistry is widely regarded as one of the most promising areas of application for quantum computers.
This is due to the relative ease of mapping the electronic structure problem onto a quantum device~\cite{Lloyd1996,Aspuru-Guzik2005,Cao2019}, its difficulty in simulating classically, and its high relevance  to industry.
Interest in such applications has been steadily increasing following initial beyond-classical quantum computing demonstrations~\cite{Google19Quantum,Zhu21Quantum} and experimental~\cite{Chen21Exponential,Egan21Fault,RyanAnderson21Realization} demonstrations of hardware near the fault-tolerant threshold for quantum error-correction~\cite{Fowler12Surface}.
A significant body of work has emerged in recent years optimizing quantum algorithms for chemistry, both for fault-tolerant quantum computers~\cite{Babbush2017LowStructure,Motta2018,Berry19Qubitization,lee2021even,Reiher2017,vonBurg2020,Kim2021Fault} and current NISQ devices~\cite{Mcclean16Theory,Grimsley19Adaptive,Huggins19Efficient,Bonet20Nearly}, including various experimental implementations~\cite{PeruzzoNC2013,OMalley2016,Kandala17Hardware,Santagati18Witnessing,Hempel18Quantum,Google20Hartree,Huggins21Unbiasing}.
Predominantly this work has focused on estimating energies of ground states of the electronic structure problem, perhaps the most natural property to extract from a quantum chemistry simulation.
However, ground state energies are not a quantity typically measured in the lab, and further processing of energy data is required to obtain properties of relevance to industry.
Thus, quantum algorithms to estimate properties other than ground state energies are of high interest as we progress towards larger NISQ or future fault-tolerant devices.

The calculation of forces (the derivative of energies with respect to nuclear positions) is a subroutine in most modern computational approaches to navigate molecular potential energy surfaces. Beyond identifying minima and other stationary points, reaction path following is also based on the determination of gradients \cite{Hratchian:2005}.
The knowledge of low energy stationary points allows the generation of conformational Boltzmann ensembles, calculation of reaction rates, and prediction of tautomer equilibria.
Forces are also an essential ingredient of molecular dynamics (MD) simulations, which are invaluable for studying macroscopic thermodynamic properties.
This covers highly diverse applications such as the description of heterogeneous processes on surfaces including catalysis \cite{Li:2021}, observation of phase transitions, such as nucleation processes for water \cite{Matsumoto:2002}, and maybe of highest importance for pharmaceutical research, the interaction of drugs with their targets in the human body \cite{Salo-Ahen:2021}.
By free energy calculations based on MD simulations, these interactions can be quantified, allowing the prediction of compound affinities \cite{Cournia:2017, Malone2021}, which are eventually linked to therapeutic doses.
Beyond that, MD simulations of drug-target systems enable the observation of conformational changes of the target.
The corresponding drug-induced active and inactive states or ligand bias is a ligand-dependent selective signaling pattern \cite{Fleetwood:2021}, which is especially useful to avoid drug-induced side effects \cite{Nivedha:2018}.
Other decisive parameters such as drug residence time can nowadays be determined through specialized MD simulations \cite{Nunes-Alves:2020}.
These powerful techniques render MD simulations one of the most broadly used and powerful tools in drug design, and hence make forces a clear target for quantum computing.

Though some research on quantum algorithms for force and gradient estimation has been performed previously, efforts to accurately cost algorithms in a fault-tolerant or NISQ setting have been limited.
The suggestion to estimate nuclear forces on a quantum device was first suggested by~\cite{KassalJCP2009}, which studied estimation via the Hellman-Feynman theorem and via the quantum gradient estimation algorithm of~\cite{Jordan_2005}.
This topic was then relatively untouched by the quantum community for a decade until it was revived by~\cite{Obrien19Calculating,Mitarai2019,Parrish2019}.
Ref.~\cite{Obrien19Calculating} studied force estimation in both a NISQ and FT framework and performed the first experimental force calculation, but only found loose asymptotic bounds of $N^{7}-N^{15}$ to estimate a single force component.
Ref.~\cite{Mitarai2019} put the mathematical formulation of force estimation in NISQ on a significantly stronger footing, combined this with gradient estimation for the optimization of variational quantum eigensolvers, but only considered the cost of estimating all $N^4$ terms in the fermionic 2-reduced density matrix to constant precision.
Ref.~\cite{Parrish2019} firmed the theoretical chemistry behind force estimation on a quantum device, presenting a detailed derivation in a Lagrangian formalism focusing on an \emph{ab initio} exciton model, and stressed the importance of including full response. The paper presented explicit formulas and circuits based on the parameter-shift rule~\cite{mitarai2018,Schuld2019} but did not provide asymptotic costs for the estimation of forces on a quantum device.
These works were followed by small experimental demonstrations of molecular dynamics simulations for various applications~\cite{Magann21Digital,Fedorov20Abinitio,Sokolov:2021}, and theoretical studies extending gradient calculations to the derivatives of energies beyond the ground state~\cite{Arimitsu21Analytic,Yalouz21Analytical,Parrish21Analytical}.
However, many possible optimizations remain for both NISQ and fault-tolerant algorithms to estimate forces.
Furthermore, little work has been done to estimate the magnitude of force operator quantities that are relevant for quantum algorithm resource requirements (e.g. induced $1$-norms).

In this work we optimize and cost methods for estimating forces and other first-order energy gradients for NISQ and fault-tolerant quantum computers.
We study the required tolerance on the error in a force estimation for molecular dynamics and geometry optimization, finding a relevant figure for accurate estimation of the pair correlation function of a moderate-sized water simulation being a root mean square (RMS) error of no more than $6.4$~mHa/Å per single derivative component.
We optimize tomography methods for NISQ quantum devices, where the relevant cost model is the number of repeated experiments required to achieve a target $2$-norm in the error vector.
We find that all methods have similar or even slightly better asymptotic costs to estimate an entire force vector to a given accuracy compared to the cost of estimating energies to a similar accuracy.
We study methods for block-encoding force operators for future fault-tolerant algorithms, and present the first investigation of the induced $1$-norm of a force operator with the system size (a critical property for fault-tolerant quantum algorithms).
We find that for state-of-the-art techniques block encodings of derivatives are at most constant or polylog factors more costly than block-encodings of the corresponding Hamiltonian, and that in practice they may be significantly cheaper.
We finally detail and cost three separate Heisenberg-limited fault-tolerant algorithms for force estimation: a semi-classical higher-order finite difference algorithm drawing energy estimates at different configurations from the quantum device, an application of the overlap estimation algorithm to gradient estimation, and an extension of the new gradient-based expectation value estimation algorithm of~\cite{Huggins2021Nearly}.
We determine asymptotic costings for these three algorithms on hydrogen chains and water clusters, and for plane wave systems in first quantization.
Surprisingly, due to difficulties to parallelize the overlap estimation algorithm and the need to perform Hamiltonian simulation as part of the reflection subroutine, we find that in some cases the finite difference method will be preferable (or at worst competitive) compared to the gradient-based expectation value estimation algorithm, which strictly asymptotically dominates the overlap estimation algorithm.
Our results suggest that while force estimation in NISQ may be somewhat cheaper than energy estimation (ignoring the overhead of needing to optimize the variational preparation of a quantum state), force estimation in FT is at best asymptotically the same cost, and in some cases significantly worse. Ultimately, we do not see a useful beyond-classical molecular dynamics simulation to be tractable in a NISQ or FT quantum computing setting, as such a calculation would require many millions of force estimations to be performed~\cite{hollingsworth2018molecular}, each of which would be at least as costly as estimating the energy of a system.
However, for applications such as geometry optimization, coupling parameter estimation or spectral prediction, which do not require such a high number of repeat derivative estimations, our methods appear feasible for early fault-tolerant devices.

\subsection{Outline}
We begin this work in Sec.~\ref{sec:energy_gradient} with a review of \emph{ab initio} electronic structure theory and how energy derivatives may be estimated as the expectation value of a derivative operator through the Hellman-Feynman theorem.
Though we focus on atomic forces (i.e. derivatives with respect to nuclei positions) for the majority of this work, we briefly detail here how these methods may be immediately extended to other first-order properties of a molecular system.
In Sec.~\ref{sec:force_operators_atomic_basis} we present a simple, calculable derivation of the force operator in second quantization for an atomic-centered basis orbital set, based on the orbital connection theory of Helgaker and Alml\"of~\cite{Helgaker1984Secondquantization}.
Then, in Sec.~\ref{sec:force_operators_plane_waves} we derive the exact form of the force operator in a plane wave basis in first- and second-quantization, and demonstrate that this operator is diagonalized by the quantum Fourier transform with aliased frequencies and the fermionic fast Fourier transform respectively.
In Sec.~\ref{sec:error_tolerance}, we estimate the error tolerance on a force vector required for geometry optimization and molecular dynamics simulations.
Based on radial distribution function calculations for a system of $216$ water molecules, we estimate that it is relevant for molecular dynamics and geometry optimization to target a RMS of the error in one force component below $0.6$~mHa/Å.

We then turn in Sec.~\ref{sec:nisq} to the optimization and costing of state tomography for the estimation of force vectors in molecular systems.
We overview a general scheme for low-cost NISQ tomography methods of arbitrary operators, and in Sec.~\ref{sec:basis_rotations} we review previous work on choices of basis rotation to implement this scheme.
In Sec.~\ref{sec:NISQ_DirectMeasurements}, we extend previous work on importance sampling to directly target the $2$-norm error in a force vector, and demonstrate the importance of parallelization of measurements where possible.
In Sec.~\ref{sec:fermionic_shadow_tomography}, we review the fermionic shadow tomography scheme of Refs.~\cite{huang2020predicting,Zhao20Fermionic}, and calculate the relevant bound on the cost of the number of measurements to estimate a constant $2$-norm error here as well.
Then, in Sec.~\ref{sec:numerics_nisq}, we find bounds on the costs of the different methods above, both analytically and numerically for hydrogen chains, and estimate the asymptotic costing of each.

A critical piece of fault-tolerant quantum computation is block encoding, so before giving fault-tolerant algorithms for force estimation we study the cost of block-encoding a force operator.
We review general block encodings in Sec.~\ref{sec:block_encodings}.
We give explicit methods to block-encode Hamiltonians and force operators in second quantization for factorized methods (Sec.~\ref{sec:block_encoding_second_quantization}), and in first quantization for plane waves (Sec.~\ref{sec:BEfirst}).
In molecular systems the cost of simulating block-encoded derivative operators is found to be at most a constant factor worse (as one may differentiate the Hamiltonian in its factorized form), while in plane-wave systems the rescaling factor of the block encodings is found to be identical and the circuit cost only $\mathrm{poly}(\log(N))$ worse.
In Sec.~\ref{sec:numerics} we study the rescaling factors for block encoding in atomic orbital bases, finding that when using sparse simulation methods the cost of simulating forces is similar to the cost of simulating Hamiltonians, but for factorized methods the cost is clearly asymptotically lower.

We finish this work in Sec.~\ref{sec:FT} by designing three new algorithms for force estimation on fault-tolerant quantum computers, and estimating their asymptotic costs on various chemical systems using results from previous sections.
In Sec.~\ref{sec:Num_diff_FD} we use higher-order difference formulas to estimate gradients with a fault-tolerant quantum computer as a subroutine to estimate the energy at different atomic configurations.
We optimize the importance sampling, choice of finite difference order and step size, and consider the efficiency of reusing the state register on the quantum device between different calls to the subroutine.
In Sec.~\ref{sec:HeisenbergLimit}, we use the overlap estimation algorithm of \cite{Knill06Optimal} to estimate gradients via the Hellman-Feynman theorem at the Heisenberg limit.
We optimize this algorithm for general block-encoded Hermitian operators by a factor $4$, and optimize importance sampling over the gradient terms.
We further optimize the choice of reflection operator using techniques from \cite{Lin20Near}, and demonstrate the ability to perfectly recycle the state register on the quantum device between calls to the amplitude estimation subroutine.
However, due to the need to reflect about the ground state (which requires Hamiltonian simulation), we find that the overlap estimation algorithm can only outperform finite difference estimation when state preparation is the dominant cost of estimation in both routines.
Finally, we implement force estimation using the new gradient estimation technique of \cite{Huggins2021Nearly}, and compare it to both previous methods.
We find it achieves a strict asymptotic improvement over the overlap estimation algorithm, which implies in turn that it may often be better than a semi-classical finite difference method.
We conclude in Sec.~\ref{sec:conclusion}, where we summarize our results, discuss the implications for the field of molecular dynamics, and suggest paths for further improvement.

\section{Energy derivative calculation in electronic structure}
\label{sec:energy_gradient}

The goal of \emph{ab initio} electronic structure theory is to solve the time-independent Schrödinger equation,
\begin{equation}
    H_{\mathrm{tot}}|\Psi \rangle = E_{\mathrm{tot}} |\Psi\rangle,
\end{equation}
for a given molecular system. In most applications, it is sufficient to consider the non-relativistic, time-independent molecular Hamiltonian $H_{\mathrm{tot}}$, as we do here. Moreover, within the context of the Born-Oppenheimer approximation, one only needs to solve for the electronic Hamiltonian $H$, given by
\begin{align} \label{eq:elec_hamil}
   H &= -\frac{1}{2} \sum_{i}^{\eta} \nabla^2_i - \sum_{i}^{\eta} \sum_{A }^{N_{\mathrm{a}}} \frac{Z_A}{ r_{iA}} + \sum_{i>j}^{\eta} \frac{1}{r_{ij}} \nonumber \\ 
   & = \sum_i^{\eta} h_i(\mathbf{r}) + \sum_{i>j}^{\eta} \frac{1}{ r_{ij}},
\end{align}
which defines the electronic Hamiltonian for a molecular system with $N_{\mathrm{a}}$ atomic nuclei and $\eta$ electrons. The first term in Eq.~\eqref{eq:elec_hamil}, $h(\mathbf{r}) = \sum_i^{\eta} h_i(\mathbf{r})$, is the one-electron term which is the sum of the electronic kinetic energy and the interaction energy of the electrons with the nuclei with $\mathbf{r}$ being a position vector in real space, while the second term is the Coulomb two-electron interaction energy. Here $\nabla^2$ is the Laplacian with respect to the electronic coordinates, $Z_A$ is the $A^\text{th}$ nuclear charge, and $r_{iA} = |\mathbf{r}_i - \mathbf{R}_A|$ and $r_{ij} = |\mathbf{r}_i - \mathbf{r}_j|$ are the Euclidean distances between the $i^{\text{th}}$ electron and the $A^\text{th}$ nucleus, and between the $i^{\text{th}}$ and $j^{\text{th}}$ electrons, respectively. 

The molecular electronic Schr\"odinger equation is a function of the electronic coordinates $\mathbf{r}$ with a parametric dependence on the nuclear coordinates $\mathbf{R}$, e.g.,
\begin{equation} \label{eq:elec_schrod}
    H(\mathbf{r}, \mathbf{R}) |\Psi(\mathbf{r}, \mathbf{R})\rangle = E(\mathbf{R}) |\Psi(\mathbf{r}, \mathbf{R})\rangle. 
\end{equation}
The total energy within the Born-Oppenheimer approximation for fixed nuclear positions is given as (for simplicity we omit the explicit position dependence), 
\begin{equation}
    E_{\mathrm{tot}} = E + V_{\mathrm{nuc}} = E + \sum_{A>B}^{N_{\mathrm{a}}} \frac{Z_A Z_B}{R_{AB}},
\end{equation}
where $V_\mathrm{nuc}$ is the nuclear-nuclear repulsion energy with $R_{AB}=|\mathbf{R}_A-\mathbf{R}_B|$ the Euclidean distance between the $A^{\text{th}}$ and $B^{\text{th}}$ nucleus. 

Although the above provides a basis for molecular quantum mechanics and is sufficient for computing molecular energies, it is desirable to also be able to compute different molecular properties. Time-independent molecular properties can be expressed as gradients of the \textit{ab initio} electronic energy $E$ with respect to a suitable perturbation. For example, the first derivative of the energy with respect to an external electric or magnetic field evaluated at zero field strength yields the electric and magnetic dipole moments, respectively. Further, the first derivative of the energy with respect to the nuclear spin (internal magnetic field) yields the hyperfine coupling constants, which are important for multiple spectroscopy techniques such as nuclear magnetic resonance (NMR). Similarly, molecular forces are computed as gradients of the energy with respect to nuclear displacements. These molecular properties are summarized in Table \ref{fig:table_properties}. Although higher-order and mixed derivatives of the energy lead to additional properties, herein we will focus our attention on first order derivatives of the total energy. Obtaining analytic formulas for these gradients is a rich research area in classical quantum chemistry and we refer the interested reader to Refs.~\cite{yamaguchi2011analytic,helgaker1992,pulay2007analytical} for more background.

\begin{table}[tb]
\centering
  \begin{tabular}{ | l | r | r | }
  \hline
 Gradient & Perturbation & Property \\ [0.5ex]
    \hline
    $d E / d \mathbf{E}$ & electric field & electric dipole moment\\ \hline
    $d E / d \mathbf{B}$ & magnetic field  & magnetic dipole moment\\ \hline
    $d E / d \mathbf{I}$ & nuclear spin & hyperfine coupling constant\\ \hline
    $d E / d \mathbf{R}$ & nuclear displacement & nuclear forces\\ \hline
 
  \end{tabular}
  \caption{Examples of properties which can be computed as gradients of the total energy.}
    \label{fig:table_properties}
\end{table}

In this work, we will analyze two separate classes of methods for computing energy gradients; computing via the Hellmann-Feynman theorem \cite{Hellmann1937,Feynman1939}, and computing via higher-order finite difference techniques.
The Hellmann-Feynman theorem relates the energy derivative to the expectation value of the derivative of the Hamiltonian with respect to that same parameter
\begin{equation}\label{eq:Hellmann-Feynman}
    \frac{dE}{dx}=\bigg\langle\Psi\bigg|\frac{dH}{dx}\bigg|\Psi\bigg\rangle.
\end{equation}
Here, $\Psi$ is a normalized eigenstate of the Hamiltonian $H$, and the lower case $x$ represents a general parameter with respect to which derivatives are taken (e.g. a single nuclear coordinate $R_i$, an electric field $\mathbf{E}$, or another quantity in Table~\ref{fig:table_properties}).
In practice, the process of calculating the correct total derivative of the Hamiltonian $H$ can be challenging.
All explicit and implicit dependencies for the derivative have to be accounted for.
In the remainder of this section we detail the analytic form of these operators in second quantized atomic-centered basis sets, and in arbitrary plane wave basis sets.
However, neither of these methods are necessary to implement finite difference calculations.

\subsection{Force operators in second quantization for atomic-centered basis orbitals}\label{sec:force_operators_atomic_basis}

To obtain the force operators the Schrödinger equation, Eq.~\eqref{eq:elec_schrod}, needs to be solved. From the atomic-centered basis orbitals (AO) the Hartree-Fock approximation is typically invoked to obtain first a set of molecular orbitals (MO). However, these MOs yield no analytic form and depend on the set of AOs. Therefore, it is not straight forward to calculate a total derivative operator $\frac{dH}{dx}$ to allow force calculations through Eq.~\eqref{eq:Hellmann-Feynman}. However, through the relations of the AOs and MOs to each other, the derivatives can be calculated by the orbital connection theory of Helgaker~\cite{Helgaker1984Secondquantization}.

A Hamiltonian represented on a quantum computer is conceptually different from the Hamiltonian on a classical computer. The overlap integrals are all pre-computed beforehand in the given MO basis. The wavefunction on the quantum computer only gives the coefficients of all possible determinants. On a classical computer the MO basis is typically part of the wavefunction and not part of the Hamiltonian itself.

We use the following notational conventions. Lower case italics $\{p, q, r, s\}$ index general (either occupied or virtual) are used for the MOs. Lower case Greek letters $\{\mu, \nu, \lambda, \sigma\}$ index are used for the AOs. To distinguish vectors and tensors from their elements, they will be written in a bold typeface.

After a Hartree-Fock computation, the $\eta$-electron wave function is represented as a single Slater determinant, that is, an anti-symmetric product of spin orbitals $\{\phi_p(\mathbf{r})\}$. These spin orbitals are discretized over a set of basis functions $\{\chi_{\mu}(\mathbf{r})\}$, commonly Gaussian atomic orbitals or plane waves. The spin orbitals are expanded as
\begin{equation}
    \phi_p(\mathbf{r}) = \sum_{\mu} C_{\mu p} \chi_{\mu} (\mathbf{r})
\end{equation}
where $C_{\mu p}$ denotes an element of the molecular orbital (MO) coefficient matrix. Without loss of generality, we will only consider real-valued MO coefficients. The electronic Hamiltonian from Eq.~\eqref{eq:elec_hamil} can be cast in matrix form in the atomic orbital basis, with one-body integrals represented as
\begin{equation}
    h_{\mu\nu} = \int d \mathbf{r}_1 ~\chi^{*}_{\mu} (\mathbf{r}_1)h(\mathbf{r}_1) \chi_{\nu}(\mathbf{r}_1)  = \langle \chi_{\mu} | h | \chi_{\nu} \rangle,
\end{equation}
and two-body integrals
\begin{equation}
    g_{\mu\nu\lambda\sigma} = \iint d \mathbf{r}_1  d \mathbf{r}_2 ~ \chi^{*}_{\mu} (\mathbf{r}_1) \chi^*_{\lambda}(\mathbf{r}_2) \frac{1}{\mathbf{r}_{12}} \chi_{\nu}(\mathbf{r}_1) \chi_{\sigma}(\mathbf{r}_2) = \langle \chi_{\mu} \chi_{\lambda}| \frac{1}{\mathbf{r}_{12}} | \chi_{\nu} \chi_{\sigma} \rangle. 
\end{equation}
In the general case, the set of AO basis functions is not orthogonal. It is therefore necessary to consider their overlap matrix, 
\begin{equation}
    S_{\mu\nu} = \int d \mathbf{r}_1 ~\chi^{*}_{\mu} (\mathbf{r}_1)\chi_{\nu}(\mathbf{r}_1) = \langle \chi_{\mu} | \chi_{\nu} \rangle. 
\end{equation}
These three integrals and their total derivatives (the so-called ``skeleton'' or ``core'' derivative integrals) are the fundamental building blocks of the molecular gradients. Expressions for the total derivatives of these integrals with respect to an arbitrary parameter $x$ have been derived elsewhere and may be easily computed with most electronic structure software packages.

In the orthonormal MO basis, it is useful to introduce the \textit{second quantization} formalism, which is developed in terms of fermionic creation (annihilation) operators $a_{p}^{\dag}$ $(a_{q})$ that satisfy the anti-commutation relations, $\{a_{p}^{\dag}, a_{q}^{\dag}\} = \{a_{p}, a_{q}\} = 0, \, \text{and} \, \{a_{p}^{\dag}, a_{q}\} = \delta_{pq}$. In second quantization the electronic structure Hamiltonian Eq.~\eqref{eq:elec_hamil} in the MO basis is then given by,
\begin{equation}\label{eq:second_quant_H}
    H= \sum_{pq} h_{pq}a_{p}^{\dag}a_{q} + \sum_{pqrs} g_{pqrs}  a^{\dag}_{p} a^{\dag}_{r}a_{s}a_{q},
\end{equation}
with one- and two-body terms in the MO basis
\begin{equation}
    \begin{alignedat}{1}
    h_{pq} &= \sum_{\mu\nu} C_{\mu p} C_{\nu q} h_{\mu\nu}, \\
    g_{pqrs} &= \sum_{\mu\nu\lambda\sigma} C_{\mu p} C_{\nu q} C_{\lambda r} C_{\sigma s} g_{\mu\nu\lambda\sigma}.
    \end{alignedat}
\end{equation}
At times it is useful to consider the overlap matrix also in the MO basis, which is given by 
\begin{equation}
    S_{pq} = \sum_{\mu\nu} C_{\mu p} C_{\nu q} S_{\mu \nu},
\end{equation}
where the use of lower case italic and Greek subscripts distinguishes between the MO and AO representations, respectively. We note that the typical overlap matrix relation $S_{pq}=\delta_{pq}$ only holds at the reference configuration.

For the molecular electronic Hamiltonian, the energy is given by 
\begin{equation}
    E = \sum_{pq} \gamma_{pq} \sum_{\mu \nu} C_{\mu p} C_{\nu q} h_{\mu \nu} + \sum_{p q r s} \Gamma_{p q r s} \sum_{\mu \nu \lambda \sigma} C_{\mu p} C_{\nu q} C_{\lambda r} C_{\sigma s} g_{\mu \nu \lambda \sigma}\,,
\end{equation}
where $\gamma_{pq}$ and $\Gamma_{p q r s}$ are the matrix elements of the one- and two-body reduced density matrices (RDMs) respectively.

Energies from \emph{ab initio} calculations depend on several parameters: the one- and two-body AO integrals $\mathbf{h}$ and $\mathbf{g}$, the molecular orbital rotation matrix $\boldsymbol{\Theta}$, the set of determinant amplitudes $\mathbf{c}$, and any other parameters, which we denote as $\boldsymbol{\Omega}$. Given this, and using the chain rule, a general first derivative of the energy $E$ with respect to from any \emph{ab initio} calculation can be written as
\begin{equation} \label{eq:total_derivative}
    \frac{dE}{dx} = \frac{\partial E}{\partial x} + \frac{\partial E}{\partial \mathbf{h}}
    \frac{d\mathbf{h}}{dx} 
    + \frac{\partial E}{\partial \mathbf{g}} \frac{d\mathbf{g}}{dx}  + \frac{\partial E}{\partial \boldsymbol{\Theta}} \frac{d\boldsymbol{\Theta}}{dx} + \frac{\partial E}{\partial \mathbf{S}}\frac{d\mathbf{S}}{dx} +  \frac{\partial E}{\partial \mathbf{c}}\frac{d\mathbf{c}}{dx} + \frac{\partial E}{\partial \boldsymbol{\Omega}}\frac{d\boldsymbol{\Omega}}{dx}.
\end{equation}
The remaining challenge is to fill in explicit expressions for the above elements. As the exact energy is independent of the orbital rotational parameters $\boldsymbol{\Theta}$ and CI coefficients $\mathbf{c}$, the corresponding partial derivatives are identically zero,
\begin{align}
    \frac{\partial E}{\partial \boldsymbol{\Theta}} =
   \frac{\partial E}{\partial \mathbf{c}} \equiv 0.
\end{align}

Several of the partial derivatives of the exact energy are trivially evaluated
\begin{align}
   \left(\frac{\partial E}{\partial \mathbf{h}}\right)_{\mu\nu} =  \sum_{pq} \gamma_{pq} C_{\mu p} C_{\nu q},\qquad   \left(\frac{\partial E}{\partial \mathbf{g}}\right)_{\mu\nu\lambda\sigma} =  \sum_{pqrs}\Gamma_{pqrs} C_{\mu p} C_{\nu q} C_{\lambda r} C_{\sigma s}  \, .
\end{align}

Because there is no dependence on other parameters, e.g. $\boldsymbol{\Omega}$, the only remaining partial derivative of the energy is the one with respect to the overlap of the AO basis functions, 
\begin{equation} \label{eq:energy_grad}
   \frac{dE}{dx} = \sum_{pq}\sum_{\mu\nu} \gamma_{pq} C_{\mu p} C_{\nu q} \frac{d h_{\mu\nu}}{dx} + \sum_{pqrs}\sum_{\mu\nu\lambda\sigma} \Gamma_{pqrs} C_{\mu p} C_{\nu q} C_{\lambda r} C_{\sigma s} \frac{d g_{\mu\nu\lambda\sigma}}{dx} + \frac{\partial E}{\partial \mathbf{S}} \frac{d\mathbf{S}}{dx}\,.
\end{equation}
where the only terms that depend on $\mathbf{S}$ are the MO coefficients $\mathbf{C}$. 

With the density matrices $\gamma_{pq}$ and $\Gamma_{pqrs}$being given, the first two terms of Eq.~\eqref{eq:energy_grad} are easy to evaluate: they require the evaluation of atomic orbital {\em core} derivatives. 
The last term of Eq.~\eqref{eq:energy_grad} is a little more involved as we need to find an expression for $\partial E/ \partial\mathbf{S}$ in terms of the one- and two-body reduced density matrices. The core derivative overlap integrals $d\mathbf{S}/dx$ can be computed by most electronic structure packages. We obtain for the last term
\begin{equation}\label{eq:energy_weighted_dm}
    \frac{\partial E}{\partial \mathbf{S}}\frac{d \mathbf{S}}{d x} = \sum_{\eta \zeta} \sum_{pq} \gamma_{pq} \sum_{\mu \nu} h_{\mu \nu} \frac{\partial}{S_{\eta \zeta}} \left(C_{\mu p} C_{\nu q}\right) \frac{d S_{\eta \zeta}}{d x} + \sum_{\eta \zeta} \sum_{p q r s} \Gamma_{p q r s} \sum_{\mu \nu \lambda \sigma} g_{\mu \nu \lambda \sigma} \frac{\partial}{S_{\eta \zeta}} \left(C_{\mu p} C_{\nu q} C_{\lambda r} C_{\sigma s}\right) \frac{d S_{\eta \zeta}}{d x}
\end{equation}
The quantities in the above expression depend on the MO coefficients $\mathbf{C}$. Because the MO coefficients depend on $\mathbf{S}$ we need to derive the explicit expressions for $\partial \mathbf{C} / \partial \mathbf{S}$. The step-by-step derivations are presented in Appendix~\ref{app:mo_ao}, and we find 
\begin{equation}
    \frac{\partial C_{\mu p}}{\partial S_{\lambda \sigma}} \frac{dS_{\lambda \sigma}}{dx} = -\frac{1}{2} \sum_{q} C_{\mu q} C_{\lambda q} C_{\sigma p} \frac{dS_{\lambda \sigma}}{dx}.
\end{equation}
With these, after a derivation presented in Appendix~\ref{app:mo_ao}, we find for derivatives of the one- and two-body terms with respect to the overlap matrix $\mathbf{S}$ 
\begin{equation}
        \frac{\partial E}{\partial \mathbf{S}}\frac{d \mathbf{S}}{d x} = -\sum_{pqm} \gamma_{qm} h_{pm} \frac{dS_{pq}}{dx} - 2 \sum_{pq}\sum_{rst} \Gamma_{qrst} g_{prst} \frac{dS_{pq}}{dx}
\end{equation}

The final expression for the energy derivative, after reindexing, is given by
\begin{equation}
    \label{eq:gradient_general}
   \frac{dE}{dx} = \left(\sum_{pq} \gamma_{pq} \frac{dh_{pq}}{dx} + \sum_{pqrs} \Gamma_{pqrs} \frac{dg_{pqrs}}{dx} \right)
   - \left(\sum_{pqm} \gamma_{pq} h_{mq} \frac{dS_{mp}}{dx} + 2 \sum_{pqrst}\Gamma_{pqrs} g_{tqrs} \frac{dS_{tp}}{dx}\right)\,.
\end{equation}

To calculate the force on the $A^{\text{th}}$ nucleus, $\mathbf{F}_A = -\frac{dE}{d\mathbf{R}_A}$, we use the above expression to find the energy gradient at nuclear position $\mathbf{R}_A$.
Following the Hellmann-Feynman theorem Eq.~\eqref{eq:Hellmann-Feynman}, we then convert the problem of calculating the energy gradient to the problem of calculating the expectation value of a derivative operator. We find that in second quantization the derivative operator is given by
\begin{equation}\label{eq:der_operator}
    \frac{dH}{d\mathbf{R}_A}= \sum_{pq}a^{\dag}_pa_q\left[\frac{dh_{pq}}{d\mathbf{R}_A}-\sum_mh_{mq}\frac{dS_{mp}}{d\mathbf{R}_A}\right]+\sum_{pqrs}a^{\dag}_pa^{\dag}_r a_qa_s\left[\frac{dg_{pqrs}}{d\mathbf{R}_A}-2\sum_{t}g_{tqrs}\frac{dS_{tp}}{d\mathbf{R}_A}\right] \, .
\end{equation}
In this equation, the second terms in each bracket which include the derivative of the overlap matrix $\mathbf{S}$ correspond to the Pulay force \cite{Helgaker1984Secondquantization}.
For later reference, we write coefficients of this operator in the same form as the Hamiltonian
\begin{align}
    T^{(F_A)}_{pq}&=\left[
    \frac{dh_{pq}}{d\mathbf{R}_A}
    -\frac12\sum_m(h_{mq}\frac{dS_{pm}}{d\mathbf{R}_A}
    +h_{pm}\frac{dS_{mq}}{d\mathbf{R}_A})
\right],\\
V_{pqrs}^{(F_A)}&=\left[\frac{dg_{pqrs}}{d\mathbf{R}_A}
-\frac12\sum_{t}(g_{tqrs}\frac{dS_{pt}}{d\mathbf{R}_A}
+g_{ptrs}\frac{dS_{tq}}{d\mathbf{R}_A}
+g_{pqts}\frac{dS_{rt}}{d\mathbf{R}_A}
+g_{pqrt}\frac{dS_{ts}}{d\mathbf{R}_A})\right].
\label{eq:force_coefficients_def}
\end{align}

\subsection{Force operators in plane wave bases}\label{sec:force_operators_plane_waves}

Plane waves are one of the most common basis sets used to model condensed matter systems. They are a natural basis for periodic systems and are independent of the atomic positions. However, their drawback is that many plane waves are typically needed to describe the wavefunctions accurately. When defined on a cubic reciprocal lattice, the plane wave basis functions take the form,
\begin{equation}
\phi_\bf{p}\left(\mathbf{r}\right) = \sqrt{\frac{1}{\Omega}} e^{-i \, \mathbf{k}_{\bf{p}} \cdot \mathbf{r}} \, ,
\end{equation}
where $\Omega$ is the computational cell volume and the reciprocal lattice vector in three dimensions is defined as
\begin{equation}
\label{eq:G}
\mathbf{k}_\bf{p} = \frac{2 \pi \mathbf{p}}{\Omega^{1/3}} \, , \qquad \qquad
\mathbf{p} \in G \, , \qquad \qquad 
G = \left[-\frac{N^{1/3}-1}{2},\frac{N^{1/3}-1}{2}\right]^3 \subset \mathbb{Z}^3 \, ,
\end{equation}
with $N$ being the number of plane waves. The molecular integrals can be evaluated analytically for the case of plane wave basis functions, leading to the following representation of the second-quantized electronic structure Hamiltonian in Eq.~\eqref{eq:second_quant_H},
\begin{equation}
    H=\underbrace{\sum_{\bf{p}} \frac{|\mathbf{k}_{\bf{p}}|^{2}}{2}a_{\bf{p}}^{\dag}a_{\bf{p}}
    -\frac{4\pi}{\Omega}\sum_{\bf{p\neq q}}\sum_{ A}^{\eta}\left(Z_A\frac{e^{i\mathbf{k}_{\bf{q-p}}\cdot \mathbf{R}_A}}{\left|\mathbf{k}_{\bf{q-p}}\right|^2}\right)a^{\dag}_{\bf{p}} a_{\bf{q}}}_{\text{one-electron term}}
    \underbrace{+\frac{2\pi}{\Omega}\sum_{\bf{p\neq q,s\neq 0}}\frac{1}{\left|\mathbf{k}_{\bf{s}}\right|^2} a^{\dag}_{\bf{p}} a^{\dag}_{\bf{q}} a_{\bf{q+s}}a_{\bf{p-s}}}_{\text{two-electron term}}.\label{eq:second_quant_PW_ham}
\end{equation}
Here and in what follows, we have omitted the electron spin for simplicity. An equivalent expression in first quantization can be written down \cite{BabbushContinuum},
\begin{align}
    H =&\underbrace{\sum_{i=1}^{\eta}\sum_{\mathbf{p}\in G} \frac{\left | \mathbf{k}_{\bf{p}}\right|^2}{2} \ket{\mathbf{p}}\!\bra{\mathbf{p}}_i
   -\frac{4\pi}{\Omega}\sum_{A=1}^{N_{\mathrm{a}}}\sum_{i=1}^{\eta}\sum_{\mathbf{p}\neq \mathbf{q}\in G}\left(Z_A\frac{e^{i\mathbf{k}_{\mathbf{q-p}}\cdot \mathbf{R}_A}}{\left|\mathbf{{k}_{\mathbf{q-p}}}\right|^2}\right)\ket{\mathbf{p}}\!\bra{\mathbf{q}}_i}_{\text{one-electron term}} \\
    &\underbrace{+\frac{2 \pi}{\Omega} \sum_{i,j=1}^{\eta}\sum_{\mathbf{p,q}\in G} \sum_{\substack{\bf{s}\in G_0\\(\mathbf{p+s})\in G\\(\mathbf{q-s})\in G}}\frac{1}{\left| \mathbf{k}_{\bf{s}} \right|^2} \ket{\mathbf{p + s}}\!\bra{\mathbf{p}}_i \ket{\mathbf{q-s}}\!\bra{\mathbf{q}}_j}_{\text{two-electron term}},
\label{eq:first_quant_ham}
\end{align}
where $\ket{\mathbf{p}}\!\bra{\mathbf{q}}_j$ is a shorthand for $I_1\otimes\cdots\otimes\ket{\mathbf{p}}\!\bra{\mathbf{q}}_j\otimes\cdots\otimes I_{\eta}$ and $G_0$ is $G$ from Eq.~\eqref{eq:G} excluding the zero mode. 

While a number of papers have analyzed the viability of quantum algorithms for simulating chemistry in second quantization with plane waves~\cite{BabbushLow,BabbushSpectra,Kivlichan2019,Low2018,Su2020}, that approach faces some significant challenges. In particular, in second quantization the number of qubits required scales as the number of plane waves. This is a problem because often hundreds of thousands of plane waves might be required to obtain a suitable wavefunction accuracy. However, there have been proposals for fault-tolerant algorithms using plane waves in first quantization~\cite{BabbushContinuum, su2021fault}. In first quantization the number of qubits required scales only as the logarithm of the number of plane waves $N$ and linearly in the number of electrons, $\eta$. Algorithms have been demonstrated \cite{BabbushContinuum} for time-evolution or state preparation of molecular systems that scale only as
\begin{equation}
    \widetilde{\cal O}\left(\eta^{3} N^{1/3} \Omega^{-1/3}\right) \, .
\end{equation}
Due to the sublinear dependence on $N$, with these approaches one can conceivably perform simulations with millions of plane waves. 

An additional advantage of the plane wave basis is that the overlap matrix elements in the plane wave representation are reduced to 
\begin{equation}
    S_{\bf{pq}} = \frac{1}{\Omega}\int_{\Omega} d\bf{r} e^{i \bf{k_{q-p}} \cdot \bf{r}} = \delta_{\bf{pq}},
\end{equation}
and thus the overlap matrix contributions to the derivative operator from Eq.~\eqref{eq:der_operator} are identically zero. This suggests that representing the electronic structure Hamiltonian in first-quantized plane waves basis is a promising avenue for calculating energy derivatives of chemical systems.

In a plane wave basis, the only dependence of $H$ on the nuclear positions $\bf{R} _A$ is in the one-body term, which implies that (as expected for a non-atomic centered basis set) the force operator in plane waves is a strictly one-body operator.
(The same is true for other first-order derivatives that do not affect the electron-electron Coulomb, such as an applied electric or magnetic field.)
This operator may be further simply diagonalized by the fermionic fast Fourier transform (FFFT)~\cite{Verstraete2009,Ferris2014,BabbushLow}, in a similar manner to the potential term of the original Hamiltonian.
This is simplest to demonstrate in second quantization, so we will perform the calculation there first and then transform to our target first-quantized form.
Differentiating Eq.~\eqref{eq:second_quant_PW_ham} with respect to the $A^{\text{th}}$ nuclear co-ordinate $\mathbf{R}_A$ gives us
\begin{equation}
    \frac{d H}{d \mathbf{R}_A} = -\frac{4\pi i Z_A}{\Omega}\sum_{\bf{p\neq q}} \frac{\mathbf{k}_{\bf{q-p}} \, e^{i\bf{k_{q-p}}\cdot \bf{R}_A}}{\left| \bf{k_{q-p}}\right|^2} a^{\dag}_{\bf{p}} a_{\bf{q}} = \sum_{\bf{p \neq q}} f\left(\bf{q-p},\, A\right) e^{i\bf{k_{q-p}}\cdot \bf{R}_A} a_{\bf{p}}^\dagger a_{\bf{q}}, \qquad f\left(\bf{s},\, A\right) = -\frac{4 \pi i Z_A \bf{k}_{\bf{s}} }{\Omega \left| \bf{k_{s}}\right|^2} \,.
\end{equation}
Note that this is a vector-valued derivative, here $\bf{R}_A$ is the $3$-dimensional nuclear position vector (individual components of this vector may be obtained by taking individual components of the wavevector $\bf{k_{s}}$ in $f(\bf{s},A)$).
In second quantization, the FFFT performs the following single-particle rotation,
\begin{equation}
a^\dagger_{\bf{s}} = {\rm FFFT}^\dagger c^\dagger_{\bf{s}} \, {\rm FFFT} =  \sqrt{\frac{1}{N}}  \sum_{\bf{p}} c^\dagger_{\bf{p}} e^{- i \,\bf{k_s} \cdot \bf{r_p}},
\quad \quad \quad 
a_\bf{s} ={\rm FFFT}^\dagger c_\bf{s} \, {\rm FFFT} =  \sqrt{\frac{1}{N}}  \sum_{\bf{p}} c_{\bf{p}} e^{i \,\bf{k_s} \cdot \bf{r_p}}
\end{equation}
where $\bf{r_p} = \bf{p} (\Omega / N)^{1/3}$.
Under this transformation, the gradient of the electronic structure Hamiltonian becomes
\begin{align}
\frac{d H}{d \bf{R}_A} & = \sum_{\substack{\bf{p \neq q}}} f\left(\bf{q-p},\, A\right) e^{i \, \bf{k_{q-p}} \cdot \bf{R}_A} a^\dagger_{\bf{p}} a_{\bf{q}} \nonumber \\
& =  \sum_{\substack{\bf{p \neq q}}}f\left(\bf{q-p},\, A\right)  e^{i \, \bf{k_{q-p}} \cdot \bf{R}_A}
\left(\sqrt{\frac{1}{N}} \sum_{\bf{p}'} c_{\bf{p}'}^\dagger e^{- i \, \bf{k_p} \cdot \bf{r_{p'}}}\right)
\left(\sqrt{\frac{1}{N}} \sum_{\bf{q}'} c_{\bf{q}'} e^{i \, \bf{k_q} \cdot \bf{r_{q'}}}\right)\nonumber\\
& =  \frac{1}{N} \sum_{\substack{\bf{p \neq q}}} f\left(\bf{q-p},\, A\right)  e^{i \, \bf{k_{q-p}} \cdot \bf{R}_A} \sum_{\bf{p', q'}} c_{\bf{p}'}^\dagger c_{\bf{q}'} e^{i \, \bf{k_q} \cdot \bf{r_{q'- p'}} } e^{- i \,\bf{k_{p-q}} \cdot \bf{r_{p'}}}\nonumber\\
& =  \frac{1}{N} \sum_{\substack{\bf{p', q'}}} \sum_{\substack{\bf{p \neq q}}} f\left(\bf{q-p},\, A\right) e^{i \, \bf{k_{q-p}} \cdot \left(\bf{R}_A + \bf{r_{p'}}\right)}
\left(c_{\bf{p'}}^\dagger c_{\bf{q}'} e^{i \, \bf{k_q} \cdot \bf{r_{q'-p'}}} \right).
\end{align}
Recognizing that $(\bf{q - p})$ spans the full set of momentum vectors in our system due to aliasing, we can replace the sum over $\bf{q \neq p}$ and the indices $\bf{q-p}$ and $\bf{q}$ with a sum over $\bf{s} \neq 0$ and $\bf{p}$. Following this reindexing, our gradient operator diagonalizes immediately,
\begin{align}
\frac{d H}{d \bf{R}_A}  & =  \frac{1}{N} \sum_{\substack{\bf{p', q'}}}  \left(\sum_{\substack{\bf{s} \neq 0}} f\left(\bf{s},\, A\right)e^{i \, \bf{k_{S}} \cdot \left(\bf{R}_A + \bf{r_{p'}}\right)}\right)
\left( c_{\bf{p'}}^\dagger c_{\bf{q'}}\sum_{\bf{q}} e^{i \, \bf{k_q} \cdot \bf{r_{q'-p'}}} \right)\nonumber\\
& =   \sum_\bf{p} \left(\sum_{\substack{\bf{s} \neq 0}} f\left(\bf{s}, \, A\right) e^{i \, \bf{k_s} \cdot \left(\bf{R}_A + \bf{r_{p}}\right)}\right) c_{\bf{p}}^\dagger c_{\bf{p}} \nonumber\\
& =  - \frac{4 \pi i Z_A}{\Omega} \sum_{\bf{p}} \sum_{\substack{\bf{s} \neq 0}}  \frac{\bf{k_s} \, e^{i \, \bf{k_{s}} \cdot \left(\bf{R}_A + \bf{r_{p}}\right)}}{\left| \bf{k_s} \right|^2} c_{\bf{p}}^\dagger c_{\bf{p}}
\label{eq:pwd_u}
\end{align}
where we have used the fact that the summation grouped on the right side of the first equation is equal to zero unless $\bf{p' = q'}$. This is because the negative modes of $\bf{k_q}$ will have exactly the opposite phase as the positive modes of $\bf{k_q}$.

We now transform the derivative operator into a first-quantized representation.
In first quantization, we store our wavefunction by having a computational basis that encodes configurations of the electrons in $N$ basis functions such that a configuration is specified as $\ket{\phi_1 \phi_2 \cdots \phi_{\eta}}$ where each $\phi_j$ encodes the index of an occupied basis function.
Each $\phi_j$ may be specified in binary, making the space complexity only ${\cal O}(\eta \log N)$.
We can translate Eq.~\eqref{eq:pwd_u} into first quantization to give
\begin{align}
\label{eq:force_plane_wave_first}
     \frac{d H}{d \bf{R}_A} & = -\frac{4\pi i Z_A}{\Omega} \sum_{i=1}^{\eta} \sum_{\substack{\bf{p,q} \in G \\ \bf{p\neq q}}} \frac{\bf{k_{q-p}} \, e^{i\bf{k_{q-p}}\cdot \bf{R}_A}}{\left| \bf{k_{q-p}}\right|^2} \ket{\bf{p}}\!\bra{\bf{q}}_i 
      = {\rm QFT} \left( - \frac{4 \pi i Z_A}{\Omega} \sum_{i=1}^{\eta} \sum_{\bf{p,s} \in G_0} \frac{\bf{k_s} \, e^{i \, \bf{k_s} \cdot \left(\bf{R}_A - \bf{r_{p}}\right)}}{\left| \bf{k_s} \right|^2} \ket{\bf{p}}\!\bra{\bf{p}}_i \right){\rm QFT}^\dagger
\end{align}
where the QFT is the quantum Fourier transform with aliased frequencies (the first quantized version of the FFFT from Eq.~\eqref{eq:pwd_u}). The QFT can be implemented with Toffoli gate complexity $\widetilde{\cal O}(\eta)$~\cite{BabbushLow}.
Note that this guarantees that all force operators are mutually diagonal under the FFFT/QFT, which in turn implies that all force operators commute.

\subsection{Error tolerance for applications}
\label{sec:error_tolerance}

The most widely-used energy derivatives of a molecular system are nuclear forces.
The first application we consider is geometry optimization, where nuclear derivatives are used to find the geometry of the molecule with the lowest energy on the potential energy surface.
The second application is molecular dynamics, where the nuclear positions are propagated through time by a classical differential equation within the Born-Oppenheimer approximation.
At each time, the forces on the nuclei determine their next position. Both of these applications rely on the nuclear derivatives of the energy to repeatedly update the positions of the nuclei.
This is a process where a small error in each step can quickly accumulate. The tolerable error on the forces is an important parameter in the scaling of the quantum algorithms to calculate them.
In the next two subsections we investigate the error level that is acceptable.

\subsubsection{Geometry optimization}

The error tolerance of the forces for the geometric relaxation of a structure depends strongly on the system.
The geometries of systems with a rather steep potential energy surface can be determined with relative low accuracy of the forces.
For example, Gaussian sets the default thresholds for convergence of the maximum force to $0.9$~mHa/Å and the RMS of the error of single force component to $0.6$~mHa/Å~\cite{g16}.
However, for systems where forces are smaller because the potential energy surface is shallow, typically the geometries need to be determined by relaxing the atomic positions until the forces are one magnitude smaller~\cite{gross_bond-order_2012}. 

\subsubsection{Molecular dynamics}

\begin{figure}[tb]
    \centering
	\includegraphics[width=0.6\textwidth]{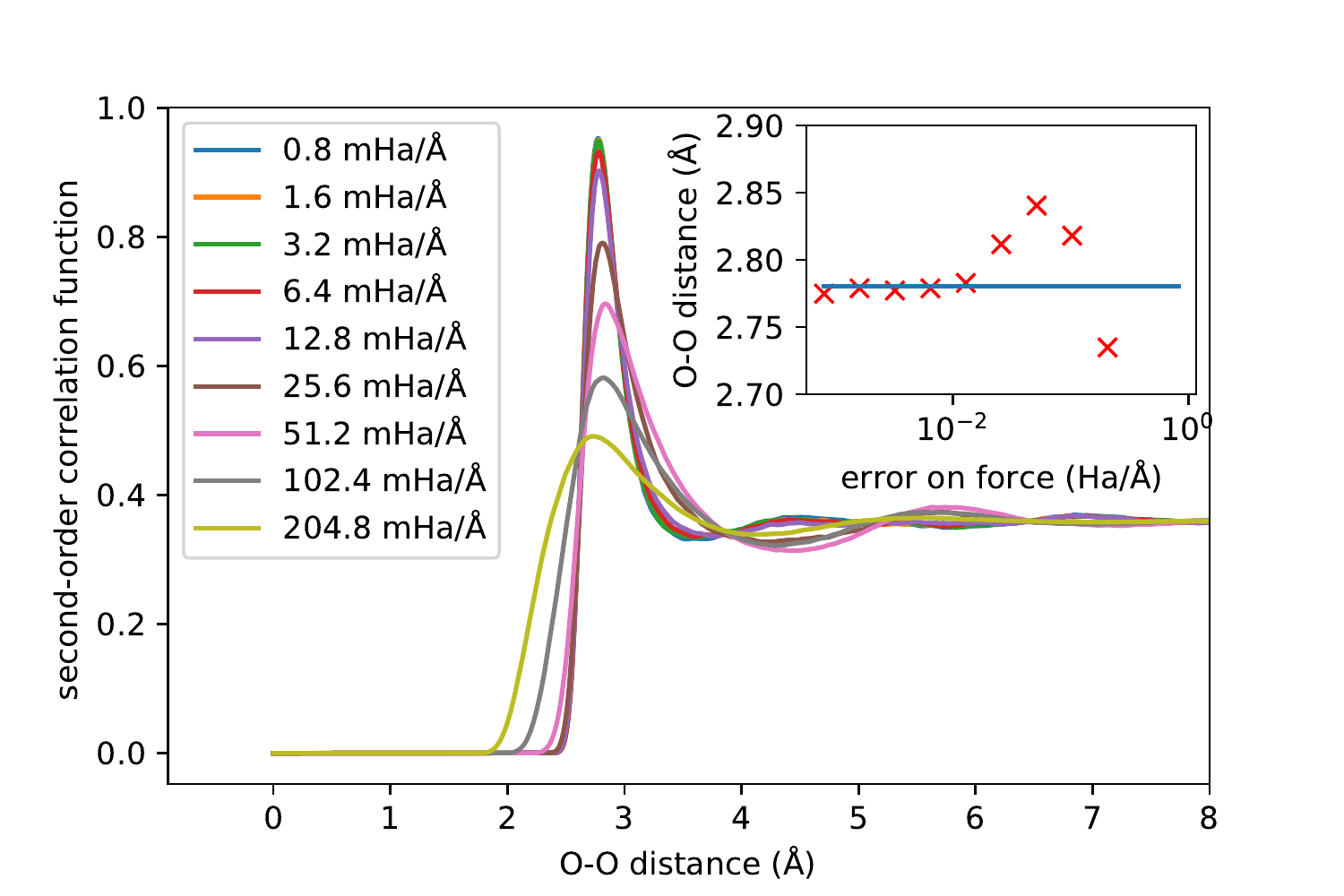}
\caption{The radial distribution function which is second-order correlation function of two oxygen atoms as function of their distance.
The MD run contained 216 water molecules in a periodic box with a temperature of 299~K.
The radial distribution function is plotted for different RMS errors on the force from $409.6$~mHa/Å to $0.8$~mHa/Å.
In the inset the peak positions of radial distribution function is plotted as function of the error on the force.}
\label{fig:rdf}
\end{figure}

Error bounds on forces required for MD simulations will again depend strongly on the system studied.
To find a simple baseline for a target accuracy for force components in this work, we focus on the required error tolerance for performing semi-classical MD simulations of a water system.
A quantum device would be used in this situation as a subroutine to provide accurate estimates of the classical potential, employing the TIP3P water model~\cite{jorgensen_1983}.
Here, the MD simulations were performed with Atomic Simulation Environment software package~\cite{ase-paper}.

As a proxy for simulation convergence, we study the $2$-particle radial distribution function 
\begin{equation}
    g^{(2)}(\mathbf{R}_1,\mathbf{R}_2)=\left(\frac{V}{N_a}\right)^2\frac{N_a(N_a-1)}{Z_{N_a}}\int_{V^{N_a-2}} \exp\big[-\beta E_\mathrm{tot}(\mathbf{R}_1,\ldots,\mathbf{R}_{N_a})\big]d\mathbf{R}_3\ldots d\mathbf{R}_{N_a},\label{eq:radial_distribution}
\end{equation}
of a $N_a = 3 \times 216$-atom system in a periodic box of volume $V$.
Here, $Z_{N_a}$ is the partition function of the $N_a$-particles, and $\beta=1/$299~K is the inverse temperature of the system.
As the radial distribution function $g^{(2)}$ is independent of translations and rotations of $\mathbf{R}_1$ and $\mathbf{R}_2$ about the origin, the data it contains can be found solely in the radial term:
\begin{equation}
    g^{(2)}(R)=\int_{V^2}\delta\big(\|\mathbf{R}_1-\mathbf{R}_2\|-R\big)g^{(2)}(\mathbf{R}_1,\mathbf{R}_2)d\mathbf{R}_1\,d\mathbf{R}_2
\end{equation}
To obtain a physically-relevant quantity, we isolate the pair distribution of the $N_a/3 = 216$ oxygen atoms and ignore the hydrogen atoms.
The $2$-particle radial distribution function is a macroscopic quantity and is often used to benchmark different water models.
The deviation from the ideal radial distribution function is a measure of the quality of the run.
We test for which size of errors in the force we can still reproduce the error free radial distribution function.
To achieve this we performed micro-canonical MD simulations with 36000 time steps, and at each 1~fs time step we added a random error term to the forces of water molecules. 
This error was sampled from a Gaussian distribution with a given RMS of the force error, with separate error terms drawn independently.
The radial distribution function of Eq.~\eqref{eq:radial_distribution} is averaged over all time steps of the simulation.
From these MD runs we can determine the size of errors for which we can still reproduce the error-free radial distribution function which we take as the ground truth.
Fig.~\ref{fig:rdf} shows that the radial distribution function rapidly converges to the error-free radial distribution function.
As a metric for convergence, we focus on the largest feature in the system (the peak around $2.8$~Å, and plot the error in the peak position (Fig.~\ref{fig:rdf} inset).
From this we conclude that for water the RMS error for molecular dynamics needs to be smaller than $6.4$~mHa/Å to reproduce macroscopic properties as the radial distribution function.
Our findings are in agreement with studies where {\em ab initio} MD simulations based on quantum Monte Carlo calculations were employed~\cite{luo_ab_2014}. 
Beyond radial distribution functions, quantities such as vibrational density of states are expected to require
higher precision of the force evaluation.~\cite{Marsalek2017Apr}

The geometry optimization seems to be more stringent by a order of magnitude on required accuracy of the forces than MD simulations. In MD simulations errors can average out. Later in this paper we consider the 2-norm as the parameter for accuracy of the forces. For the 2-norm the sum is taken over all forces and its components. Therefore, the 2-norm is a extensive quantity and depends on the number of atoms. We can convert the RMS error to the $2$-norm by multiplying RMS error with $\sqrt{3 N_a}$. For example, for the 216 water molecules we obtain a 2-norm of the error of $282.2$~mHa/Å.

\section{Computation of force vectors in NISQ}
\label{sec:nisq}

\begin{figure*}[tb]
        \centering
        \includegraphics[width=1\textwidth]{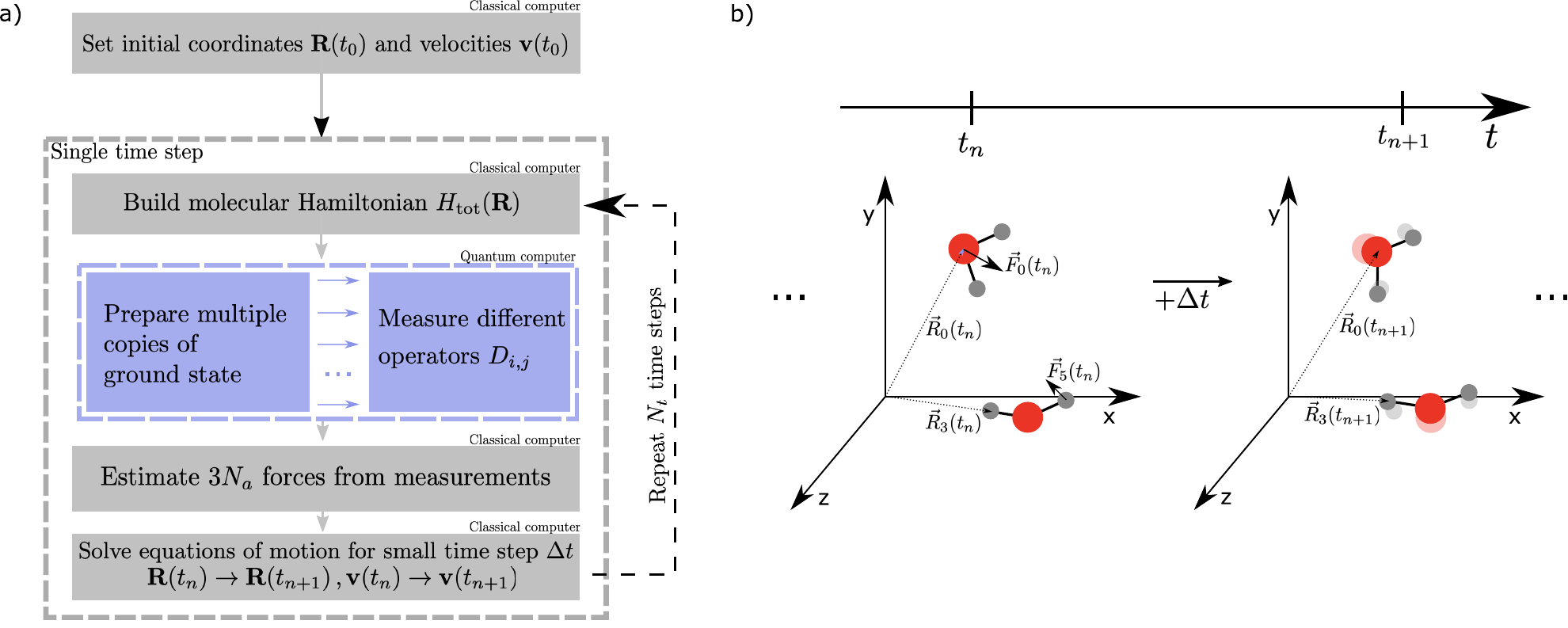}
\caption{Schematic representation of molecular dynamics enhanced by a NISQ device. (a) Flowchart highlighting the hybrid setup in which a NISQ device is used to calculate the forces, while a classical computer updates the nuclei coordinates $\mathbf{R}$, velocities $\mathbf{v}$ and the molecular Hamiltonian. A typical MD simulation requires $N_t=\mathcal{O}(10^6-10^9)$ time steps with a step size of $\Delta{t}=\mathcal{O}(10^{-15})s$ \cite{hollingsworth2018molecular,lazim2020advances}. (b) Example of a single update of the nuclei coordinates $\mathbf{R}$ of two water molecules with red balls and grey balls denoting oxygen and hydrogen atoms respectively. $\vec{F}_{i}$ denotes the three-dimensional force vector on the $i$-th atom. }
\label{fig:md_nisq}
\end{figure*}

To optimize quantum algorithms for NISQ quantum computers, we must reduce quantum circuit depth wherever practical.
Near-term proposals for quantum chemistry achieve this by preparing approximate ground states~\cite{Mcclean16Theory,QITE,Polla2021}, from which energies may be extracted by state tomography.
As long as the approximate state is variationally optimized within the active space considered on the device, the energy derivatives yielded by the Hellman-Feynman theorem are accurate for the variational energy; no further corrections need to be made to the methods outlined in Sec.~\ref{sec:energy_gradient}.
In this work, we assume access to the preparation of an initial state $|\psi\rangle$ that satisfies the Hellman-Feynman theorem Eq.~\eqref{eq:Hellmann-Feynman} ($\frac{d E}{d R_i}=\langle\psi|\frac{d H}{d R_i}|\psi\rangle$), and focus on the optimization of the measurement of this state to extract an estimate of $\langle\psi|\frac{d H}{d R_i}|\psi\rangle$.   While many approaches exist for reconstructing, the expectation, tomographic approaches have a major role in recent quantum computing works~\cite{Huggins19Efficient,Cotler20Quantum,Bonet20Nearly,Zhao20Fermionic,huang2020predicting}.
However, to the best of our knowledge no-one has optimized estimation methods for the measurement of vectors of operators prior to now.

State tomography in NISQ is complicated by the fact that simultaneous direct measurement of multiple operators is only possible in quantum mechanics when all operators mutually commute.  More broadly, the parameter estimation or partial tomographic protocols used to estimate a gradient consist of three key steps:
\begin{enumerate}
    \item Define a set of basis rotations $\{Q_j\}$ for which low-depth quantum circuits are known.
    \item For each basis rotation $Q_j$, prepare the state $|\psi\rangle$ $M_j$ times, apply the quantum circuit, and then destructively measures the system in the computational basis.
    \item Estimate the set $\{\langle\psi|\frac{d H}{d R_i}|\psi\rangle\}$ from the observed measurement data.
\end{enumerate}
In a NISQ cost model, the target is to reduce the total number $M=\sum_jM_j$ of preparations of $|\psi\rangle$, or experiment `shots', while targeting some error bound on the set of energy gradient estimates.

If the expectation value estimation (step 3 above) is linear and unbiased, the error on individual estimates $\langle\psi|\frac{d H}{d R_i}|\psi\rangle$ may be calculated by variance propagation.
Such an estimation corresponds to the decomposition of the gradient operator as a linear combination
\begin{equation}
    \frac{d H}{d R_i}=\sum_jD_{i,j},
\end{equation}
where the $D_{i,j}$ are operators that are diagonalized by the $j$th basis rotation $Q_j$.
If $Q_j$ diagonalizes $D_{i,j}$, $\langle\psi|D_{i,j}|\psi\rangle$ may be estimated by averaging over the $M_j$ destructive measurements taken in the $Q_j$ basis.
The variance in this estimation is given by
\begin{equation}
    \mathrm{Var}\Big[\langle\psi|D_{i,j}|\psi\rangle\Big]=\frac{\sigma_{i,j}^2}{M_i},\hspace{1cm}\sigma^2_{i,j}=\langle\psi|D^2_{i,j}|\psi\rangle - \langle\psi|D_{i,j}|\psi\rangle^2.
\end{equation}
As expectation values are linear, we have
\begin{equation}
    \frac{dE}{dR_i}=\bigg\langle\psi\bigg|\frac{d H}{d R_i}\bigg|\psi\bigg\rangle =\sum_j\langle\psi|D_{i,j}|\psi\rangle.
\end{equation}
Then, as each $D_{i,j}$ is measured independently, the variance of the estimation propagates in the usual way to the variance in an estimation of $\frac{dE}{dR_i}$
\begin{equation}
\epsilon_i^2=\mathrm{Var}\bigg[\frac{dE}{dR_i}\bigg]=\sum_j\frac{\sigma_{i,j}^2}{M_j}.
\end{equation}
These errors may be captured within a $3N_a$-dimensional error vector $\mathbf{\epsilon}$.
In practice estimates of $\sigma_{i,j}^2$ are not known in advance, making exact estimation of $\epsilon_i$ and subsequent parameter optimization difficult.
Instead, bounds on $\sigma_{i,j}$ are often substituted; we will introduce various such methods throughout this section.
Some of these bounds are in practice quite weak, which implies that fair comparison of the results described in this section may not be possible.

The general state tomography method described above leaves open a large number of parameters for optimization: the rotations $Q_j$, the shot allocation $M_j$, and the choice of operators $D_{i,j}$ in the decomposition of $\frac{dH}{dR_i}$.
In the following sections, we will describe and compare various methods that attempt to optimize these choices.
Complete optimization of each of these choices is not practical due to the sheer number of parameters and the (classical) cost of evaluating cost functions.
Optimizing basis rotations $Q_j$ to diagonalize multiple operators is in general an NP-hard problem~\cite{Verteletskyi19Measurement}.
Moreover, the lack of precise knowledge of $\sigma_{i,j}$ implies that the cost function may be difficult to estimate for the purposes of optimization.
However, various heuristic techniques are widely known, and many of these can be shown to achieve asymptotically optimal results.

\subsection{Basis rotation choices}
\label{sec:basis_rotations}

When choosing the set of basis rotations $Q_j$ for a state tomography protocol, one must try find operators $D_{i,j}$ that are diagonal in the $Q_j$ basis.
Calculating such operators is typically as difficult as simulating the circuit, so rotations $Q_j$ are typically chosen to be classically easy to simulate.
One must take further care that the $D_{i,j}$ are not exponentially difficult to express in order for step 3 in the general method above to be computationally feasible.
Drawing $Q_j$ from one of a few well-known sets of quantum circuits defined below typically solves this problem.

The commonly used Clifford circuits form the first example. These circuits preserve the Pauli group $\mathbb{P}^N=\{I,X,Y,Z\}^N$ modulo complex phases.
If $Q_j$ is a Clifford circuit, so is $Q_j^{\dag}$ and the algebra formed by the set $Q_j^{\dag}Z_nQ_j$ yields all possible operators $D_{i,j}$ that are diagonal in this basis.
Moreover, Clifford circuits can in principle be constructed to simultaneously diagonalize any set of mutually commuting elements of $\mathbb{P}^N$.
It is relatively easy to design a set of Clifford basis rotations $Q_j$ in this manner:
\begin{enumerate}
    \item Decompose all operators $\frac{dH}{dR_i}$ into a linear combination of Pauli operators following the Jordan-Wigner, Bravyi-Kitaev, or alternative fermion-to-qubit transformation.
    \item Subdivide the set $\mathcal{S}$ of Pauli operators that appear in at least one linear combination into commuting subsets $\mathcal{S}_j$ (i.e. so that all Pauli operators within each $\mathcal{S}_j$ commute).
    \item For each subset, find an appropriate basis rotation $Q_j$.
\end{enumerate}
A disadvantage to the above is that Clifford circuits to diagonalize mutually-commuting operators can be relatively deep~\cite{Yen19Measuring,Crawford21Efficient}.
This can be simplified by adding the requirement that the subsets $\mathcal{S}_j$ contain not commuting Pauli operators, but amenable Pauli operators.
Two Pauli operators are amenable if, on each qubit the tensor factor $I,X,Y,Z$ of both operators is the same or the tensor factor of at least one operator is the identity.
(For example, $YY$ and $YI$ are amenable, but $YY$ and $XX$ are not.)
Amenable Pauli operators can be mutually diagonalized by single-qubit basis rotations $Q_j$, making this a practical subdivision $\mathcal{S}_j$.
In general subdividing $\mathcal{S}$ into the minimum number of $\mathcal{S}_j$ is a known NP-hard problem~\cite{Verteletskyi19Measurement}, though relative success has been found in heuristics~\cite{Bonet20Nearly} or brute-force optimization methods~\cite{Crawford21Efficient}.
To date these methods have focused on minimizing the number of subsets that contain all Pauli elements that make up the fermionic $1$- and $2$-RDM, for which an $O(N^2)$ bound is known and has been achieved~\cite{Bonet20Nearly}.
However, most of these methods have not considered the subsequent allocation of measurements $M_j$ and the subsequent cost in wall-clock time to estimate one or more expectation values to a given accuracy.

A second set of circuits relevant for diagonalizing Hamiltonians and force operators in chemistry are Givens rotation circuits.
These correspond to evolution by a one-body fermionic operator
\begin{equation}
    Q_j=e^{i\sum_{n,m}Q_j^{n,m}a_n^{\dag}a_m},\label{eq:Givens}
\end{equation}
and are classically tractable to calculate as they map single creation and annihilation operators to each other.
\begin{equation}
    Q_ja_nQ_j^{\dag}=\sum_m[e^{iq_j}]_{n,m}a_m,\label{eq:sq_basis_rotation}
\end{equation}
where $q_j$ is the $N\times N$ Hermitian matrix with elements taken from Eq.~\eqref{eq:Givens}.
Givens rotation circuits are relatively low depth; an arbitrary Givens rotation may be implemented in depth $2N$ on a linear array using precisely $N^2$ two qubit gates~\cite{Kivlichan18Quantum, Google20Hartree}.
The above may be slightly generalized to the set of fermionic Gaussian unitaries~\cite{Zhao20Fermionic}
\begin{equation}
    Q_j=e^{\sum_{p,q}g_j^{p,q}\gamma_p\gamma_q},
\end{equation}
where $\gamma_m$ and $\gamma_n$ are anti-commuting Majorana operators
\begin{align}\label{eq:majorana_operators}
    \gamma_{2n}=a_n+a_n^\dagger && \gamma_{2n+1}=-\mathrm{i}(a_n-a_n^\dagger).
\end{align}
This strictly contains the set of Givens rotations, and also allows for Bogoliubov-style rotations between $a_n$ and $a_n^{\dag}$.

A low-cost method for constructing Givens rotation circuits $Q_j$ to target a two-body fermionic operator is to factorize the operator~\cite{Motta2018,Huggins19Efficient}.
Starting from the operator in its chemist formulation
\begin{align}
    A = \sum_{\sigma \in \{\uparrow,\downarrow\}}\sum_{p,q}T_{pq}a_{p,\sigma}^\dag a_{q,\sigma} + \underbrace{\sum_{\alpha,\beta\in \{\uparrow,\downarrow\}}\sum_{p,q,r,s}V_{pqrs}a_{p,\alpha}^\dag a_{q,\alpha} a_{r,\beta}^\dag a_{s,\beta}}_{=V}\,,
    \label{eq:chemist_notation}
\end{align}
we reshape the 4-rank tensor $V_{pqrs}$ into a 2-rank tensor $M_{(pq),(rs)}$. A direct diagonalization (or a Cholseky decomposition) of the flattened version of $V_{pqrs}$ yields
\begin{align}
    V&=\sum_{\ell=1}^{L} w_\ell\left(\sum_{\sigma \in \{\uparrow,\downarrow\}}\sum_{p,q=1}^{N/2} g^{(\ell)}_{pq} a^\dagger_{p,\sigma} a_{q,\sigma}\right)^2=\sum_{\ell=1}^{L} \left(\sum_{\sigma \in \{\uparrow,\downarrow\}}\sum_{p,q=1}^{N/2} W^{(\ell)}_{pq} a^\dagger_{p,\sigma} a_{q,\sigma}\right)^2=\sum_{\ell=1}^LW^{(\ell)2}
    \label{eq:single_factorized_hamiltonian}
\end{align}
with $g_{pq}^{(\ell)}$ and $w_\ell$ representing the eigenvectors and eigenvalues respectively. Further diagonalization of the squared single-body operators yields \cite{vonBurg2020}
\begin{align}\label{eq:double_low_rank_factorization}
    V=\sum_{\ell=1}^{L} U^{(\ell)} \left(\sum_{\sigma,\in \{\uparrow, \downarrow\}} \sum_{p=1}^{M_\ell} f_{p}^{(\ell)} n_{p,\sigma} \sum_{\sigma' \in \{\uparrow, \downarrow\}}\sum_{q=1}^{M_\ell} f_{q}^{(\ell)}  n_{q,\sigma'} \right) U^{(\ell)\dagger}\,,
\end{align}
with $f_{p}^{(\ell)}$ the eigenvalues of $W^{(\ell)}_{pq}$ and $U_\ell$ the unitaries performing the diagonalization, which can be expressed as a single-particle change of basis unitary
\begin{equation}\label{eq:basis_change_def}
U^{(\ell)} = \exp \left(-\sum_{p=1}^N \sum_{q=1}^{M_\ell} \kappa^{(\ell)}_{pq} \left(a^\dagger_p a_q - a^\dagger_q a_p\right)\right) \qquad \qquad U^{(\ell)} W^{(\ell)} U^{(\ell)\dagger} = \sum_{\sigma \in \{\uparrow, \downarrow\}}\sum_{p=1}^{M_\ell} f_p^{(\ell)} n_{p,\sigma}\,,
\end{equation}
and the $\kappa_{p,q}$ are obtained from the Givens rotation procedure in \cite{Kivlichan18Quantum}. Similary, the one-body fermionic operator may be diagonalized by a single Givens rotation $Q_j$, as one simply takes the rotation $q_j$ that diagonalizes the corresponding $N\times N$ one-body matrix (following Eq.~\eqref{eq:sq_basis_rotation}).

If the operator $A$ given in Eq.~\eqref{eq:chemist_notation} is the electronic structure Hamiltonian in a generic second-quantized basis, we have that $L = \widetilde{\cal O}(N)$ and $M < N$ and in some special cases $M_l = {\cal O}(\log N)$ \cite{Motta2018}.
This implies that one may estimate the expectation value of a Hamiltonian with $\widetilde{\mathcal{O}}(N)$ basis rotations $Q_j$.
As we show in Sec.~\ref{sec:block_encoding_second_quantization}, this extends to a bound on the number of Givens rotations required to factorize a single derivative operator
\begin{equation}
    \frac{dH}{dR_i}=\sum_{\ell=1}^LW^{(i,\ell)^2},\hspace{1cm} W^{(i,\ell)}=\sum_{\sigma \in \{\uparrow,\downarrow\}}\sum_{p,q=1}^{N/2} W^{(i,\ell)}_{pq} a^\dagger_{p,\sigma} a_{q,\sigma}.\label{eq:single_factorized_derivative}
\end{equation}
However, factorizations do not typically parallelize; the set of Givens rotation circuits that measure $\frac{dH}{dR_i}$ will not typically allow estimation of $\frac{dH}{dR_{i'}}$.
Moreover, it was found recently that the $L\sim\widetilde{\mathcal{O}}(N)$ scaling is relatively delicate; subtracting operators from one derivative will tend to yield an operator that is no longer low-rank~\cite{Rubin2021}.
This implies that it is likely not possible to significantly parallelize factorized methods, and the number of Givens rotations required to measure $3N_a$ derivative operators likely scales as $\widetilde{\mathcal{O}}(N_aN)$.

An obvious question to ask is whether the set of fermionic Gaussian unitaries and Clifford circuits intersect.
The answer to this question is yes: the intersection of these operators are the fermionic Gaussian Clifford unitaries, which are generated by the set of Majorana swap operators
\begin{equation}
    e^{\frac{\pi}{4}\gamma_i\gamma_j},\label{eq:Majorana_permutation}
\end{equation}
and correspond to the symmetric permutation group $\mathrm{Sym}(2N)$ (being permutations of indices of Majoranas).
For the sake of measurement, the effect of a Majorana permutation $Q_j$ is to pair the set of $2N$ Majorana operators: to choose a set of disjoint pairs $\{(\gamma_p,\gamma_q)\}$ containing all operators, and permute $p\rightarrow 2n$, $q\rightarrow 2n+1$ for some $n$.
This maps the Hermitian operator $i\gamma_p\gamma_q\rightarrow Z_n$, which implies that any linear combination of products of the pairs are diagonalized by $Q_j$.
This allows simultaneous measurement of $\binom{2N}{2} = \frac{N(N-1)}{2}$ linearly-independent $2$-body fermionic terms, which is optimal, and the basis for the best-known measurement schemes for the estimation of arbitrary $2$-body fermionic operators~\cite{Bonet20Nearly,Zhao20Fermionic}.

\subsection{Parallelized importance sampling}
\label{sec:NISQ_DirectMeasurements}

Once an optimal set of basis rotations $Q_j$ and operators $D_{i,j}$ have been chosen, it remains to allocate the number of shots $M_j$ to each $Q_j$.
Here, we target minimizing a given cost function $f(\{M_j\},\{\sigma_{i,j}\})$ while keeping the total number of measurements $M=\sum_j M_j$ constant (or vice-versa).
This may be achieved by Lagrangian methods~\cite{Rubin18Application, wecker2015progress}, this methodology being a form of importance sampling over the expectation values $\langle D_{i,j}\rangle$.
Such methods entail adding the total number of measurements as a constraint to the cost function with a Lagrangian multiplier $\lambda$, giving a Lagrangian
\begin{equation}
    \mathcal{L}=f(\{M_j\},\{\sigma_{i,j}\})+\lambda\Big(\sum_jM_j-M\Big).\label{eq:gen_lagrangian}
\end{equation}
The solution to the problem is then achieved by minimizing $\mathcal{L}$ with respect to all free parameters: $M_j$ and $\lambda$.
(See Appendix~\ref{app:ImportanceSamplingLagrange} for more details and explicit calculations of the optimizations used in the text.)
Crucially, $\sigma_{i,j}$ is typically not known {\em a priori}.
In principle $\sigma_{i,j}$ can be estimated during the expectation value estimation procedure, which could be used to adaptively optimize the distribution of the $M_j$.
However, typically in the literature a range of bounds $\bar{\sigma}_{i,j}>\sigma_{i,j}$ are used instead.
We will discuss the known bounds in detail in the next section.

In order to perform the above minimization procedure, we must define the cost function $f(\{M_j\},\{\sigma_{i,j}\})$.
This is complicated by the fact that we estimate $3N_a$ force components, and must combine the error on each into a single cost function.
This can be achieved by defining a norm on the error vector $\boldsymbol{\epsilon}=\mathbb{E}(\frac{dH}{d\mathbf{R}}-\widetilde{\frac{dH}{d\mathbf{R}}})$.
In Sec.~\ref{sec:error_tolerance}, we saw that the $2-$norm is a reasonable proxy to bound the error in molecular dynamics simulations.
An additional issue presents itself as we should take into account the covariance between different force components (assuming that we do not measure these independently).
However, this may be circumvented if we take the 2-norm squared as our cost function:
\begin{align}
     f(\{M_j\},\{\sigma_{i,j}\})=\mathbb{E}\left(\left\|\frac{\widetilde{d E}}{d \mathbf{R}} - \frac{{d E}}{d \mathbf{R}} \right\|_2^2 \right) & = \mathbb{E}\left(\sum_i\left(\frac{\widetilde{d E}}{d {R}_i} - \frac{{d E}}{d {R}_i}\right)^2\right) = \sum_i \epsilon_i^2=\sum_{i,j}\frac{\sigma_{i,j}^2}{M_j}.
     \label{eq:error_2_norm}
\end{align}
We finally write $f(\{M_j\},\{\sigma_{i,j}\})=\epsilon^2$, where $\epsilon$ is the RMS error in our final force vector.

The advantage of targeting the norm of the error vector for importance sampling is not just that we can allocate different numbers of shots to different gradient components depending on their relative need, but that we can account for basis rotations $Q_j$ that allow for multiple measurements.
Substituting Eq.~\eqref{eq:error_2_norm} into Eq.~\eqref{eq:gen_lagrangian} and replacing the true deviation $\sigma_{i,j}$ with our estimate $\bar{\sigma}_{i,j}$ yields the Lagrangian
\begin{equation}
    \mathcal{L}=\sum_{j}\frac{ \sum_i\bar{\sigma}_{i,j}^2}{M_j}+\lambda\Big(\sum_jM_j-M\Big)\,.
\end{equation}
Minimizing with respect to $M_j$ and solving for $M_j$ yields a shot allocation with respect to $\lambda$, which may be simplified by enforcing our constraint $\sum_jM_j=M$
\begin{equation}
    M_j=\frac{\sqrt{\sum_i\bar{\sigma}_{i,j}^2}}{\sqrt{\lambda}}=M\frac{\sqrt{\sum_i\bar{\sigma}_{i,j}^2}}{\sum_{j'}\sqrt{\sum_i\bar{\sigma}_{i,j
    '}}}.
\end{equation}
Re-substituting this into our definition of $\epsilon^2$ and solving for $M$ then achieves a relatively compact result,
\begin{equation}
    M\leq\epsilon^{-2}\Gamma_2^{(\mathrm{par})},\hspace{1cm}\Gamma_2^{(\mathrm{par})}=\left(\sum_j\sqrt{\sum_i\bar{\sigma}_{i,j}^2}\right)^2.~\label{eq:2norm_parallel}
\end{equation}
In Appendix \ref{App_FD_Lagrange_Mult} we repeat this calculation to find a bound on the measurement count required to estimate the error vector $\boldsymbol{\epsilon}$ to constant $1$-norm instead of $2$-norm.

It is instructive here to consider the effect of parallelization; what do we gain from the ability to use one basis rotation $Q_j$ to measure components $D_{i,j}$ of multiple force operators?
This is important as this ability is lost in schemes such as low-rank factorization, where basis rotations to diagonalize factors from $\frac{dH}{dR_i}$ and $\frac{dH}{dR_{i'}}$ cannot be made to easily overlap while keeping all operators low-rank~\cite{Rubin2021}.
This can be studied by replacing $M_j\rightarrow M_{i,j}$, and performing the same Lagrangian minimization as before.
The effect of this minimization can be immediately written down, as we are effectively losing the $i$ index from the second sum in Eq.~\eqref{eq:2norm_parallel} and replacing the $j$ index by a pair $(i,j)$.
Thus, we can write
\begin{equation}
    M\leq \epsilon^{-2}\Gamma_2^{(\mathrm{sep})},\hspace{1cm}\Gamma_2^{(\mathrm{sep})}=\left(\sum_{i,j}\bar{\sigma}_{i,j}\right)^2.\label{eq:2norm_serial}
\end{equation}
As $\sum_jx_j\geq\sqrt{\sum_jx_j^2}$, parallelization is clearly always favorable when possible (which we expect).
The gain in efficiency going from Eq.~\eqref{eq:2norm_serial} to Eq.~\eqref{eq:2norm_parallel} depends on how well parallel measurements can be grouped.
The case with the largest difference in efficiency is when a set of basis rotations can be chosen for all $3 N_a$ force operators such that the magnitude of all errors in each component are roughly equal for each rotation, i.e.\ $\bar{\sigma}_{i,j}\sim\bar{\sigma}_j$.
In this case, we have
\begin{equation}
    \Gamma_2^{(\mathrm{sep})}=3N_a\Gamma_2^{(\mathrm{par})}\,.
\end{equation}
However, in a real setting the asymptotic gain may be significantly smaller.

We can also consider the gain obtained from importance sampling in the parallel estimation case.
This will be useful to predict the improvement that might be gained from importance sampling in methods where this is not natively performed.
In the absence of importance sampling, we replace $M_j\rightarrow \frac{M}{N_r}$, where $N_r$ is the total number of basis rotations ($N_r=\sum_j 1$).
The 2-norm of the error in Eq.~\eqref{eq:error_2_norm} then becomes
\begin{equation}
    \epsilon^2\leq\frac{N_r}{M}\sum_{i,j}\bar{\sigma}_{i,j}^2,
\end{equation}
and rearranging yields
\begin{equation}
    M\leq \epsilon^{-2}\Gamma_2^{(\mathrm{par},\,\mathrm{n.i.})},\hspace{1cm}\Gamma_2^{(\mathrm{par},\,\mathrm{n.i.})}=N_r\sum_{i,j}\bar{\sigma}_{i,j}^2.\label{eq:2norm_no_importance_sampling}
\end{equation}
As one would expect, in the limit that $\bar{\sigma}_{i,j}=\bar{\sigma}$ Eq.~\eqref{eq:2norm_no_importance_sampling} and Eq.~\eqref{eq:2norm_parallel} are identical.
However, when $\sum_i\bar{\sigma}_{i,j}^2$ varies significantly as a function of $j$, the gain can be up to a factor of $N_r$; the number of basis rotations used.
As full tomography of the fermionic $2$-RDM requires $N_r\sim N_a^2$ (and naive tomography $N_r\sim N_a^4$), this can be a significant gain.

\subsection{Fermionic shadow tomography}
\label{sec:fermionic_shadow_tomography}

An alternative method for choosing basis rotations $Q_j$ and allocating shots $M_j$ is to choose them at random.
This idea has been recently formalized by the notion of classical shadows~\cite{huang2020predicting}.
Here, one considers the action of randomly drawing a basis rotation $Q_j$ from an ensemble $\mathcal{Q}$, measuring in the computational basis, and observing basis state $|b_j\rangle$.
One could in principle now invert $Q_j$ on the measured state $|b_j\rangle$ to give a new state $Q_j^{\dag}|b_j\rangle\langle b_j|Q_j$.
Assuming that the ensemble $\mathcal{Q}$ is informationally / tomographically complete (i.e. that every marginal of $\rho$ is measured by at least one element of $\mathcal{Q}$), the map
\begin{equation}
    \mathcal{M}:\rho\rightarrow \mathbb{E}\big[Q_j^{\dag}|b_j\rangle\langle b_j|Q_j\big]
\end{equation}
is invertible.
Moreover, the expectation value of the inverse map $\mathcal{M}^{-1}$ across sampled rotations $Q_j$ and consequently measured states $|b_j\rangle$ must be the initial state
\begin{equation}
    \mathbb{E}\Big[\mathcal{M}^{-1}\big(Q_j^{\dag}|b_j\rangle\langle b_j|Q_j\big)\Big]=\rho.
\end{equation}
Given a finite set of basis rotations $Q_j$ and subsequent measurements $|b_j\rangle$, this gives an estimator for $\frac{dE}{dR_i}=\mathrm{Trace}\!\left[|\psi\rangle\langle\psi|\frac{dH}{dR_i}\right]$~\cite{huang2020predicting,Zhao20Fermionic}
\begin{equation}
    \widehat{\frac{dE}{dR_i}}=\mathbb{E}\Bigg\{\mathrm{Trace}\bigg[\frac{dH}{dR_i}\mathcal{M}^{-1}\big(Q_j^{\dag}|b_j\rangle\langle b_j|Q_j\big)\bigg]\Bigg\}.
\end{equation}
The key advantages to this method are that the ensemble $\mathcal{Q}$ may be easier to design than a specific set of rotations $R_j$, and by averaging over the entire ensemble we may reduce the covariance between different terms.
To suppress the tails on the distribution of the estimator $\widehat{\frac{dE}{dR_i}}$ and achieve optimal scaling a median-of-means technique was originally used in \cite{huang2020predicting}, but it was shown in \cite{Zhao20Fermionic} that this is unnecessary for fermionic systems.
A provably optimal choice of $\mathcal{Q}$ to estimate arbitrary $2$-RDM elements is the ensemble $\mathcal{Q}_{\mathrm{FGU}}$ of fermionic Clifford Gaussian unitaries described in Sec.~\ref{sec:basis_rotations}.
We label the corresponding channel $\mathcal{M}_{\mathrm{FGU}}$.

The variance of the above estimator may be calculated by representing the force operator in the algebra generated by the Majorana operators $\gamma_{p}$ (Eq.~\eqref{eq:majorana_operators})
\begin{align}
    \frac{dH}{dR_i} = \sum_{k=1}^2 \sum_{\mu \in C(2N,2k)} f_{\mu}^{(i)}\Gamma_{\mu},\hspace{1cm} \Gamma_{\mu}=\mathrm{i}^k\gamma_{\mu_1}\dots\gamma_{\mu_{2k}},
\end{align}
where $C(2N,2k)$ is the set of all possible combinations of $2k$ elements drawn from $\{1,\ldots,2N\}$.
With this defined, the variance on the estimator constructed from a single choice of basis rotation $Q_j$ and observation of $|b_j\rangle$ is calculated in \cite{Zhao20Fermionic} to be
\begin{align}\label{eq_beta_fs}
    \mathrm{Var}_\mathrm{FS}\left(\widehat{\frac{dE}{dR_i}}\right)&=\sum_{k=1}^2 \sum_{\mu \in C(2N,2k)} \|f_{\mu}^{(i)} \Gamma_{\mu}\|_\mathrm{FGU}^2 - \mathrm{Tr}\left(F_i\rho\right)^2,
\end{align}
where here $\|\cdot\|_{\mathrm{FGU}}$ is the shadow norm~\cite{huang2020predicting} under the fermionic Gaussian Clifford ensemble
\begin{equation}
    \|O\|^2_{\mathrm{FGU}}=\max_{\mathrm{state}\;\rho}\bigg\{\mathbb{E}_{W\sim\mathcal{Q}_{\mathrm{FGU}}}\sum_{b\in\{0,1\}^n}\langle b|W\rho W^{\dag}|b\rangle\langle b|W\mathcal{M}^{-1}(O)W^{\dag}|b\rangle^2\bigg\}.
\end{equation}
The shadow norm of a product of $2k$ Majorana operators in this ensemble was calculated in \cite{Zhao20Fermionic} to be $\binom{2n}{2k}/\binom{n}{k}$.
Thus, by the central limit theorem, the variance of the estimator $\widehat{\frac{dE}{dR_i}}$ after $M$ different rotations $Q_j$ and measurements is
\begin{equation}
    \mathrm{Var}_\mathrm{FS}\left(\widehat{\frac{dE}{dR_i}}\right)=\frac{1}{M}\bigg\{\sum_{k=1}^2 \sum_{\mu \in C(2N,2k)} \binom{2N}{2k} \binom{N}{k}^{-1} |f_{\mu}^{(i)}|^2 - \mathrm{Tr}\left(\tfrac{dH}{dR_i}\rho\right)^2\bigg\}\,.\label{eq:fermionic_shadow_variance}
\end{equation}
Note that unlike other methods where one must take a bound on the variance of individual estimators, Eq.~\eqref{eq:fermionic_shadow_variance} is exact.
It is also significantly easier to calculate than the variance on other estimation methods, as it does not require access to higher-order correlators that come with expectation values of $H^2$.

We now extend the above analysis to estimate the error in the 2-norm $\epsilon_2$ (see  Eq.~\eqref{eq:error_2_norm}) of the force vector $\frac{dE}{d\mathbf{R}}$.
As we are drawing $M$ basis rotations from our distribution at random, this time we do not need to optimize the distribution of our measurements via importance sampling.
(Importance sampling over shadow tomography may be introduced by locally biasing the classical shadows~\cite{Bravyi19Classical}, which has been seen to yield significant improvements for Hamiltonian tomography~\cite{Zhao20Fermionic}.)
As we do not encounter covariances between terms when calculating the $2$-norm (see Sec.~\ref{sec:NISQ_DirectMeasurements}), we have immediately that
\begin{align}
    \epsilon_2^2 = \frac{1}{M}\sum_i\mathrm{Var}_\mathrm{FS}\left(\widehat{\frac{dE}{dR_i}}\right),
\end{align}
and so to bound $\epsilon_2^2<\epsilon^2$ requires that we set
\begin{align}
     M= \epsilon^{-2}\Gamma_2^{(\mathrm{FS})}, \hspace{1cm} \Gamma_2^{(\mathrm{FS})}=\sum_i  \mathrm{Var}_\mathrm{FS}\left(\widehat{\frac{dE}{dR_i}}\right)\,.
\end{align}

\subsection{Numerical results}
\label{sec:numerics_nisq}

We now attempt to summarize and estimate the cost of the different optimizations made in this section for NISQ tomography of force vectors.
The cost of estimating force vectors depends critically on the studied system.
In this section we consider two different scenarios; one where we make the assumption that our force operators are relatively uniformly distributed across our system (allowing us to make general asymptotic estimates), and a set of numerical cost estimates for hydrogen chains of varying length, calculated in STO-6G using localized orbitals, see App~\ref{app:localization}.
Although near optimal methods of grouping fermionic operators are known~\cite{Bonet20Nearly}, to simplify the results in this work we choose a naive grouping to compare the parallelized vs serial importance sampling discussed in Sec.~\ref{sec:NISQ_DirectMeasurements}.
That is, we consider a set of rotations $Q_j$ that measure independent Pauli terms $D_{i,j}=h_{i,j}P_j$, where $\frac{dH}{dR_i}=\sum_jh_{i,j}P_j$ following a Jordan-Wigner transformation.
We compare this in turn to results obtained for fermionic shadow tomography and to results for a low-rank factorization $\frac{dH}{dR_i}=\sum_lW^{(i,\ell)2}$ (Eq.~\eqref{eq:single_factorized_derivative}).

To avoid the need to calculate expectation values of four-body terms (which are required for exact variance estimates using the methods of Sec.~\ref{sec:NISQ_DirectMeasurements}), we use standard methods for upper bounding the variance contributions $\sigma_{i,j}^2\leq\bar{\sigma}_{i,j}^2$ instead.
Our variance bound for naive measurements is very simple; $\sigma_{i,j}\leq h_{i,j}$.
Note that this implies $\Gamma_2^{(\mathrm{sep})}$ is just the square of the induced 1-norm of the force operator (in a qubit representation).
By comparison, for the basis rotation grouping method, we take the worst-case bound on the variance for each given operator
\begin{align}\label{eq_beta_brg}
    \sigma^2_{i,\ell}\leq \frac{r^2_{i,\ell}}{4}\,,
\end{align}
where $r_{i,\ell}=|\lambda_{i,\ell}^{\mathrm{(max)}}-\lambda_{i,\ell}^{\mathrm{(min)}}|$ is the range of the spectrum of the corresponding factor $\sum_{\sigma,\sigma'}\sum_{pq}f_p^{(l)}f_q^{(l)}n_{p,\sigma} n_{q,\sigma'}$ constrained to the appropriate particle number sector.
In Fig.~\ref{fig:measurements_nisq}, we plot the resulting scaling coefficients $\Gamma$ for these techniques, alongside the classical fermionic shadow scaling coefficient from Sec.~\ref{sec:fermionic_shadow_tomography}.
As the bounds used for different methods differ in their tightness, it is not possible to make a direct comparison between the lines in different plots.
Instead, for comparison we plot the equivalent scaling factors for the Hamiltonians of the same system, allowing us to compare the cost of energy and derivative estimation.
As each line is approximately straight on a log-log plot, we may extract scaling coefficients by a linear fit, which we report in Tab.~\ref{tab:nisq_table}.
We contrast this in the same table with an asymptotic analysis under the assumption that all forces are the same magnitude.
We see that, the shift from separate to parallel force measurement in the naive case (left plot) decreases the cost of estimation by a factor $N_a^{0.2}<< N_a$.
When parallelized, we further observe that the cost to estimate the force operator to a target precision (in Ha/Å) is roughly the same the cost to estimate energies to a target precision (in Ha).
Similar results are found for the shadow tomography case (b).
As the error required on forces in molecular dynamics (in the given units) is roughly $20$ times larger than chemical accuracy (Sec.~\ref{sec:error_tolerance}), we suggest that a single-shot force estimation using shadow tomography may be already a factor $100$ cheaper than the corresponding energy estimation on a reasonably-sized system.
However, semi-classical molecular dynamics simulations typically requires millions of such estimations, which presents a significant additional multiplicative cost.
Similarly, we see that the scaling of the cost of estimating forces via basis rotation grouping (c) is smaller for force operators as compared to the Hamiltonian.
We note however that the here presented bounds are not tight (see \cite{Huggins19Efficient} for a comparison) and that it is an open question how these numbers change when using the explicit variances $\sigma_{i,j}$ of the ground state.

\begin{figure*}[tb]
        \centering
        \includegraphics[width=1\textwidth]{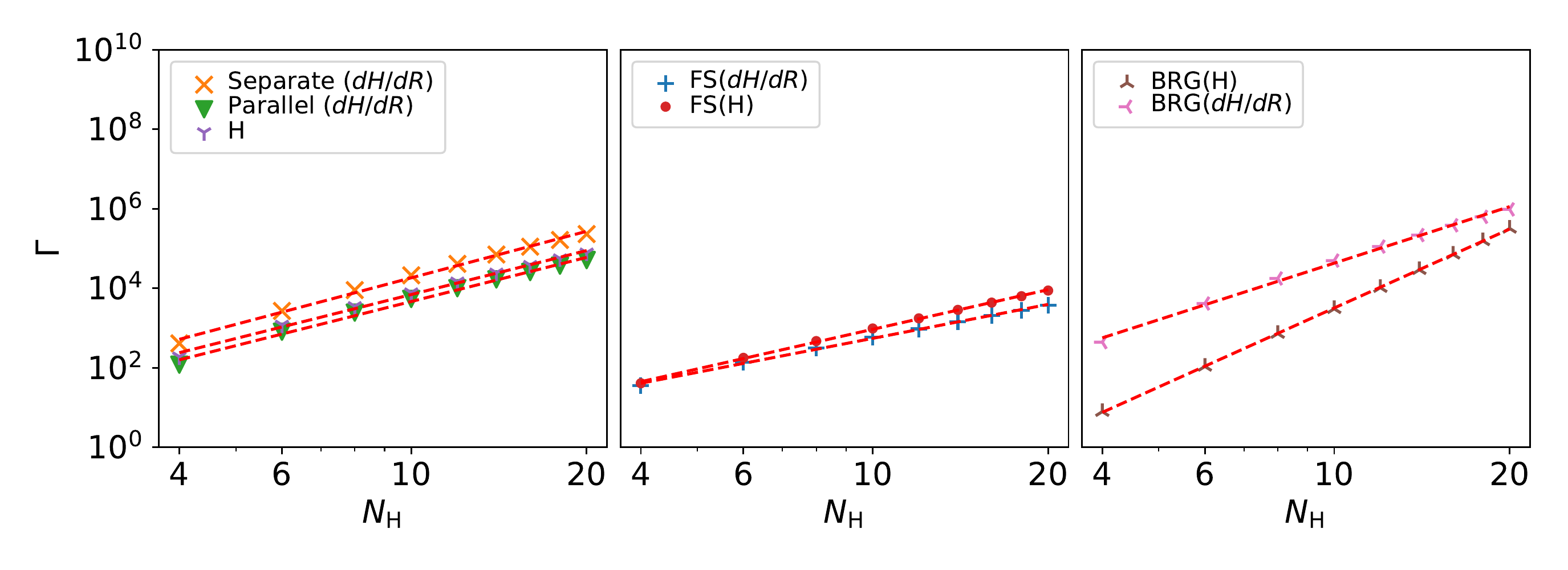}\llap{
  \parbox[b]{12.5in}{(a)\\\rule{0ex}{2.4in}
  }}\llap{
  \parbox[b]{8.4in}{(b)\\\rule{0ex}{2.4in}
  }}\llap{
  \parbox[b]{4.3in}{(c)\\\rule{0ex}{2.4in}
  }}
\caption{Numerical evaluation of upper bounds on the cost of force estimation for hydrogen chains up to $N_H=20$ atoms with spacing of $0.74084$\r{A} calculated in STO-6G using localized orbitals. We show the upper bounds of the Hamiltonian and the force operators using (a) Pauli measurements, (b) fermionic shadows and (c) the basis rotation grouping technique.}
\label{fig:measurements_nisq}
\end{figure*}

\begin{table}[]
    \centering
    \begin{tabular}{|C{4cm}|C{4cm}|C{3cm}|}
    \hline
    Method & Asymptotic scaling ($F_i=F, \forall i$) & Empirical scaling ($\ch{H}$ chain) \\
    \hline
    Separate measurements &  $\mathcal{O}(\frac{N_a^2 \beta(N_a)}{\epsilon^2})$ & $\mathcal{O}(\frac{N_a^{3.89\pm 0.1}}{\epsilon^2})$ \\
    \hline
    Parallel measurements & $\mathcal{O}(\frac{N_a \beta(N_a)}{\epsilon^2})$ & $\mathcal{O}(\frac{N_a^{3.69\pm 0.11}}{\epsilon^2})$\\
    \hline
    Fermionic shadows & $\mathcal{O}(\frac{N_a \beta_{\mathrm{FS}}(N_a)}{\epsilon^2})$ & $\mathcal{O}(\frac{N_a^{2.84\pm0.05}}{\epsilon^2})$\\
    \hline
    Basis rotation grouping & $\mathcal{O}(\frac{ N_a^2 \beta_{\mathrm{BRG}}(N_a)}{\epsilon^2})$ & $\mathcal{O}(\frac{N_a^{4.72\pm0.10}}{\epsilon^2})$ \\
    \hline
    \end{tabular}
    \caption{Comparison of the asymptotic scaling for the case that all force operators are equal ($F_i = F\, \forall i$) and the empirical scaling inferred from Fig.~\ref{fig:measurements_nisq}. The value $\beta$ refers to the one-norm of the coefficients of the force operator after a Jordan-Wigner transformation, $\beta_\mathrm{FS} = \mathrm{Var}_\mathrm{FS}(F)$ and $\beta_\mathrm{BRG}=(\sum_l \mathrm{Var}(F_l)^{1/2})^2$. }
    \label{tab:nisq_table}
\end{table}

Previous results~\cite{Obrien19Calculating, Mitarai2019, Sokolov:2021} have focused their analysis on the scalings of the quantum circuits required to compute gradients without considering the contribution coming from the properties of the gradient operator and without performing any optimization of the measurements. In contrast, here we have computed bounds on the number of measurements required while also optimizing the measurement number with three different methods. For this reason, performing a direct quantitative comparison between this work and previous results is not possible.

\section{Block encoding Hamiltonians and force operators}

\subsection{General block encoding of the Hamiltonian and force operator}~\label{sec:block_encodings}

In the next section, we will present several methods to calculate energy gradients in a fault-tolerant quantum computing section.
These methods all require access to block encodings of the Hamiltonian and derivative operators.
Block encoding is a method of encoding a non-unitary operator on a quantum device; one block-encodes an operator $O$ by adding additional qubits to a system and finding a unitary $U$ that contains $O$ in the block corresponding to the $|0\rangle$ state of the additional qubits.
That is, if
\begin{equation}
    U=\left(\begin{array}{ccc}u_0&u_1&\ldots\\u_1^{\dag}&\ddots & \\\vdots & &\end{array}\right),
\end{equation}
where the subscript on $u_i$ denotes the computational basis state of the additional qubits, we require
\begin{equation}
    \left\|u_i-\frac{1}{\lambda_O}O\right\|_{\infty}\leq\epsilon.\label{eq:block_encoding_def}
\end{equation}
This encoding typically requires the rescaling factor $\lambda_O \gg 1$, as sub-blocks of a unitary matrix must have eigenvalues with norm $\leq 1$.
These rescaling factors typically then appear as multiplicative costs for implementing a given algorithm.
For example, one can calculate immediately given Eq.~\eqref{eq:block_encoding_def} that
\begin{equation}
    \left|\langle\psi|\otimes\langle 0|U|0\rangle\otimes|\psi\rangle - \frac{1}{\lambda_O}\langle\psi|O|\psi\rangle\right|\leq \epsilon.
\end{equation}
In this work, we use $\lambda_H$ and $\lambda_{\mathrm{F}_i}$ respectively to refer to the rescaling factors for a Hamiltonian and force operator (or $\lambda_{\mathrm{F}}$ to refer to the rescaling factor averaged across all forces).
The cost of implementing these block encodings as quantum circuits is stated as $T_H$ and $T_{F,i}$ respectively (with $T_F$ the cost averaged over all force components again).
In this section, we give circuit implementations of these block encodings, and the corresponding $T_H$, $T_{F,i}$, $\lambda_H$ and $\lambda_{F,i}$ costs.
These differ depending on the type of basis set used to solve the electronic structure problem, and whether the problem is solved in first or second quantization; we will give results for multiple commonly considered systems.

The most commonly used methods to block-encode an arbitrary operator $O$ are linear combination of unitaries (LCU) methods~\cite{Childs2012}.
We give a brief overview of this class of techniques here.
LCU methods involve writing $O$ as a linear combination of unitary operators
\begin{equation}
    O=\sum_ih_iP_i,\;\;\; P_i^{\dag}P_i=I,\;\;\; h_i\geq 0.\label{eq:LCU_sum}
\end{equation}
The condition that $h_i$ be real and positive can be accounted for by placing any complex phase on the operator $P_i$ (as $P_ie^{i\phi}$ is unitary whenever $P_i$ is unitary).
The LCU block encoding then requires two oracles; a PREPARE oracle that constructs the state
\begin{equation}
    \mathrm{PREPARE}|0\rangle=\frac{1}{\sqrt{\sum_ih_i}}\sum_i\sqrt{h_i}|i\rangle
\end{equation}
on an additional LCU register, and the SELECT oracle defined by
\begin{equation}
    \mathrm{SELECT}=\sum_i|i\rangle\langle i|\otimes P_i.
\end{equation}
One can then confirm that given states $|\psi_1\rangle,|\psi_2\rangle$ on the system register,
\begin{equation}
    \langle\psi_1|\otimes \langle 0|\mathrm{PREPARE}^{\dag}\,\mathrm{SELECT}\,\mathrm{PREPARE}|0\rangle\otimes |\psi_2\rangle = \frac{1}{\sum_ih_i}\langle\psi_1|O|\psi_2\rangle,
\end{equation}
as required for this to be a block encoding, with a rescaling factor
\begin{equation}
    \lambda = \sum_ih_i.
\end{equation}
One has a great degree of freedom in choosing the decomposition (Eq.~\eqref{eq:LCU_sum}) and constructing the PREPARE and SELECT oracles.
Optimizing these is a critical requirement to lower fault-tolerant quantum costs.

\subsection{Block encoding in second quantization with local basis sets}\label{sec:block_encoding_second_quantization}

The critical factor in block-encoding Hamiltonians or derivative operators with a local basis is the requirement to transfer all of the data about the operator onto the device.
A generic two-body fermionic operator requires $O(N^4)$ pieces of information to describe, and in the absence of additional structure this bounds the cost of block-encoding below.
Luckily, the electronic structure problem has additional structure that can be used to reduce this cost.
As discussed briefly in Sec.~\ref{sec:basis_rotations}, the two-body electron tensor may be factorized as the product of lower-order tensors that may be expressed with fewer indices.
Two of the most well-known factorization methods are low-rank factorization~\cite{Berry19Qubitization} and tensor hypercontraction~\cite{lee2021even}.
Both of these methods incur additional truncation errors, but numerically these have been shown to converge at a much lower cost than the $O(N^4)$ bound above.

Tensor hypercontraction holds the current record as the best scaling method for block encoding Hamiltonians.
We can write the THC form of the Coulomb operator as
\begin{equation}
\label{eq:thc}
V \approx G = \frac{1}{2} \sum_{\alpha, \beta \in \{\uparrow, \downarrow\}} \sum_{p,q,r,s=1}^{N/2} G_{pqrs} a^\dagger_{p,\alpha} a_{q,\alpha} a^\dagger_{r,\beta} a_{s,\beta}
\qquad \qquad
G_{pqrs} =
\sum_{\mu, \nu = 1}^{M} \chi_{p}^{(\mu)} \chi_{q}^{(\mu)} Z_{\mu\nu} \chi_{r}^{(\nu)}  \chi_{s}^{(\nu)} 
\end{equation}
where $\chi$ is an $M \times N$ dimension matrix and $\zeta$ is an $M \times M$ dimension matrix.
Crucially, for the electronic structure problem it is possible to truncate $M = {\cal O}(N \textrm{polylog}(1/\epsilon_{\mathrm{THC}}))$ while achieving a truncation error $\epsilon_{\mathrm{THC}}$.
Ref.~\cite{lee2021even} demonstrated a method for block encoding $H / \lambda_H$ using a quantum walk with gate complexity $T_H\in\widetilde{\cal O}(M) = \widetilde{\cal O}(N)$.

We now argue that the same method can be used to block encode the force operator, with the same scaling. One can represent the derivative of the THC Hamiltonian with respect to nuclear coordinates as
\begin{align}
\tfrac{d}{dR_i}G_{pqrs}
&=
\sum_{\mu, \nu = 1}^{M} (\tfrac{d}{dR_i}\chi_{p}^{(\mu)}) \chi_{q}^{(\mu)} \zeta_{\mu \nu} \chi_{r}^{(\nu)}  \chi_{s}^{(\nu)} +
(\tfrac{d}{dR_i}\chi_{q}^{(\mu)}) \chi_{p}^{(\mu)} \zeta_{\mu\nu} \chi_{r}^{(\nu)}  \chi_{s}^{(\nu)}\\ 
&+
\chi_{p}^{(\mu)} \chi_{q}^{(\mu)} \zeta_{\mu\nu} (\tfrac{d}{dR_i}\chi_{r}^{(\nu)})  \chi_{s}^{(\nu)} 
+
\chi_{p}^{(\mu)} \chi_{q}^{(\mu)} \zeta_{\mu\nu} (\tfrac{d}{dR_i}\chi_{s}^{(\nu)})  \chi_{r}^{(\nu)} 
+
\chi_{p}^{(\mu)} \chi_{q}^{(\mu)} 
(\tfrac{d}{dR_i}\zeta_{\mu\nu})
\chi_{r}^{(\nu)}  \chi_{s}^{(\nu)},\nonumber
\end{align}
We see that the dimension of these operators is not increased however there are now four types of tensors in the expression ($\chi$, $\zeta$, $\tfrac{d\chi}{dR_i}$ and $\tfrac{d\zeta}{dR_i}$) rather than just two ($\chi$, $\zeta$). However, in order to pursue the same strategy for realizing the block encoding that is employed by \cite{lee2021even}, it is necessary to yet define two more types of tensor $\chi^{(+)} = \frac{d\chi}{d \mathbf{R}} + \chi$ and $\chi^{(-)} = \frac{d\chi}{dR_i} - \chi$. With these definitions we can rewrite the expression as
\begin{align}
\tfrac{d}{dR_i}G_{pqrs} & =\frac{1}{2}\sum_{\mu,\nu=1}^M (\chi_p^{(\mu)})^{(+)}(\chi_q^{(\mu)})^{(+)}\zeta_{\mu\nu}\chi_r^{(\nu)}\chi_s^{(\nu)} - (\chi_p^{(\mu)})^{(-)}(\chi_q^{(\mu)})^{(-)}\zeta_{\mu\nu}\chi_r^{(\nu)}\chi_s^{(\nu)}\\
&+\chi_p^{(\mu)}\chi_q^{(\mu)}\zeta_{\mu\nu}(\chi_r^{(\nu)})^{(+)}(\chi_s^{(\nu)})^{(+)} - \chi_p^{(\mu)}\chi_q^{(\mu)}\zeta_{\mu\nu}(\chi_r^{(\nu)})^{(-)}(\chi_s^{(\nu)})^{(-)} + 2 \chi_{p}^{(\mu)} \chi_{q}^{(\mu)} 
(\frac{d}{dR_i}\zeta_{\mu\nu})
\chi_{r}^{(\nu)}  \chi_{s}^{(\nu)}. \nonumber
\end{align}
We see that this expression also involves five tensors, ($\chi$, $\zeta$, $\chi^{(+)}$, $\chi^{(-)}$, $\frac{d}{dR_i}\zeta$). There are now five terms instead of one but we see that within each of these terms, the two tensors with the $\mu$ index are always the same and the two tensors with the $\nu$ index are always the same. This is a critical requirement for the following step so that creation and annihilation operators transform in the same way. Associated with each of the three tensors related to $\chi$, is the following projection of the fermionic ladder operators into three different larger auxiliary bases:
\begin{align}
\label{eq:projection}
c^\dagger_{\mu, \sigma} & = \sum_{p=1}^{N/2} \chi_{p}^{(\mu)} a^\dagger_{p, \sigma}
\qquad \qquad \qquad
c_{\mu, \sigma} = \sum_{p=1}^{N/2} \chi_{p}^{(\mu)} a_{p, \sigma} \\
c^{\dagger,(+)}_{\mu, \sigma} & = \sum_{p=1}^{N/2} (\chi_{p}^{(\mu)})^{(+)} a^\dagger_{p, \sigma}
\qquad \qquad
c_{\mu, \sigma}^{(+)} = \sum_{p=1}^{N/2} (\chi_{p}^{(\mu)})^{(+)} a_{p, \sigma} \\
c^{\dagger,(-)}_{\mu, \sigma} & = \sum_{p=1}^{N/2} (\chi_{p}^{(\mu)})^{(-)} a^\dagger_{p, \sigma}
\qquad \qquad
c_{\mu, \sigma}^{(-)} = \sum_{p=1}^{N/2} (\chi_{p}^{(\mu)})^{(-)} a_{p, \sigma}
\end{align}
where these creation and annihilation operators act on a larger space of $2 M$ spin-orbitals rather than $N$ spin-orbitals.
Without loss of generality, we are taking the $\chi^{(\mu)}$, $(\chi^{(\mu)})^{(+)}$ and $(\chi^{(\mu)})^{(-)}$ to be normalized vectors for each $\mu$ (because constant factors can be absorbed into $\zeta$). Using this, we rewrite the full force operator as
\begin{align}
\frac{d}{dR_i}G &=
\frac{1}{4} \sum_{\alpha, \beta \in \{\uparrow, \downarrow\}} 
\sum_{\mu, \nu = 1}^{M}  \left(\left(
n_{\mu,\alpha}^{(+)}
n_{\nu,\alpha}
- n_{\mu,\alpha}^{(-)}
n_{\nu,\alpha}
+ n_{\mu,\alpha}
n_{\nu,\alpha}^{(+)}
- n_{\mu,\alpha}
n_{\nu,\alpha}^{(-)}\right)  \zeta_{\mu\nu} + 2 n_{\mu,\alpha}
n_{\nu,\alpha} \frac{d}{dR_i}\zeta_{\mu\nu}\right)
\end{align}
where
\begin{align}
n_{\mu,\alpha} = c^{\dagger}_{\mu,\alpha} c_{\mu,\alpha}
\qquad \qquad
n_{\mu,\alpha}^{(+)} = c^{\dagger,(+)}_{\mu,\alpha} c_{\mu,\alpha}^{(+)}
\qquad \qquad
n_{\mu,\alpha}^{(-)} = c^{\dagger,(-)}_{\mu,\alpha} c_{\mu,\alpha}^{(-)} \, 
\end{align}
are the number operators in the auxillary bases defined earlier. Having reduced the Hamiltonian to this diagonal form, it is now clear that one can use the same block encoding strategy as that defined in \cite{lee2021even} with minimal additional modification. In particular, because there are now five different terms, we will need several additional ancilla bits that flag which term we mean to block encode. We will need a slightly larger QROM because we now need to access five tensors rather than just four. However, the $\zeta$ tensors which are dimension $M \times M$ tend to dominate the cost since the $\chi$ tensors only have dimension $M \times N$. Furthermore, the Toffoli cost of the QROM increases only as the square root of the database size. Thus, the quantum walk will have a Toffoli cost that is increased by roughly a factor of $T_F\sim \sqrt{2}T_H$.
The rescaling factor should now be defined as
\begin{equation}\label{eq:lambdaF_THC}
\lambda_{\mathrm{F}_i} = \sum_{\mu,\nu} \left( \left| \zeta_{\mu\nu}\right | + \frac{1}{2} \left| \frac{d}{dR_i}\zeta_{\mu\nu} \right | \right).
\end{equation}
Using this method, $\lambda_{\mathrm{F}_i}$ will be larger than the original $\lambda_H$ by at least a factor of two. We do not believe $\lambda_{\mathrm{F}_i}$ will be asymptotically larger than $\lambda_H$ because we would expect the $\frac{d}{dR_i}\zeta$ values to generally be smaller than the $\zeta$ values. Thus, we have shown that the block encodings of Ref.~\cite{lee2021even} also work for the force operator, with the same asymptotic complexity.

Our expectation is that this procedure would be suboptimal and that a better approach would be to use \emph{exactly} the same procedure as \cite{lee2021even} in order to block encode the force operator.
By this, we mean that one could apply THC directly to the coefficients $G_{pqrs}$, rather than take the derivatives of the THC tensors.
Then, one ends up with a single $\chi$ and a single $\zeta$ which compress the force operator, rather than a mix of the $\chi$ and $\zeta$ compressing the original Hamiltonian and their derivatives.
Doing this will likely lead to even smaller values of $\lambda_{\mathrm{F}_i}$ than for the Hamiltonian simulation (perhaps even asymptotically so).
However, it is more difficult to show analytically that $M$ would be as small or smaller than the original $M$.
We expect this to be the case due to the sparseness of the derivative operator; studying this numerically (or analytically) would be a clear target for future work.
One additional complication that presents itself here is that the truncation error from factorizing the derivative operator is different to the truncation error in the Hamiltonian, which implies that the forces estimated by this approach do not quite match the energy manifold of the THC-factorized Hamiltonian.
(This error is avoided when differentiating the THC-factorized Hamiltonian, as the truncation here is identical.)
This implies that a lower tolerance for the truncation error on both the Hamiltonian and force operator may be necessary to make the systematic error here negligible; this would need to be properly bounded in any future work.

Although the THC algorithm is expected to be more efficient, a very similar argument can be made for the low rank block encodings of Ref.~\cite{Berry19Qubitization}.
The form of the low-rank Hamiltonian is given in Sec.~\ref{sec:basis_rotations}, in Eq.~\eqref{eq:double_low_rank_factorization}.
Ref.~\cite{Berry19Qubitization} shows that one can use this form of the Hamiltonian in conjunction with qubitization \cite{Low2016} to block encode $H / \lambda_H$ with complexity scaling as $T_H\sim {\cal O}(\sqrt{N L})$ and in this case 
\begin{equation}
\lambda_H = \sum_{\ell=1}^L \sum_{p,q=1}^{M_\ell} \left| f_p^{(\ell)} f_q^{(\ell)} \right| \, =\sum_{l=1}^L\Bigg[\sum_{p=1}^{M_l}f_p^{(l)}\Bigg]^2
\end{equation}
Let us now extend this factorization to the derivative.
We define $V$ as in Eq.~\eqref{eq:double_low_rank_factorization}; $V=\sum_{l=1}^LW^{(l)}W^{(l)\dag}$.
Then, we can write the derivative as a difference of squares:
\begin{align}
    \frac{dV}{dR_i}&=\sum_{l=1}^L\bigg[\frac{dW^{(l)}}{dR_i}W^{(l)\dag}+W^{(l)}\frac{dW^{(l)\dag}}{dR_i}\bigg]\\
    &=\frac{1}{2}\sum_{l=1}^L\Bigg\{\bigg[W^{(l)}+\frac{dW^{(l)}}{dR_i}\bigg]\bigg[W^{(l)}+\frac{dW^{(l)}}{dR_i}\bigg]^{\dag} - \bigg[W^{(l)}-\frac{dW^{(l)}}{dR_i}\bigg]\bigg[W^{(l)}-\frac{dW^{(l)}}{dR_i}\bigg]^{\dag}\Bigg\}.
\end{align}
The $W^{(l)}$ (Eq.~\ref{eq:single_factorized_derivative}) and their derivatives are one-body operators, so their sum is a one-body operator and can be block diagonalized following Ref.~\cite{Berry19Qubitization}.
This gives a constant factor overhead and requires a single additional qubit to flag between two choices of basis rotations (in place of Eq.~\eqref{eq:basis_change_def})
\begin{equation}
    U^{(l+)}\bigg[W^{(l)}+\frac{dW^{(l)}}{dR_i}\bigg]U^{(l+)}=\sum_{\sigma\in\{\uparrow,\downarrow\}}\sum_{p=1}^{M_l}f_p^{(l+)}n_{p,\sigma},\hspace{0.5cm} U^{(l-)}\bigg[W^{(l)}-\frac{dW^{(l)}}{dR_i}\bigg]U^{(l-)}=\sum_{\sigma\in\{\uparrow,\downarrow\}}\sum_{p=1}^{M_l}f_p^{(l-)}n_{p,\sigma}.
\end{equation}
The corresponding scaling factor of the block encoding is in this case
\begin{equation}\label{eq:lambdaF_low_rank}
    \lambda_{\mathrm{F}_i}=\sum_{l=1}^L\frac{1}{2}\Bigg\{\Bigg[\sum_{p=1}^{M_l}f_p^{(l+)}\Bigg]^2+\Bigg[\sum_{p=1}^{M_l}f_p^{(l-)}\Bigg]^2\Bigg\}.
\end{equation}
Assuming that $\frac{dW^{(l)}}{dR_i}$ is a similar scale to $W^{(l)}$, we can expect that $f_p^{(l\pm)}\sim 2f_p^{(l)}$, and that $\lambda_{\mathrm{F}}\sim 4\lambda_H$.
However, as noted before for the THC derivative factorization, we expect these scalings to be improved significantly by a direct factorization of the derivative operator instead of differentiating the factorization.
This may yet lead to an asymptotic improvement, but with the same systematic bias as described for the THC method that comes from a different truncation error between the Hamiltonian and force operators.

\subsection{Block encoding in first quantization in a plane wave basis}
\label{sec:BEfirst}
The relatively simple structure of the electronic structure Hamiltonian in a plane wave basis implies that the cost to block encode it is far below the $O(N^4)$ bound for an arbitrary two-body operator.
Block-encodings of the first-quantized plane wave Hamiltonian are already well-known~\cite{su2021fault}, with a number of gates required scaling as
\begin{equation}
    T_H\in\widetilde{\mathcal{O}}((\log N)^2\eta),
\end{equation}
where $\eta$ is the number of electrons in the system, and $N$ is the number of plane waves.
The corresponding scaling factor is similarly known to be~\cite{BabbushContinuum}
\begin{equation}
    \lambda_{H}\in \mathcal{O}(\eta N^{2/3}/\Omega^{2/3}).
\end{equation}

We now consider how to block-encode the force operator in its first-quantized plane wave form.
We first rewrite Eq.~\eqref{eq:force_plane_wave_first} slightly to place it in a linear combination of unitaries form
\begin{equation}
    \frac{dH}{dR_{l,\alpha}}=\frac{4\pi Z_l}{\Omega}\sum_{j=1}^{\eta}\sum_{\nu\in G_0}\frac{|k^{\alpha}_{\nu}|}{\|\mathbf{k}_{\nu}\|^2}\Bigg(\textrm{QFT}\sum_{p\in G_0}\,-i\,\mathrm{sign}(k_{\nu}^{\alpha})e^{i\mathbf{k}_{\nu}\cdot(\mathbf{R}_l-\mathbf{r}_p)}|p\rangle\langle p|_j\;\textrm{QFT}^{\dag}\Bigg).
\end{equation}
Here, we split our indexing of the $3N_a$-dimensional nuclear co-ordinate vector $\mathbf{R}$ into the nuclei index $l$ and the position index $\alpha$, and write $\mathbf{R}_l=(R_{l,x},R_{l,y},R_{l,z})$ for the $3$-dimensional position of nuclei $l$. Then, we write $k_{\nu}^{\alpha}$ for the projection of the $3$-dimensional vector $\mathbf{k}_{\nu}$ onto the unit vector in the direction of $R_{l,\alpha}$.
To see that the operator inside each bracket is unitary, recall that $|p\rangle\langle p|_j$ projects the $j$th electron register onto the $p$th basis state, but acts as the identity on all other electron registers.
This implies that $|p\rangle\langle p|_j^{\dag}|q\rangle\langle q|_j=\delta_{p,q}|p\rangle\langle p|_j$, and $\sum_{p}|p\rangle\langle p|_j=I$, as required.
This implies that our PREPARE operator requires us to prepare the state
\begin{equation}
    \sqrt{\frac{4\pi Z_l}{\lambda_{\mathrm{F}_{l,\alpha}}\Omega}}\sum_{j,\nu}\frac{\sqrt{|k_{\nu}^{\alpha}|}}{\|\mathbf{k}_{\nu}\|}|\nu,j\rangle,
\end{equation}
with a rescaling constant
\begin{align}
    \lambda_{\mathrm{F}_{l,\alpha}}=\frac{4\pi\eta Z_l}{\Omega}\sum_{\nu\in G_0}\frac{|\mathbf{k}_{\nu}\cdot \mathbf{R}_l|}{\|\mathbf{k}_{\nu}\|^2\|\mathbf{R}_l\|}
    &= \mathcal{O} \left(\frac{\eta Z_l}{\Omega}\int_0^{2\pi}d\phi\int_0^{\pi}d\theta\int_0^{N^{1/3}}dr\, \frac{\Omega^{1/3}r\left|\cos\theta \right|}{r^2}r^2\sin\theta\right)\nonumber\\
    &={\cal O}(\eta N^{2/3}/\Omega^{2/3}).\label{eq:lambdaF_planewaves}
\end{align}
In \app{first_quantized_force_prepare}, we describe a circuit for the PREPARE operator that generates an initial approximation via a nested boxes approach, using inequality testing and a single amplitude amplification round to achieve $\epsilon$ precision at a cost $\mathcal{O}(\log N\,\log(N/\epsilon))$.
The SELECT operator for our block encoding takes the form
\begin{equation}
    \mathrm{SELECT}=-i \,\mathrm{QFT}\sum_{\nu,j}\mathrm{sign}(\mathbf{k}_{\nu}^{\alpha})\sum_p e^{i\mathbf{k}_{\nu}\cdot (\mathbf{R}_l-\mathbf{r}_p)}|\nu,j\rangle\langle\nu,j| \otimes |p\rangle\langle p|_j\mathrm{QFT}^{\dag}.
\end{equation}
Normally the QFT operation would be performed on each of the $\eta$ electron registers.
However, we only need to act on the momentum register $j$.
We therefore first swap the $j$th electron register into the $0$th electron register conditional on $|j\rangle$.
That can be achieved with $\eta$ Toffolis for unary iteration \cite{BabbushSpectra}, together with $\eta\log N$ Toffolis for controlled swaps.
At the end we invert the procedure to replace the register.
The QFT on this register can be implemented with gate complexity $\mathcal{O}((\log N)^2)$.

To implement the phase $e^{i\mathbf{k}_{\nu}\cdot (\mathbf{R}_l-\mathbf{r}_p)}$, one would first subtract $\mathbf{r}_p$ from the classically given value $\mathbf{R}_l$.
The number of bits needed for $\mathbf{R}_l$ can be found in the case of energy estimation to be $\mathcal{O}(\log(N/\epsilon))$, using Lemma 2 of \cite{su2021fault}.
Here we are estimating force rather than energy, but the problem is sufficiently similar that the same scaling is needed.
The complexity of the subtraction is therefore $\mathcal{O}(\log(N/\epsilon))$.
The components of both $\mathbf{R}_l$ and $\mathbf{r}_p$ would be given in two's complement, then one would convert the difference to signed integers.

The phasing can then be performed by using a phase gradient state \cite{Kitaev2002,GidneyAdder}.
Let us assume that the $|\nu\rangle$ register contains three separated components $|\nu_x,\nu_y,\nu_z\rangle$ written in binary form; $k_{\nu}^{\alpha}=\frac{2\pi \nu_{\alpha}}{\Omega^{1/3}}$ for $\alpha=x,y,z$.
Then each bit of each component of $\nu$ can be used to control addition of a component of $\mathbf{R}_l-\mathbf{r}_p$ into the phase gradient register.
There are sign bits of both, since they are given as signed integers.
The product of the sign bits can be used to give an overall sign, and this sign can be used to control addition versus subtraction of $\mathbf{R}_l-\mathbf{r}_p$ without further non-Clifford cost.
Since there are $\mathcal{O}(\log N)$ bits of $\nu$ and $\mathcal{O}(\log(N/\epsilon))$ bits of $\mathbf{R}_l-\mathbf{r}_p$, the overall cost is $\mathcal{O}(\log N \log(N/\epsilon))$ for the phasing.
This is similar to the complexity of the state preparation.

Taking into account the $\mathcal{O}((\log N)^2)$ complexity of the QFT and the $\mathcal{O}(\eta\log N)$ complexity of swapping the momentum register, the complexity to block encode the derivative operator is
\begin{equation}
    T_F=\mathcal{O}\big[\eta \log N +\log N \log(N/\epsilon)\big].
\end{equation}


\subsection{Numerical studies of the force operator in molecular systems}
\label{sec:numerics}
The Hamiltonian simulation cost depends on the representation of the Hamiltonian and can be, among other things, quantified by its spectral bandwidth, which we upper bound with a value $\lambda$.
For electronic structure Hamiltonians there exist numerical insights into the upper bounds for various systems and representation techniques and with respect to their scaling when increasing the system size or the number of orbitals~\cite{vonBurg2020, lee2021even, Berry19Qubitization}.
For the force operator, see Eq.~\eqref{eq:der_operator}, such results do not exist. 
In the following, we compare the values of $\lambda$ obtained for the force operator and the Hamiltonian using two methods, the sparse method~\cite{Berry19Qubitization} and the double low-rank factorization~\cite{vonBurg2020}.
The formula for the rescaling constant to block-encode an operator via the sparse method is
\begin{align}
    \lambda_1^{(\mathrm{sparse})}=\sum |T_{pq}+\sum_{r}V_{pqrr}|\,,&& \lambda_2^{(\mathrm{sparse})}=\frac{1}{2}\sum |V_{pqrs}| \quad \mathrm{and}\quad \lambda^{(\mathrm{sparse})}=\lambda_1^{(\mathrm{sparse})}+\lambda_2^{(\mathrm{sparse})}\,
    \label{eq:lambda_sparse}
\end{align}
where the coefficients $T_{pq}$ and $V_{pqrs}$ are taken from from Eq.~\eqref{eq:chemist_notation} for the Hamiltonian, and using $T^{(F_A)}_{pq}$ and $V^{(F_A)}_{pqrs}$ from Eq.~\eqref{eq:force_coefficients_def} for the force operators. 
These constants are induced $1$-norms of the Hamiltonian (or force) operator, which makes them highly dependent on the choice of molecular orbital basis.
As it has been observed that localizing the basis orbitals reduces the scaling for Hamiltonians in Refs.~\citenum{koridon2021orbital}, \citenum{lee2021even}, we also use localized orbitals for all of our calculations, see also App.~\ref{app:localization}. In Fig.~\ref{fig:sparse_cmo}, we provide the same results for canonical molecular orbitals. 

Instead of directly following the method used in Sec.~\ref{sec:block_encoding_second_quantization} to differentiate the double-low-rank-factorized Hamiltonian to prove a bound on the cost of block-encoding the derivative operator, in this section we directly block-encode the derivative operator instead.
This is practically easier, and should in principle yield a lower rescaling factor $\lambda$, however care should be taken in practice to suppress the systematic error this produces (as described in Sec.~\ref{sec:block_encoding_second_quantization}).
Within the double low-rank factorization the lambdas are given by
\begin{align}
    \lambda_1^{(\mathrm{DF})} = \sum_i |t_i|\,, && \lambda_2^{(\mathrm{DF})} = \frac{1}{4} \sum_l^L \left(\sum_p |f_p^{(l)}|\right)^2\quad \mathrm{and}\quad
    \lambda^{(\mathrm{DF})} = \lambda_1^{(\mathrm{DF})} + \lambda_2^{(\mathrm{DF})}\,,
    \label{eq:lambda_df}
\end{align}
with $t_i$ the eigenvalues of $T_{pq}+\sum_{r}V_{pqrr}$, cf. Eq.~\eqref{eq:double_low_rank_factorization} for the Hamiltonian, and the eigenvalues of $T^{(F_A)}_{pq}+\sum_rV^{(F_A)}_{pqrr}$ (Eq.~\eqref{eq:force_coefficients_def}) for the force operators.

As the force is a vector with $3N_{a}$ elements for a system with $N_{a}$ atoms, we find $3N_{a}$ lambdas. To compare the simulation cost of the Hamiltonian and the force operator, we use the mean of the lambdas of the force operators, 
\begin{align}
    \lambda_{\mathrm{F}}=\frac{\sum_{i=1}^{3N_{a}}\lambda_{\mathrm{F}_i}}{3N_a}\,,
    \label{eq:lambda_force_vector}
\end{align}
with $\lambda_{\mathrm{F}_i}$ the lambda of the $i$-th force operator. We note that, in contrast to electronic structure Hamiltonians, performing a double low-rank factorization of the force operator yields negative eigenvalues $w_l$ in Eq.~\eqref{eq:single_factorized_hamiltonian}. This can be accounted for in a block encoding by shifting the sign from the eigenvalue $w_l$ to the encoding itself, so we replace the eigenvalues by their absolute values $|w_l|$. For all shown plots, we use localized orbitals (LMOs) from a Hartee-Fock calculation, see Sec.~\ref{app:localization}. 

In the following we provide numerical results on one-dimensional chains of hydrogen atoms with spacings of $1.4$ Bohr radii ($0.74084$ \r{A}), calculated in STO-6G, a system often used to infer the scaling in the thermodynamic limit \cite{lee2021even, Berry19Qubitization}. Moreover, to provide a benchmark for systems closer to relevance for biological systems, we investigate the scaling of $\lambda^{(\mathrm{sparse})}$ and $\lambda^{(\mathrm{DF})}$ for water clusters of increasing size. Water clusters are prototypes of hydrogen bonded networks and therefore of fundamental interest, and thoroughly investigated benchmarks by experimental and theoretical methods~\cite{Keutsch:2001, Manna:2017}. Hydrogen bonds are regarded as the key to life in general, due to the directionality, cooperativity and versatility~\cite{Vladilo:2018}, being crucially involved in the chemical basis of the genetic code and all catalytic processes. The strength of hydrogen bonds is highly dependent on the local surrounding and requires accurate quantum treatment~\cite{Hus:2012}. As input for the calculations, we use the data from \cite{Rakshit:2019}, presented also in Fig.~\ref{fig:water_clusters}, which include the geometries of water clusters which were optimized by using a TTM2.1-F ab-initio based interaction potential together with subsequent MP2 calculations in the aug-cc-pVTZ basis.

To study the scaling of $\lambda^{(\mathrm{DF})}$ in the continuum limit, i.e. when increasing the basis set size, we follow \cite{lee2021even, Berry19Qubitization} and use $\ch{H4}$ on a square plaquette with a side length of $2$ Bohr radii. We first calculate the Hamiltonian and force operators using the cc-pvtz basis, which yields $N=56$ spatial orbitals. To study the effect of an increasing basis set we artificially truncate the two-body coefficients $V_{pqrs}$,
\begin{align}
    V_{pqrs}^{\mathrm{truncated}} = \sum_{pqrs=1}^{\tilde{N}} V_{pqrs}\,
\end{align}
for $10\leq\tilde{N}\leq 56$ and show the $\lambda^{(\mathrm{DF})}$ in Fig.~\ref{fig:lambda_h4_and_water_cluster}. 

In Fig.~\ref{fig:lambdas_hydrogen_chain}, we show $\lambda^{(\mathrm{sparse})}$ and $\lambda^{(\mathrm{DF})}$ of the force operators and Hamiltonian for the hydrogen chains of length between $N_{H}=10$ and $N_{H}=100$.
For both methods, we recognize that the force operator shows a better scaling compared to the scaling of the Hamiltonian, and indeed tends towards a constant or $\log(N_H)$ cost.
This makes sense, as we expect the force on a given nuclei in a $1D$ chain to scale logarithmically in the system size.

In Fig.~\ref{fig:lambda_h4_and_water_cluster}(a), we show the scaling of $\lambda^{(\mathrm{sparse})}$ and $\lambda^{(\mathrm{DF})}$ for the water clusters in STO-3G with increasing system size.
For technical details on the calculation we refer to Appendix~\ref{app:numerical_details_water_clusters}.
Similarly to the hydrogen chain, we find a significant better scaling of the average $\lambda$ of the force operators in comparison to the scaling of the Hamiltonian.

In Fig.~\ref{fig:lambda_h4_and_water_cluster}(b), we show the scaling of $\lambda^{(\mathrm{DF})}$ when increasing the basis set size for $\ch{H4}$.
In this case, the scaling of $\lambda^{(\mathrm{DF})}$ for the Hamiltonian and the force is comparable, which is contrary to the results in the previous section.
This makes sense: increasing the basis set increases the range of local energy densities on each atom, which suggests the spectrum of local operators such as derivatives should be increased also.

\begin{figure*}[tb]
    
        \centering
        \includegraphics[width=.5\textwidth]{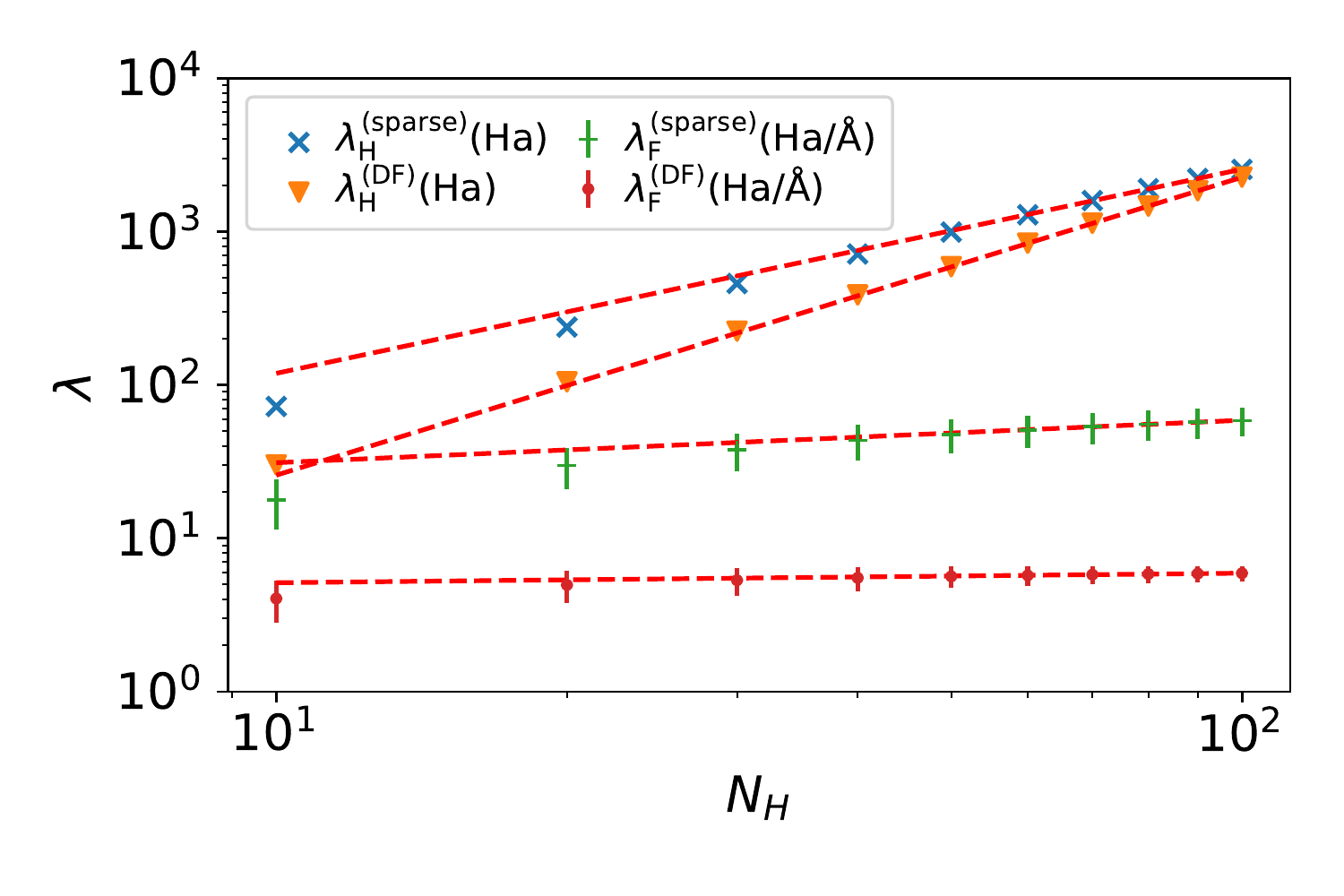}
\caption{Rescaling factors for the Hamiltonian $(\lambda_{\mathrm{H}})$ and mean rescaling factors for the force operators ($\lambda_{\mathrm{F}}$ - Eq.~\eqref{eq:lambda_force_vector}) using the sparse method with localized molecular orbitals, and a double low-rank factorization of the force operators (as opposed to differentiating the double-low-rank-factorized Hamiltonian described in Sec.~\ref{sec:block_encoding_second_quantization}) for one-dimensional hydrogen chains with up to $N_{H}=100$ atoms separated by 1.4 Bohr ($0.74084$ \r{A}).
All calculations were performed in STO-6G. The plot shows $\lambda_1+\lambda_2$ of the Hamiltonian and force operators, see Eq.~\eqref{eq:lambda_force_vector}. The fits suggest scalings of 
$\lambda^{\mathrm{(sparse)}}_{\mathrm{H}}=\mathcal{O}(N_H^{1.331\pm 0.012})$, 
$\lambda^{\mathrm{(sparse)}}_{\mathrm{F}}=\mathcal{O}(N_H^{0.277\pm 0.011})$, 
$\lambda^{\mathrm{(DF)}}_{\mathrm{H}}=\mathcal{O}(N_H^{1.942\pm 0.002})$ and 
$\lambda^{\mathrm{(DF)}}_{\mathrm{F}}=\mathcal{O}(N_H^{0.062\pm 0.004})$. Only force operators with $\lambda_{\mathrm{F}_i}\neq 0$ were included in this calculation. For all fits only the last five points were taken into account. The error bars show the standard deviation of the distribution of the lambdas of single force operators $\lambda_{\mathrm{F}_i}$.}
\label{fig:lambdas_hydrogen_chain}
\end{figure*}

\begin{figure*}[tb]
    \centering
    \begin{minipage}{0.49\textwidth}
        \centering
        \includegraphics[width=1\textwidth]{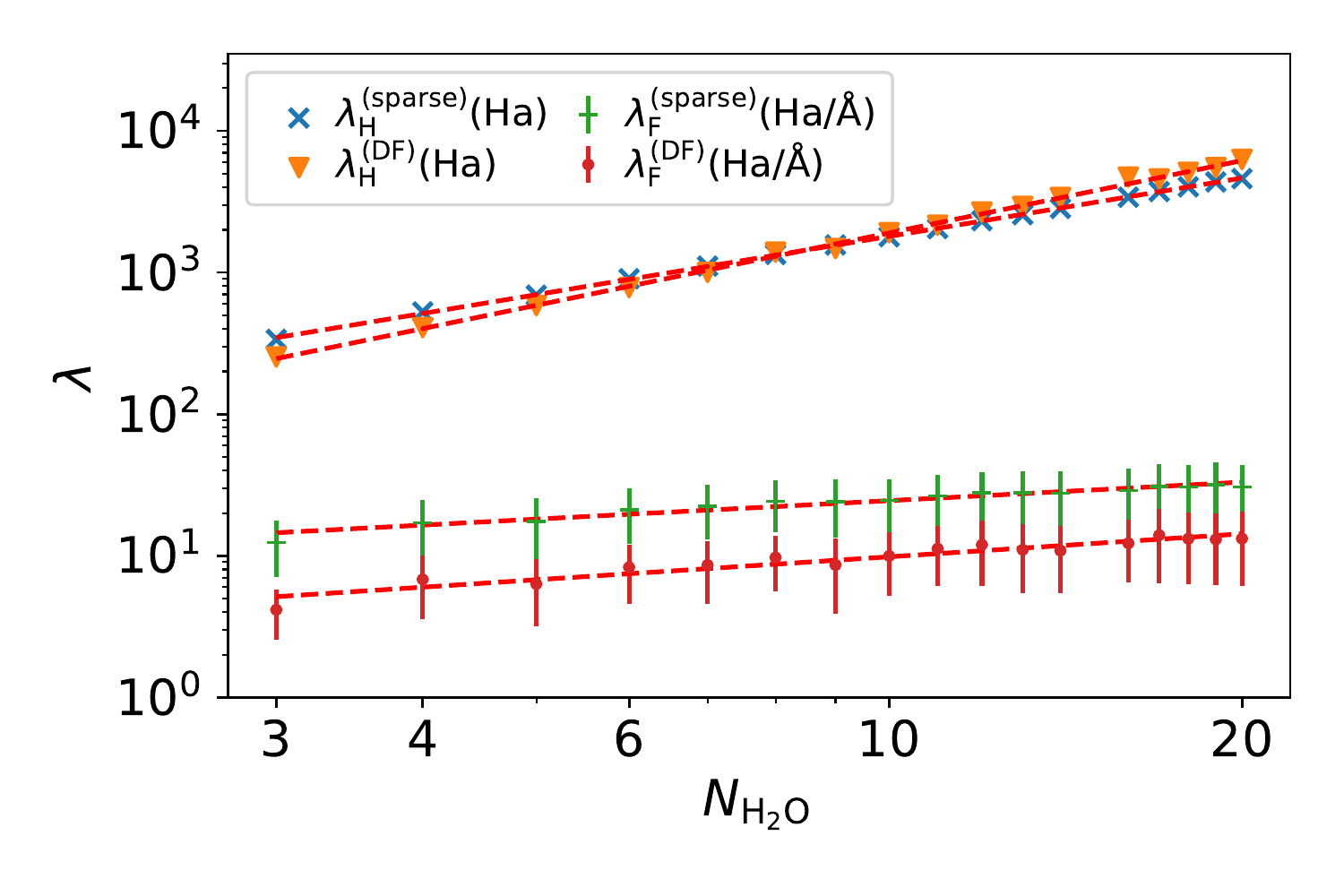}\llap{
  \parbox[b]{6.0in}{(a)\\\rule{0ex}{2.2in}
  }}
    \end{minipage}\hfill
    \begin{minipage}{0.49\textwidth}
        \centering
        \includegraphics[width=1\textwidth]{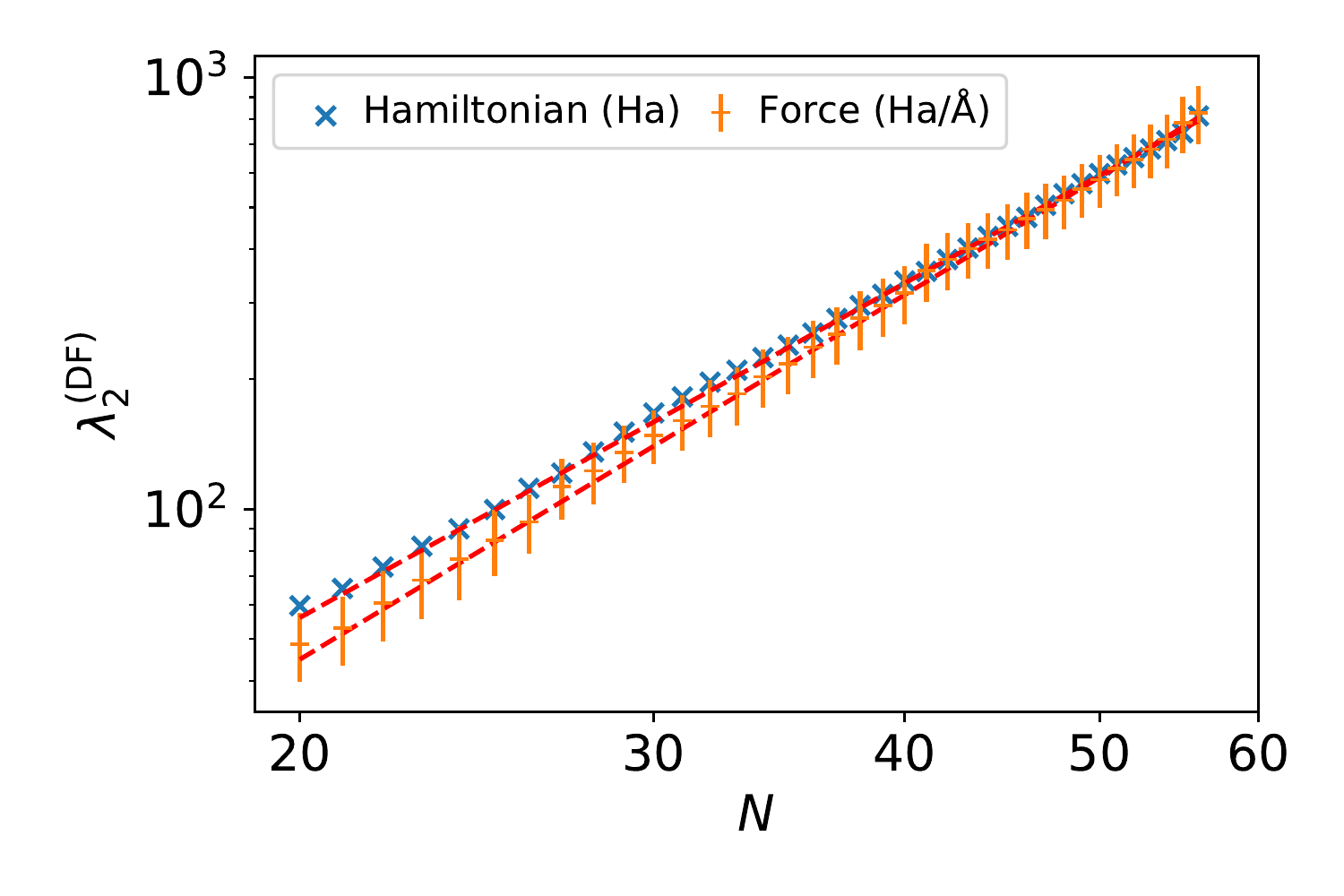}\llap{
  \parbox[b]{6.0in}{(b)\\\rule{0ex}{2.2in}
  }}
    \end{minipage}\hfill
\caption{(a) Rescaling factors $\lambda$ of the Hamiltonian and force operators for the water clusters presented in Fig.~\ref{fig:water_clusters}, for the sparse method (using localized molecular orbitals, see App.~\ref{app:localization}), and following a double low-rank factorization of the corresponding operators (as opposed to differentiating the double-low-rank-factorized Hamiltonian described in Sec.~\ref{sec:block_encoding_second_quantization}).
Fits suggest scalings of $\lambda^{\mathrm{(sparse)}}_{\mathrm{H}}=\mathcal{O}(N_{H2O}^{1.368\pm 0.005})$, $\lambda^{\mathrm{(sparse)}}_{\mathrm{F}}=\mathcal{O}(N_{H2O}^{0.434\pm 0.027})$, $\lambda^{\mathrm{(DF)}}_{\mathrm{H}}=\mathcal{O}(N_{H2O}^{1.697\pm 0.019})$ and $\lambda^{\mathrm{(DF)}}_{\mathrm{F}}=\mathcal{O}(N_{H2O}^{0.539\pm 0.042})$. (b) The rescaling factor of double-low-rank-factorized Hamiltonians and force operators with respect to the number of spin orbitals $N$ in an STO-6G basis for a $\ch{H4}$ plaquette with side length of $2$ Bohr radii.
Fits suggest scalings of $\lambda^{\mathrm{(DF)}}_{\mathrm{H}}=\mathcal{O}(N^{2.549\pm 0.014})$ and $\lambda^{\mathrm{(DF)}}_{\mathrm{F}}=\mathcal{O}(N^{2.749\pm 0.03})$.
As in Fig.~\ref{fig:lambdas_hydrogen_chain}, in both plots, the error bars show the standard deviation of the distribution of the lambdas of single force operators $\lambda_{\mathrm{F}_i}$.}
\label{fig:lambda_h4_and_water_cluster}
\end{figure*}

\section{Algorithms to compute forces in fault tolerance}\label{sec:FT}

We detail three different methods in this section by which one might estimate forces in a fault-tolerant setting.
We first study a semi-classical finite difference calculation, where the quantum computer is called as a subroutine to estimate energies at different positions $\mathbf{R}$.
We then study the direct estimation of forces as the expectation value of the derivative of the Hamiltonian via the Hellman-Feynman theorem, using the overlap estimation algorithm to make this estimation at the Heisenberg limit.
We finally study expectation value estimation of the Hamiltonian derivative by the new expectation value estimation algorithm of \cite{Huggins2021Nearly}, which re-encodes this as a separate derivative estimation problem that can be solved by the quantum derivative estimation algorithm of \cite{Gilyen_2017}.
We estimate the asymptotic cost for each method (up to logarithmic factors) for plane waves, hydrogen chains, and small water cluster systems.

\subsection{Numerical differentiation of the energy by higher order finite differences}\label{sec:Num_diff_FD}

Finite difference methods estimate gradients as a linear combination of the energies $E(\mathbf{R})$ from different atomic configurations $\mathbf{R}$.
Central finite differences formulas offer a quadratic advantage in the discretization error compared to backward and forward differences, so here we will focus only on the central difference forms. 
The simplest central finite difference formula is the first-order form, which requires energy calculations at two points $\mathbf{R}$:
\begin{equation}
    \frac{d E}{d R_i} = \frac{E(\mathbf{R} +\frac{dR\cdot \mathbf{v}_i}{2})-E(\mathbf{R}-\frac{dR\cdot \mathbf{v}_i}{2})}{dR}+\epsilon^{(1)}_{fd}.
\end{equation}
Here, $\epsilon^{(1)}_{fd}\sim dR^3$ is the error due to the finite difference approximation, and $\mathbf{v}_i$ is the unit vector along the component $R_i$.
This can be generalized to the degree-$2m$ (central) finite difference formula~\cite{Gilyen_2017}:
\begin{equation}\label{eq:higher_order_derivatives}
   \frac{d E}{d R_i}= \frac{1}{dR}\sum_{\substack{l=-m\\l\neq 0}}^m a^{(m)}_l E(\mathbf{R}+l\;dR\cdot \mathbf{v}_i) + \epsilon_{fd}^{(m)}:= \frac{dE^{(m)}}{dR_i} + \epsilon_{fd}^{(m)},
\end{equation}
where the coefficients $a^{(m)}_l$ are given by:
\begin{equation}
    a^{(m)}_l := \begin{cases}
    \frac{(-1)^{l-1}}{l} \frac{\binom{m}{|l|}}{\binom{m+|l|}{|l|}} & l \ne 0\\
    0 & \text{otherwise}
    \end{cases},
\end{equation}
and $\epsilon_{fd}^{(m)}\sim dR^{2m+1}$ is the degree-$2m$ finite difference error.

The easiest way to implement higher-order finite difference methods on a quantum device is to use the device as a subroutine that calculates the required energies $E(\mathbf{R}+l\,dR\cdot \mathbf{v}_i)$.
This then sets a competition between the finite-difference error and the error $\epsilon_{\mathrm{PE}}$ from phase estimation, as the latter scales inversely in $dR$.
Specifically, $\epsilon_{\mathrm{PE}}\sim dR^{-1}$ due to the division in Eq.~\eqref{eq:higher_order_derivatives}.
Optimizing higher-order finite difference methods on a quantum device then requires balancing these two sources of error by optimizing our choices of $dR$ and the degree $m$ of the finite-difference method used.
It also requires that we optimize our quantum circuits for finite difference estimation: as we are repeatedly estimating eigenvalues of similar (but not identical) quantum states, under certain conditions it may be preferable to reuse our state register in between phase estimation routines.
In this section we will optimize all the above, and calculate bounds on the asymptotical scaling of the resulting algorithm.

In Appendix~\ref{app:App_FD_Holevo_variance} we calculate the error $\epsilon_{\mathrm{PE}}$ due to quantum phase estimation on a single component $\frac{dE}{dR_i}$. We draw on results from \cite{BabbushSpectra,Berry2009,Higgins09Demonstrating}, and then propagate this error through Eq.~\eqref{eq:higher_order_derivatives}.
Then, in Appendix~\ref{App_FD_Lagrange_Mult}, we optimize our resource allocation to the estimation of different $E(\mathbf{R}+l\,dR\cdot\mathbf{v}_i)$ via Lagrangian techniques.
This yields a bound
\begin{equation}\label{eq_epsilon_PE}
    \epsilon_{\mathrm{PE}} \le \frac{\pi \lambda_{\mathrm{H}} 6^{3/2} m^{1/2}}{2dR\;T},
\end{equation}
where $\lambda_{\mathrm{H}}$ is the cost of block-encoding the Hamiltonian (Sec.~\ref{sec:block_encodings}).
In order for this bound to hold, we require that $\lambda_{\mathrm{H}}$ be the worst-case block-encoding cost over all points $\mathbf{R}+ldR\cdot\mathbf{v}_i$.

Equation~\eqref{eq_epsilon_PE} gives the first part of the total error for estimating the gradient with higher order finite differences. To complete the estimation of the error we need also to upper-bound the second component, $\epsilon_{\textrm{fd}}^{(m)}$. To do this, we invoke Theorem 24 of \cite{Gilyen_2017}.
This requires imposing the condition that the energy and its derivatives are bounded,
\begin{equation}\label{eq_bounding_Derivative}
    \left|\frac{d^kE}{dR_i^k} \right| \leq \mathfrak{e} c^{k} (k)^{\frac{k}{2}},~\forall~i,
\end{equation}
with $c$ a constant with units length$^{-1}$ and $\mathfrak{e}$ a constant with the units of energy.
When this condition is satisfied, the discretization error $\epsilon_{\textrm{fd}}^m$, for one component of the gradient can be upper bounded by:
\begin{equation}\label{eq:eq_error_approx_app}
    \epsilon_{fd}^{(m)}=\left|\frac{d E}{d {R_i}}   - \frac{dE^{(m)}}{dR_i}\right|\leq \frac{\mathfrak{e}}{dR} \sum_{k=2m+1}^{\infty} \left(8 \;dR\; c  \;m \right)^k 
    = \frac{\mathfrak{e}}{dR}\frac{(8 \;dR\; c\;  m)^{2m+1}}{1- (8 \;dR \;c  \;m )}.
\end{equation}
As the error shrinks exponentially with $m$, the value of $dR$ needed to ensure that the error is at most $\mathcal{O}(\epsilon)$ can be taken to be much larger as $m$ increases. Since the cost of phase estimation scales inversely with $dR$, we will see below that using higher order difference formulas will actually reduce the cost of phase estimation.

\subsubsection{Optimization of the finite difference step size}\label{App_FD_error_gyl}

It remains to optimize the remaining free parameters in our finite difference method.
Substituting Eq.~\eqref{eq:eq_error_approx_app}  and \eqref{eq_epsilon_PE} into  $\epsilon^2 = \epsilon_{\mathrm{PE}}^2 + (\epsilon_{fd}^{(m)})^2$,
we obtain: 
\begin{equation}
    \epsilon^2 = \epsilon_{\mathrm{PE}}^2 + (\epsilon_{fd}^{(m)})^2  \leq \left(\frac{\pi \lambda_{\mathrm{H}}}{2T} \frac{\Gamma}{dR}\right)^2+ \left(\mathfrak{e}\frac{(8 c m )^{2m+1} dR^{2 m}}{(1-8 mc \; dR)}\right)^2,
\end{equation}

If we simply make the assumption that $8\; m\; c\; dR\le 1/2$ and we combine the two errors quadratically,  then we have the following upper bound on the total error of the numerically estimated gradient:
\begin{align}
    \epsilon^2 = \epsilon^2_{\mathrm{PE}} +(\epsilon_{fd}^{(m)})^2  < \left(\frac{\pi\lambda_{\mathrm{H}}}{2T} \frac{6^{3/2}(m)^{1/2}}{dR}\right)^2+ \left(\mathfrak{2e}{(8 c m )^{2m+1} dR^{2m}}\right)^2.\label{eq_epsilon_tot_not_opt_app}
\end{align}

We can also find a second upper bound on the value of $dR$, by imposing that the $\epsilon_{fd}^2$ is at most $\epsilon^2/2$ in the upper bound of Eq.~\eqref{eq_epsilon_tot_not_opt_app} and obtain
\begin{equation}\label{eq:dRs}
 dR \leq \min\left\{\frac{1}{8cm }\left(\frac{\epsilon}{16\sqrt{2}\mathfrak{e} m  c}\right)^{\frac{1}{2m}},\frac{1}{16cm}\right\},
\end{equation}
whereas the second requirement is a direct consequence of the requirement that $8~dR~c m \le 1/2$. Substituting this into Eq.~\eqref{eq_epsilon_PE} will yield a comparable bound for the time.

One remaining point that needs to be addressed is the sensitivity of the estimation protocol on the ground state preparation.  In particular, the above discussion holds if the ground state preparation success probability per point is $1- \mathcal{O}(1/N_a)$.  This is because we assume in the above analysis that the probability of the algorithm projecting out of the groundspace at each of the $3N_a$ estimations is asymptotically negligible.  From the union bound our previous claim on the success probability implies this because $(1-\mathcal{O}(1/N_a))^{3N_a} \in \Omega(1)$.

While this may appear to be a substantial drawback of the finite difference method, the ground state will sometimes only need to be prepared once for the FD gradient estimation process.  This is because if the eigenvalue gap is large relative to the perturbation in the energy needed over our finite sized mesh then the ground states at two near by configurations will be nearly identical.  We show this fact in in~\app{robust} and specifically claim that for any $\Delta_1,\Delta_2\in \mathbb{R}^{3N_a}$ satisfying $\max(\|\Delta_1\|,\|\Delta_2\|)< (E_1(\mathbf{R}) -E_0(\mathbf{R}))/(4\max_{\mathbf{R}',i}\left\|\frac{dH(\mathbf{R}')}{dR_i}\right\|)$ (where the max over $\mathbf{R}'$ is taken over a ball of radius $\max(\|\Delta_1\|,\|\Delta_2\|)$ around $\mathbf{R}$), the difference between the approximate ground states $\ket{\tilde{\psi}_0(\mathbf{R}+\Delta)}$ and the true ground state at $\Delta=0$ is
\begin{equation}
|\braket{\tilde{\psi}_0(\mathbf{R}+\Delta_1)}{\tilde{\psi}_0(\mathbf{R} +\Delta_2)}| \ge 1 - 3\delta - \frac{6 \max\{\|\Delta_1\|,\|\Delta_2\|\} \max_{\mathbf{R}',i}\left\|\tfrac{dH(\mathbf{R}')}{dR_i}\right\|} {E_1(\mathbf{R}) - E_0(\mathbf{R})}
\end{equation}
where $E_i(\mathbf{R})$ refers to the $i^{\rm th}$ eigenvalue of $H(\mathbf{R})$ and 
\begin{equation}
\max\left(\left\|\ket{\tilde{\psi}_0(\mathbf{R}+\Delta_1)} -\ket{\psi_0(\mathbf{R} + \Delta_1)}\right\|,\left\|\ket{\tilde{\psi}_0(\mathbf{R}+\Delta_2)} -\ket{\psi_0(\mathbf{R} + \Delta_2)}\right\|\right) \le \delta.
\end{equation}
We will choose $\delta$ to be in $\Theta(\epsilon)$ since the logarithmic costs of making the state preparation accurate are negligible relative to the costs of the phase estimation step.  Under these assumptions we have that the $\delta$ contribution is negligible compared to the other two terms and so we can use the following lower bound
\begin{equation}
|\braket{\tilde{\psi}_0(\mathbf{R}+\Delta_1)}{\tilde{\psi}_0(\mathbf{R} +\Delta_2)}| \ge 1 - \frac{12 \max\{\|\Delta_1\|,\|\Delta_2\|\} \max_{\mathbf{R}',i}\left\|\tfrac{dH(\mathbf{R}')}{dR_i}\right\|} {E_1(\mathbf{R}) - E_0(\mathbf{R})}
\end{equation}

In order to ensure that the probability of failing to project is at most $1/3$ during the protocol it suffices to take the failure probability of each step to be $1/9mN_a$.  An appropriate probability can be met if we choose
\begin{equation}
    \max\{\|\Delta_i\|\}  \le \frac{{E_1(\mathbf{R}) - E_0(\mathbf{R})}}{ 36m N_a \max_{\mathbf{R}',i}\left\|\tfrac{dH(\mathbf{R}')}{dR_i}\right\|} \label{eq:Deltabd1}
\end{equation}

Then, as our choices of $\mathbf{y}$ in our finite difference algorithm are at max $mdR$ from $\mathbf{R}$, a sufficient choice of $|dR|$ to guarantee this scaling is
\begin{equation}
    \max_i\|\Delta_i\|  = m |dR| \Rightarrow  |dR| \le \frac{{E_1(\mathbf{R}) - E_0(\mathbf{R})}}{  36 m^2 N_a \max_{\mathbf{R}',i}\left\|\tfrac{dH(\mathbf{R}')}{dR_i}\right\|}:=\frac{\gamma}{  36 m^2 N_a \max_{\mathbf{R}',i}\left\|\tfrac{dH(\mathbf{R}')}{dR_i}\right\| }
\end{equation}

This potentially creates a problem since in order to ensure that the ground state remains stable over all the measurements (and thereby making the ground state projection cost additive rather than multiplicative) we need to have $\|\Delta_i\|$ constant as a function of $\epsilon$.  We can therefore choose $\Delta_i$ to be a constant by taking
\begin{equation}\label{eq:dRs2}
 dR \leq \min\left\{\frac{1}{8cm }\left(\frac{\epsilon}{16\sqrt{2}\mathfrak{e} m  c}\right)^{\frac{1}{2m}},\frac{1}{16cm}, \frac{\gamma}{  36 m^2 N_a\max_{\mathbf{R}',i}\left\|\tfrac{dH(\mathbf{R}')}{dR_i}\right\| } \right\}
\end{equation}

\subsubsection{Optimization of the finite difference order}
By substituting this expression for $dR$ in Eq.~\eqref{eq_epsilon_PE}, and inverting in $T$
we have that the total amount of time (in number of simulations) needed to compute a component of the gradient with error $\epsilon$ is bounded by:
\begin{align}
    T &\leq 48 \frac{\sqrt{2} \sqrt{6}\;\lambda_{\mathrm{H}}\; \pi\; c\sqrt{m}}{\epsilon}\times\max\left\{ \left(\frac{16\sqrt{2} \mathfrak{e} c m}{\epsilon}\right)^\frac{1}{2m}m^{3/2},2m^{3/2}, \frac{9 m^2 N_a \max_{\mathbf{R}',i}\left\|\tfrac{dH(\mathbf{R}')}{dR_i}\right\|}{2\gamma}\right\},\nonumber\\
    &\leq 48 \frac{\sqrt{2} \sqrt{6}\;\lambda_{\mathrm{H}}\; \pi\; cm^{5/2}}{\epsilon}\times\max\left\{ \left(\frac{16\sqrt{2} \mathfrak{e} c m}{\epsilon}\right)^\frac{1}{2m}, \frac{9  }{2} \left(1+ \frac{N_a\max_{\mathbf{R}',i}\left\|\tfrac{dH(\mathbf{R}')}{dR_i}\right\|}{\gamma c} \right)\right\}\label{eq:gen_T_equation}
\end{align}

After the above choices have been made, the only variable that we have access to, impacting the number of simulation steps needed in the phase estimation is $m$.
This is clearly optimized in Eq.~\eqref{eq:gen_T_equation} by setting the two terms in the curly brackets equal
\begin{equation}
    \frac{16\sqrt{2}\mathfrak{e}cm}{\epsilon}=\left(\frac{9}{2}\left(1 + \frac{N_a\max_{\mathbf{R}',i}\left\|\tfrac{dH(\mathbf{R}')}{dR_i}\right\|}{\gamma c}\right) \right)^{2m}
\end{equation}
The solution to this equation is given by a Lambert $W$-function and the asymptotics of the function reveals that it suffices to choose
\begin{equation}
    m \in \widetilde{\mathcal{O}} \left( \frac{\log(\frac{\mathfrak{e}c}{\epsilon})}{\log\left(\frac{9(1+ {N_a\max_{\mathbf{R}',i}\left\|\tfrac{dH(\mathbf{R}')}{dR_i}\right\|}/{\gamma c})}{2}\right)} \right)
\end{equation}

This leads to the conclusion that the query complexity of computing one of the components of the gradient within error $\epsilon$ is at most
\begin{equation}\label{eq_asympt_scaling_FD}
    T \in \widetilde{\mathcal{O}}\left(\frac{\lambda_{\mathrm{H}} c  m^{5/2} (1 +N_a\max_{\mathbf{R}',i}\left\|\tfrac{dH(\mathbf{R}')}{dR_i}\right\|/\gamma c)}{\epsilon } \right) \subseteq \widetilde{\mathcal{O}}\left(\frac{\lambda_{\mathrm{H}}   \log^{5/2}(\mathfrak{e}) (c +N_a\max_{\mathbf{R}',i}\left\|\tfrac{dH(\mathbf{R}')}{dR_i}\right\|/\gamma )}{\epsilon } \right).
\end{equation}

\subsubsection{Gradient vector estimation}
In the above we have considered the cost to estimate single components of the gradient vector.
It remains to convert our resulting expression for the query complexity for these single components (Eq.~\eqref{eq_asympt_scaling_FD}) into one for the error on the gradient vector.
This involves repeating this calculation for each of the $3N_a$ components of the gradient vector. If we want to ensure that the error in the gradient calculation (as quantified by the $2$-norm of the error vector) is at most $\epsilon$ it suffices to take
\begin{equation}
    \max_i\left|\frac{d E}{d {R_i}}   - \frac{dE^{(m)}}{dR_i}\right| \le \frac{\epsilon}{\sqrt{3N_a}},
\end{equation}
because
\begin{equation}
    \left\|\frac{d E}{d\mathbf{R}}   - \frac{dE^{(m)}}{d\mathbf{R}}\right\|_2 \le \sqrt{3N_a}\max_i\left|\frac{d E}{d {R_i}}   - \frac{dE^{(m)}}{dR_i}\right| \le \epsilon.
\end{equation}
(In principle this could be improved by importance sampling, however we do not believe this will provide a significant gain here.)
We then choose the error in individual gradient component estimations to be $\epsilon' \le \epsilon/\sqrt{3N_a}$.
The asymptotic cost of the overall number of queries made in the simulation, excluding the cost of initial state preparation, is then
\begin{align}
     T_{\rm{FD}}\!\left(\frac{dE}{d\mathbf{R}}\right) 
     &\in \widetilde{\mathcal{O}} \left(\frac{N_a^{3/2} \lambda_{\mathrm{H}}  \log^{5/2}(\mathfrak{e}) (c+N_a \max_{\mathbf{R}',i}\left\|\tfrac{dH(\mathbf{R}')}{dR_i}\right\|/\gamma) }{\epsilon}\right)~\label{eq:gradcalc}
\end{align}

If we define the initial state preparation process to be an algorithm that prepares a state using no queries (such as a Hartree-Fock state) and has a probability of success $a_0^2$ then the cost of implementing an approximate ground state projector using~\cite{Ge2017} is in

\begin{equation}
    T_{\rm{Prep}} \in \widetilde{\mathcal{O}}\left(\frac{\lambda_{\mathrm{H}} \log(1/\epsilon)}{|a_0|\gamma}\right).
\end{equation}
Recall that state preparation is only needed on average $O(1)$ times, due to the choice of $\|\Delta_i\|$ made in Eq.~\eqref{eq:Deltabd1}. 
This implies that the total query complexity of using the finite difference method is 

\begin{align}
     T_{\rm{FD-Tot}}\!\left(\frac{dE}{d\mathbf{R}}\right) 
     &\in \widetilde{\mathcal{O}} \left(\lambda_{\mathrm{H}} \left(\frac{N_a^{3/2}   \log^{5/2}(\mathfrak{e}) (c+N_a\max_{\mathbf{R}',i}\left\|\tfrac{dH(\mathbf{R}')}{dR_i}\right\|/\gamma) }{\epsilon}+\frac{\log(1/\epsilon)}{|a_0|\gamma}\right)\right)~\label{eq:gradcalc2}
\end{align}

The above cost is given in the number of oracle calls to a block encoding of the Hamiltonian (Sec.~\ref{sec:block_encodings}).
To convert this into wall-clock time, we have to multiply this by the corresponding oracle cost in first or second quantization.
The results of~\cite{vonBurg2020} show that the PREPARE operation for second quantized simulations of systems in a non-planewave basis can be implemented using $\widetilde{\mathcal{O}}(N^2)$ Toffoli gates; however, with the use of tensor hypercontraction this becomes $\widetilde{\mathcal{O}}(N)$ asymptotically \cite{lee2021even}.  The SELECT operation can be implemented in $\widetilde{\mathcal{O}}(N)$ operations, where $N$ is the number of spin orbitals used in each simulation~\cite{lee2021even}.  Thus the overall cost of the walk operator is in $\widetilde{\mathcal{O}}(N)$, and we have
\begin{equation}
    {\rm ToffCount}_{2\text{nd}} \in \widetilde{\mathcal{O}}\left(\lambda_{\mathrm{H}} N \left(\frac{N_a^{3/2}   \log^{5/2}(\mathfrak{e}) (c+N_a\max_{\mathbf{R}',i}\left\|\tfrac{dH(\mathbf{R}')}{dR_i}\right\|/\gamma ) }{\epsilon}+\frac{\log(1/\epsilon)}{|a_0|\gamma}\right)\right)\label{eq:Toff2}
    \end{equation}
The scaling of this algorithm depends on the scaling of the value of $\lambda_{\mathrm{H}}$. In first quantization, the cost of the simulation using qubitization is based on the discussion in~\sec{BEfirst} is in~\cite{su2021fault}
\begin{equation}
    {\rm ToffCount}_{1\text{st}} \in \widetilde{\mathcal{O}}\left(\lambda_{\mathrm{H}} \eta \log(N) \left(\frac{N_a^{3/2}   \log^{5/2}(\mathfrak{e}) (c+N_a\max_{\mathbf{R}',i}\left\|\tfrac{dH(\mathbf{R}')}{dR_i}\right\|/\gamma ) }{\epsilon}+\frac{\log(1/\epsilon)}{|a_0|\gamma}\right)\right)
    \end{equation}
Here the result arises from the fact that the select operation incurs a cost that is in $O(
\eta \log(N))$ and the prepare operation has a cost that is in $O( \eta \log^2(N) + \log^3(N))$.

\begin{table}[tb]
    \begin{tabular}{|c|c|c|}
    \hline
    System & Analytic Scaling & Empirical Scaling\\
    \hline
    First quantized plane waves & $\widetilde{\mathcal{O}}\left(\frac{N^{2/3}N_a^{17/6}(\mathbb{E}(|Z_i|))^{1/3} }
    {\epsilon}\right)$ & -- \\
    Second quantized plane waves & $\widetilde{\mathcal{O}}\left(\frac{N^{3} N_a^{3/2}  }{\epsilon}\right)$ & --\\
    \hline
    Hydrogen Chains (Sparse) & -- & $\widetilde{\mathcal{O}}\left(\frac{N_a^{3.833} }{\epsilon}\right)$ \\
    Hydrogen Chains (Double Factorization) & -- & $\widetilde{\mathcal{O}}\left(\frac{N_a^{4.442} }{\epsilon}\right)$\\
    Water Clusters (Sparse) & -- & $\widetilde{\mathcal{O}}\left(\frac{N_a^{3.868} }{\epsilon}\right)$ \\
    Water Clusters (Double Factorization) & -- & $\widetilde{\mathcal{O}}\left(\frac{ N_a^{4.197} }{\epsilon}\right)$\\
    \hline
    \end{tabular}
    \caption{Scaling of finite difference gradient estimation for a system with $N_a$ atoms and $N$ orbitals and $\eta$ electrons within the Born-Oppenheimer approximation for constant $c$ and $\mathfrak{e}$ as well as $\epsilon$ error as measured by the $2$-norm of the gradient vector.  Numerical estimates follow from the data in Sec.~\ref{sec:numerics}.  The analytic results for plane waves follows from taking the volume of the unit cell $\Omega$ to be proportional to the number of plane waves $N$.  Note that for the Hydrogen chains and water clusters the we choose $N\propto N_a$ because a minimal basis is used within these simulations. The cost of state preparation is ignored in these scalings, which is true if $|a_0|\gamma \gg \epsilon$.}
    \label{tab:FDscale}
\end{table}

As a final point, we can consider the case where we make no attempt to reuse the quantum state evaluated for the finite difference formulas.  The advantage of this approach is that we do not need to reduce $dR$ to ensure that the groundspace is sufficiently stable under perturbations of the Hamiltonian across the different points of interest.  However, it will often be less efficient due to the many state preparation steps needed and also it will require a set of different state.  We assume below that each such state has success probability at least $|a_i|^2$.  Repeating the same analysis given above leads to a number of oracle queries that scales as
\begin{align}
     T_{\rm{FD}}\!\left(\frac{dE}{d\mathbf{R}}\right) 
     &\in \widetilde{\mathcal{O}} \left(N_a^{3/2} \lambda_{\mathrm{H}} \left[\frac{  \log^{5/2}(\mathfrak{e}) c }{\epsilon}+\frac{\log(1/\epsilon)}{\min_i|a_i|\gamma}\right]\right).~\label{eq:gradcalc3}
\end{align}
Combining Eq.~\eqref{eq:gradcalc3} with Eq.~\eqref{eq:gradcalc2} yields a joint expression
\begin{equation}\label{eq:gradcalc5}
    T_{\rm{FD}}\!\left(\frac{dE}{d\mathbf{R}}\right)\in\widetilde{\mathcal{O}}\left(\lambda_{\mathrm{H}}N_a^{3/2}\left[\frac{\log^{5/2}(\mathfrak{e})c}{\epsilon}+\min\left\{\frac{\log(1/\epsilon)}{\min_i|a_i|\gamma},\;\frac{N_a\max_{\mathbf{R}',i}\left\|\tfrac{dH(\mathbf{R}')}{dR_i}\right\|\log^{5/2}(\mathfrak{e})}{\epsilon\gamma}\right\}\right]\right).
\end{equation}

We examine this scaling in Sec.~\ref{sec:numerics} and the results of the scalings given either empirically or analytically is given in \tab{FDscale}.  In all cases we assume that $\min_i |a_i| \gamma \gg \epsilon$.  While these choices are made for simplicity, they do privilege the state preparation method given here because it means that we do not need to artificially reduce the value of $dR$ to ensure that the state preparation costs are additive.  This point is important to consider when comparing these costs with the methods in the following sections.

The analytic results given in~\tab{FDscale} are straightforward to verify.  In the plane-wave basis, it is difficult to reason about ionic systems because of the periodic nature of the basis functions leading to infinite energy.  For this reason we focus on the case where the number of electrons in the system is equal to the nuclear charge.  This implies that if we define $Z_i$ to be the charge of nucleus $i$ then the number of electrons can be bounded above by
\begin{equation}
    \eta = \sum_{i=1}^{N_a} Z_i \le N_a \mathbb{E}(|Z_i|)
\end{equation}
Next the value of $\lambda_{\mathrm{H}}$ for such systems is given by Eq. 25 of~\cite{su2021fault} to be in $\mathcal{O}(N^{2/3}\eta^{1/3})$ in the limit where the number of plane-waves is much greater than the number of electrons.
Thus 
\begin{equation}
    \lambda_{\mathrm{H}}\in O \left( N^{2/3} N_a^{1/3} (\mathbb{E}(|Z_i|))^{1/3} \right)
\end{equation}
This implies that the cost of performing this process within the required error (under the assumptions on $\|\frac{dH}{dR_i}\|, c,\mathfrak{e}$ made above) is
\begin{equation}
    {\rm ToffCount}_{1st} \in \widetilde{\mathcal{O}}\left(\frac{N^{2/3}N_a^{17/6} (\mathbb{E}(|Z_i|))^{1/3}}{\epsilon}\right)
    \end{equation}

The case of second quantized planewave simulations is much simpler to analyze.
From Eq.~(53) of~\cite{Babbush2019SYK} we find that $\lambda_{\mathrm{H}} \in \widetilde{\mathcal{O}}(N^2)$ when we take $\Omega \propto N$.
Eq.~\eqref{eq:Toff2} then gives us that the Toffoli count scales as $\widetilde{\mathcal{O}}(N^3 N_a^{3/2} /\epsilon)$.
This completes our justification of the analytic scalings given in~\tab{FDscale}.

\subsubsection{Considerations on system dependent constants}\label{App_Grad_c_constant}

One of the biggest disadvantages about using high-order 
finite
difference formula is that their performance depends on the higher-order derivatives of the energy.  Specifically, Eq.~\eqref{eq_bounding_Derivative} gives the requirements that these constants much satisfy for each of the derivatives.  While a simple expression for $c$ and $\mathfrak{e}$ cannot be easily extracted from such an expression, lower bounds can be extracted from the requirements.  
Using Eq.~\eqref{eq_bounding_Derivative} we have from substituting $k=1$ into the expression that if $\ket{\psi_0}$ is the true ground state (after choosing the global phases of the eigenstates such that their derivatives are orthogonal to themselves) we then have that
\begin{equation}
     \left\langle\psi_0\left| \frac{dH}{dR_i} \right|\psi_0\right\rangle \le \mathfrak{e} c.
\end{equation}
From this expression we see that while the product of the two constants is lower bounded by the expected derivative in the ground state, that does not determine either of the two constants that constitute it.  However, we can see that dimensionally $\mathfrak{e}$ can be thought of as an energy scale and $c$ can be thought of an inverse length scale for the system.

A reasonable approximation may be made for the constant $c$ in the case of first-quantized plane waves.
If we assume that the eigenvalue gap is large and the high-order derivatives of the Hamiltonian dominate the expectation value of the energy then it is straight forward to see from perturbative arguments that
\begin{equation}
     \frac{d^kE}{dR_i^k}\sim \left\langle\psi_0\left| \frac{d^kH}{dR_i^k} \right|\psi_0\right\rangle  \le \mathfrak{e} c^k k^{k/2}.
\end{equation}
Thus we have in this restricted case that 
\begin{equation}
c \ge   \frac{1}{\sqrt{k}}\left(\frac{\left\langle\psi_0\left| \frac{d^kH}{dR_i^k} \right|\psi_0\right\rangle}{\mathfrak{e}}\right)^{1/k}~\forall~k 
\end{equation}
If we choose $\mathfrak{e} = \bra{\psi_0} H \ket{\psi_0}$ to be the ground state energy then it follows that any such scaling that satisfies this must obey
\begin{equation}
c \gtrapprox  \sup_k \left(\frac{1}{\sqrt{k}}\left(\frac{\left\langle\psi_0\left| \frac{d^kH}{dR_i^k} \right|\psi_0\right\rangle}{\bra{\psi_0} H \ket{\psi_0}}\right)^{1/k}\right)  
\end{equation}
While the derivative with respect to the coordinate of the Hamiltonian in general is difficult to compute, we can argue about it in plane wave bases which have an analytical expression as illustrated by Eq.~\eqref{eq:pwd_u}.  Specifically, we have that if $R_i=R_{l,\alpha}$ is one of the position coordinates of a nuclei with charge $Z_l$ we have (using the same notation as Sec.~\ref{sec:BEfirst}) that
\begin{equation}
\frac{d^kH}{dR_{l,a}^k}=- \frac{4 \pi Z_l}{\Omega} \sum_{\bf{p}} \sum_{\substack{\bf{s} \neq 0}}  \frac{(ik_s^{\alpha}[i])^k \, e^{i \, \bf{k_{s}} \cdot \left(\bf{R}_l + \bf{r_{p}}\right)}}{\left| \bf{k_s} \right|^2} c_{\bf{p}}^\dagger c_{\bf{p}}
\end{equation}
Terms in this result scale like $\mathcal{O}(N^{k/3})$, but as the phase $e^{i\mathbf{k}_{\mathbf{s}}\cdot(\mathbf{R}_l+\mathbf{r}_{\mathbf{p}})}$ oscillates rapidly as we change $\mathbf{p}$ and $\mathbf{s}$, the final sum over both variables can be as small as $o(1)$.
This suggests that physical circumstances might exist where $c\in o(1)$, but it could in other cases scale as $c\in\mathcal{O}(N^{1/3})$ or potentially higher.
Numerical explorations will likely be needed to better understand the scaling of $\mathfrak{e}$ and $c$.

\subsection{Expectation value estimation of the force operator at the Heisenberg limit with overlap estimation}\label{sec:HeisenbergLimit}

On a fault-tolerant quantum device one can estimate expectation values at the Heisenberg limit scaling as $\epsilon^{-1}$ with the unbiased error $\epsilon$ (which is tight~\cite{Giovannetti04Quantum,Higgins09Demonstrating}).
This allows us to estimate gradients via the Hellman-Feynman theorem, similarly to how one would achieve this in a NISQ setting (Sec.~\ref{sec:nisq}).
The Heisenberg limit was originally achieved by the overlap estimation algorithm (OEA) of \cite{Knill06Optimal}.
In our case the expectation values that we wish to measure are real-valued, enabling a slight optimization over the original algorithm.
In this section we perform an asymptotic cost analysis of our implementation of the OEA on a block-encoded derivative operator, including optimizing the preparation and reflection subroutines for the purposes of derivative estimation, and compare it to previous methods.

The OEA aims to perform unbiased estimation of $\langle\psi|U|\psi\rangle$ for an arbitrary unitary operator $U$ and state $|\psi\rangle$.
It does so by calling a subroutine, the amplitude estimation algorithm (AEA), which estimates $|\langle\psi|U|\psi\rangle|$ via phase estimation of the Szegedy walk unitary
\begin{equation}
\mathcal{S}=\mathcal{R}U\mathcal{R}U^{\dag},\;\mathcal{R}=I-2|\psi\rangle\langle\psi|.\label{eq:Szegedy_def}
\end{equation}
The eigenvalues of this walk operator, constrained to the subspace spanned by $\ket{\psi}$ and $U \ket{\psi}$ provide us with all the information we need to estimate the amplitude.
To see this, let 
\begin{equation}
    P_\psi = \ket{\psi}\!\bra{\psi} + (1 - \ketbra{\psi}{\psi})U\ketbra{\psi}{\psi} U^\dagger(1 - \ketbra{\psi}{\psi}).
\end{equation}
We have from Jordan's Lemma that the action of $\mathcal{S}$ can be broken into a direct sum of irreducible two-dimensional subspaces. Therefore $P_\psi \mathcal{S} P_\psi = \mathcal{S} P_\psi$.  It then follows that for any $\ket{\phi}$ such that $P_\psi \ket{\phi} = \ket{\phi}$ that phase estimation will return an eigenvalue of the form $e^{\pm i\theta}$ for some angle $\theta \in [-\pi,\pi]$.  Further analysis in~\cite{Szegedy2004} shows that this eigenphase satisfies $\cos(\theta)=2|\langle U\rangle|^2-1$. This implies that if $\hat{\theta}$ is our estimate of theta, then
\begin{equation}
   \widehat{|\langle {U} \rangle|} = \sqrt{\frac{\cos(\hat{\theta}) +1}{2}}  = |\cos(\hat\theta/2)| =|\cos((\theta + \Delta )/2)|= |\cos(\theta/2)\cos(\Delta/2)-\sin(\theta/2)\sin(\Delta/2) | ,
\end{equation}
where $\Delta = \hat{\theta} - \theta$ and $\mathbb{V}(\hat{\theta}) = \mathbb{E}(\Delta^2)$ using the phase estimation method reviewed in~\app{App_FD_Holevo_variance} which is unbiased.
Under the assumption that $\sqrt{\mathbb{V}(\theta)}/\theta \ll 1$, the fourth moment of the distribution of $\hat{\theta}$ is negligible and that the distribution of $\hat{\theta}$ has negligible support over the branch cut
\begin{align}
    \mathbb{V}(\widehat{|\langle {U} \rangle|}) &= \cos^2(\theta/2)\mathbb{V}(\cos(\Delta/2)) + \sin^2(\theta/2)\mathbb{V}(\sin(\Delta/2)) \lesssim \mathbb{V}(\sin((\Delta/2))|)\nonumber\\
    &\approx \mathbb{V}(\hat{\theta})/4. \label{eq:derivUnc}
\end{align}
Under the above assumptions, an estimator $\hat{\theta}$ of the eigenphases of $\mathcal{S}$ in the support of the input state $\ket{\psi}$ that is approximately unbiased and has variance of $\epsilon_{\theta}^2$ can be achieved using the techniques of \app{App_FD_Holevo_variance} in approximately $\frac{\pi}{2\epsilon_{\theta}}$ calls to $\mathcal{S}$.
This variance in $\hat{\theta}$ then propagates to an estimate of $|\langle U\rangle|$ with a standard deviation that is approximately bounded by $\frac{\epsilon_{\theta}}{2}$ from Eq.~\eqref{eq:derivUnc}.

To learn the sign of $\langle U\rangle$, we use the fact that
\begin{equation}
    \langle +|\langle\psi|\mathrm{c-}U|\psi\rangle|+\rangle=\frac{1}{2}\Big(1+\mathrm({\langle\psi|U|\psi\rangle})\Big).\label{eq:cU}
\end{equation}
We can then estimate the expectation value using the fact that, in our context $\bra{\psi} U \ket{\psi}$ is real valued.
(Note that this would not be the case if one encoded $U=e^{-itdH/dR_i}$ as was originally proposed in \cite{Knill06Optimal}.)
That $\bra{\psi} U \ket{\psi}$ is real implies $|\langle +|\langle\psi|\mathrm{c-}U|\psi\rangle|+\rangle|=\langle +|\langle\psi|\mathrm{c-}U|\psi\rangle|+\rangle$, and thus
\begin{equation}
    \bra{\psi}U \ket{\psi} = \frac{1}{2} \Big(4\big((1+ \bra{\psi} U \ket{\psi})/2\big)^2 -(\bra{\psi} U \ket{\psi})^2 -1  \Big) ,
\end{equation}
and we can find an estimator for the overlap as
\begin{equation}
    \widehat{ \langle U \rangle} =\frac{1}{2}\left( 4\widehat{\left(\frac{1 + \langle U \rangle }{2}\right)}^2  - \widehat{|\langle U \rangle|}^2 -1 \right).
\end{equation}
Note that one could in principle estimate $\langle\psi|U|\psi\rangle$ directly from amplitude estimation of $c-U$. 

The variance of the estimate of $\widehat{\langle U \rangle}$ is then found from the additive property of variance of independent variables:
\begin{equation}
\mathbb{V}(\widehat{ \langle U \rangle}) =
    4\mathbb{V}\left( \widehat{\left(\frac{1 + \langle U \rangle }{2}\right) }^2\right)+\frac{1}{4}\mathbb{V}\left( \widehat{|\langle U \rangle|}^2 \right) .
\end{equation}
We have from Jensen's inequality that for any convex function $\Phi$ and random variable $X$ with variance $\sigma^2$,
\begin{equation}
    \mathbb{E}(\Phi(X)) - \Phi(\mathbb{E}(X)) \le \frac{\sigma^2 \sup_X \Phi''(X)}{2}.\label{eq:tightJensen}
\end{equation}
Since $\Phi(X) = X^2$ is a convex function with second derivative $1$ for all $X\in [-1,1]$ we have from Eq.~\eqref{eq:tightJensen} that
\begin{equation}
    \mathbb{V}(\widehat{ \langle U \rangle}) \le
    4\mathbb{V}\left( \widehat{\left(\frac{1 + \langle U \rangle }{2}\right) }\right)+\frac{1}{4}\mathbb{V}\left( \widehat{|\langle U \rangle|} \right) ,
\end{equation}
Specifically, we will choose $\mathbb{V}(\widehat{\langle U \rangle}) \le \epsilon_{\rm OEA}^2$.  This motivates the following choices of the variance targets for the two operations
\begin{equation}
    \mathbb{V}\left( \widehat{\left(\frac{1 + \langle U \rangle }{2}\right) }\right) \le \frac{\epsilon_{\rm OEA}^2}{8}, \qquad \mathbb{V}\left( \widehat{|\langle U \rangle|} \right) \le 2\epsilon_{\rm OEA}^2\label{eq:varTargets}
\end{equation}

We now estimate the complexity of computing the requisite estimators within the variance requirements of Eq.~\eqref{eq:varTargets}.
Our circuit primitives and phase estimation routine for the amplitude estimation algorithm (including control) are given in Fig.~\ref{fig:FT_circuits}.
We note that control can be added to $\mathcal{S}$ by either controlling the implementations of $U$ and $U^{\dag}$ or by controlling the implementation of $R$.
We assume that the cost to implement controlled-$\mathcal{S}$ compared to the cost to implement $\mathcal{S}$ without control is negligible in the number of Toffoli gates required (which is typically the case).
We make one further improvement to our phase estimation routine in the same vein as \cite{BabbushSpectra}: instead of performing phase estimation by controlling $\mathcal{S}^{2^n}$ by the $n$th qubit in the QPE register, we implement the unitary
\begin{equation}
    |0\rangle\langle 0|\mathcal{S}^{-2^{n-1}}+|1\rangle\langle 1|\mathcal{S}^{2^{n-1}}.\label{eq:faster_QPE_unitary}
\end{equation}
This may be achieved by realizing that
\begin{equation}
    \mathcal{R}\mathcal{S}^{k}\mathcal{R}=\mathcal{R}\Big(\mathcal{U}^{\dag}\mathcal{R}\mathcal{U}\mathcal{R}\Big)^k\mathcal{R}=\Big(\mathcal{R}\mathcal{U}^{\dag}\mathcal{R}\mathcal{U}\Big)^k=\Big(\mathcal{U}^{\dag}\mathcal{R}\mathcal{U}\mathcal{R}\Big)^{\dag k}=\mathcal{S}^{-k},
\end{equation}
and so one need only append $\mathcal{S}^{2^n}$ by an application of $\mathcal{R}$ on either side, controlled by the $n$th qubit in the QPE register being in the $|0\rangle$ state to implement Eq.~\eqref{eq:faster_QPE_unitary}.
This technique halves the number of applications of $\mathcal{S}$ required, and significantly reduces the control overhead.
With this implemented, using the variance targets in Eq.~\eqref{eq:varTargets} for our two separate rounds of amplitude estimation will achieve an estimate of $\langle\psi|U|\psi\rangle$ with standard deviation approximately bounded by $\epsilon_\text{OEA}$, using
\begin{equation}
N_\text{calls}=\frac{\sqrt{2}\pi}{\epsilon_{\mathrm{OEA}}} + \frac{\pi}{2\sqrt{2}\epsilon_{\mathrm{OEA}}}= \frac{5\pi}{2\sqrt{2} \epsilon_{\rm OEA}}
\end{equation}
calls to a (controlled) circuit implementation of $\mathcal{S}$.

\begin{figure}
    \centering
    \includegraphics[width=0.6\textwidth]{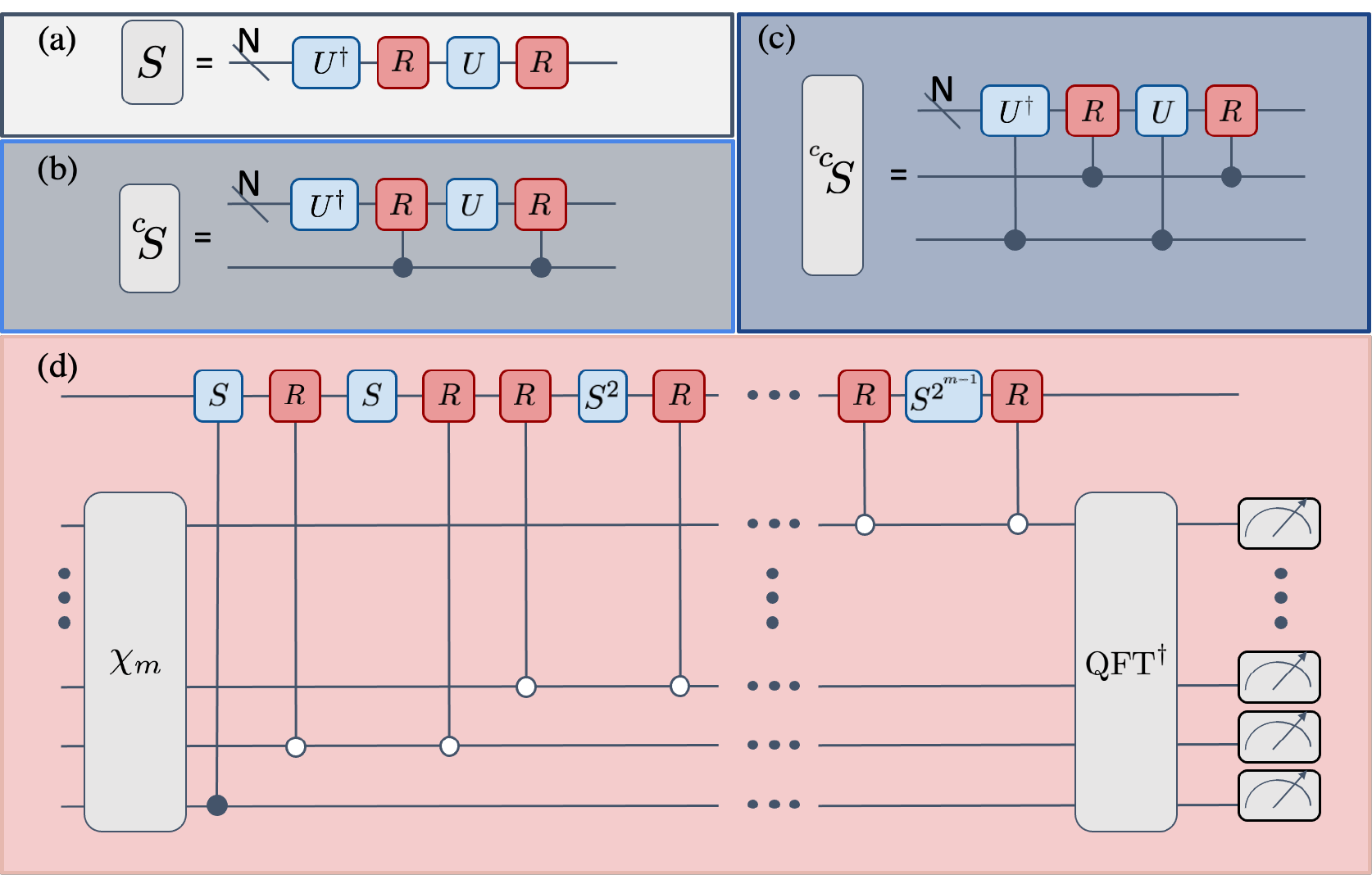}
    \caption{FT circuits for the overlap estimation algorithm. (a) a circuit implementation of the quantum walk unitary $\mathcal{S}$ without control. (b) a circuit implementation of the quantum walk unitary $\mathcal{S}$ controlled by a single qubit. (c) A circuit implementation of the quantum walk unitary $\mathcal{S}$ controlled by two qubits, required to estimate the sign of $\langle U\rangle$. (d) Algorithm for the quantum phase estimation routine, sped up by a factor $2$ by improvements discussed in the text. Note that for the purposes of estimating the sign of $\langle U\rangle$, $\mathcal{S}$ should be replaced by $c-\mathcal{S}$ in the QPE circuit (d) and an additional ancilla qubit added. The unitary $\chi_m$ denotes a preparation of a QPE register state to estimate at the Heisenberg limit~\cite{BabbushSpectra}. Black dots denote a controlled operation implemented when the control qubit is in the $|1\rangle$ state, while white dots denote a controlled operation implemented when the control qubit is in the $|0\rangle$ state.}
    \label{fig:FT_circuits}
\end{figure}

To convert the cost of the OEA to the cost of gradient estimation, we need to account for the fact that $\frac{d H}{d R_i}$ is not a unitary operator.
This is an issue, as the OEA explicitly requires unitary $U$.
This issue may be resolved by block encoding $\frac{d H}{d R_i}$ (Sec.~\ref{sec:block_encodings}): adding additional qubits and finding a unitary operator $U_i$ on the larger Hilbert space such that
\begin{equation}
    \left\|(\bra{0}\otimes \openone)
U_i(\ket{0}\otimes \openone)
-(\bra{0}\otimes \openone) \left( \begin{array}{cc}
    \frac{1}{\lambda_{\mathrm{F}_i}}\frac{dH}{dR_i} & U_{01}\\
    U_{10} & U_{11} \end{array}\right)(\ket{0}\otimes \openone)\right\|_{\infty}\leq\epsilon_{\mathrm{BE}} \hspace{0.5cm}\rightarrow\hspace{0.5cm} \left|\langle\psi|\langle 0|U_i|0\rangle|\psi\rangle-\frac{1}{\lambda_{\mathrm{F}_i}}\frac{d E}{dR_i}\right|\leq\epsilon_{\mathrm{BE}},\label{eq:def_block_encoding_derivative}
\end{equation}
where $\epsilon_{\mathrm{BE}}$ is the block-encoding error.
This block encoding of $\frac{dH}{dR_i}$ requires rescaling by some constant $\lambda_{\mathrm{F}_i}$ for $U_i$ to be unitary.
We may propagate this rescaling to the error in estimating $\frac{dE}{dR_i}$; an error $\epsilon_{\mathrm{OEA}}$ in the estimation of $\langle\psi|U_i|\psi\rangle$.  
However, one complication involved in doing so arises from the fact that many of the estimation errors given above are stated in terms of the variance of the estimate.  In contrast, errors such as the block-encoding errors are deterministic errors.
To solve this, we can use the Chebyshev inequality to say that for an estimate $\mathbf{y}$ with variance $\sigma^2$
\begin{equation}
{\rm Pr}(|\mathbf{y} -\mathbb{E}(\mathbf{y})| \ge k \sigma) \le \frac{1}{k^2}.    
\end{equation}
Thus, up to a constant factor, we can with probability of failure in $\mathcal{O}(1)$, replace the uncertainty in our estimates with their standard deviation.
From the above discussion, we then
yields an estimate of $\tfrac{dE}{dR_i},$ denoted $\widehat{\tfrac{dE}{dR_i}}$ such that with constant probability of failure $\le 1/4$ (i.e. $k=2$ above)
\begin{equation}
  \left|\widehat{\frac{dE}{dR_i}} - \frac{dE}{dR_i}  \right|  \le \lambda_{\mathrm{F}_i}(\epsilon_{{\rm OEA},i} + \epsilon_{\rm BE} + \epsilon_{\mathcal{R}})
\end{equation}
in the estimation of $\frac{dE}{dR_i}$.
Here, $\epsilon_{\mathcal{R}}$ is the error in implementing the reflection operator $\mathcal{R}=I-2|\psi\rangle\langle\psi|$, and we index the errors in the overlap estimation and block encoding by $i$ (as these will be different for different gradient operators).
To bound the error in $\frac{dE}{dR_i}$ by some $\epsilon_i$, we can require the error in both the block encoding and overlap estimation be bounded as $\epsilon_{\mathrm{BE}},\epsilon_{\mathrm{OEA}},\epsilon_{\mathcal{R}}\leq\frac{\epsilon_i}{3\lambda_{\mathrm{F}_i}}$.
The bound on $\epsilon_{\mathrm{OEA}}$ then gives us the cost of executing the gradient estimation algorithm
\begin{equation}
    T_{\mathrm{grad-OEA},i}=\frac{15\pi\lambda_{\mathrm{F}_i}}{\sqrt{2}\epsilon_i}(T_{F,i}+T_{R})+T_{P,i}+\mathcal{O}\big(\log[\lambda_{\mathrm{F}_i}/\epsilon_i]\big)\label{eq:cost_oea},
\end{equation}
where $T_R$ is the cost of implementing the reflection about $|\psi\rangle$, and $T_{P,i}$ is the cost of preparing $|\psi\rangle$ for the $i$th gradient estimation.
(We will re-use our system register between gradient estimation applications to significantly reduce the cost of $T_{P,i}$ for $i>1$.)
Note that from the Chernoff bound, the probability of failure  of estimating the component of the gradient within the required error tolerance can be reduced from $1/4$ to $\delta$ at cost that is in $\mathcal{O}(\log(1/\delta))$.
We allow here for the fact that when performing multiple different gradient estimations we may have different block encodings (see Sec.~\ref{sec:block_encodings}) and state preparation methods (see Sec.~\ref{sec:state_prep_and_reflection}). However, we assume that the reflection oracle (see Sec.~\ref{sec:state_prep_and_reflection}) is implemented the same for each estimation.

\subsubsection{Parallel estimation and importance sampling}\label{sec:FT_Importance_Sampling}

The cost estimation in Eq.~\eqref{eq:cost_oea} is the cost of estimating a single gradient element.
To expand this cost to a vector of gradients $\frac{dE}{dR_i}$, we sum individual terms
\begin{equation}
    T_{\mathrm{grad-OEA}}=\sum_iT_{\mathrm{grad-OEA},i}.
\end{equation}
Following the methods developed in Sec.~\ref{sec:NISQ_estimation}, we may select $\epsilon_i$ to optimize the $1$- or $2$-norm of the gradient vector $\mathbf{\epsilon}$ by importance sampling, where the resources are allocated optimally, spending more time on the components that require a higher accuracy.
The final term in Eq.~\eqref{eq:cost_oea} is negligible and the second term is not affected by our choice of $\epsilon_i$ (we will investigate the cost of state preparation in the following section). 
Thus, we consider the problem of minimizing the sum of the first terms:
\begin{equation}
    \sum_i\frac{15\pi\lambda_{\mathrm{F}_i}}{\sqrt{2}\epsilon_i}(T_{F,i}+T_{R}),
\end{equation}
with either a fixed $1$-norm $\sum_i\epsilon_i$ or $2$-norm $\sum_i\epsilon_i^2$.
This can be solved by the same Lagrangian methods as in previous sections.
Solving for a fixed $1$-norm yields the condition
\begin{equation}
    \epsilon_i=\frac{\epsilon_{\mathrm{grad-OEA}}\Big[\lambda_{\mathrm{F}_i}(T_{F,i}+T_{R})\Big]^{\frac{1}{2}}}{\sum_{i'}\Big[\lambda_{i'}(T_{F,i'}+T_{R})\Big]^{\frac{1}{2}}}.
\end{equation}
The Cauchy-Schwarz inequality and the Chernoff bound can then be used to bound the total cost of the gradient evaluation within error $\epsilon_{\rm grad-OEA}$ with probability of failure $1/4$ using a total cost of 
\begin{align}
    T_{\mathrm{grad-OEA}}&=\frac{90\pi\log(12N_a)}{\sqrt{2}\epsilon_{\mathrm{grad-OEA}}}\bigg[\sum_i(T_{F,i}+T_{R})^{\frac{1}{2}}\lambda_{\mathrm{F}_i}^{\frac{1}{2}}\bigg]^2+\sum_iT_{P,i}+\sum_i\mathcal{O}\Big(\log\big[\lambda_{\mathrm{F}_i}^{\frac{1}{2}}(T_{F}+T_R)^{\frac{1}{2}}\epsilon^{-1}_{\mathrm{grad-OEA}}\big]\Big)\nonumber\\
    &\in \widetilde{\mathcal{O}}\left( \frac{1}{\epsilon_{\rm grad - OEA}}\left( \sum_i \lambda_{\mathrm{F}_i} \right)\left(\sum_i T_{F,i} + T_R \right) + \sum_i T_{P,i} \right).
\end{align}
If we promise that $\lambda_{\mathrm{F}_i} \le \lambda_{\mathrm{F}}$ and $T_{F,i}\le T_{F}$, this becomes
\begin{equation}
    T_{\mathrm{grad-OEA}}\in\widetilde{\mathcal{O}}\Big(\epsilon_{\mathrm{grad-OEA}}^{-1}(T_{F}+T_R)\lambda_{\mathrm{F}}N_a^2+\sum_iT_{P,i}\Big).
\end{equation}
We can repeat the same arguments for the case where error $\epsilon_{\rm grad-OEA}$ is desired in the $2$-norm.
\begin{equation}
    \epsilon_i=\frac{\epsilon_{\mathrm{grad-OEA}}\Big[\lambda_{\mathrm{F}_i}(T_{F,i}+T_R)\Big]^{\frac{1}{3}}}{\Big[\sum_{i'}\lambda_{i'}^{2/3}(T_{F,i'}+T_R)^{2/3}\Big]^{\frac{1}{3}}},
\end{equation}
for a total cost
\begin{align}
    T_{\mathrm{grad-OEA}}&=\frac{90\pi\log(12N_a)}{\sqrt{2}\epsilon_{\mathrm{grad-OEA}}}\bigg[\sum_i(T_{F,i}+T_R)^{\frac{2}{3}}\lambda_{\mathrm{F}_i}^{\frac{2}{3}}\bigg]^{\frac{3}{2}}+\sum_iT_{P,i}+\sum_i\mathcal{O}\Big(\log\big[\lambda_{\mathrm{F}_i}^{\frac{2}{3}}(T_{F}+T_R)^{\frac{2}{3}}\epsilon^{-1}_{\mathrm{grad-OEA}}\big]\Big)\nonumber\\
    &\in \mathcal{O}\left(\frac{1}{\epsilon_{\mathrm{grad-OEA}}}\bigg[\sum_i(T_{F,i}+T_R)^{\frac{4}{3}}\bigg]^{3/4} \bigg[\sum_i\lambda_{\mathrm{F}_i}^{\frac{4}{3}}\bigg]^{\frac{3}{4}}+\sum_iT_{P,i}+\sum_i\mathcal{O}\Big(\log\big[\lambda_{\mathrm{F}_i}^{\frac{2}{3}}(T_{F}+T_R)^{\frac{2}{3}}\epsilon^{-1}_{\mathrm{grad-OEA}}\big]\Big)\right),
\end{align}
and in the case where all forces are equal cost and $\lambda_{\mathrm{F}_i} \le \lambda_{\mathrm{F}}$,
\begin{equation}
    T_{\mathrm{grad-OEA}}\in\widetilde{\mathcal{O}}\Big(\epsilon_{\mathrm{grad-OEA}}^{-1}(T_{F}+T_R)\lambda_{\mathrm{F}} N_a^{\frac{3}{2}}+\sum_iT_{P,i}\Big).\label{eq:TgradOEA_equal_lambda}
\end{equation}

The cost of implementing the overlap estimation algorithm now depends on the circuit costs $T_F$, $T_R$, and $T_{P,i}$, and on the rescaling factor $\lambda_{\mathrm{F}}$.
Asymptotic costs for $T_F$ and $\lambda_{\mathrm{F}}$ were calculated in Sec.~\ref{sec:block_encodings}.
It remains to calculate $T_R$ and $T_{P,i}$ in order to obtain a final costing of the algorithm.

\subsubsection{Implementation of preparation and reflection unitaries}\label{sec:state_prep_and_reflection}

As part of the expectation value estimation algorithm, we need to implement a reflection $I-2|\psi\rangle\langle\psi|$ about the ground state $|\psi\rangle$.
In lieu of other access to $|\psi\rangle$, this reflection can be constructed by an inequality test on its energy $E$~\cite{Lin20Near}.
If we have some $\mu$ such that $E<\mu<E+\gamma$, where $\gamma$ is the gap between the ground and first excited state with $E$ denoting the ground state energy, and we assume the ground state is non-degenerate, we have that we can define an operator of the form
\begin{equation}
    \mathrm{sign}[H-\mu]:= \sum_j \mathrm{sign}(E_j-\mu)\ketbra{E_j}{E_j} = I-2|\psi\rangle\langle\psi|.
\end{equation}
One can approximate the sign function above by approximating it by as a sum of exponentials $\sum_tc_t\exp(iHt)$ and block encoding this using LCU techniques~\cite{Ge2017}, or by making a polynomial approximation and applying the quantum singular value transformation~\cite{Lin20Near}.
In this work we consider the latter.
Quantum signal processing requires a block encoding of $H-\mu$
\begin{equation}
    U_H=\left(\begin{array}{cc}\frac{1}{\lambda_{\mathrm{H}}}(H-\mu) & \cdot \\ \cdot & \cdot \end{array}\right),\label{eq:block_encoding_Hminusmu}
\end{equation}
where $\lambda_{\mathrm{H}}$ is an appropriate rescaling factor; this is at most $\mu$ larger than a corresponding block-encoding of $H$.
The technique also requires an approximation to the sign function that is within $\epsilon$ on the eigenvalues of $\frac{1}{\lambda_{\mathrm{H}}}(H-\mu)$.
Near $\mu$ this must break down for a finite polynomial series, so it is crucial that $\mu$ be chosen near the middle of the gap $(E,E+\gamma)$.
Known polynomial approximations to the sign function that achieve these requirements exist, with polynomial degree $d=\mathcal{O}(\frac{\lambda_{\mathrm{H}}}{\gamma}\log(\frac{1}{\epsilon}))$~\cite{Low2019-ls}.
Let us specify the smallest such constant by defining $c_{\mathrm{sgn}}$ such that
\begin{equation}
    d \le c_{\mathrm{sgn}}\lambda_{\mathrm{H}}\gamma^{-1}\log\big(\epsilon^{-1}\big)
\end{equation}
Given this sign function approximation and an exact block encoding of $H-
\mu$, we can construct a block encoding of $\mathrm{sign}[H-\mu]$ via quantum phase estimation with exactly $d$ repetitions of the block encoding (or its inverse), $d$ single-qubit $z$-rotations and $2d$ $N$-qubit Toffoli gates~\cite{Lin20Near,Gilyen2019QuantumArithmetics}.
For this to have an error less than $\epsilon_r$, we require the block encoding of $H$ to be implemented with error less than $\epsilon_r/d$.
Let the number of Toffoli gates needed to implement the block encoding of the $H$ to an error $\epsilon_{\mathrm{BE}}\geq\frac{\epsilon_r}{c_{\mathrm{sgn}}\gamma^{-1}\log(\epsilon^{-1})\lambda_{\mathrm{H}}}$ be $T_{H}$ (For many block encodings $T_H$ will be logarithmically dependent on $\epsilon_{\mathrm{BE}}$, and so the rescaling here will not affect any $\widetilde{O}$ scaling results). Further, assume that we are provided a phase gradient state of the form $\frac{1}{\sqrt{\lceil \frac{2\pi}{\delta_z} \rceil}}\sum_{j=0}^{\lceil \frac{2\pi}{\delta_z} \rceil-1} e^{2\pi i k/ \lceil \frac{2\pi}{\delta_z} \rceil} \ket{k}$ where $\delta_z$ is the required precision for the single-qubit $z$-rotations.  The number of Toffoli gates needed to perform these using controlled swaps on the above phase gradient state is
\begin{equation}
    T_R \in \widetilde{\mathcal{O}} \Big( \big(2N + \log(\delta_z^{-1}) + T_{H}\big)\lambda_{\mathrm{H}}c_{\mathrm{sgn}}\gamma^{-1}\log\big(\epsilon^{-1}\big)\label{eq:TR_form}\Big),
\end{equation}
Here we resort to $\widetilde{\mathcal{O}}(\cdot)$ notation to simplify the result given the doubly-logarithmic terms that arise from using the lower bound on $\epsilon_{\rm BE}$.  

We construct the reflection operator $1-2|\psi\rangle\langle\psi|$ to prepare the ground state $|\psi\rangle$ from an initial state $|\phi\rangle$ through the use of fixed-point amplitude amplification~\cite{Yoder14Fixed,Gilyen2019QuantumArithmetics}.
This requires we have access to a preparation unitary $U_{\phi}$ that prepares $|\phi\rangle$, and a bound $0<a_0\leq|\langle\phi|\psi\rangle|$.
The fixed point amplitude amplification algorithm then prepares an approximation to $|\psi\rangle$ with error $\epsilon_A$ with $\frac{\log(2/\epsilon_A)}{a_0}$ calls to $U_p$ and $1-2|\phi\rangle\langle\phi|$.
A call to the reflection $1-2|\phi\rangle\langle\phi|$ can be implemented by
\begin{equation}
    1-2|\phi\rangle\langle\phi| = U_\phi\big(1-2\ketbra{0}{0}\big) U_\phi^\dagger,
\end{equation}
which requires two queries to $U_\phi$ and a single application of a reflection about $|0\rangle$ which can be implemented using $2N$ Toffoli gates.
Therefore, we have
\begin{equation}
  \label{eq:initial_state_prep_detailed}
    T_{P,i=1}\in \widetilde{\mathcal{O}}\left(\frac{\log(1/\epsilon_A)}{a_0}(T_R+T_{\phi}+N)\right),
\end{equation}
where $T_{\phi}$ is the cost of implementing $U_{\phi}$.

This does not require an accurate estimate of $|\langle\phi|\psi\rangle|$, but just that $a_0 \le |\langle\phi|\psi\rangle|$.
In many cases a bound $a_0$ will be known that is relatively tight, in which case the scaling of this algorithm is optimal~\cite{Yoder14Fixed}.
When the bound is loose, it may be improved by performing initial rounds of amplitude amplification followed by direct estimation of $a_0$ on the device with little overhead~\cite{Brassard_2002}.

As we are repeating the overlap estimation algorithm multiple times on the same initial state, we can recycle the system register between each iteration, reducing the dependence of our asymptotic cost on the initial state overlap.
Let us define the action of our block encoding $U_i$ of $\frac{dH}{dR_i}$ on our ground state $|\psi\rangle$ as
\begin{equation}
    U_i|\psi\rangle = \tfrac{1}{\lambda_{\mathrm{F}_i}}\tfrac{dE}{dR_i}|\psi\rangle + \sqrt{1-\big[\tfrac{1}{\lambda_{\mathrm{F}_i}}\tfrac{dE}{dR_i}\big]^2}|\chi_i\rangle,\rightarrow |\chi_i\rangle = \frac{U_i-\tfrac{1}{\lambda_{\mathrm{F}_i}}\tfrac{dE}{dR_i}}{\sqrt{1-\big[\tfrac{1}{\lambda_{\mathrm{F}_i}}\tfrac{dE}{dR_i}\big]^2}}|\psi\rangle
\end{equation}
and then we have the well-known result that the Szegedy walk operator $\mathcal{S}_i$ in Eq.~\eqref{eq:Szegedy_def} is block-diagonal on the two-dimensional subspace spanned by $|\psi\rangle$ and $|\chi_i\rangle$, where it takes the form
\begin{equation}
    S_i=\exp\Big[iY\sin^{-1}(\theta_i)\Big],\;\;\;\theta_i = \tfrac{2}{\lambda_{\mathrm{F}_i}}\tfrac{dE}{dR_i}\sqrt{1-\big[\tfrac{1}{\lambda_{\mathrm{F}_i}}\tfrac{dE}{dR_i}\big]^2}.
\end{equation}
Regardless of the value of $\tfrac{1}{\lambda_{\mathrm{F}_i}}\tfrac{dE}{dR_i}$, the eigenstates of $S_i$ are simply $|s_i^{\pm}\rangle = \frac{1}{\sqrt{2}}(|\psi\rangle\pm i|\chi_i\rangle)$.
This implies that following each implementation of the overlap estimation algorithm, the overlap of the system register with the ground state is always $\langle s_i^{\pm}|\psi\rangle=\frac{1}{\sqrt{2}}$.
This probability can be raised to $1$ by a round of amplitude amplification, but this is not practical here because we do not have easy access to a reflection about $|s_i^{\pm}\rangle$.
This is especially true when $\tfrac{1}{\lambda_{\mathrm{F}_i}}\tfrac{dE}{dR_i}$ is very small (implying the gap between the eigenvalues of $S_i$ is quite small), which we expect to be the case.
Instead, one may measure the reflection operator $1-2|\psi\rangle\langle\psi|$ via a Hadamard test, which entails performing this operator conditional on a control qubit prepared in the $|+\rangle$ state, and measuring the control qubit in the $X$-basis.
In terms of Toffoli gates, this costs exactly $T_R$.
With $50\%$ probability, the control qubit is flipped to the $|-\rangle$ state and the system register is prepared in $|\psi\rangle$ (up to the error given by the reflection operator).
When the control qubit remains in the $|+\rangle$ state, we know that our system register is now in the state $|\chi_i\rangle$.
To avoid throwing away this state, we consider the operation of $U_i^{\dag}$ on $|\chi_i\rangle$.
If we define
\begin{equation}
    U_i^{\dag}|\psi\rangle = \tfrac{1}{\lambda_{\mathrm{F}_i}}\tfrac{dE}{dR_i}|\psi\rangle + \sqrt{1-\big[\tfrac{1}{\lambda_{\mathrm{F}_i}}\tfrac{dE}{dR_i}\big]^2}|\chi_i^{\dag}\rangle,\rightarrow |\chi_i^{\dag}\rangle = \frac{U_i^{\dag}-\tfrac{1}{\lambda_{\mathrm{F}_i}}\tfrac{dE}{dR_i}}{\sqrt{1-\big[\tfrac{1}{\lambda_{\mathrm{F}_i}}\tfrac{dE}{dR_i}\big]^2}}|\psi\rangle,
\end{equation}
we have
\begin{equation}
    U^{\dag}_i|\chi_i\rangle =\sqrt{1-\big[\tfrac{1}{\lambda_{\mathrm{F}_i}}\tfrac{dE}{dR_i}\big]^2}|\psi\rangle -\tfrac{1}{\lambda_{\mathrm{F}_i}}\tfrac{dE}{dR_i}|\chi_i^{\dag}\rangle,\;\;\;U_i|\chi^{\dag}_i\rangle =\sqrt{1-\big[\tfrac{1}{\lambda_{\mathrm{F}_i}}\tfrac{dE}{dR_i}\big]^2}|\psi\rangle -\tfrac{1}{\lambda_{\mathrm{F}_i}}\tfrac{dE}{dR_i}|\chi_i\rangle.
\end{equation}
Thus, if our initial measurement of the reflection operator fails, we may iterate performing either $U_i^{\dag}$ or $U_i$ to the system register, followed by re-measuring the reflection operator $1-2|\psi\rangle\langle\psi|$ till it reports a success.
The probability of this failing till the $n$th round (for $j>1$) and then succeeding is $p_n=\frac{1}{2}(1-[\frac{1}{\lambda_{\mathrm{F}_i}}\frac{dE}{dR_i}]^2)[\frac{1}{\lambda_{\mathrm{F}_i}}\frac{dE}{dR_i}]^{2(n-1)}$, and so the average Toffoli gate cost of the operation can be calculated to be
\begin{equation}
    T_{P,i>1}=\Bigg(\frac{1}{2}+\sum_{n=1}^{\infty}(n+1)p_n\Bigg)T_R=\Bigg(1+\frac{1}{2}\frac{1}{1-\big[\tfrac{1}{\lambda_{\mathrm{F}_i}}\tfrac{dE}{dR_i}\big]^2}\Bigg)T_R.
\end{equation}
This cost approaches its minimal value of $\frac{3}{2}T_R$ as $\frac{dE}{dR_i}\rightarrow 0$, and diverges as $\frac{dE}{dR_i}\rightarrow\lambda_{\mathrm{F}_i}$.
This is curious as it is the opposite behaviour that one finds when considering the cost of estimating expectation values in NISQ.
The divergence can be avoided by artificially increasing $\lambda_{\mathrm{F}_i}$, though we expect in practice that this will never be required.
Assuming $\frac{1}{\lambda}\frac{dE}{dR_i}\leq 0.5$, we have $T_{P,i>1}<2T_R$.
This yields a total preparation cost of
\begin{align}
    \sum_iT_{P,i}&
    \in \widetilde{\mathcal{O}}\left(\bigg(N_a+\frac{\log(1/\epsilon_A)}{a_0}\bigg)T_R+\frac{\log(1/\epsilon_A)}{a_0}(T_{\phi}+N)\right).\label{eq:TP_full}
\end{align}
The dependence on $a_0^{-1}$ in this equation is additive to the dependence on $N_a$.
For practical applications, we expect $T_R>>T_{\phi}+N$, and so the last term can be neglected.

\subsubsection{Total costing of the overlap estimation algorithm}
\label{sec:total_cost_overlap}
The full cost of the overlap estimation algorithm can be found by substituting Eq.~\eqref{eq:TP_full} into Eq.~\eqref{eq:TgradOEA_equal_lambda}.
This yields an expression in terms of the costs $T_R$ and $T_{F}$ to implement the reflection algorithm and block-encoded force operator respectively, as well as the rescaling factor $\lambda_{\mathrm{F}}$ for the block encoding of the force operator, and some other physical parameters.
In general the first term of Eq.~\eqref{eq:TP_full} will dominate the second (we assume $T_R\gg T_{\phi}+N$, so we drop it for simplicity, yielding
\begin{equation}
    T_{\mathrm{grad-OEA}}\in\widetilde{\mathcal{O}}\Big(\epsilon^{-1}_{\mathrm{grad-OEA}}\lambda_{\mathrm{F}}N_a^{3/2}T_{F}+\big[\epsilon^{-1}_{\mathrm{grad-OEA}}\lambda_{\mathrm{F}}N_a^{3/2}+2N_a+a_0^{-1}\big]T_R\Big).
\end{equation}
The last two terms in the square brackets come from the cost of state preparation.
We see that the middle term is completely dominated by the first term in the square brackets.
In practice, we expect the final term to be similarly dominated.
Then, in Eq.~\ref{eq:TR_form} we expect $T_H>>2N,\log(\delta_z^{-1})$.
Ignoring constants that do not scale with the system size, we can then write
\begin{equation}
  \label{eq:their_T_R}
    T_R\in\widetilde{\mathcal{O}}\Big(T_{H}\lambda_{H}\gamma^{-1}\Big),
\end{equation}
which yields a simplified asymptotic estimate
\begin{equation}
    T_{\mathrm{grad-OEA}}\in\widetilde{\mathcal{O}}\Big(\epsilon^{-1}_{\mathrm{grad-OEA}}\lambda_{\mathrm{F}}N_a^{3/2}\big[T_{F}+\lambda_{\mathrm{H}}\gamma^{-1} T_{H}\big]\Big).\label{eq:asymptotic_simplified_fh}
\end{equation}
As our methods for implementing the block-encoded derivative operator are either similar to or better than our methods for implementing the block-encoded Hamiltonian, we assume
\begin{equation}
    T_F\in\mathcal{O}(T_H).\label{eq:TFleqTH}
\end{equation}
This implies the second term in Eq.~\eqref{eq:asymptotic_simplified_fh} will be the dominant contribution to an asymptotic resource estimate due the additional multiplicative factors.
We tabulate the estimated scalings for the overlap estimation algorithm in Tab.~\ref{tab:HFscale}.

Let us now compare the asymptotic cost of gradient estimation via the overlap estimation algorithm to the result of the finite difference method, Eq.~\eqref{eq:gradcalc2} and Eq.~\eqref{eq:gradcalc3}, including terms assumed to be constant in Tab.~\ref{tab:FDscale} and Tab.~\ref{tab:HFscale}.
As we do not have tight lower bounds on either method, a direct universal comparison is not possible.
In place of this we compare the asymptotic upper bounds.
Given the assumption in Eq.~\eqref{eq:TFleqTH}, all terms in these bounds depend linearly on $\lambda_{\mathrm{H}}$ and $T_H$ so it is relevant to compare
\begin{equation}
    \tau_{\mathrm{OEA}}=\epsilon^{-1}\gamma^{-1}\lambda_{\mathrm{F}}
\end{equation}
against the three different potentially dominant terms in the finite difference method Eq.~\eqref{eq:gradcalc5}
\begin{equation}
    \tau_{\mathrm{FD,1}}=\epsilon^{-1}c\log^{5/2}(\mathfrak{e}),\hspace{0.5cm}\tau_{\mathrm{FD,2}}=\epsilon^{-1}N_a\gamma^{-1}\max_{\mathbf{R}',i}\left\|\tfrac{dH(\mathbf{R}')}{dR_i}\right\|,\hspace{0.5cm}\tau_{\mathrm{FD,3}}=\log(1/\epsilon)(\min_i|a_i|)^{-1}\gamma^{-1}\log^{5/2}(\mathfrak{e}).
\end{equation}
Note that all of the above terms have units of inverse energy (as we have dropped the universal dependence on $\lambda_{\mathrm{H}}$).
Eq.~\eqref{eq:gradcalc5} chooses between two different algorithms depending on which is faster, so we should compare $\tau_{\mathrm{OEA}}$ to the best-scaling algorithm.
The appropriate quantity to compare to is
\begin{equation}
    \max\Big(\tau_{\mathrm{FD},1},\min\big(\tau_{\mathrm{FD,2}},\tau_{\mathrm{FD,3}}\big)\Big)\label{eq:minmaxtaus}
\end{equation}
We have from Sec.~\ref{App_Grad_c_constant} that $\mathfrak{e}c\leq\frac{dE}{dR_i}\leq\max_{\mathbf{R}',i}\left\|\tfrac{dH(\mathbf{R}')}{dR_i}\right\|$.
Then, as we assume $\mathfrak{e}\sim\langle\psi_0|H|\psi_0\rangle$, we have $\mathfrak{e}>>\gamma$, which implies $c<<\max_{\mathbf{R}',i}\left\|\tfrac{dH(\mathbf{R}')}{dR_i}\right\|\gamma^{-1}$.
This in turn implies $\tau_{\mathrm{FD,2}}>> \tau_{\mathrm{FD,1}}$.
Let us now consider the comparison between $\tau_{\mathrm{OEA}}$ and these three terms in turn.

The comparison between $\tau_{\mathrm{OEA}}$ and $\tau_{\mathrm{FD,1}}$ is stark.
Here, the relative comparison is between $\gamma^{-1}\lambda_{\mathrm{F}}$ and $c\log(\mathfrak{e})$ respectively.
We expect $c\in o(1)$ (Sec.~\ref{App_Grad_c_constant}), and we have only a logarithmic dependence on $\mathfrak{e}$.
By contrast, $\lambda_{\mathrm{F}}$ can be quite large and scale badly in the system size (see Sec.~\ref{sec:block_encodings}, in particular Eqs.~\eqref{eq:lambdaF_THC},~\eqref{eq:lambdaF_low_rank} and~\eqref{eq:lambdaF_planewaves}).
The gap $\gamma$ to the first excited state varies in molecular and condensed matter systems, but it can also be quite small.
This implies that the best-scaling $\tau_{\mathrm{OEA}}$ (e.g. for double-factorized force operators in systems with large gaps) scale at best equally with the corresponding $\tau_{\mathrm{FD,1}}$.
This is a somewhat surprising result: $\tau_{\mathrm{FD,1}}$ contains the cost of performing phase estimation which is typically expected to dominate the finite difference algorithm, while $\tau_{\mathrm{FD,2}}$ and $\tau_{\mathrm{FD,3}}$ detail the cost of state recycling or state re-preparation.
The reason behind this is that the overlap estimation algorithm contains a sub-routine (the reflection operation) that uses exactly the same circuitry as phase estimation, and the OEA requires then repeating this sub-routine $\epsilon^{-1}$ times.
This multiplicative factor is interesting as for NISQ methods the cost of derivative estimation is completely independent of the Hamiltonian $1$-norm.

The comparison between $\tau_{\mathrm{FD},2}$ and $\tau_{\mathrm{OEA}}$ is more complex: here the rescaling factor $\lambda_{\mathrm{F}}$ competes with $N_a\max_{\mathbf{R}',i}\left\|\tfrac{dH(\mathbf{R}')}{dR_i}\right\|$.
For a fixed $\mathbf{R}$, the spectral norm $\|\frac{dH(\mathbf{R})}{dR_i}\|$ is a lower bound for $\lambda_{\mathrm{F}_i(\mathbf{R})}$ (being the rescaling factor for a block encoding of $\frac{dH}{dR_i}$ at point $\mathbf{R}$), though many block encodings are not this efficient.
For example, in Hydrogen chains we expect $\|\frac{dH(\mathbf{R})}{dR_i}\|$ to be roughly independent of the system size $N_a$.
This is matched by a double-factorized encoding, but a sparse block encoding has $\lambda_{\mathrm{F}}\sim N_a^2$ (Fig.~\ref{fig:lambdas_hydrogen_chain}).
On the other hand, the finite difference algorithm requires that we take the maximum $\|\frac{dH(\mathbf{R})}{dR_i}\|$ across the range of points used for finite difference, which significantly grows this cost.
Assuming that an efficient block encoding is chosen so that $\lambda_{\mathrm{F}}\sim\|\frac{dH(\mathbf{R})}{dR_i}\|$, we have $\tau_{\mathrm{FD},2}\lesssim N_a \tau_{\mathrm{OEA}}$.
This implies that when the cost of state re-preparation is bad enough that the recycling algorithm becomes preferred, the overlap estimation algorithm will likely become favourable again.

We finally compare $\tau_{\mathrm{FD,3}}$ to $\tau_{\mathrm{OEA}}$.
Ignoring logarithmic factors, the comparison here is between $\epsilon^{-1}\lambda_{\mathrm{F}}$ and $(\min_i|a_i|)^{-1}$.
This is highly dependent on the initial state preparation, of which only few studies have been made (e.g. \cite{Tubman2018Postponing}).
In lieu of exact statements that can be made here, we note that the above comparison suggests
\begin{equation}
    \tau_{\mathrm{FD,1}}<\tau_{\mathrm{OEA}}<\tau_{\mathrm{FD,2}}.
\end{equation}
Then, investigating Eq.~\eqref{eq:minmaxtaus}, whenever $\tau_{\mathrm{FD,3}}<\tau_{\mathrm{FD,1}}$ the finite difference algorithm will likely be asymptotically faster, while whenever $\tau_{\mathrm{FD,3}}>\tau_{\mathrm{FD,1}}$ the overlap estimation algorithm may have an advantage.
This can be summarized in the following statement: whenever repeat state preparation becomes the dominant cost for estimating energies at any point of the finite difference method, the overlap estimation algorithm likely becomes a more efficient algorithm.
This is because as the overlap estimation algorithm can efficiently recycle states with unit probability, while the finite difference method has to pay a significant cost in its choice of step size $dR$ to guarantee the same.
For applications in chemistry it is typical to assume that state preparation is not the dominant cost of the calculation (due to the relative size of the target error $\epsilon$ to the gap $\gamma$).
This suggests that for typical use-cases the finite difference method may indeed be an optimal choice.

\begin{table}[tb]
    \begin{tabular}{|c|c|c|}
    \hline
    System & Analytic Scaling & Empirical Scaling\\
    \hline
    First quantized plane waves & $\widetilde{\mathcal{O}}\left(\frac{N^{4/3}N_a^{23/6}(\mathbb{E}(|Z_i|))^{4/3} }
    {\epsilon}\right)$ & -- \\
    Second quantized plane waves & $\widetilde{\mathcal{O}}\left(\frac{N^{3} N_a^{5/2}  }{\epsilon}\right)$ & --\\
    \hline
    Hydrogen Chains (Sparse) & -- & $\widetilde{\mathcal{O}}\left(\frac{N_a^{4.11} }{\epsilon}\right)$ \\
    Hydrogen Chains (Double Factorization) & -- & $\widetilde{\mathcal{O}}\left(\frac{N_a^{4.50} }{\epsilon}\right)$\\
    Water Clusters (Sparse) & -- & $\widetilde{\mathcal{O}}\left(\frac{N_a^{4.30} }{\epsilon}\right)$ \\
    Water Clusters (Double Factorization) & -- & $\widetilde{\mathcal{O}}\left(\frac{ N_a^{4.74} }{\epsilon}\right)$\\
    \hline
    \end{tabular}
    \caption{Scaling of the overlap estimation algorithm for a system with $N_a$ atoms and $N$ orbitals and $\eta$ electrons within the Born-Oppenheimer approximation, assuming that $c_{\mathrm{sgn}},\gamma\in\Theta(1)$ and that the second term of Eq.~\eqref{eq:asymptotic_simplified_fh} dominates, and using an error $\epsilon$ as measured by the $2$-norm of the gradient vector. Scalings are calculated by multiplying the results of Tab.~\ref{tab:FDscale} by the relevant $\lambda_{\mathrm{F}}$ quantity.}
    \label{tab:HFscale}
\end{table}

\subsection{Simultaneous estimation of force operators using a gradient-based expectation value estimation}

Previously in this work we have considered using a quantum computer as an energy-estimation subroutine to evaluate forces via higher-order finite difference methods.
One might wonder whether a speedup can be gained from performing the difference estimation on the quantum device itself.
This was originally considered in \cite{Jordan_2005}, and later significantly improved by \cite{Gilyen_2017}, which proposed a quantum algorithm for evaluating the gradient of a function encoded as an expectation value.
However, in order to apply this algorithm to estimate the gradient, the coherent evaluation of the function at a superposition of points is required.
The cost of applying this approach directly to estimate the energy gradients as a function of nuclear co-ordinates appears prohibitive, especially for second-quantized Hamiltonians employing basis sets that are, themselves, dependent on the positions of the nuclei.
Recently however, in \cite{Huggins2021Nearly} some of the authors of this work proposed a strategy for measuring a collection of expectation values that applies the gradient algorithm of \cite{Gilyen_2017} to a simple and easy to implement auxiliary function
\begin{equation}
  f(\bm{\theta}) =  -\frac{1}{2}{\textrm{Im}}\Bigg[\bra{\psi}\prod_{j=1}^{M} e^{-2 i \theta_j O_j}\ket{\psi}\Bigg] + \frac{1}{2},\label{eq:new_gradient_function}
\end{equation}
where the \(\ket{\bm{\theta}}\) encode \(M\) binary numbers representing coordinates in a \(M\)-dimensional box centered on the origin.
This function can be encoded in a parameterized quantum circuit by performing a Hadamard test for the imaginary component of the unitary
\begin{equation}\label{eq:Utheta}
  U_{\bm{\theta}} = \sum_{\bm{\theta}} \Bigg[ \ketbra{\bm{\theta}}{\bm{\theta}} \otimes \prod_{j=1}^M e^{-2 i \theta_j O_j}\Bigg].
\end{equation}
Substituting $O_j:=\frac{1}{\lambda_{\mathrm{F}_j}}\frac{dH}{dR_j}$ into Eq.~\eqref{eq:new_gradient_function}, and assuming that $|\psi\rangle$ is the ground state of $H$ (at fixed $\mathbf{R}$) yields a function $f(\bm{\theta})$ that is different from the energy --- $f(\bm{\theta})\neq E(\mathbf{R}+\bm{\theta})$ --- but that satisfies
\begin{equation}
    \frac{df}{d\theta_i}\bigg|_{\bm{\theta}=0}=\frac{1}{\lambda_{F_i}}\frac{dE}{dR_i}\bigg|_{\mathbf{R}}.
\end{equation}
Crucially, implementing a quantumly-controlled version of this auxiliary function is more straightforward than a quantumly-controlled energy as a function of the nuclear coordinates, essentially because the auxiliary function is only required to reproduce the correct gradient at a single point.
Implementing this as an oracle in the gradient-based approach of \cite{Gilyen_2017} allows for multiple expectation values to be simultaneously estimated at the Heisenberg limit with some degree of parallelization.
In this section we will estimate the asymptotic Toffoli cost of implementing this oracle in the gradient based algorithm, and compare it to the methods developed in previous sections.

We begin by reviewing the essential features of the gradient-based expectation value estimation algorithm.
We do not present a complete review of the quantum algorithm for the gradient presented in \cite{Gilyen_2017}, nor the extension of this to estimate expectation values given in \cite{Huggins2021Nearly}.
Instead, we highlight those details that are necessary to describe the modifications required for our purposes and refer the reader to~\cite{Gilyen_2017} and \cite{Huggins2021Nearly} for a comprehensive description.
The gradient-based expectation value estimation algorithm targets the evaluation of the expectation values of \(M\) operators \(\{O_1, O_2, ..., O_M\}\) (with \(\left\|O_j\right\| \leq 1\) for all \(j\)) with respect to the state \(\ket{\psi}\) to within a precision \(\epsilon_{\textrm{grad}}\) in the infinity-norm.
It does this by invoking the gradient estimation algorithm of \cite{Gilyen_2017} as a subroutine, which requires constructing a quantum oracle of function $f$.
This is done so in terms of a probability oracle \(U_p\),
\begin{equation}
  \label{eq:probability_oracle}
  U_p: \ket{x}\ket{0} \rightarrow \ket{x} \otimes \bigg((\sqrt{f(x)} \ket{\psi_{\textrm{good}}(x)}\ket{1} + \sqrt{1 - f(x)}\ket{\psi_{\textrm{bad}}(x)}\ket{0}\bigg),
\end{equation}
where \(\ket{x}\) denotes a collection of registers encoding binary
approximations to the inputs to \(f\) and \(\ket{\psi_{\textrm{good}}(x)}\) and
\(\ket{\psi_{\textrm{bad}}(x)}\) are arbitrary normalized states.
In our case, $x=\bm{\theta}$, which requires a number of ancilla qubits scaling as $\mathcal{\tilde{O}}(M \log(\epsilon^{-1}_{\mathrm{grad}}))$ for storage.

One of the main results in \cite{Huggins2021Nearly} is the construction of a probability oracle $U_p$ for the function \(f\) using a single query to the state preparation unitary \(U_\psi\).
This probability oracle acts on the input register (the register containing the input state \(\ket{x}\) above), the system register (the Hilbert space containing \(\ket{\psi}\)), and one additional ancilla qubit.
The result of \cite{Gilyen_2017} allows us to solve our estimation using \(\mathcal{\widetilde{O}}(\sqrt{M}/\epsilon_{\textrm{grad}})\) calls to the resulting phase oracle $U_p$.
It was further shown in \cite{Huggins2021Nearly} that this may be achieved while upper-bounding the maximum time evolution required to implement a call to the probability oracle  ($\|\bm{\theta}\|_{\infty}$) in the following way,
$\|\bm{\theta}\|_{\infty} \leq t_{\max}\in\widetilde{\mathcal{O}}(\log(\epsilon^{-1}_{\mathrm{grad}})/\sqrt{M})$.
The failure probability of the algorithm of \cite{Gilyen_2017} is bounded only by \(1/3\), but this can be reduced to \(\delta\) using the Chernoff bound at a cost that is logarithmic in \(\delta\).

There are several details to address in order to optimally apply this gradient-based approach to the estimation of forces, and to estimate the asymptotic cost of the final result.
Firstly, it is desirable to take advantage of the fact that we can implement the reflection operator \(I - 2\ketbra{\psi}{\psi}\) more efficiently than \(U_{\psi}\) (by a factor $a_0$). 
We will modify the parallel expectation value algorithm to remove all but one of the calls to the state preparation unitary \(U_\psi\), replacing this component of the algorithm with a comparable number of reflections about the ground state.
This allows us to obtain a benefit similar to the one we found for the serial approach to overlap estimation presented in \sec{HeisenbergLimit} (see specifically \sec{state_prep_and_reflection} for the implementation of the reflection operator).
Secondly, we need to account for the cost of implementing the initial state preparation step and the controlled time-evolution by the force operators in terms of the number of calls to the appropriate block-encodings. 
Thirdly, we need to translate the error bounds from \cite{Huggins2021Nearly} to our case, accounting for the normalization of the block-encoded force operators, the various sources of error in the state preparation, reflection, and time evolution subroutines, and the fact that we desire a error in the 2-norm rather than the infinity norm.

\subsubsection{Replacement of the intermediate state preparations with ground state reflections}

We now aim to explain how we can replace all but one of the calls to the state preparation unitary $U_{\psi}$ and its conjugate by the more affordable reflection about the ground state, thereby avoiding most of the dependence on the overlap of the initial state with the ground state (\(a_0\)).
To do so we require some intermediary details of the gradient-based expectation value algorithm, which we recall from \cite{Gilyen_2017} and \cite{Huggins2021Nearly}.
The quantum algorithm for the gradient makes use of the probability oracle of Eq.~\eq{probability_oracle} by using the walk operator from amplitude amplification to convert the probability of a good state into a phase, 
\begin{equation}
    \label{eq:walk_operator}
    G_U = (2\Pi_1 - I)U_p^\dagger(2\Pi_2 - I) U_p,
\end{equation}
where \(2 \Pi_1 - I\) is a reflection about the zero state of the system register and the ancilla qubits of the probability oracle, and \((2 \Pi_2 - I)\) is a reflection about the zero state of the specific ancilla qubit that flags the good and bad states in the probability oracle.
The eigenvalues of $G_U$ are of the form $\exp(\pm i \sin^{-1}(\sqrt{f(x)}))$, where $f(x)$ is the probability of observing a good outcome as given by~\eqref{eq:probability_oracle}.

The gradient algorithm of \cite{Gilyen_2017} uses a quantum singular value transformation to convert from $G_U$ to a block encoding of the form $(\bra{0} \otimes I) {\rm QSVT}(G_U) (\ket{0} \otimes I) = C$ wherein $C$ (within some target error $\epsilon$) has eigenvalues $e^{-i r f(x)}$ within the eigenspace of $G_U$ that is in the support of $\ket{x}$.  The block encoded operation can be viewed as a fractional query to the phase oracle, and the block encoding procedure can be implemented using either Linear Combination of Unitaries methods or the more ancilla efficient methods of~\cite{Gilyen2019QuantumArithmetics} (which use controlled queries to $G_U$ interspersed with single qubit rotations).

By querying such fractional phase oracles at a superposition of carefully chosen points, the gradient algorithm uses phase kickback and the quantum Fourier transform to obtain a high-order finite difference approximation to the gradient.
For our purposes, the crucial detail is that the only interaction that the gradient algorithm has with the system register is through the action of (one of several)  walk operators (or their conjugates) that take the same form as the one from Eq.~\eq{walk_operator} (with some varying number of ancilla qubits or additional single-qubit gates acting on the ancilla).
In the rest of this section, we will show how one may shift all calls to the preparation unitary $U_{\psi}$ within a circuit-based implementation of $G_U$ to the edges of the unitary, such that $G_U$ takes the form
\begin{equation}
    G_U=U_{\psi}^{\dag}G'_UU_{\psi}.
\end{equation}
The reader may confirm that all other sub-routines $G_A$ used in the gradient estimation algorithm of \cite{Gilyen_2017} take the same form, making it possible to write
\begin{equation}
    G_A=U_{\psi}^{\dag}G'_AU_{\psi},
\end{equation}
where $G'_A$ may be implemented without calls to $U_{\psi}$.
This implies that we may concatenate any combination of these sub-routines and eliminate the product $U_{\psi}U_{\psi}^{\dag}$ that forms, i.e.
\begin{equation}
\prod_AG_A = U_{\psi}^{\dag}\prod_AG'_AU_{\psi}.
\end{equation}
This will reduce the total number of preparations required to $1$ (the final call to $U_{\psi}^{\dag}$ may be discarded as we do not use the system register afterwards).

The queries to $U_{\psi}$ used in each iteration of $G_U$ can be similarly removed using a basis transformation (at the price of requiring two queries to transform in and out of the basis of $\ket{\psi}$). Specifically, we consider the walk operator as a product of two components, \(2 \Pi_1 - I\) and \(U_p^\dagger (2 \Pi_2 - I) U_p\).
We may define the projector
\begin{equation}
\Pi_1' = U_\psi \Pi_1 U_\psi^\dagger,
\end{equation}
so that \(2 \Pi_1' - I\) is a reflection operator about the \(\ket{\psi}\) state of the system register together with the \(\ket{0}\) state of the ancilla qubits associated with \(\Pi_1\).
Then, the reader may confirm that the probability oracle $U_p$ constructed by \cite{Huggins2021Nearly} takes the form
\begin{equation}
  U_p = U_p' U_\psi,
\end{equation}
where \(U_p'\) contains no calls to $U_{\psi}$, but instead queries the force unitary $U_{\bm{\theta}}$. 
Now we can rewrite our expression for the reflections by adding the following resolutions of the identity to the two components of $G_U$:
\begin{align}
  2 \Pi_1 - I &= U_\psi^\dagger U_\psi \big( 2 \Pi_1 - I \big) U_\psi^\dagger U_\psi = U_\psi^\dagger \big( 2 \Pi_1' - I \big) U_\psi \\
  U_p^\dagger (2 \Pi_2 - I) U_p &= U_\psi^\dagger U_\psi U_p^\dagger (2 \Pi_2 - I) U_p U_\psi^\dagger U_\psi = U_\psi^\dagger  U_p^{'\dagger} (2 \Pi_2 - I) U_p' U_\psi.
\end{align}
This yields our new form of $G_U$,
\begin{equation}
    G_U=U_{\psi}^{\dag}(2\Pi'_1-I)U_p^{'\dag}(2\Pi_2-I)U'_pU_{\psi},
\end{equation}
and in the above notation,
\begin{equation}
    G'_U=(2\Pi'_1-I)U_p^{'\dag}(2\Pi_2-I)U_p',
\end{equation}
as required.
More generally, the reader may see by a careful inspection of \cite{Huggins2021Nearly} that the gradient-based expectation value estimation algorithm only interacts with the system register via the application of the components $2\Pi_1-I$ and $U_p^{\dag}(2\Pi_2-I)U_p$, interleaving them with other operators that act as the identity on the system register.
Therefore, if we consider the whole sequence of steps that comprise the algorithm, we can see that all of the intermediate state preparation unitary calls can be removed
by cancellation with their adjacent conjugates. 
Furthermore, the system register is discarded at the end of the expectation value algorithm,
so the final call to \(U_\psi^\dagger\) can also be removed.

Next with these pieces in place we can discuss the cost of the algorithm.  We take the cost here, as in previous sections, to be the total number of non-Clifford operations needed for each of the operations carried out.
Noting that the cost of the reflection $(2\Pi_2-I)$ is negligible (being a simple Pauli-$Z$ gate), we can calculate a total cost for the gradient-based phase estimation algorithm of
\begin{equation}
  T_{\textrm{simultaneous}} = \mathcal{\tilde{O}}(T_P + Q (T_R + T_{\textrm{time}})),\label{eq:total_time_cost_definition}
\end{equation}
where $Q$ denotes the number of repetitions of $G_U$ and any related unitary, and $T_P$, $T_R$, and $T_{\mathrm{time}}$ denote the cost of implementing $U_{\psi}$, $(2\Pi'_1-I)$, and $U'_p$ respectively; we will estimate these in the next sections.

\subsubsection{Asymptotic costs of reflection, force evolution, and preparation}

We now provide asymptotic upper bounds on the scaling of  $T_P$, $T_R$, $T_{\mathrm{time}}$ defined in the previous section.
The cost $T_R$ of reflecting about our ground state $|\psi\rangle$ depends on the precision to which we need to reflect since the a measurement of the groundspace is not generally known.
To calculate the cost of the reflection, let \(\epsilon_R\) be a desired bound on the total contribution to the error from implementing the reflection operators, and let \(R_{\textrm{prep}}\) denote the number of calls to the reflection operator used by the initial state preparation circuit.
We can achieved the desired overall bound on the error by implementing the \(Q/2 - 1 + R_{\textrm{prep}}\) reflection operators each to within a precision \(\frac{\epsilon_R}{Q/2 - 1 + R_{\textrm{prep}}}\) in the spectral norm.
The Toffoli count for block encoding the Hamiltonian (required by the construction of the reflection operator) and implementing the reflections both scale logarithmically with the desired precision.
Assuming that \(Q\) and \(R_{\textrm{prep}}\) both scale at most polynomially with the system size and any other relevant parameters of the problem (we shall later determine that they satisfy this assumption), we can implement each reflection about \(\ket{\psi}\) for a cost \(T_R\) which scales identically (up to logarithmic factors) to the case of serial amplitude estimation (see Eq.~\eq{their_T_R}):
\begin{equation}
\label{eq:our_T_R}
T_R\in\widetilde{\mathcal{O}}\Big(T_{H}\lambda_{H}\gamma^{-1}\Big),
\end{equation}
where $\gamma$ is a lower bound on the eigenvalue gap between the groundstate and the rest of the spectrum.

We now address the cost of the initial state preparation $T_P$.
We can consider the cost for this step given in Eq.~\eq{initial_state_prep_detailed} of \sec{state_prep_and_reflection}, where it was addressed in the context of serial expectation value estimation with the overlap estimation algorithm. 
We can also make the same simplifying assumption later used in \sec{total_cost_overlap}, namely that \(T_R \gg T_\phi + N\).  Here $T_{\phi}$ is the Toffoli-count for implementing the state preparation unitary $U_\phi$, and will often be negligible in cases where an elementary ansatz such as a Hartree-Fock state is used for state preparation.

The contribution to the overall error from imperfectly implementing the reflection operators is already accounted for in the section above, provided that the initial state preparation step requires a number of calls to the reflection operator \(I - \ketbra{\psi}{\psi}\) that grows at most polynomially with the problem parameters. 
This conditions hold, as discussed in \sec{state_prep_and_reflection}.
The only remaining source of error is the error from amplitude estimation, which only appears logarithmically in the cost of the state preparation step.
Making these simplifications and substitutions, and neglecting all logarithmic factors, we find that (similarly to the initial preparation in Sec.~\ref{sec:state_prep_and_reflection}), the initial state preparation step requires a number of Toffoli gates scaling as
\begin{equation}
  \label{eq:our_T_P}
  T_P = \mathcal{\tilde{O}}(a_0^{-1}T_R).
\end{equation}

The cost $T_{\mathrm{time}}$ of implementing $U_p$ is dominated by the cost of implementing the subroutine $U_{\bm{\theta}}$ (Eq.~(\ref{eq:Utheta})).
Interestingly, it suffices here to take a naive implementation of $U_{\bm{\theta}}$ to demonstrate a speedup; i.e. we aim to implement the $e^{2i\theta_iO_i}$ in series for each $O_j=\frac{1}{\lambda_{F_i}}\frac{dH}{dR_i}$.
As in \sec{block_encodings}, for the \(i\)th force operator, we let \(T_{F,i}\) denote the cost of implementing the circuit for the block encoding and we let \(\lambda_{\mathrm{F}_i}\) denote the associated rescaling factor. 
For simplicity, we actually consider the case where the operators are all block-encoded with the same rescaling factor \(\lambda_{\mathrm{F}}\), where \(\lambda_{\mathrm{F}_i} \leq \lambda_{\mathrm{F}}\) for all \(i\).
These block encodings are also block encodings of the operators \(\frac{1}{\lambda_{\mathrm{F}}} \frac{d H}{d R_i}\), and we have \(\left\|\frac{1}{\lambda_{\mathrm{F}}} \frac{d H}{d R_i}\right\| \leq 1\) for all \(i\).
As discussed in \cite{Huggins2021Nearly}, for each of the \(Q\) calls to \(U_p'\), we perform time evolution by each of the observables for \(\mathcal{O}(\log(\epsilon_{\textrm{grad}}^{-1}))\) different lengths of time, with a maximum length of time for any such evolution \(\|\bm{\theta}\|_{\infty}\leq t_{\max} = \mathcal{\tilde{O}}(\log(\epsilon_{\textrm{grad}}^{-1})/\sqrt{3 N_a})\).
We can use optimal Hamiltonian simulation algorithms to accomplish each of these evolutions to within an accuracy \(\epsilon'\) using \(\mathcal{O}(t_{\max} + \log(1/\epsilon'))\) calls to the block encoding for the appropriate force operator~\cite{Low2019-ls, Gilyen2019QuantumArithmetics}.
Let \(T_F\) be an upper bound on the cost of implementing any of the block-encoded force operators to within a precision \(\epsilon''\).

Now we wish to determine the asymptotic scaling of the Toffoli count for performing the time evolution steps while contributing an error of at most \(\epsilon_{\textrm{time}}\) to the final output of the gradient-based
estimation algorithm.
The complexity of optimal Hamiltonian simulation depends logarithmically on \(\epsilon'\) and we make the further assumption that the complexity of implementing the block-encoded force operator depends logarithmically on \(\epsilon''\).
The total number of time evolution steps performed, and the total number of calls to the block encoding for each force operator, both scale linearly with \(Q N_a\) if we neglect logarithmic factors.
Therefore, if we wish to bound the total error that arises from implementing time evolution by the force operators to \(\epsilon_{\textrm{time}}\) it suffices to set \(\epsilon'\) and \(\epsilon''\) polynomially smaller than
\(\epsilon_{\textrm{time}}\).
When we neglect the logarithmic factors, we then see that we can implement all of the time evolution steps required by a single query to \(U_p'\) using a number of Toffoli gates which that as
\begin{equation}
  \label{eq:T_time}
  T_{\textrm{time}} = \mathcal{\tilde{O}}(\sqrt{N_a}T_F).
\end{equation}

\subsubsection{Total cost for gradient-based estimation}

Now we are ready to determine the overall cost for the force estimation using the gradient-based approach to measurement.
We aim to measure the \(3 N_a\) force operators to within a precision
\(\epsilon\) in the 2-norm. 
We can achieve this by setting our desired precision in the infinity norm to
\(\epsilon/\sqrt{3N_a}\). 
For simplicity, we assume that each of the force operators is block-encoded with
a rescaling factor \(\lambda_{\mathrm{F}_i} \leq \lambda_{\mathrm{F}}\).
Then we satisfy the normalization conditions of the gradient-based estimation
algorithm by considering the task of estimating of the \(3 N_a\) operators
\(O_i=\frac{1}{\lambda_{\mathrm{F}}} \frac{d H}{d R_i}\) to within a precision
\(\frac{\epsilon}{\lambda_{\mathrm{F}} \sqrt{3 N_a}}\). 
We need to consider four different sources of error. 
Let \(\epsilon_{\textrm{grad}}\) denote the error from the gradient-based
estimation algorithm itself. 
Let \(\epsilon_R\) denote the error from the reflection operator about the
ground state that we shall use as a subroutine. 
Let \(\epsilon_{P}\) denote the error from the initial ground state
preparations step. 
Finally, let \(\epsilon_{\textrm{time}}\) denote the total error contributed by our
implementations of the various time evolution operators for each of the forces. 
In order to achieve our desired error in the 2-norm, we require that each of
these components of the overall error be bounded above by \(\frac{\epsilon}{4\lambda_{\mathrm{F}} \sqrt{3 N_a}}\).
(The cost of $\epsilon_{R},\epsilon_{P},$ and $\epsilon_{\mathrm{time}}$ are already accounted for in our choice of $T_R, T_P,$ and $T_{\mathrm{time}}$ respectively.)
Taking the above bound \(\epsilon_{\textrm{grad}}\leq\frac{\epsilon}{4\lambda_{\mathrm{F}} \sqrt{3 N_a}}\), and substituting this into the above analysis, we find that we need 
\begin{equation}
  \label{eq:Q_count}
  Q = \mathcal{\tilde{O}}(N_a \lambda_{\mathrm{F}} / \epsilon)
\end{equation}
queries to the probability oracle \(U_p'\).

The gradient-based estimation algorithm requires a number of Toffoli gates which
is dominated by the cost of the initial state preparation, the \(Q\) queries to
\(U_p'\), and the associated reflection operators.
Subsituting Eqs.~\eq{Q_count},~\eq{our_T_R},~\eq{our_T_P}, and~\eq{T_time} into Eq.~\eqref{eq:total_time_cost_definition}, we find that
\begin{equation}
  T_{\textrm{simultaneous}} = \mathcal{\tilde{O}}(a_0^{-1}T_H\lambda_{\mathrm{H}}\gamma^{-1} + N_a \lambda_{\mathrm{F}} \epsilon^{-1}(T_H\lambda_{\mathrm{H}}\gamma^{-1} + \sqrt{N_a}T_F)).
\end{equation}
As in the case of serial expectation value estimation, we expect that the cost
of the initial state preparation will be negligible (as it does not depend on
\(\epsilon^{-1}\)) in practice and we can therefore drop the first term. We also
assume that \(T_F = \mathcal{O}(T_H)\), i.e., that the cost of implementing the
block-encoded derivative operators scales no worse than the cost of implementing
the block-encoded Hamiltonian. Making these simplifications, we have
\begin{equation}
  T_{\textrm{simultaneous}} = \mathcal{\tilde{O}}(T_H N_a \lambda_{\mathrm{F}} \epsilon^{-1} (\lambda_{\mathrm{H}}\gamma^{-1} + \sqrt{N_a})).
\end{equation}
The norm of the Hamiltonian should scale at least linearly with each added
nuclei, therefore we expect that \(N_a = \mathcal{O}(\lambda_{\mathrm{H}})\). We can use
this to perform a final simplification, yielding
\begin{equation}
  T_{\textrm{simultaneous}} = \mathcal{\tilde{O}}(T_H \lambda_{\mathrm{H}} \gamma^{-1} N_a \lambda_{\mathrm{F}} \epsilon^{-1}).\label{eq:GBEVEA_final_scaling}
\end{equation}
This improves strictly over the bound for the overlap estimation algorithm (Eq.~(\ref{eq:asymptotic_simplified_fh})) by an asymptotic factor of $N_a^{1/2}$.
This can be understood as a direct square root speedup from the parallelization of the estimation of multiple gradients.
It was proven in \cite{Huggins2021Nearly} that, for certain cases, gradient-based estimation achieves the optimal (up to logarithmic factors) scaling in the number of calls to the state preparation unitary $U_{\psi}$. This suggests that further improvements in our case may be difficult and would necessarily come from exploiting additional structure not present in the general case.
The comparison to the finite difference method (Sec.~\ref{sec:Num_diff_FD}) is slightly more complex: following the comparison made in Sec.~\ref{sec:total_cost_overlap}, the relevant comparison is between $\tau_{\mathrm{FD},1}=\epsilon^{-1}c\log^{5/2}(\mathfrak{e})$ and
\begin{equation}
    \tau_{\mathrm{GBEV}}:= N_a^{-1/2}\epsilon^{-1}\gamma^{-1}\lambda_F = \tau_{\mathrm{OEA}}N_a^{-1/2}.
\end{equation}
It was argued in Sec.~\ref{sec:total_cost_overlap} that there exist systems for which $\gamma^{-1}$ and $\lambda_F$ are both constant (or scale logarithmically) in the system size (the latter based on the numerical analysis in Sec.~\ref{sec:block_encodings}).
In this case, the gradient-based expectation value estimation algorithm would be asymptotically faster by a factor $N_a^{1/2}$, even assuming the higher order gradient bound $c$ is constant.
However, for systems with a particularly small gap, or for which the block encoding of the force operator is large, the finite difference algorithm may be optimal.
We give asymptotic scalings of the systems that we studied in this work in Tab.~\ref{tab:GBEVEAscale}, following the analysis used to generate Sec.~\ref{tab:FDscale}.

\begin{table}[tb]
    \begin{tabular}{|c|c|c|}
    \hline
    System & Analytic Scaling & Empirical Scaling\\
    \hline
    First quantized plane waves & $\widetilde{\mathcal{O}}\left(\frac{N^{4/3}N_a^{10/3}(\mathbb{E}(|Z_i|))^{4/3} }
    {\epsilon}\right)$ & -- \\
    Second quantized plane waves & $\widetilde{\mathcal{O}}\left(\frac{N^{3} N_a^{2}  }{\epsilon}\right)$ & --\\
    \hline
    Hydrogen Chains (Sparse) & -- & $\widetilde{\mathcal{O}}\left(\frac{N_a^{3.61} }{\epsilon}\right)$ \\
    Hydrogen Chains (Double Factorization) & -- & $\widetilde{\mathcal{O}}\left(\frac{N_a^{4.00} }{\epsilon}\right)$\\
    Water Clusters (Sparse) & -- & $\widetilde{\mathcal{O}}\left(\frac{N_a^{3.80} }{\epsilon}\right)$ \\
    Water Clusters (Double Factorization) & -- & $\widetilde{\mathcal{O}}\left(\frac{ N_a^{4.24} }{\epsilon}\right)$\\
    \hline
    \end{tabular}
    \caption{Scaling of the gradient-based expectation value algorithm for a system with $N_a$ atoms and $N$ orbitals and $\eta$ electrons within the Born-Oppenheimer approximation assuming the results of Eq.~(\ref{eq:GBEVEA_final_scaling}), and using an error $\epsilon$ as measured by the $2$-norm of the gradient vector. Scalings are calculated by multiplying the results of Tab.~\ref{tab:FDscale} by the relevant $\lambda_{\mathrm{F}}N_a^{-1/2}$ quantity. We assume $\gamma \in \Theta(1)$.}
    \label{tab:GBEVEAscale}
\end{table}

\section{Conclusion}
\label{sec:conclusion}

In this work we performed an in-depth study of the cost of estimating forces using current NISQ and future fault-tolerant devices.
We optimized strategies for estimating forces on a NISQ device, and determined that for some state-of-the-art methods, estimating the entire force vector to a fixed error (as measured by the $2$-norm of the error vector) may be roughly equal or a lower cost than estimating the energy of the system.
This suggests that force vectors would effectively come `for free' in a NISQ electronic structure calculation.
We note however that as NISQ methods tend to be variational, which requires calculating the electronic energy repeatedly to minimize it with respect to variational parameters, it does not immediately follow that force estimation is easier than energy estimation for a system.

We further present, optimize and cost several algorithms for calculating forces within a fault-tolerant quantum computing framework.  Specifically, we propose a method based on finite difference algorithms for computing gradients as well as another method based on estimating expectation values of force operators.  We find that in situations where the expected cost of state preparation is low, the finite difference algorithms provide asymptotically better scaling.  In contrast, the gradient operator approach provides better scaling if the cost of state preparation is high.  In the best case scenario, the scaling of these algorithms for first quantized plane waves are a factor of $N_a^{3/2}$ worse than the cost of the best first quantized simulations.

The cost estimates for block-encoding forces for atomic-centered basis orbitals in second quantization is more complex because it requires a numerical study.  We find that in the examples considered, block encodings of the gradient operators does not translate into a relative reduction in cost for estimating forces compared to estimating energies for the electronic structure problem.
This is because all methods studied are dominated by the cost of Hamiltonian simulation, either directly or as part of a reflection around the ground state.
As such a reflection is a crucial part of targeting the ground state of an electronic structure problem, we do not expect this problem to be easily circumvented for direct force estimation on a quantum device.

Although we estimate that the cost of reaching chemical accuracy for forces ($6.4$~mHa/Å error per force component) is comparable to that of energies ($1.6$~mHa), the bottleneck of practical MD simulations is that they typically require on the order of $10^6-10^9$ unique force calculations \cite{hollingsworth2018molecular}.
Wall-clock time estimates for single point energy calculations on fault-tolerant devices in the beyond-classical regime are typically on the order of multiple hours; repeating this millions of times to perform a MD simulation is not practical.
This suggests that to enhance MD simulations with quantum computers, new approaches need to be found.
However, our methods look promising for applications that do not require so many repeated force or derivative calculations, e.g.~geometry optimization and dipole estimation appear more practical following this work.

We suggest some directions that we believe may be fruitful for future research.
A clear target for further research in force optimization is to improve the bounds on the variance that we obtained in the NISQ section (e.g. by implementing locally-biased classical shadow techniques~\cite{Bravyi19Classical}), though we do not believe that this will be a sufficient improvement to achieve beyond-classical molecular dynamics simulations in the NISQ era.
It is also of critical importance to study how generalizable error tolerances are for forces; it is unclear whether our result of $6.4$~mHa/Å applies to systems other than water.
In a similar vein, it would be of interest to determine the error tolerances required for other first-order derivative quantities, for example see Table~\ref{fig:table_properties}; our methods extend immediately to these systems, and easier targets than molecular dynamics are likely to exist.

Moreover, instead of performing a semi-classical simulation where one extracts force information from a quantum device to perform a molecular dynamical step, it should be possible to perform a fully-quantum simulation -  which is performed entirely on the device.  This is relevant because one of the major drivers of complexity is the $\tilde{\mathcal{O}}(1/\epsilon)$ cost of computing the gradient.  A more practical approach would entail either simulating the differential equation governing the dynamics on the quantum computer itself (while retaining a classical description of the nuclear motion within the Born-Oppenheimer approximation), or by a fully-quantum simulation that treats the nuclei and electrons on a similar footing such as a first quantized simulation.
Alternatively, one can consider taking a small number of measurements from a quantum device and fitting a semi-empirical model to the results.
This could be done using a semi-classical molecular mechanical model, for example machine-learning techniques that take a few high-accuracy and many low-accuracy data points to generate high-accuracy force-fields~\cite{Bogojeski20Quantum}.
Equivalently, one could consider reusing points from finite difference calculations, or sparsely sampling gradients to reduce these constant factor overheads overhead.

The larger point from this work is that it has identified a new problem within chemistry simulation that needs further research.  In this sense, our work should be seen similar to early work in simulating groundstate energies, such as~\cite{Wecker2014,Reiher2017}.  These earlier works may not have yielded results that are as practical as later methods such as~\cite{Berry19Qubitization,lee2021even} but they identified the existence of an important problem that needs to be optimized to realize the full potential of quantum computation within that domain.  From that perspective, this work represents a first step towards the discovery of the first practical methods for simulating MD on quantum computers.

\begin{center} \textbf{Acknowledgements} \end{center}

The authors thank Jarrod McClean, Jordi Tura, Kianna Wan, and Clemens Utschig-Utschig for helpful discussions. NW worked on this project under a research grant from Google Quantum AI and was partially supported by the ``Embedding Quantum Computing into Many-body Frameworks for Strongly Correlated  Molecular and Materials Systems'' project, which is funded by the U.S. Department of Energy, Office of Science, Office of Basic Energy Sciences (BES), the Division of Chemical Sciences, Geosciences, and Biosciences. DWB worked on this project under a sponsored research agreement with Google Quantum AI. DWB is also supported by Australian Research Council Discovery Projects DP190102633 and DP210101367.

\bibliography{force,babbush_mendeley}


\appendix

\section{Dependence of the molecular orbital coefficients on the atomic orbital overlap coefficients}
\label{app:mo_ao}

Because only the MO coefficients $\mathbf{C}$ carry the dependence on $\mathbf{S}$, to derive explicit expressions for these terms, we will need to derive expressions for $\partial \mathbf{C} / \partial \mathbf{S}$. Herein, we will assume that the MO coefficients are real-valued. We begin by writing a general parameterization\cite{Maurice1998-ju} of the molecular orbitals 
\begin{equation}
    \mathbf{C}\left(\mathbf{C}^0, \mathbf{S}, \mathbf{\Theta}\right) = \mathbf{C}^0 \mathbf{M}(\mathbf{S}) \mathbf{U}(\mathbf{\Theta})
\end{equation}
where $\mathbf{C}^0$ are the (unperturbed) reference orbitals---here taken to be from the solution of the Hartree-Fock equations, $\mathbf{M}$ ensures the orthogonalization of the metric $\mathbf{S}$, and the Hartree-Fock energy is minimized by the unitary matrix $\mathbf{U}$, which is a function of orbital rotation parameters $\mathbf{\Theta}$. Because $\mathbf{U}$ is unitary, it may take the form

\begin{equation}
    \mathbf{U}(\mathbf{\Theta}) = \exp{[\mathbf{T}(\mathbf{\Theta}})] 
\end{equation}

provided that $\mathbf{T}$ is anti-symmetric. The elements of $\mathbf{T}(\mathbf{\Theta})$ are given by

\begin{equation}
    \mathbf{T}(\mathbf{\Theta})_{mn} = \Theta_{ia}^{*} \delta_{mi} \delta_{na} - \Theta_{ai} \delta_{ma} \delta_{ni}
\end{equation}

which, again, is anti-symmetric and therefore ensures unitarity of $\mathbf{U}$. Therefore, we can write the expression for $\mathbf{U}$ more clearly as

\begin{equation}
   \mathbf{U}(\mathbf{\Theta}) = \exp{\begin{bmatrix}\mathbf{\quad 0}&\mathbf{\Theta}^{\dagger}\\\mathbf{-\Theta}&  \mathbf{0}\end{bmatrix}}
\end{equation}

$\mathbf{\Theta}$ is a ($\mathrm{virt} \times \mathrm{occ}$)-dimensional matrix where the elements $\Theta_{ai}$ are the orbital rotation angles that minimize the Hartree-Fock energy. When canonical Hartree-Fock orbitals are used as the reference $\mathbf{C}^0$, the rotation angles $\Theta_{ai}$ are zero---and $\mathbf{U}(\mathbf{\Theta})$ is the identity matrix---by virtue of the variational optimization of the orbitals.

An expression for $\mathbf{M}$ can be determined from the Hartree-Fock orthogonality condition

\begin{equation}
    \mathbf{C}^{\dagger} \mathbf{S} \mathbf{C} = \mathbf{U}^{\dagger} \mathbf{M}^{\dagger} \mathbf{C}^{0\dagger} \mathbf{S} \mathbf{C}^{0} \mathbf{M} \mathbf{U} = \mathbf{I}
\end{equation}

$\mathbf{U}$ is unitary and may be removed by left and right multiplication by $\mathbf{U}$ and $\mathbf{U}^{\dagger}$, respectively. Requiring that $\mathbf{M}$ is both symmetric and invertible allows us to end up with the expression

\begin{equation}
    \mathbf{M} = \left(\mathbf{C}^{0\dagger} \mathbf{S} \mathbf{C}^{0}\right)^{-1/2}
\end{equation}

which is by no means a unique solution for $\mathbf{M}$. Putting the previous results together, the final parameterization of the molecular orbitals yields the expression

\begin{equation}
    \mathbf{C} = \mathbf{C}^{0} \left(\mathbf{C}^{0\dagger} \mathbf{S} \mathbf{C}^{0}\right)^{-1/2} \exp{\begin{bmatrix}\mathbf{\quad 0}&\mathbf{\Theta}^{\dagger}\\\mathbf{-\Theta}&  \mathbf{0}\end{bmatrix}}
\end{equation}

where the elements $C_{\mu p}$ are given as 

\begin{equation}
    C_{\mu p} = \sum_{qr} C^{0}_{\mu q} \left[ \sum_{\lambda \sigma} C^{0}_{\lambda q} S_{\lambda \sigma} C^{0}_{\sigma r}\right]^{-1/2} U_{rp}
\end{equation}

For a reference set of orbitals, $\mathbf{C}^{0}$, the parameterization of the orbitals depends on $\mathbf{S}$ and $\mathbf{\Theta}$. We will want to consider partial derivatives of the orbitals with respect to these quantities. First we will consider derivatives of the orbitals with respect to the atomic orbital overlap matrix $\mathbf{S}$

\begin{equation}
    \frac{\partial C_{\mu p}}{\partial S_{\lambda \sigma}} = -\frac{1}{2} \sum_{q} C^{0}_{\mu q} C^{0}_{\lambda r} C^{0}_{\sigma r} \left[ \sum_{\lambda \sigma} C^{0}_{\lambda q} S_{\lambda \sigma} C^{0}_{\sigma r}\right]^{-3/2} U_{rp}
\end{equation}

Without loss of generality, we can consider the case where the MO coefficients are small perturbations of the canonical Hartree-Fock orbitals. Therefore we consider the limit where $\mathbf{C}^{0} \to \mathbf{C}$ and $\mathbf{\Theta} \to \mathbf{0}$, such that $\mathbf{U} \to \mathbf{I}$. This allows us to set $\mathbf{M} = \mathbf{I}$ and $\mathbf{U} = \mathbf{I}$, respectively, when evaluating expressions after differentiation. Therefore, the above expression simplifies to

\begin{align}
    \frac{\partial C_{\mu p}}{\partial S_{\lambda \sigma}}\biggr\rvert_{\mathbf{C}^{0} = \mathbf{C}, \mathbf{\Theta} = \mathbf{0}}  &= -\frac{1}{2} \sum_{q} C_{\mu q} C_{\lambda r} C_{\sigma r} \delta_{qr} \delta_{rp} \\
    & = -\frac{1}{2} \sum_{q} C_{\mu q} C_{\lambda q} C_{\sigma p}
\end{align}
such that
\begin{equation}
    \frac{\partial C_{\mu p}}{\partial S_{\lambda \sigma}} \frac{dS_{\lambda \sigma}}{dx} = -\frac{1}{2} \sum_{q} C_{\mu q} C_{\lambda q} C_{\sigma p} \frac{dS_{\lambda \sigma}}{dx}
\end{equation}

with these, we are ready for handling general derivatives of \emph{ab initio} energies involving the overlap, as in Eq.~\eqref{eq:energy_weighted_dm}.

For the one body terms:
\begin{equation}
\begin{alignedat}{1}
  \sum_{\lambda \sigma} \frac{\partial}{\partial S_{\lambda \sigma}} \left[\sum_{qm} \gamma_{qm} h_{qm} \right] \frac{dS_{\lambda \sigma}}{dx} &= \sum_{\lambda \sigma} \frac{\partial}{\partial S_{\lambda \sigma}} \left[ \sum_{qm} \gamma_{qm} \sum_{\mu\nu} C_{\mu q} C_{\nu m} h_{\mu \nu} \right] \frac{dS_{\lambda \sigma}}{dx} \\ 
  &= \sum_{qm} \gamma_{qm} \sum_{\mu \nu \lambda \sigma} \left[ \frac{\partial C_{\mu q}}{\partial S_{\lambda \sigma}} C_{\nu m} h_{\mu\nu} + C_{\mu q} \frac{\partial C_{\nu m}}{\partial S_{\lambda \sigma}} h_{\mu \nu}\right]\frac{dS_{\lambda \sigma}}{dx} \\
  &= \sum_{qm} \gamma_{qm} \sum_{\mu \nu \lambda \sigma} \left[-\frac{1}{2} \sum_{p} C_{\mu p} C_{\lambda p} C_{\sigma q} C_{\nu m} h_{\mu\nu} -\frac{1}{2} \sum_{p} C_{\mu q} C_{\nu p} C_{\lambda p} C_{\sigma m}  h_{\mu\nu}\right]\frac{dS_{\lambda \sigma}}{dx} \\
  &= -\frac{1}{2} \sum_{pqm} \gamma_{qm} h_{pm} \frac{dS_{pq}}{dx} -\frac{1}{2} \sum_{pqm} \gamma_{qm} h_{qp} \frac{dS_{pm}}{dx} \\
  &= -\sum_{pqm} \gamma_{qm} h_{pm} \frac{dS_{pq}}{dx} 
\end{alignedat}
\end{equation}
The two body terms are slightly more complicated. For the two body terms, we can rearrange to isolate the terms depending on the overlap
\begin{equation}  \label{eq:dtwodoverlap}
    \begin{alignedat}{1}
    \sum_{\alpha\beta} \frac{\partial}{\partial S_{\alpha \beta}} \left[ \sum_{pqrs} \Gamma_{pqrs} g_{pqrs} \right] \frac{dS_{\alpha \beta}}{dx} &= \sum_{\alpha\beta} \sum_{pqrs} \Gamma_{pqrs} \left[ \frac{\partial}{\partial S_{\alpha \beta}} \sum_{\mu\nu\lambda\sigma} C_{\mu p} C_{\nu q} C_{\lambda r} C_{\sigma s} g_{\mu \nu  \lambda \sigma} \right] \frac{dS_{\alpha \beta}}{dx} 
    \end{alignedat}
\end{equation}
next, we can focus on evaluating the terms in brackets above:
\begin{equation}
    \begin{alignedat}{1}
    \frac{\partial}{\partial S_{\alpha \beta}} \sum_{\mu\nu\lambda\sigma} \frac{\partial}{\partial S_{\alpha \beta}} C_{\mu p} C_{\nu q} C_{\lambda r} C_{\sigma s} g_{\mu \nu \lambda \sigma} &= \quad \sum_{\mu\nu\lambda\sigma} \frac{\partial C_{\mu p}}{\partial S_{\alpha \beta}}  C_{\nu q} C_{\lambda r} C_{\sigma s} g_{\mu \nu  \lambda \sigma} + \sum_{\mu\nu\lambda\sigma} C_{\mu p} \frac{\partial C_{\nu q}}{\partial S_{\alpha \beta}}  C_{\lambda r} C_{\sigma s} g_{\mu \nu \lambda \sigma} \\
    &\quad + \sum_{\mu\nu\lambda\sigma} C_{\mu p} C_{\nu q} \frac{\partial C_{\lambda r}}{\partial S_{\alpha \beta}}  C_{\sigma s} g_{\mu \nu \lambda \sigma} + \sum_{\mu\nu\lambda\sigma} C_{\mu p} C_{\nu q} C_{\lambda r} \frac{\partial C_{\sigma s}}{\partial S_{\alpha \beta}}  g_{\mu \nu \lambda \sigma} \\
    &= -\frac{1}{2}\sum_{\mu t} C_{\mu t} C_{\alpha t} C_{\beta p} g_{\mu q rs} -\frac{1}{2}\sum_{\nu t} C_{\nu t} C_{\alpha t} C_{\beta q} g_{p \nu rs} \\
    & \quad -\frac{1}{2}\sum_{\lambda t} C_{\lambda t} C_{\alpha t} C_{\beta r}  g_{pq  \lambda s} -\frac{1}{2}\sum_{\sigma t} C_{\sigma t} C_{\alpha t} C_{\beta s}  g_{pq r \sigma} \\
    &= -\frac{1}{2}\sum_{t} C_{\alpha t} C_{\beta p} g_{t q  rs} - \frac{1}{2}\sum_{t}  C_{\alpha t} C_{\beta q} g_{p t rs} \\
    & \quad -\frac{1}{2}\sum_{t} C_{\alpha t} C_{\beta r}  g_{pq t s} - \frac{1}{2}\sum_{t}  C_{\alpha t} C_{\beta s}  g_{pq r t} \\
    \end{alignedat}
\end{equation}
so, upon plugging back in to \eq{dtwodoverlap}
\begin{equation}
    \begin{alignedat}{1}
&\sum_{\alpha\beta} \sum_{pqrs} \Gamma_{pqrs} \left[ \frac{\partial}{\partial S_{\alpha \beta}} \sum_{\mu\nu\lambda\sigma} C_{\mu p} C_{\nu q} C_{\lambda r} C_{\sigma s} g_{\mu \nu  \lambda \sigma} \right] \frac{dS_{\alpha \beta}}{dx} \nonumber\\
&\qquad= -\frac{1}{2} \sum_{\alpha\beta} \Biggr[ \sum_{pqrst} \Gamma_{pqrs} C_{\alpha t} C_{\beta p} g_{t q rs} + \sum_{pqrst} \Gamma_{pqrs}  C_{\alpha t} C_{\beta q} g_{p t  rs} + \sum_{pqrst} \Gamma_{pqrs} C_{\alpha t} C_{\beta r}  g_{pq  t s} + \sum_{pqrst} \Gamma_{pqrs}  C_{\alpha t} C_{\beta s}  g_{pq  r t} \Biggr] \frac{dS_{\alpha \beta}}{dx} \\
    &= -2 \sum_{pq} \frac{dS_{pq}}{dx} \sum_{rst} \Gamma_{qrst} g_{prst}
    \end{alignedat}
\end{equation}

where in the last line we used the permutational symmetries of the real two-electron integrals and the fact that indices $\{p,q,r,s,t\}$ are dummy indices. Putting it all together 
\begin{equation}
\frac{\partial E}{\partial \mathbf{S}} \frac{d\mathbf{S}}{dx} = -\sum_{pqm} \gamma_{qm} h_{pm} \frac{dS_{pq}}{dx} -2 \sum_{pq} \frac{dS_{pq}}{dx} \sum_{rst} g_{prst} \Gamma_{qrst}
\end{equation}
which, reindexing and collecting terms by RDM, gives the final expression
\begin{equation}
  \frac{dE}{dx} = \sum_{pq} \gamma_{pq} \left[\frac{dh_{pq}}{dx} - \sum_m h_{mq} \frac{dS_{mp}}{dx}\right] + \sum_{pqrs} \Gamma_{pqrs} \left[ \frac{dg_{pqrs}}{dx} -2 \sum_{t}g_{tqrs} \frac{dS_{tp}}{dx}\right]
\end{equation}

\section{Additional calculations for NISQ force estimation}\label{sec:NISQ_estimation}

In this appendix we provide some additional calculations and numerics relevant to the estimation of forces on NISQ quantum computers that were not considered in the main text.

\subsection{Importance sampling to optimize 1-norm estimation}\label{app:ImportanceSamplingLagrange}

In the main text we studied the effect of importance sampling to minimize the $2$-norm of the error in the force vector.
In this section we repeat this analysis to minimize the $1$-norm instead.
Let given that we have a $D$-component vector, our aim is to minimize the total number of shots needed to ensure that the estimate of the gradient is $\epsilon-$close to the true value of the gradient with respect to the vector $1$-norm.  That is to say, we wish to minimize the expectation value over data collected of the one-norm of the difference between the true gradient and this operation, which we bound using Jensen's inequality as
\begin{equation}
    \mathbb{E}\left(\left\|\frac{\widetilde{d E}}{d \mathbf{R}} - \frac{{d E}}{d \mathbf{R}} \right\|_1 \right)=\mathbb{E}\left(\sum_i \sqrt{\left(\frac{\widetilde{d E}}{d {R}_i} - \frac{{d E}}{d {R}_i}\right)^2}  \right)\le    \left(\sum_i \sqrt{\mathbb{E}\left(\frac{\widetilde{d E}}{d {R}_i} - \frac{{d E}}{d {R}_i}\right)^2}  \right)
\end{equation}
Let us assume that our estimator for the component of the gradient is unbiased with variance $\sigma_{F,i}^2$
\begin{equation}
    \mathbb{E}\left(\left\|\frac{\widetilde{d E}}{d \mathbf{R}} - \frac{{d E}}{d \mathbf{R}} \right\|_1 \right) \le \sum_i \sigma_{F,i} 
\end{equation}
Next let us assume that each shot of $\frac{\widetilde{d E}}{d \mathbf{R}}$ is estimated by using importance sampling to stochastically sample terms from the gradient operator.  Let us assume that the variance yielded from the importance sampling distribution for component $i$ is at most $|\alpha_i|^2$.

Now assume that the operator computed has an absolute value of at most $|a_i|$ when computing the gradient.  This implies, if we use the sample mean as the estimator of $\sigma_{F,i}$ using $M_i$ shots then the standard deviation satisfies
\begin{equation}
    \sigma_{F,i} = \frac{|\alpha_i|}{\sqrt{M_i}}.
\end{equation}
with $M$ being our target for $\sum_i M_i$.  The first problem that we wish to tackle is the problem of distributing our $M$ measurements over these different possibility in a way that minimizes the error.
This fortunately, can be expressed as a simple constrained optimization problem with Lagrange multiplier $\lambda$ wherein the optimal solution is found when the derivative of the Lagrangian
\begin{equation}
    \mathcal{L}=\sum_i \frac{|\alpha_i|}{\sqrt{M_i}} + \lambda (\sum_i M_i -M),
\end{equation}
with respect to all parameters: $M_1,\ldots,M_{D}$ and $\lambda$, is equal to $0$.
Setting this condition gives us $D$ non-trivial equations of the form
\begin{equation}
    M_k =  \left(\frac{|\alpha_k|}{2\lambda} \right)^{2/3}~\forall~k.
\end{equation}
The Lagrange multiplier is, as yet, unspecified.  We can constrain it by demanding that the expected error in the $1$-norm is bounded above by $\epsilon$.  Specifically, we have that
\begin{equation}
    \sum_k \frac{|\alpha_k|}{\sqrt{M_k}} = (2 \lambda)^{1/3} \left(\sum_k {|\alpha_k|^{2/3}} \right) =\epsilon
\end{equation}
Therefore we can choose the Lagrange multiplier to satisfy
\begin{equation}
    \lambda = \frac{\epsilon^3}{2\left(\sum_k {|\alpha_k|^{2/3}} \right)^3}~\Rightarrow~M_k = \frac{|\alpha_k|^{2/3} (\sum_k |\alpha_k|^{2/3})^2}{\epsilon^2}
\end{equation}
This further implies that the total number of shots, as a function of the individual standard deviations of the individual terms is 
\begin{equation}
    M = \frac{(\sum_k |\alpha_k|^{2/3})^3}{\epsilon^2}.
\end{equation}
Next, let us assume that for each shot of a particular component of the force vector we use an importance sampling procedure wherein for each of the $Q_{ij}$ terms that need to be measured to estimate the gradient.  We then choose our probability such that
\begin{equation}
    Pr_{ij} = \frac{\|Q_{ij}\|}{\sum_{j} \|Q_{ij}\|}
\end{equation}
and further consider the operator $Y_{ij} = Q_{ij} / Pr_{ij}$
choose to measure, for fixed $i$, $Q_{ij}$ with probability $Pr_{ij}$.  Thus $\sum_j \bra{\psi} Q_{ij} \ket{\psi} = \sum_{j} \bra{\psi} Y_{ij} \ket{\psi} Pr_{ij}$.  The population variance of this estimate is then
\begin{align}
    \mathbb{E}(Y_{ij}^2) - \mathbb{E}(Y_{ij})^2 &= \sum_j \frac{(\bra{\psi} Q_{ij} \ket{\psi})^2}{Pr_{ij}} - \left(\sum_j \bra{\psi} Q_{ij} \ket{\psi}\right)^2\nonumber\\
    &\le \left( \frac{\sum_{k} \|Q_{ik}\|\bra{\psi} Q_{ij} \ket{\psi}}{\|Q_{ij}\|}\right)^2\nonumber\\
    &\le (\sum_k \|Q_{ik}\|)^2
\end{align}
Thus the overall number of measurements the $Q_{ij}$ needed to reduce the variance of the one-norm of the gradient to epsilon is at most
\begin{equation}
    M \le \frac{(\sum_{i} (\sum_j \|Q_{ij}\|)^{2/3})^3}{\epsilon^2}.
\end{equation}

\subsection{Localization of molecular orbitals}
\label{app:localization}
Canonical molecular orbitals (CMOs) from a Hartree-Fock calculation are only one of many possibilities to express the Hamiltonian and the force operators - any other orthonormal molecular orbitals might be used too. A prominemt example for another possible choice are localized molecular orbitals (LMOs). In contrast to CMOs, which typically have support over the entire molecule, LMOs are restricted to a certain spatial region of the molecule \cite{edmiston1963localized,pipek1989fast,foster1960canonical}. Given a set of CMOs, one finds LMOs by applying a unitary transformation $U$ , 

\begin{align*}
\tilde{\phi}_q(r)=\sum_p \phi_p(r) U_{pq}\,.
\end{align*}

The unitary $U$ is altered with the goal to optimize the locality of the molecular orbitals. Different schemes with different cost functions exists, such as Edmiston-Ruedenberg (ER) \cite{edmiston1963localized}, Pipek-Mezey (PM) \cite{pipek1989fast} or Foster-Boys (FB) \cite{foster1960canonical}. 
In \cite{koridon2021orbital}, it was observed that LMOs in comparison to CMOs have a positive, lowering, effect on the 1-norm of the qubit Hamiltonian. As the force operators force operators include terms with $\mathcal{O}(1/r^2)$ while the Coloumb potential is $\mathcal{O}(1/r)$, we expect that using LMOs will also lower the 1-norms of the force operators. As different localization schemes have similar effects on the 1-norm of the Hamiltonian \cite{koridon2021orbital}, we restrict ourselves to the ER localization scheme in the following. Using ER, we aim to maximize
\begin{align}
	\mathcal{L}_\mathrm{ER} = \sum_{p}\int\int |\tilde{\phi}_p(r_1)|^2 \frac{1}{r_{12}}|\tilde{\phi}_p(r_2)|^2 \mathrm{dr_1}\mathrm{dr_2}
\end{align}
within the occupied/virtual molecular orbitals in order to find localized molecular orbitals. We then use the new set of LMOs to express the Hamiltonian and the force operators. We note that the fermionic shadow tomography scheme and the basis rotation grouping strategy, see Sec.~\ref{sec:nisq}, are not affected by any localization.  The basis rotation grouping scheme is based on a diagonalization, which is independent from the basis used, while the variance of the estimator of the fermionic shadow tomography depends  on the 2-norm, see Eq.~\eqref{eq:fermionic_shadow_variance}, which is preserved under any unitary transformation.

\section{Additional details on force estimation by numerical differentiation}\label{Appendix_Finite_differences}

\subsection{Comparison with first order finite differences}
The first order finite difference formula is:

\begin{equation}
    \frac{d E}{d R_i} = \frac{E(R +\frac{dR\cdot v_i}{2})-E(R-\frac{dR\cdot v_i}{2})}{dR}+\epsilon_{fd} 
\end{equation}

as in the general case, the total error for this will be the result of the error due to the quantum phase estimation and the discretization error due to the first order finite difference approximation:

By propagating the statistical error, due to quantum phase estimation, where we assume each QPE error is upper bounded by Eq.~\eqref{eq_error_PE_H_var}, we can write:
\begin{equation}
    \epsilon_{\mathrm{PE}}^{(m=1)} \lesssim \sqrt{2} \frac{\lambda_{\mathrm{H}} \pi}{2 T\;dR},
\end{equation}
where $T$ is the number of oracle calls used in each QPE evaluation of the energy and $\lambda_{\mathrm{H}} \lesssim \norm{H}_1$.      
On the other hand, if we assume that the energy is a well behaved function, then $\epsilon_{fd}$ is:
\begin{equation}
    \epsilon_{fd} = \mathfrak{k} \sum^{\infty}_{i=1} \frac{d^{2i} E(R)}{d R^{2i}} \frac{dR^{2i}}{2i!} \approx \mathfrak{k} \mathfrak{a} \times dR^2,
\end{equation}
with $\mathfrak{k}$ a constant factor with the dimension of an energy over a length cubed, and $\mathfrak{a}$ a dimensionless constant which upper-bounds the derivatives at $R$.
So the total error is:
\begin{equation}
    \epsilon_{\mathrm{1-tot}} \approx \sqrt{2} \frac{\lambda_{\mathrm{H}} \pi}{2 T\;dR} + \mathfrak{ka} dR^2
\end{equation}

This has the same $\tilde{O}$ scaling as the upper bound on the error in Eq.~\eqref{eq_epsilon_tot_not_opt_app} for $m=1$. In fact, for $m=1\rightarrow \Gamma=\sqrt{2}$ Eq.~\eqref{eq_epsilon_tot_not_opt_app} is:
\begin{equation}
      \epsilon = \epsilon_{\mathrm{PE}} +\epsilon_{fd}^{(m=1)}  < \frac{\pi\lambda_{\mathrm{H}}}{2T} \frac{6^{3/2}}{dR}+ \mathfrak{2e}{(8 c )^{3} dR^{2}}.
\end{equation}
The two have the same scalings in terms of $dR$, $\lambda_{\mathrm{H}}$ and $T$, and they are, in fact, identical up to constant factors in the second term.

\subsection{The error contribution from quantum phase estimation and the Holevo variance}\label{app:App_FD_Holevo_variance}

The error in phase estimation protocols is typically quantified by the Holevo variance of a phase $\phi$~\cite{Berry2009}, which is given by
\begin{equation}
V_H(\phi) = \frac{1}{|\mathbb{E}(e^{-i (\phi - \phi_{\rm true})})|^2}-1,    
\end{equation}
where $\phi$ is the estimated eigenphase and $\phi_{\rm true}$ is the true value of the eigenphase.
Specifically, it can be shown that there exists a phase estimation protocol that applies the fundamental unitary $T_l$ times and yields an estimate such that: 
\begin{align}
    \mathbb{E}(\phi-\phi_{\rm true}) =0, \; \mathrm{and } & \; V_H(\phi)=\tan^2\left(\frac{\pi}{2 T_l+1}\right)\approx \left(\frac{\pi}{2 T_l}\right)^2,
\end{align}
where the first expression means that $(\phi-\phi_{\rm true})$ is a random variable with zero mean~\cite{Berry2009, BabbushSpectra}. 
The purpose behind this estimate is to address the problem that the variance of a phase depends on the branch chosen for the phase as the Holevo variance is independent of the branch chosen.  Because of these differences, the properties that one would expect of a variance do not necessarily always hold here.  Specifically, we would like to apply the additive property of the variance to ensure that the variance of the sum of the energy estimated at each point is at most the sum of the variances.  We have that if for any $\phi_j$ the fourth order momentum of the unbiased random variable $(\phi_j-\phi_{j,\rm true})$
\begin{align}
         \mathcal{K}_j&:=\mathbb{E}((\phi_j-\phi_{j,\rm true})^4) \gg \sum_{k=3}^\infty \frac{\mathbb{E}(\phi_j-\phi_{j,\rm true})^{2k}}{2k!},
\end{align}
Then using the fact that the phase estimation protocol used is unbiased we have from Taylor's theorem that:
\begin{align}
    \mathbb{V}_H(\phi_j) = \frac{1}{(1-\frac{1}{2}\mathbb{E}(\phi_j - \phi_{j,true})^2)(1-\frac{1}{2}\mathbb{E}(\phi_j - \phi_{j,true})^2) + O(\mathcal{K}_j)}-1 = \mathbb{V}(\phi_j) + O(\mathcal{K}_j).
\end{align}
Thus we have from the additivity of variance that for independent random variables $\phi_i \in [-\pi, \pi)$ the variance of the sum of the random variables is
\begin{equation}
    \mathbb{V}(\sum_{j=1}^m \phi_j) = \sum_{j=1}^m \mathbb{V}(\phi_j) \le \sum_{j=1}^m V_H(\phi_j) + O(m \max_j \mathcal{K}_j).
\end{equation}
We therefore have that for such estimates, in the limit where the fourth and higher moments of the angular distribution are negligible then the statistical errors from phase estimation add in quadrature.

We obtain then that the variance of the estimates yielded by phase estimation can be written as:
\begin{equation}
    \mathbb{V}(\phi_l) = \frac{\pi^2}{4T_l^2} + O(\max_l \mathcal{K}_l).
\end{equation}
If we apply this to the problem of finding the eigenvalue of the energy with the quantum phase estimation, we have that
the phases returned from applying phase estimation on the qubitization walk oracle are: $\pm \cos^{-1}(E_j/\lambda_{\mathrm{H}})$, which is to say that the energy estimates needed for the gradient evaluation are from Eq.~\eqref{eq:higher_order_derivatives} 
propagating the IID variances of the $2m$ estimations we obtain the total mean square error $\epsilon_{\mathrm{PE}}$: 
\begin{equation}\label{eq_Epsilon_PE1_app}
    \epsilon_{\mathrm{PE}}^2\lesssim \pi^2 dR^{-2}\lambda_{\mathrm{H}}^{2}\sum_l|a_l^{(m)}|^2 (2\;T_l)^{-2}.
\end{equation}
It is equivalent to take Eq.~(23) from \cite{BabbushSpectra}:
\begin{equation}\label{eq_error_PE_H_var}
    \epsilon_{\mathrm{PE}}(l)^2 \lesssim \lambda_{\mathrm{H}}^2 \left(\frac{\pi^2}{4 T_l^2} + (\epsilon_{\rm{prep}} +\pi^2 \epsilon_{\rm{QFT}})^2 \right),
\end{equation}
neglect the contribution to the errors $\epsilon_{\rm prep}$ and $\epsilon_{\rm QFT}$ and then propagate assuming the covariances are zero to obtain Eq.~\eqref{eq_Epsilon_PE1_app}. $\epsilon_{\rm prep}$ and $\epsilon_{\rm QFT}$ are expected to be considerably smaller than the $1/T_l$ contribution, though they cannot be neglected in general, their consideration would require to find numerical solutions for estimating the scaling of the error.

\subsection{Robustness of phase estimation success probability}\label{app:robust}
The aim of this section is to provide a result that can be used in order to argue how the success probability for phase estimation.  This result is especially important for the finite difference algorithm discussed above because this result will show that if the eigenvalue gap is sufficiently large then with high probability all the values used for the state preparation can be taken from the same ground state.  If this is not true, then in the worst case scenario the ground state preparation will need to be applied for each perturbed Hamiltonian evaluated.  The result is stated below in the following Lemma, and its proof is a direct consequence of matrix calculus and elementary norm inequalities.
\begin{lemma}
Let $H:\mathbb{R} \mapsto \mathbb{C}^{2^n \times 2^n}$ be a parameterized family of Hamiltonians and let $\ket{\psi_i(\mathbf{R})}$ be the $i^{\rm th}$ eigenstate of $H(\mathbf{R})$ evaluated at parameters $\mathbf{R}\in \mathbb{R}^{3N_a}$ and let $E_i(\mathbf{R})$ be the corresponding eigenvalue.  If $\ket{\tilde{\psi}_0(\mathbf{R})}$ is known such that $\left\|\ket{\tilde{\psi}_0(\mathbf{R})} - \ket{{\psi}_0(\mathbf{R})} \right\| \le \delta$ and $\Delta\in \mathbb{R}^{3N_a}$ such that $\|\Delta\|< (E_1(\mathbf{R}) -E_0(\mathbf{R}))/(4\max_{\mathbf{R}',i}\left\|\tfrac{dH(\mathbf{R}')}{dR_i}\right\|)$ then
$$
|\braket{\tilde{\psi}_0(\mathbf{R})}{{\psi_0(\mathbf{R}+\Delta)}}| \ge 1 - \delta -  \frac{2\|\Delta\| \max_{\mathbf{R}',i}\left\|\tfrac{dH(\mathbf{R}')}{dR_i}\right\|}{E_1(\mathbf{R}) - E_0(\mathbf{R})}.
$$
Here, the maximum is taken over all points in a ball of radius $\|\Delta\|$ around $\mathbf{R}$ and over all derivative components $\frac{d}{dR_i}$.
\end{lemma}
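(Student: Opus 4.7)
The plan is to split the overlap bound into two pieces using the triangle inequality applied to fidelity, namely
\begin{equation}
|\braket{\tilde{\psi}_0(\mathbf{R})}{\psi_0(\mathbf{R}+\Delta)}| \ge |\braket{\psi_0(\mathbf{R})}{\psi_0(\mathbf{R}+\Delta)}| - \bigl\|\ket{\tilde{\psi}_0(\mathbf{R})}-\ket{\psi_0(\mathbf{R})}\bigr\|,
\end{equation}
which absorbs the $\delta$ term directly by hypothesis. It then suffices to prove the pure perturbative bound
\begin{equation}
|\braket{\psi_0(\mathbf{R})}{\psi_0(\mathbf{R}+\Delta)}| \ge 1 - \frac{2\|\Delta\|\max_{\mathbf{R}',i}\|dH(\mathbf{R}')/dR_i\|}{E_1(\mathbf{R})-E_0(\mathbf{R})},
\end{equation}
and the Lemma follows.

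First I would control the instantaneous rate of change of the ground state along the line segment from $\mathbf{R}$ to $\mathbf{R}+\Delta$. With a suitable phase convention for $\ket{\psi_0(\mathbf{R}+t\Delta)}$ (so that $\partial_t\ket{\psi_0}$ is orthogonal to $\ket{\psi_0}$), first-order perturbation theory gives
\begin{equation}
\frac{d}{dt}\ket{\psi_0(\mathbf{R}+t\Delta)} = \sum_{j\ne 0}\frac{\bra{\psi_j(\mathbf{R}+t\Delta)}(\Delta\cdot\nabla H)\ket{\psi_0(\mathbf{R}+t\Delta)}}{E_0(\mathbf{R}+t\Delta)-E_j(\mathbf{R}+t\Delta)}\ket{\psi_j(\mathbf{R}+t\Delta)},
\end{equation}
whose norm is bounded by $\|\Delta\cdot\nabla H\|/\gamma(\mathbf{R}+t\Delta)$, where $\gamma(\mathbf{R}')=E_1(\mathbf{R}')-E_0(\mathbf{R}')$. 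The numerator is bounded by $\|\Delta\|\max_i\|dH/dR_i\|$ evaluated on the segment. For the denominator, I would use Weyl's inequality: since $\|H(\mathbf{R}+t\Delta)-H(\mathbf{R})\|\le \|\Delta\|\max_{\mathbf{R}',i}\|dH(\mathbf{R}')/dR_i\|$, each eigenvalue shifts by at most this amount, so the assumption $\|\Delta\|<(E_1(\mathbf{R})-E_0(\mathbf{R}))/(4\max\|dH/dR_i\|)$ guarantees $\gamma(\mathbf{R}+t\Delta)\ge \gamma(\mathbf{R})/2$ along the whole segment.

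Integrating from $t=0$ to $t=1$ then yields
\begin{equation}
\bigl\|\ket{\psi_0(\mathbf{R}+\Delta)}-\ket{\psi_0(\mathbf{R})}\bigr\| \le \frac{2\|\Delta\|\max_{\mathbf{R}',i}\|dH(\mathbf{R}')/dR_i\|}{E_1(\mathbf{R})-E_0(\mathbf{R})},
\end{equation}
and combining with the elementary estimate $|\braket{\phi}{\psi}|\ge 1-\tfrac12\|\ket{\phi}-\ket{\psi}\|^2\ge 1-\|\ket{\phi}-\ket{\psi}\|$ for unit vectors gives the perturbative bound. Chaining this with the first triangle inequality produces the Lemma. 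The main obstacle is really just a bookkeeping issue: one must justify that the phase convention used for $\ket{\psi_0(\mathbf{R}')}$ can be chosen smoothly along the segment (so the derivative above is well-defined and orthogonal to the state), and that the gap stays open along the segment so perturbation theory applies term by term; both are handled cleanly by the Weyl bound above together with the smoothness of $H$ as a function of $\mathbf{R}$. Finally, extending from one approximate state to the symmetric two-state version $\ket{\tilde{\psi}_0(\mathbf{R}+\Delta_1)}$ vs.\ $\ket{\tilde{\psi}_0(\mathbf{R}+\Delta_2)}$ stated earlier in the paper is a second application of the triangle inequality, which produces the $3\delta$ and $6\max\{\|\Delta_1\|,\|\Delta_2\|\}\max\|dH/dR_i\|/(E_1-E_0)$ factors used in the main text.
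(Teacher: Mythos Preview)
Your proposal is correct and follows essentially the same route as the paper: both arguments integrate the first-order perturbation-theory expression for $\partial_t\ket{\psi_0(\mathbf{R}+t\Delta)}$ along the segment, use a Weyl-type eigenvalue-shift estimate together with the hypothesis on $\|\Delta\|$ to keep the gap at least $\tfrac12(E_1-E_0)$, and then absorb the approximate-state error via the reverse triangle inequality. The only cosmetic difference is that you first bound $\bigl\|\ket{\psi_0(\mathbf{R}+\Delta)}-\ket{\psi_0(\mathbf{R})}\bigr\|$ and convert to an overlap bound, whereas the paper takes the inner product with $\ket{\psi_0(\mathbf{R})}$ directly before bounding the integral; both yield the same constant.
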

\begin{proof}
First  we have from the fundamental theorem of calculus, the assumption that $E_j(\mathbf{R}+s) > E_0(\mathbf{R}+s)$ that 
\begin{align}
    \ket{\psi(\mathbf{R}+\Delta)} &= \ket{\psi(\mathbf{R})} + \int_0^1 \frac{d\ket{\psi(\mathbf{R}+s\Delta)}}{ds} \mathrm{d}s\nonumber\\
    &= \ket{\psi(\mathbf{R})} + \int_0^1 \sum_{j\ne 0} \frac{\ket{\psi_j(\mathbf{R}+s\Delta)} \bra{\psi_j(\mathbf{R}+s\Delta)} \frac{dH(\mathbf{R}+s\Delta)}{ds} \ket{\psi(\mathbf{R}+s\Delta)}}{E_j(\mathbf{R}+s\Delta) - E_0(\mathbf{R}+s\Delta)}\mathrm{d}s
\end{align}
Next we have from the Cauchy-Schwarz inequality
\begin{align}
    |\braket{\psi(\mathbf{R})}{\psi(\mathbf{R}+\Delta)}|&\ge 1 - \left|\int_0^1 \sum_{j\ne 0} \frac{\braket{\psi(\mathbf{R})}{\psi_j(\mathbf{R}+s\Delta)} \bra{\psi_j(\mathbf{R}+s\Delta)} \tfrac{dH(\mathbf{R}+s\Delta)}{ds} \ket{\psi(\mathbf{R}+s\Delta)}}{E_j(\mathbf{R}+s\Delta) - E_0(\mathbf{R}+s\Delta)}\mathrm{d}s \right|\nonumber\\
    &\ge 1 - \int_0^1\left| \sum_{j\ne 0} \frac{\braket{\psi(\mathbf{R})}{\psi_j(\mathbf{R}+s\Delta)} \bra{\psi_j(\mathbf{R}+s\Delta)} \tfrac{dH(\mathbf{R}+s\Delta)}{ds} \ket{\psi(\mathbf{R}+s\Delta)}}{E_j(\mathbf{R}+s\Delta) - E_0(\mathbf{R}+s\Delta)} \right|\mathrm{d}s\nonumber\\
    &\ge 1 - \int_0^1\sqrt{\left| \sum_{j\ne 0} \frac{\bra{\psi(\mathbf{R}+s\Delta)} \tfrac{dH(\mathbf{R}+s\Delta)}{ds}\ket{\psi_j(\mathbf{R}+s\Delta)}\bra{\psi_j(\mathbf{R}+s\Delta)} \tfrac{dH(\mathbf{R}+s\Delta)}{ds} \ket{\psi(\mathbf{R}+s\Delta)}}{(E_1(\mathbf{R}+s\Delta) - E_0(\mathbf{R}+s\Delta))^2} \right|}\mathrm{d}s\nonumber\\
    &\ge 1 - \int_0^1 \frac{\|\tfrac{dH(\mathbf{R}+s\Delta)}{ds}\|}{E_1(\mathbf{R}+s\Delta) - E_0(\mathbf{R}+s\Delta)}\mathrm{d}s\nonumber\\
    &\ge 1 -  \frac{\|\Delta\| \max_s\|\tfrac{dH(\mathbf{R}+s\Delta)}{ds}\|}{\min_s|E_1(\mathbf{R}+s\Delta) - E_0(\mathbf{R}+s\Delta)|} \ge 1 -  \frac{\|\Delta\| \max_s\|\tfrac{dH(\mathbf{R}+s\Delta)}{ds}\|}{E_1(\mathbf{R}) - E_0(\mathbf{R}) -2\|\Delta\| \max_s\| \tfrac{dH(\mathbf{R}+s\Delta)}{ds}\|}\nonumber\\
    &\ge 1 -  \frac{2\|\Delta\| \max_s\|\tfrac{dH(\mathbf{R}+s\Delta)}{ds}\|}{E_1(\mathbf{R}) - E_0(\mathbf{R})}\ge 1 -  \frac{2\|\Delta\| \max_{\mathbf{R}',i}\left\|\tfrac{dH(\mathbf{R}')}{dR_i}\right\|}{E_1(\mathbf{R}) - E_0(\mathbf{R})}
\end{align}
where the last inequality only follows under  the assumption that $\max_s\| \tfrac{d}{ds} H(\mathbf{R}+s\|\Delta\|)\| \le (E_1(\mathbf{R}) - E_0(\mathbf{R}))/4\Delta$.

Finally if $\|\ket{\tilde{\psi}_0(\mathbf{R}+\Delta)} - \ket{{\psi}_0(\mathbf{R}+\Delta)} \| \le \delta$ then there exists $\delta'\le \delta$ and $\ket{\phi}$ such that $\ket{\tilde{\psi}_0(\mathbf{R}+\Delta)} = \ket{\psi_0(\mathbf{R}+\Delta)} + \delta' \ket{\phi}$.
We therefore have from the reverse triangle inequality that under the assumptions of the lemma

\begin{align}
    |\braket{\tilde{\psi}_0(\mathbf{R}+\Delta)}{\psi_0(\mathbf{R})}| \ge  1 - \delta -  \frac{2\|\Delta\| \max_s\|\tfrac{dH(\mathbf{R}+s\Delta)}{ds}\|}{E_1(\mathbf{R}) - E_0(\mathbf{R})}\ge 1 -  \delta-\frac{2\|\Delta\| \max_{\mathbf{R},i}\left\|\frac{dH(\mathbf{R})}{dR_i}\right\|}{E_1(\mathbf{R}) - E_0(\mathbf{R})}
\end{align}
\end{proof}
This shows that provided the eigenvalue gap is significant relative to the norm of the  derivative operator then the reduction in success probability for phase estimation is minimal.  This means that in practice, small shifts in the Hamiltonian will not meaningfully impact the overlaps provided these gap assumptions are met.

\begin{corollary}
Assume that the displacements $\Delta_1\in\mathbb{R}^{3N_a}$ and $\Delta_2 \in \mathbb{R}^{3N_a}$ individually satisfy the assumptions of the previous lemma we then have that
$$
|\braket{\tilde{\psi}_0(x+\Delta_1)}{\tilde{\psi}_0(x +\Delta_2)}| \ge 1 - 3\delta - \frac{6 \max\{\|\Delta_1\|,\|\Delta_2\|\} \max_{\mathbf{R}',i}\left\|\tfrac{dH(\mathbf{R}')}{dR_i}\right\|}{E_1(x) - E_0(x)}
$$
\end{corollary}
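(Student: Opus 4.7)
The natural strategy is to bridge the two approximate ground states through exact ones, then reduce to the setting of the previous lemma. First I would use the approximation hypothesis $\|\ket{\tilde\psi_0(x+\Delta_i)} - \ket{\psi_0(x+\Delta_i)}\| \le \delta$ to write $\ket{\tilde\psi_0(x+\Delta_i)} = \ket{\psi_0(x+\Delta_i)} + \ket{e_i}$ with $\|e_i\|\le\delta$, expand the inner product
\begin{equation*}
\braket{\tilde\psi_0(x+\Delta_1)}{\tilde\psi_0(x+\Delta_2)} = \braket{\psi_0(x+\Delta_1)}{\psi_0(x+\Delta_2)} + \braket{\psi_0(x+\Delta_1)}{e_2} + \braket{e_1}{\psi_0(x+\Delta_2)} + \braket{e_1}{e_2},
\end{equation*}
and apply the reverse triangle inequality together with Cauchy--Schwarz to obtain
$|\braket{\tilde\psi_0(x+\Delta_1)}{\tilde\psi_0(x+\Delta_2)}| \ge |\braket{\psi_0(x+\Delta_1)}{\psi_0(x+\Delta_2)}| - 2\delta - \delta^2$, which is at least $|\braket{\psi_0(x+\Delta_1)}{\psi_0(x+\Delta_2)}| - 3\delta$ for $\delta\le 1$. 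This handles the $3\delta$ term in the claim.

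Next I would bound the residual overlap $|\braket{\psi_0(x+\Delta_1)}{\psi_0(x+\Delta_2)}|$ between the exact ground states. The cleanest route is to apply the previous lemma with $\delta=0$, taking base point $\mathbf R = x+\Delta_1$ and displacement $\Delta = \Delta_2-\Delta_1$. This gives
\begin{equation*}
|\braket{\psi_0(x+\Delta_1)}{\psi_0(x+\Delta_2)}| \ge 1 - \frac{2\,\|\Delta_2-\Delta_1\|\,\max_{\mathbf R',i}\|dH(\mathbf R')/dR_i\|}{E_1(x+\Delta_1) - E_0(x+\Delta_1)}.
\end{equation*}
The triangle inequality $\|\Delta_2 - \Delta_1\| \le \|\Delta_1\| + \|\Delta_2\| \le 2\max\{\|\Delta_1\|,\|\Delta_2\|\}$ reduces the numerator to the form appearing in the corollary. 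The denominator then needs to be related to $E_1(x) - E_0(x)$; by Weyl's inequality, $|E_i(x+\Delta_1) - E_i(x)| \le \|H(x+\Delta_1) - H(x)\| \le \|\Delta_1\|\, \max_{\mathbf R',i}\|dH(\mathbf R')/dR_i\|$, so the perturbed gap satisfies $E_1(x+\Delta_1) - E_0(x+\Delta_1) \ge E_1(x) - E_0(x) - 2\|\Delta_1\|\max\|dH/dR_i\|$, which under the hypothesis $\|\Delta_1\|\max\|dH/dR_i\| \le (E_1(x)-E_0(x))/4$ is at least $(E_1(x)-E_0(x))/2$.

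Combining these ingredients yields the desired bound up to a universal constant in front of $\max\{\|\Delta_1\|,\|\Delta_2\|\}\max\|dH/dR_i\|/(E_1(x)-E_0(x))$. The main obstacle is tightening this constant to the claimed value of $6$: the most naive assembly of the steps above (triangle inequality on $\|\Delta_2-\Delta_1\|$ plus the factor of $2$ loss in the perturbed gap) produces a factor of $8$ rather than $6$. Sharpening to $6$ requires either (i) redoing the fundamental-theorem-of-calculus argument inside the proof of the previous lemma directly for the path from $x+\Delta_1$ to $x+\Delta_2$ so that the perturbed gap $\min_s (E_1 - E_0)$ along the straight-line path can be bounded more carefully using the assumption that each $\Delta_i$ (not $\Delta_2-\Delta_1$) individually satisfies the smallness hypothesis, or (ii) routing the bound through $\ket{\psi_0(x)}$ with a convexity-style combination of the two individual estimates $|\braket{\psi_0(x)}{\psi_0(x+\Delta_i)}|\ge 1-\epsilon_i$ that exploits the fact that $\epsilon_1+\epsilon_2 \le 2\max$. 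Either way, the constant is essentially bookkeeping; the structural content of the proof is the two applications of the triangle inequality together with a single invocation of the previous lemma and Weyl's inequality for the gap.
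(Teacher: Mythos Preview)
Your primary route is valid and yields the statement with constant $8$ in place of $6$; the paper takes your option (ii), routing through $\ket{\psi_0(x)}$, and this is what produces the $6$. The key difference from how you phrased option (ii) is that the paper does not combine two \emph{inner-product} bounds $|\braket{\psi_0(x)}{\psi_0(x+\Delta_i)}|\ge 1-\epsilon_i$; that alone would be too weak (in the worst case two such overlaps only imply $|\braket{\psi_0(x+\Delta_1)}{\psi_0(x+\Delta_2)}|\gtrsim 1-\epsilon_1-\epsilon_2-2\sqrt{\epsilon_1\epsilon_2}$). Instead the paper uses the \emph{norm} decomposition implicit in the fundamental-theorem-of-calculus step of the lemma: one has $\ket{\psi_0(x+\Delta_i)}=\ket{\psi_0(x)}+I_i$ with $\|I_i\|\le 2\|\Delta_i\|\max\|dH/dR\|/(E_1(x)-E_0(x))$, hence $\ket{\tilde\psi_0(x+\Delta_i)}=\ket{\psi_0(x)}+\delta_i\ket{\phi_i}$ with $|\delta_i|\le \delta+2\|\Delta_i\|\max\|dH/dR\|/(E_1(x)-E_0(x))$. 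Expanding the inner product then gives $|\braket{\tilde\psi_0(x+\Delta_1)}{\tilde\psi_0(x+\Delta_2)}|\ge 1-|\delta_1|-|\delta_2|-|\delta_1\delta_2|\ge 1-3\max\{|\delta_1|,|\delta_2|\}$, and $3\times 2=6$.

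Your approach via the shifted base point $x+\Delta_1$ plus Weyl's inequality is cleaner conceptually (it treats the lemma as a black box rather than reopening its proof), but it loses an unnecessary factor of $2$ from $\|\Delta_2-\Delta_1\|\le 2\max\{\|\Delta_i\|\}$. The paper's route avoids this by measuring both displacements from the common center $x$, at the cost of needing the norm bound on $\psi_0(x+\Delta_i)-\psi_0(x)$ rather than just the overlap bound stated in the lemma.
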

\begin{proof}
Under the assumptions of the previous theorem we have that there exist $\delta_1$ and $\delta_2$ that are bounded above in absolute value by $\delta + \frac{4 \max\{\|\Delta_1\|,\|\Delta_2\|\} \max_{\mathbf{R}',i}\left\|\tfrac{dH(\mathbf{R}')}{dR_i}\right\|}{E_1(x) - E_0(x)}$ and $\ket{\phi_1},\ket{\phi_2}$ such that
\begin{align}
    |\braket{\tilde{\psi}_0(x+\Delta_1)}{\tilde{\psi}_0(x +\Delta_2)}| &= |1 +\delta_2 \braket{\tilde{\psi}_0(x+\Delta_1)}{\phi_2}+\delta_1 \braket{\phi_1}{\tilde{\psi}_0(x+\Delta_2)} + \delta_1\delta_2\braket{\tilde{\psi}_0(x+\Delta_1)}{\tilde{\psi}_0(x+\Delta_2)}|\nonumber\\
    &\ge 1 - 3\max\{|\delta_1|,|\delta_2|\}.
\end{align}
The result then immediately follows by the assumed bounds on $\delta_1,\delta_2$.
\end{proof}

\subsection{Optimal T for numerical differentiation of the energy  with Lagrange Multipliers}\label{App_FD_Lagrange_Mult}

In order to find the optimal time to achieve a given error in the evaluation of the gradient with finite differences, we need to minimize $T=\sum_lT_l$, with the condition 
$\epsilon^2_{\mathrm{PE}} = \pi^2dR^{-2}\lambda_{\mathrm{H}}^{2}\sum_l|a_l^{(m)}|^2(2T_l)^{-2}$.
\begin{align}
    \mathcal{L} &=\sum_{l=-m}^m T_l + \lambda \left( \epsilon_{\mathrm{PE}}^2 - \tilde{\epsilon}_{\mathrm{PE}}^2\right) \\
    \mathcal{L} &=\sum_{l=-m}^m T_l + \lambda \left( \pi^2 dR^{-2} \lambda_{\mathrm{H}}^2 \sum_{l=-m}^m |a_l^{(m)}|^2 \left(2 T_l\right)^{-2} -\tilde{\epsilon}_{\mathrm{PE}}^2 \right).
\end{align}
We have then:
\begin{equation}
    \frac{\partial \mathcal{L}}{\partial T_l} = 1 - 2^{-1} \lambda \left(\pi^2 dR^{-2} \lambda_{\mathrm{H}}^2 |a_l^{(m)}|^2  T_l^{-3}\right) = 0.
\end{equation}
Solving in $T_l$ we have:
\begin{equation}\label{eq_T_l_Lagrange}
    T_l = \left(\lambda 2^{-1} \pi^2 dR^{-2} \lambda_{\mathrm{H}}^2 |a_l^{(m)}|^2 \right)^{\frac{1}{3}}, 
\end{equation}
substituting this back in Eq.~\eqref{eq_Epsilon_PE1_app} with equal sign, which is our condition for the Lagrangian, and solving in lambda, we have:
\begin{align}
    \epsilon_{\mathrm{PE}}^2 &= \pi^2dR^{-2}\lambda_{\mathrm{H}}^{2} \sum_{l=m}^m \left( |a_l^{(m)}|^2 2^{-2} \lambda^{-\frac{2}{3}} 2^\frac{2}{3} \pi^{-\frac{4}{3}} dr^\frac{4}{3} \lambda_{\mathrm{H}}^{-\frac{4}{3}}|a_l^{(m)}|^{-\frac{4}{3}}\right) \\ 
    \epsilon_{\mathrm{PE}}^2 &= \pi^{\frac{2}{3}} dR^{-\frac{2}{3}} \lambda_{\mathrm{H}}^{\frac{2}{3}} \lambda^{-\frac{2}{3}} 2^{-\frac{4}{3}} \sum_{l=m}^m |a_l^{(m)}|^\frac{2}{3}\\
    \lambda &= \epsilon_{\mathrm{PE}}^{-3} \; \pi \;  dR^{-1}\; \lambda_{\mathrm{H}} 2^{-2} \left(\sum_{l=m}^m |a_l^{(m)}|^\frac{2}{3} \right)^\frac{3}{2}
\end{align}
and substituting this in Eq.~\eqref{eq_T_l_Lagrange}
\begin{equation}
    T_l = \epsilon_{\rm{PE}}^{-1}\;\pi \; dR^{-1} \lambda_{\mathrm{H}} 2^{-1} |a_l^{(m)}|^\frac{2}{3} \left( \sum_{l=m}^m |a_l^{(m)}|^\frac{2}{3} \right)^\frac{1}{2}
\end{equation}
Summing over $l$ and using the shorthand $\Gamma$ from Eq.~\eqref{eq_Gamma}:
\begin{align}
    T =\sum_{l=-m}^m T_l =  \epsilon_{\rm{PE}}^{-1}\;\pi \; dR^{-1} \lambda_{\mathrm{H}} 2^{-1} \sum_{l=-m}^m |a_l^{(m)}|^\frac{2}{3} \left( \sum_{l=m}^m |a_l^{(m)}|^\frac{2}{3} \right)^\frac{1}{2} =\\
    \epsilon_{\rm{PE}}^{-1}\;\pi \; dR^{-1} \lambda_{\mathrm{H}} 2^{-1} \Gamma^\frac{3}{2}\Gamma^\frac{1}{3}= \epsilon_{\rm{PE}}^{-1}\;\pi \; dR^{-1} \lambda_{\mathrm{H}} 2^{-1} \Gamma^\frac{5}{6}\leq \epsilon_{\rm{PE}}^{-1}\;\pi \; dR^{-1} \lambda_{\mathrm{H}} 2^{-1} \Gamma, 
\end{align}
In summary, using the Lagrange multipliers method, solving in $T_l$ and summing over all $l$ we obtain that the total time necessary to obtain a target statistical error $\epsilon_{\mathrm{PE}}$:    
\begin{align}
    T=\sum_lT_l&\le \pi\;\epsilon_{\mathrm{PE}}^{-1}\;dR^{-1}\;\lambda_{\mathrm{H}} \;2^{-1} \left[\sum_{l=-m}^m|a_l^{(m)}|^{2/3}\right]^{3/2}.
\end{align}

So that we can write:
\begin{equation}\label{App_T_scaling_Lagrange_M}
    T<\epsilon_{\rm{PE}}^{-1}\;\pi \; dR^{-1} \lambda_{\mathrm{H}} 2^{-1} \Gamma
\end{equation}
and inverting Eq.~\eqref{App_T_scaling_Lagrange_M}:
\begin{equation}
    \epsilon_{\mathrm{PE}} \le \frac{\pi \lambda_{\mathrm{H}} \;\Gamma}{2dR\;T} < \frac{\pi \lambda_{\mathrm{H}} 6^{3/2} m^{1/2}}{2dR\;T}.
\end{equation}
It remains to place a bound on $\Gamma$.
We have that
\begin{align}
    |a_l^{(m)}| &= \frac{1}{|l|} \frac{\binom{m}{|l|}}{\binom{m+|l|}{|l|}} = \frac{1}{|l|}\frac{\binom{m}{|l|}}{\binom{m+1}{|l|}}\left(\frac{\binom{m+1}{|l|}}{\binom{m+|l|}{|l|}} \right) = \frac{1}{|l|} \left(\frac{m!}{(m-|l|)!} \frac{(m+1-|l|)!}{(m+1)!} \right)\left(\frac{\binom{m+1}{|l|}}{\binom{m+|l|}{|l|}} \right)\nonumber\\
    &\le \frac{1}{|l|}\left(\frac{m}{m+1} \right)\left(\frac{\binom{m+1}{|l|}}{\binom{m+|l|}{|l|}} \right)\le \frac{1}{|l|}\left(\frac{m}{m+1} \right).
\end{align}

and taking a continuum approximation yields
\begin{equation}\label{eq_Gamma}
    \Gamma^{\frac{2}{3}}=
    \sum_{l=-m}^m |a_l^{(m)}|^\frac{2}{3} =2 \sum_{l=1}^m |a_l^{(m)}|^\frac{2}{3} \le 2 \sum_{l=1}^m |l|^{-\frac{2}{3}} \le 2 \left( 1 + \int_1^{m-1} |l|^{-\frac{2}{3}} \mathrm{d}l \right)= 6(m-1)^{\frac{1}{3}}-4 < 6m^\frac{1}{3}.
\end{equation}

\section{Momentum state preparation for first-quantized plane-wave simulations}
\label{app:first_quantized_force_prepare}

In order to block-encode the force operator in first quantization in a plane wave basis, we require the ability to prepare a momentum state proportional to
\begin{equation}
\label{eq:momentum_state}
	\sum_{\nu\neq0}\frac{\sqrt{|\nu_x|}}{\norm{\nu}}\ket{\nu}.
\end{equation}
The challenge for this preparation is that there are $\Theta(N)$ values of $\nu$ in this state which means that the Toffoli gate complexity scales like $\Theta(N)$ if we prepare it naively.
Instead, we will describe a circuit implementation which has an asymptotic cost of $\mathcal{O}(\mathrm{polylog}(N,1/\epsilon))$, where $\epsilon$ is the error of the block-encoded operator.

Note that this is different from performing Hamiltonian simulation in first quantization \cite{BabbushContinuum,su2021fault}. There, our goal is to prepare a state proportional to $\sum_{\nu\neq0}\frac{1}{\norm{\nu}}\ket{\nu}$, and we achieve that by grouping the set of momenta into nested boxes
\begin{equation}
	\left\{\nu\ \left|\ \abs{\nu_x}<2^{\mu-1}\land\abs{\nu_y}<2^{\mu-1}\land\abs{\nu_z}<2^{\mu-1}\land
	\left(\abs{\nu_x}\geq 2^{\mu}\lor\abs{\nu_y}\geq 2^{\mu}\lor\abs{\nu_z}\geq 2^{\mu}\right)\right.\right\}.
\end{equation}
The number of boxes scales only like $\mathcal{O}(\log N)$, so a superposition over all nested boxes can be prepared efficiently. Meanwhile, the momenta within each nested box do not vary significantly in size, which means that the coefficients are nearly uniformly distributed and a superposition over those momenta can also be efficiently prepared. However, we now have the additional weighting of $\sqrt{\nu_x}$ for block-encoding the derivative operator, whose value changes dramatically within each box. As a result, the preparation subroutine would succeed with a very small probability which would need to be amplified with significant overhead.

To address this issue, we consider grouping the momenta into nested boxes for $(\nu_x,\nu_y)$ alone:
\begin{align}
    C_{\mu} := \left\{(\nu_y,\nu_z)\ \left|\ \abs{\nu_y}<2^{\mu-1}\land\abs{\nu_z}<2^{\mu-1}\land
	\left(\abs{\nu_y}\geq 2^{\mu-2}\lor\abs{\nu_z}\geq 2^{\mu-2}\right)\right.\right\}.
\end{align}
We now describe a quantum circuit that prepares the momentum state in \eq{momentum_state}. We start by preparing
\begin{equation}
    \frac{1}{\sqrt{2^{n+1}}}\sum_{\mu=2}^{n}2^{\mu/2}\ket{\mu}.
\end{equation}
The method for preparation of this state was given in Section IIB of \cite{Su2021}, and has Toffoli complexity linear in $n$ (given a $\ket{T}$ state that is used catalytically to implement controlled Hadamards).
Here we have allowed this state to be subnormalised to account for failure of this preparation.
Given $\mu$, we aim to prepare $y,z$ values within box $C_\mu$.
This can be achieved by preparing the state $\ket{\mu}$ in unary, and performing controlled Hadamards on the $y$ and $z$ registers with complexity $\mathcal{O}(\log N)$.
We encode $y,z$ as signed integers, and also eliminate the negative zero as a failure with complexity $\mathcal{O}(\log N)$.
That gives
\begin{equation}
    \frac{1}{\sqrt{2^{n+1}}}\sum_{\mu=2}^{n}2^{\mu/2}\frac 1{2^{\mu}}\sum_{\nu_y,\nu_z=-(2^{\mu-1}-1)}^{2^{\mu-1}-1}\ket{\mu,\nu_y,\nu_z}.
\end{equation}
Next, we can eliminate the inner squares in $\nu_y,\nu_z$ using inequality tests (with complexity $\mathcal{O}(\log N)$) to give
\begin{equation}
    \frac{1}{\sqrt{2^{n+1}}}\sum_{\mu=2}^{n}2^{-\mu/2}\sum_{\substack{\nu_y,\nu_z=-(2^{\mu-1}-1)\\ |\nu_y|,|\nu_z|\ge 2^{\mu-2}}}^{2^{\mu-1}-1}\ket{\mu,\nu_y,\nu_z}.
\end{equation}
For $\nu_x$ we simply create an equal superposition using Hadamards (without any nesting of boxes), and flag the negative zero (complexity $\mathcal{O}(\log N)$), to give
\begin{equation}
    \frac{1}{\sqrt{2^{2n+1}}}\sum_{\nu_x=-(2^{n-1}-1)}^{2^{n-1}-1}\sum_{\mu=2}^{n}2^{-\mu/2}\sum_{(\nu_y,\nu_z) \in C_{\mu}}\ket{\mu,\nu_x,\nu_y,\nu_z}.
\end{equation}

Next, we aim to show that
\begin{equation}
\frac{\sqrt{|\nu_x|}}{\sqrt{\nu_x^2+\nu_y^2+\nu_z^2}} \le 2^{-(\mu-1)/2} ,
\end{equation}
so that we can perform inequality testing for the state preparation.
To maximise this expression, we need to choose $\nu_y,\nu_z$ such that $\nu_y^2+\nu_z^2$ is minimised for a given box $C_\mu$.
The minimum is obtained for $|\nu_y|=2^{\mu-2}$ and $|\nu_z|=0$ or $|\nu_z|=2^{\mu-2}$ and $|\nu_y|=0$ (the middle of a side of an inner square), in which case $\nu_y^2+\nu_z^2=(2^{\mu-2})^2$.
The maximum as a function of $\nu_x$ is then for $\nu_x=2^{\mu-2}$.
That gives us an upper bound of
\begin{equation}
\frac{\sqrt{|\nu_x|}}{\sqrt{\nu_x^2+\nu_y^2+\nu_z^2}} \le \frac{2^{(\mu-2)/2}}{\sqrt{2^{2\mu-4}+2^{2\mu-4}}} = 2^{-(\mu-1)/2} ,
\end{equation}
as required.
Therefore, we now prepare another superposition over $M$ values of $m$ and test the inequality
\begin{equation}
\frac{|\nu_x|}{\nu_x^2+\nu_y^2+\nu_z^2} > 2^{-(\mu-1)} \frac{m}{M}.
\end{equation}
In order to avoid divisions, this inequality test can be rewritten as
\begin{equation}
2^{\mu-1} |\nu_x|M > m(\nu_x^2+\nu_y^2+\nu_z^2).
\end{equation}
This inequality test can be evaluated using only additions and multiplications.

The Toffoli complexity of the multiplications is the product of the number of bits, and we have $\mathcal{O}(\log N)$ bits for the components of $\nu$.
Using Lemma 3 of \cite{su2021fault} we can give the number of bits for $m$ as
$\mathcal{O}(\log (N/\epsilon))$ for the case of energy estimation.
The case of force estimation is sufficiently similar that the scaling is the same.
The squares of components of $\nu$ will have complexity $\mathcal{O}((\log N)^2)$, and
the multiplication by $m$ will have complexity $\mathcal{O}(\log N\log (N/\epsilon))$.
The additions have linear complexity in the number of bits, so the total complexity is $\mathcal{O}(\log N\log (N/\epsilon))$ Toffolis for the complete inequality test.

The resulting state, with success of the inequality test, can be written as
\begin{equation}
    \frac{1}{\sqrt{2^{2n+1}}}\sum_{\nu_x=-(2^{n-1}-1)}^{2^{n-1}-1}\sum_{\mu=2}^{n}2^{-\mu/2}\sum_{(\nu_y,\nu_z) \in C_{\mu}}\ket{\mu,\nu_x,\nu_y,\nu_z} \frac 1{\sqrt{M}} \sum_{m=0}^{Q-1} \ket{m},
\end{equation}
with
\begin{equation}
Q = \left\lceil \frac{2^{\mu-1}M |\nu_x|}{\|\nu\|^2} \right\rceil .
\end{equation}

For $M$ sufficiently large, one can verify that the above state is proportional to the momentum state we want to prepare. Indeed, for a fixed choice of $\mu$ and $\nu$, the corresponding amplitude is proportional to
\begin{equation}
\frac{1}{\sqrt{2^{2n+1}}} 2^{-\mu/2}
\frac{2^{(\mu-1)/2}\sqrt{|\nu_x|}}{\|\nu\|} = \frac{1}{2^{n+1}} 
\frac{\sqrt{|\nu_x|}}{\|\nu\|}
\end{equation}
up to the error due to discretization. Now, the probability of success has the asymptotic scaling of
\begin{align}
\frac{1}{2^{2n+1}}
\sum_{\nu_x,\nu_y,\nu_z=-(2^{n-1}-1)}^{2^{n-1}-1} \frac{|\nu_x|}{\|\nu\|^2} &\approx\frac{1}{2^{2n+1}} \int_{-2^{n-1}}^{2^{n-1}} d\nu_z \int_{-2^{n-1}}^{2^{n-1}} d\nu_y \int_{-2^{n-1}}^{2^{n-1}} d\nu_x \frac{|\nu_x|}{\nu_x^2+\nu_y^2+\nu_z^2} \nonumber \\
&= \frac 12 \int_{0}^1 dz \int_{0}^1 dy \int_{0}^1 dx \frac{x}{x^2+y^2+z^2} \nonumber \\
&= \frac 14 \int_{0}^1 dz \int_{0}^1 dy \log\left( 1+\frac{1}{y^2+z^2} \right) \nonumber \\
&= \frac 14\int_{0}^1 dz \left[ -2 z \arccot(z) + 2 \sqrt{1 + z^2}  \arccot\sqrt{1 + z^2} + 
 \log\left(1 + \frac 1{1 + z^2}\right)  \right] \nonumber \\
&= \frac 14\left[ -1-\frac{\pi}2 + 2\sqrt{2} \arctan\left( \frac 1{\sqrt 2} \right) + \log(3/2)\right] \nonumber \\
& \quad + \frac 12 \int_{0}^1 dz \left[ \sqrt{1 + z^2}  \arccot\sqrt{1 + z^2} \right].
\end{align}
The integral can be computed to arbitrary precision numerically, giving
\begin{equation}
\int_{0}^1 dz \left[ \sqrt{1 + z^2}  \arccot\sqrt{1 + z^2} \right] \approx
0.8197551759351555.
\end{equation}
That then gives the probability of success as approximately
\begin{equation}
\frac{1}{2^{2n+2}}
\sum_{\nu_x,\nu_y,\nu_z=-(2^{n-1}-1)}^{2^{n-1}-1} \frac{|\nu_x|}{\|\nu\|^2} \approx
0.3037546589794463.
\end{equation}
To boost the success probability close to unity, we can use just one step of amplitude amplification because the probability is greater than $1/4$.
The dominant complexity is involved in the squares and multiplications in the inequality test, and the amplitude amplification only gives a multiplicative factor, for a total complexity of $\mathcal{O}(\log N\log (N/\epsilon))$ Toffolis.

\section{Numerical studies of the force operator in molecular systems}
\label{app:numerical}

\subsection{Details on numerical calculations of water clusters}
\label{app:numerical_details_water_clusters}
When calculating the Hamiltonian and the force operators of the water clusters, shown in Fig.~\ref{fig:water_clusters}, in STO-3G, we recognize that the matricized two-body coefficients $M_{(pq),(rs)}$ have degenereted eigenspaces. The resulting eigenvectors are thus only defined up to some gauge. While this is not a practical issue for applying any of the methods, repeated calculations will lead to slightly different lambdas. To avoid this ambiguity, we break the symmetry by shifting the positions of the atoms, $x_i\rightarrow x_i+\mathcal{N}(0,\sigma)$ with $\sigma=10^{-6}$.  

For the largest water cluster we study here, with $N_{H2O}=20$ water molecules, the spatial two-body-integrals of the Hamiltonian and the force operators have a dimension of $(140,140,140,140)$ and consequently the matrix $M_{(pq),(rs)}$ is of dimension $(19600,19600)$. As a direct diagonalization of a matrix of such size is getting computationally expensive, we instead use Scipy's implementation of Lanczos algorithm to find the most important eigenvalues and eigenvectors. To find the error of this approximation, we calculate the expectation value of the Hamiltonian and the force operator with respect to the CISD wavefunction found under the reconstructed ERIs. In Fig.~\ref{fig:errors_cisd}(a), we show the error on the CISD energy per atom $\epsilon_\mathrm{E}^{\mathrm{CISD}} = (E_{\mathrm{
full}}-E_{\mathrm{lanczos}}^k)/(3N_{H2O})$. We find that keeping the $4N$ most important eigenvalues and eigenvecotors yields an error below $50 \mu Ha$ per atom. In Fig.~\ref{fig:errors_cisd}(b), we show the error on the expectation value of the force operator with respect to the CISD wavefunction. As the force is a vector of dimension $3\times 3\times N_{H2O}$, we use $ \epsilon_\mathrm{F}^{\mathrm{CISD}} = ||\vec{F}_\mathrm{full}-\vec{F}_\mathrm{Lanczos}||$ as error. As visible from the plots, keeping $4N$ eigenvalues, as for the Hamiltonian, is a reasonable choice for the error on the force vector. 

\begin{figure*}
    \centering
    \begin{minipage}{0.49\textwidth}
        \centering
        \includegraphics[width=1\textwidth]{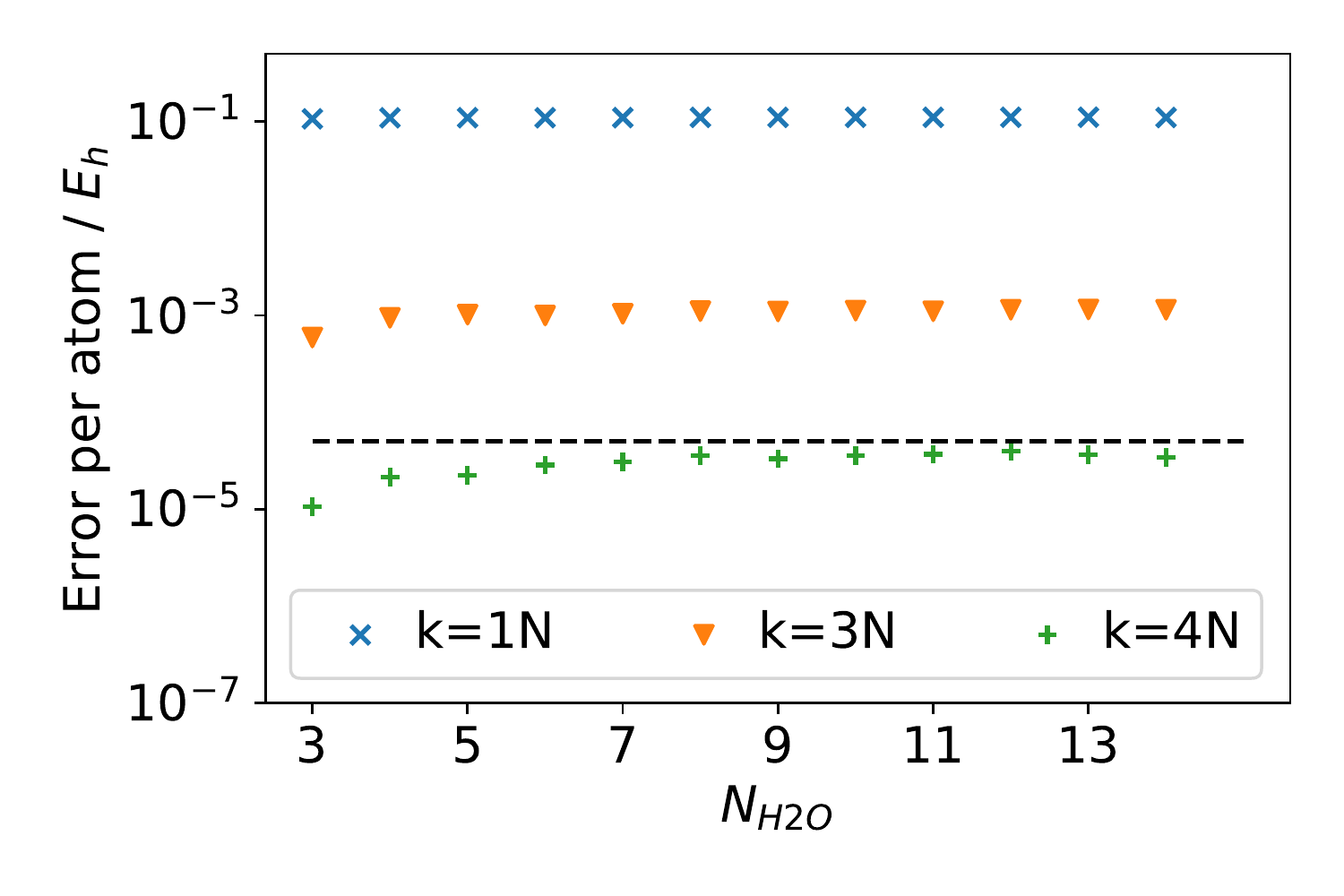}\llap{
  \parbox[b]{6.0in}{(a)\\\rule{0ex}{2.2in}
  }}
    \end{minipage}
    \begin{minipage}{0.49\textwidth}
        \centering
        \includegraphics[width=1\textwidth]{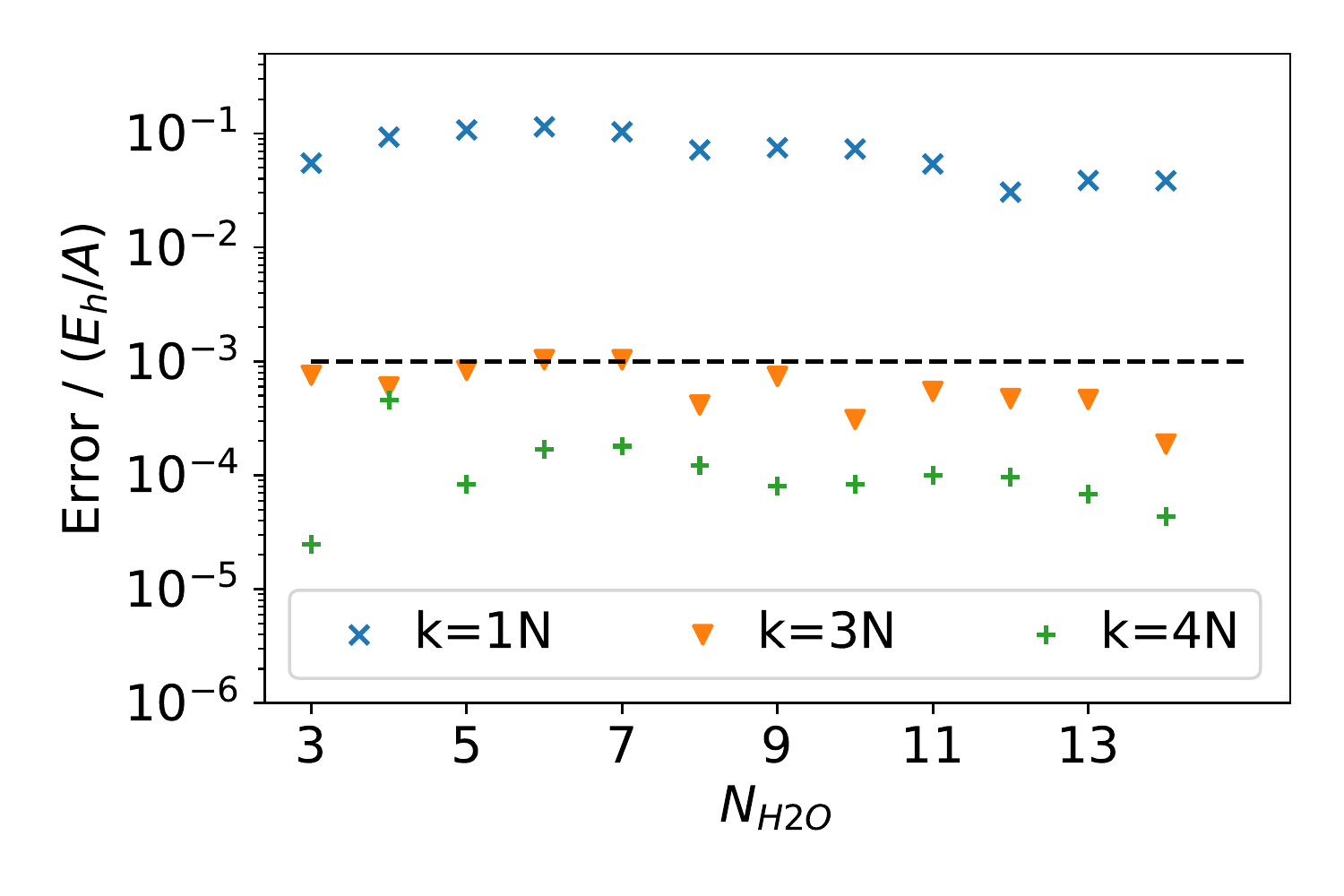}\llap{
  \parbox[b]{6.0in}{(b)\\\rule{0ex}{2.2in}
  }}
    \end{minipage}\hfill
\caption{The error from using Lanczos algorithm to decompose the matricized ERI of the water clusters on the (a) expectation value of Hamiltonian and (b) the force operators with respect to to the CISD wavefunction}
\label{fig:errors_cisd}
\end{figure*}

\begin{figure}
\centering
\begin{tabular}{cccc}
  \includegraphics[width=35mm]{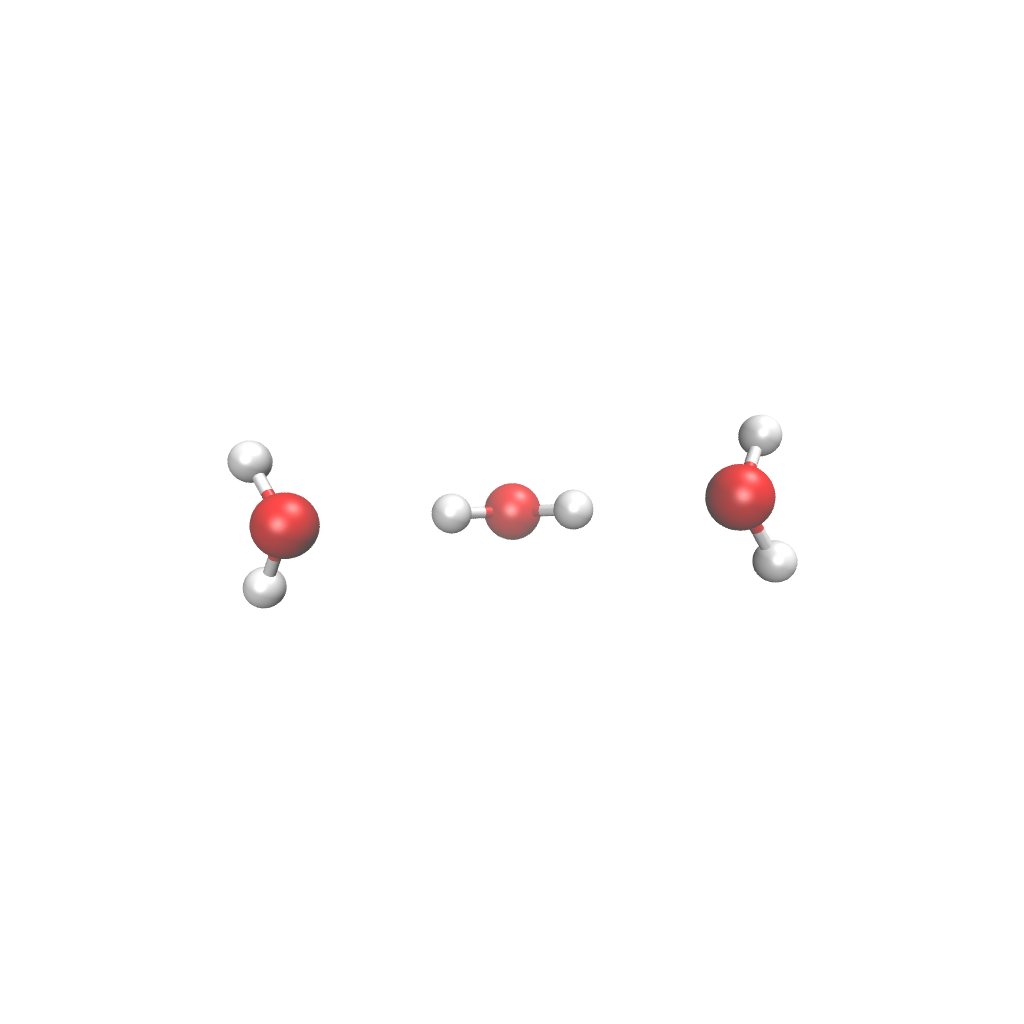} &   \includegraphics[width=35mm]{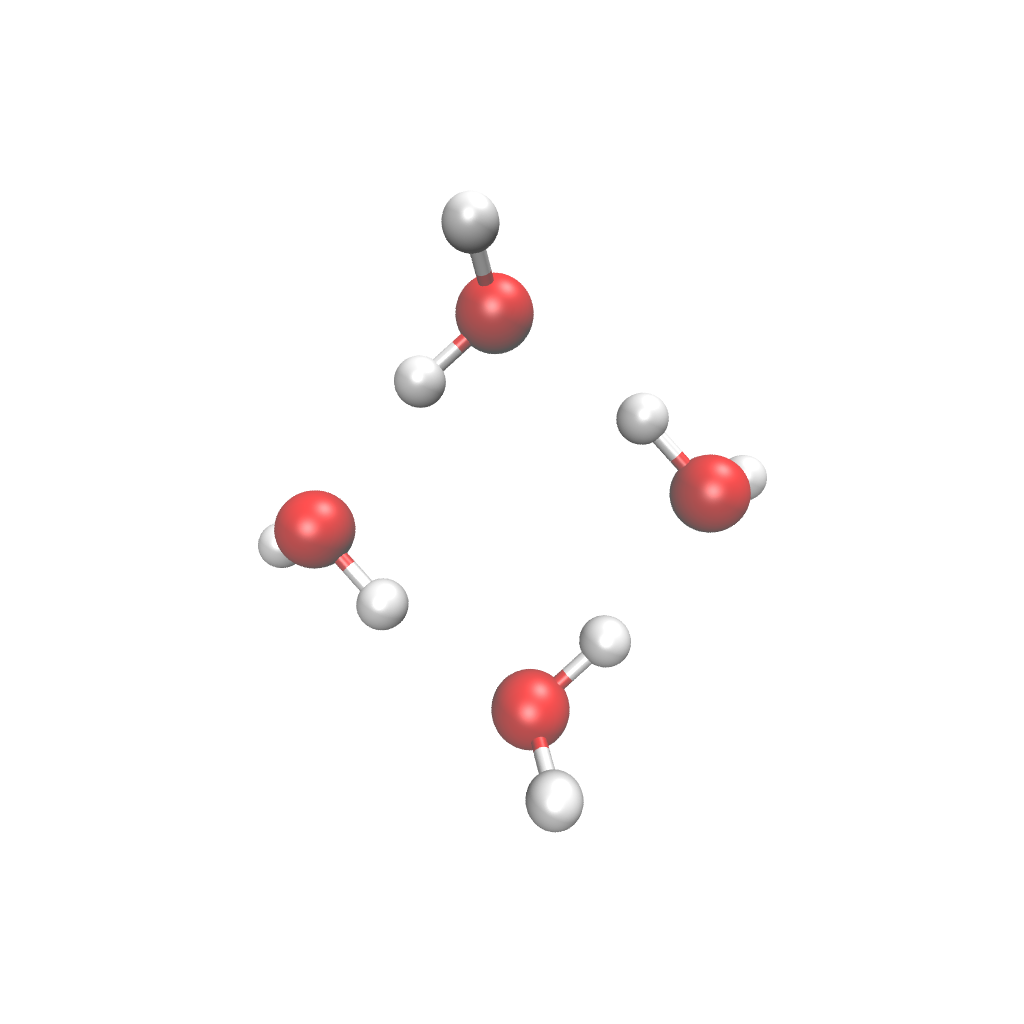} &   \includegraphics[width=35mm]{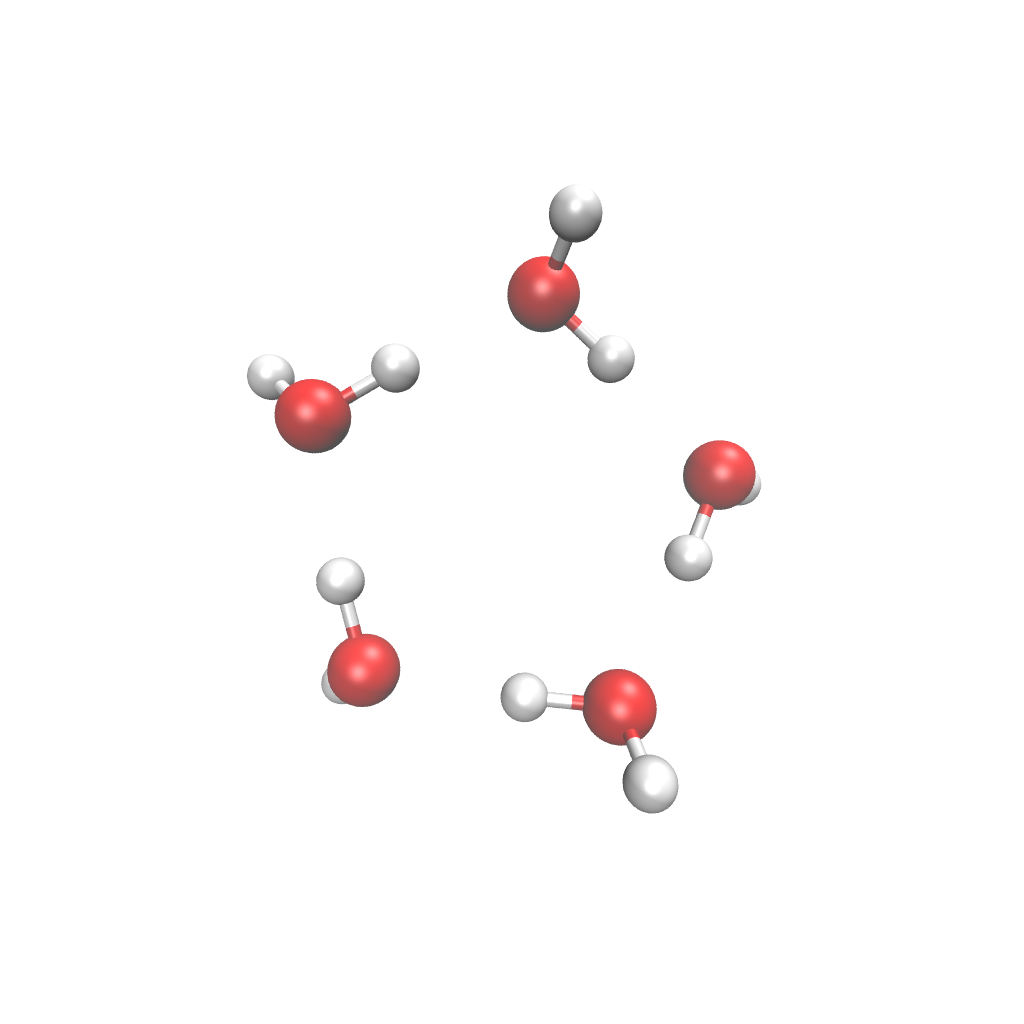} &   \includegraphics[width=35mm]{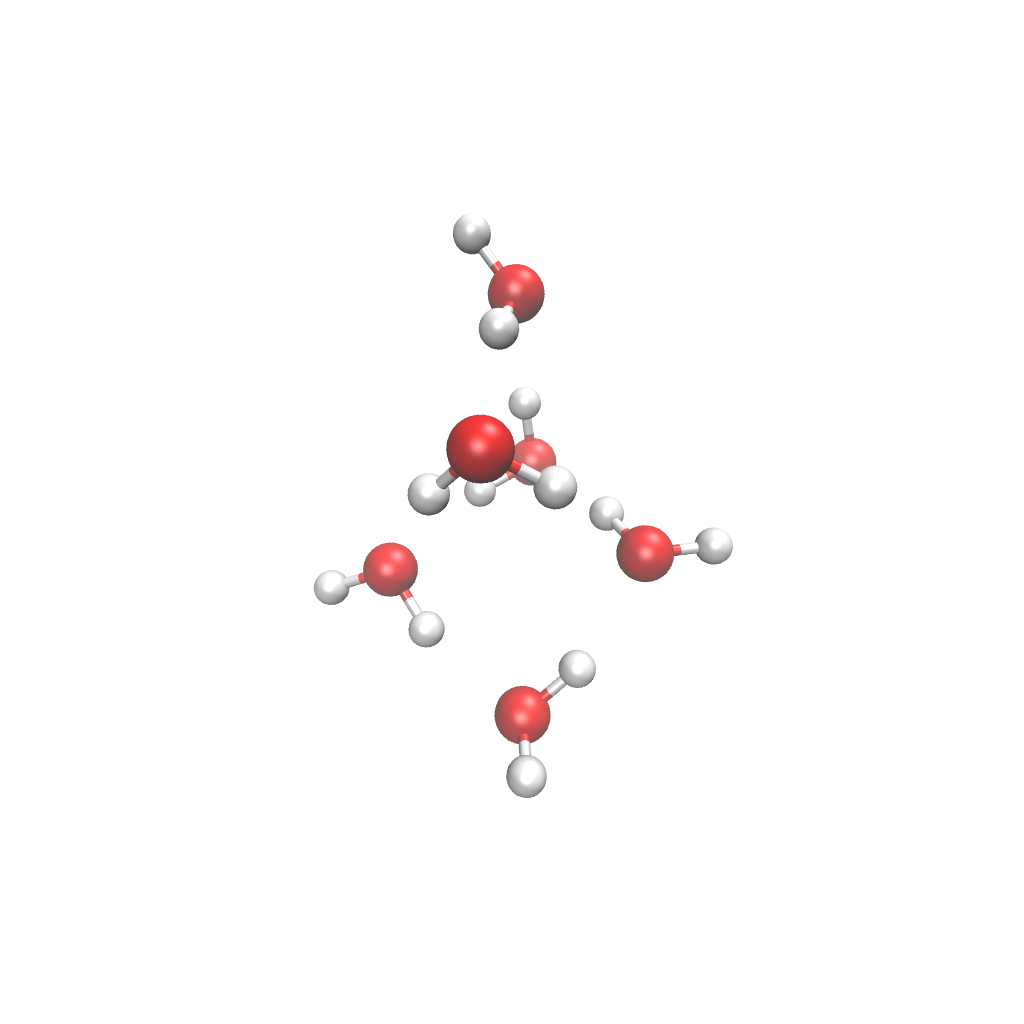} \\
(N=3)  & (N=4)  & (N=5) & (N=6)  \\[6pt]
  \includegraphics[width=35mm]{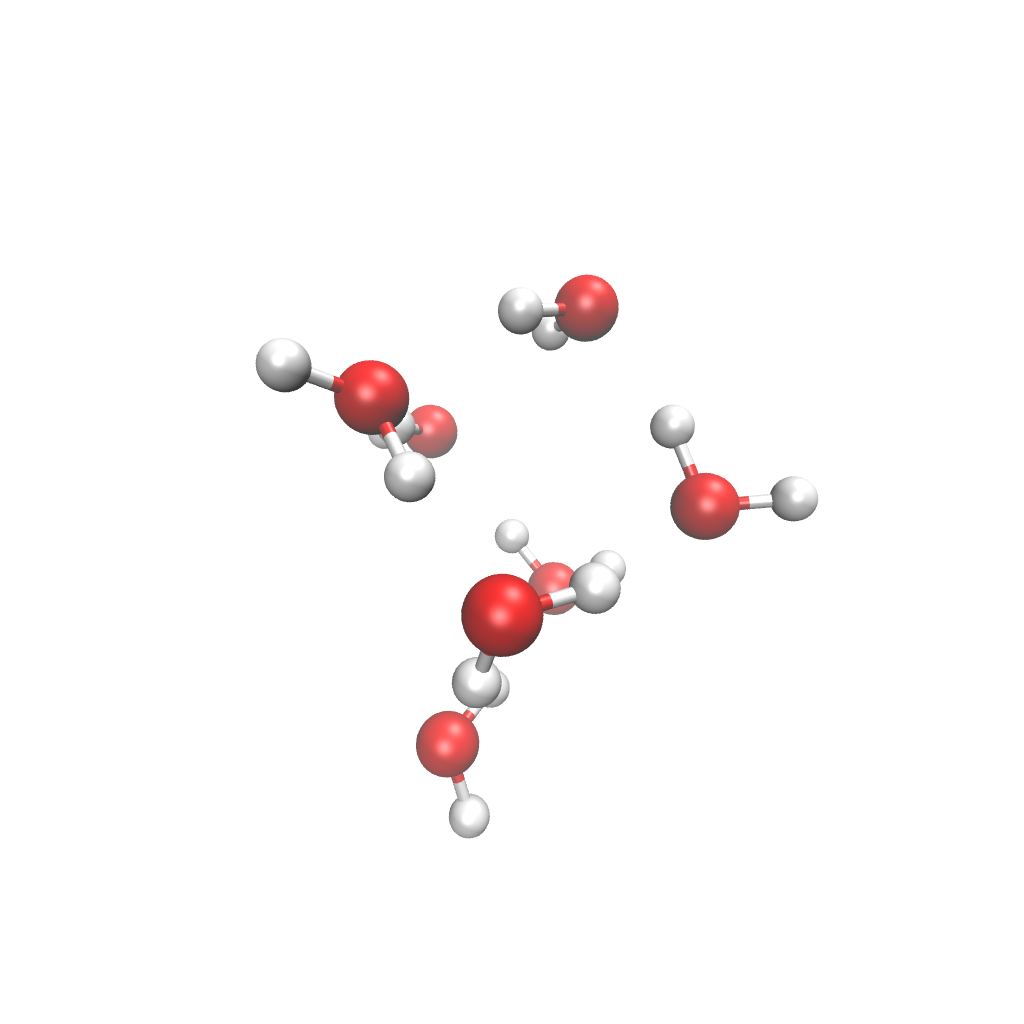} &   \includegraphics[width=35mm]{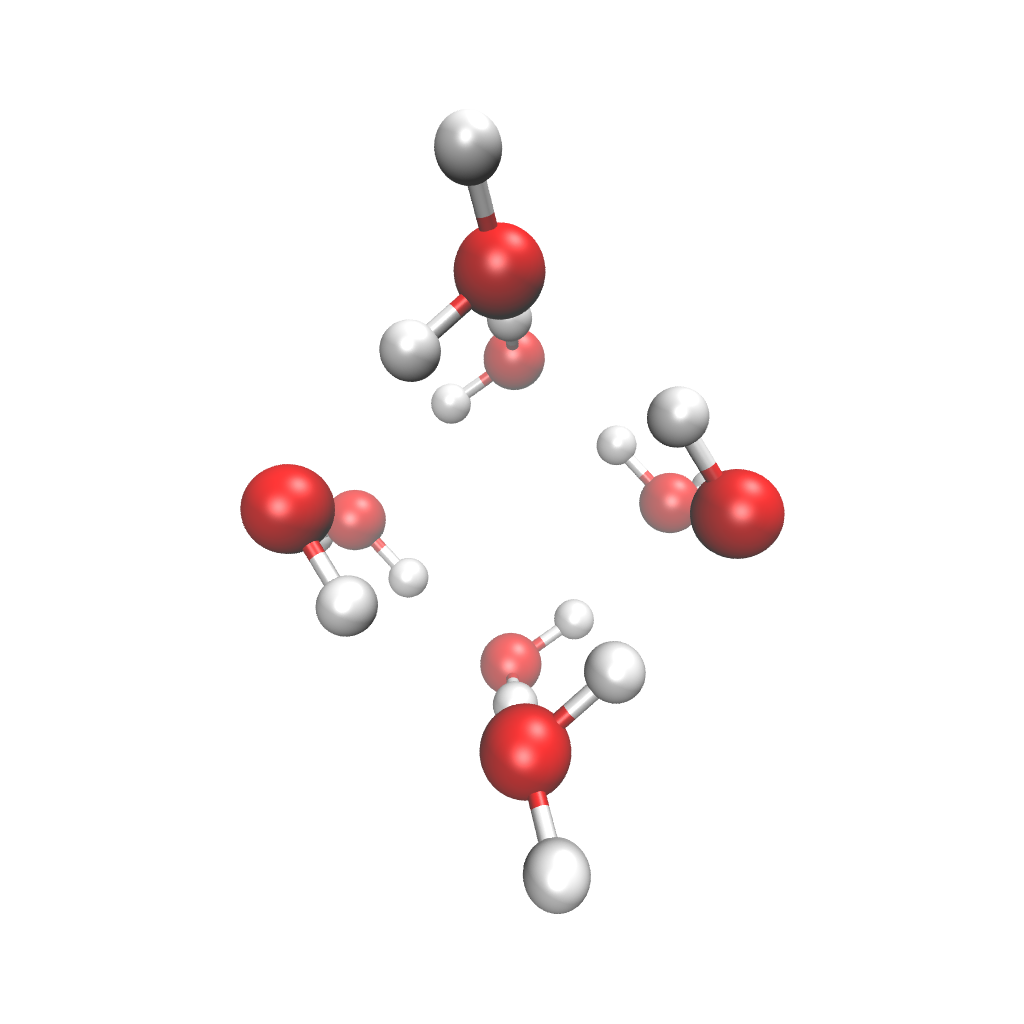} &   \includegraphics[width=35mm]{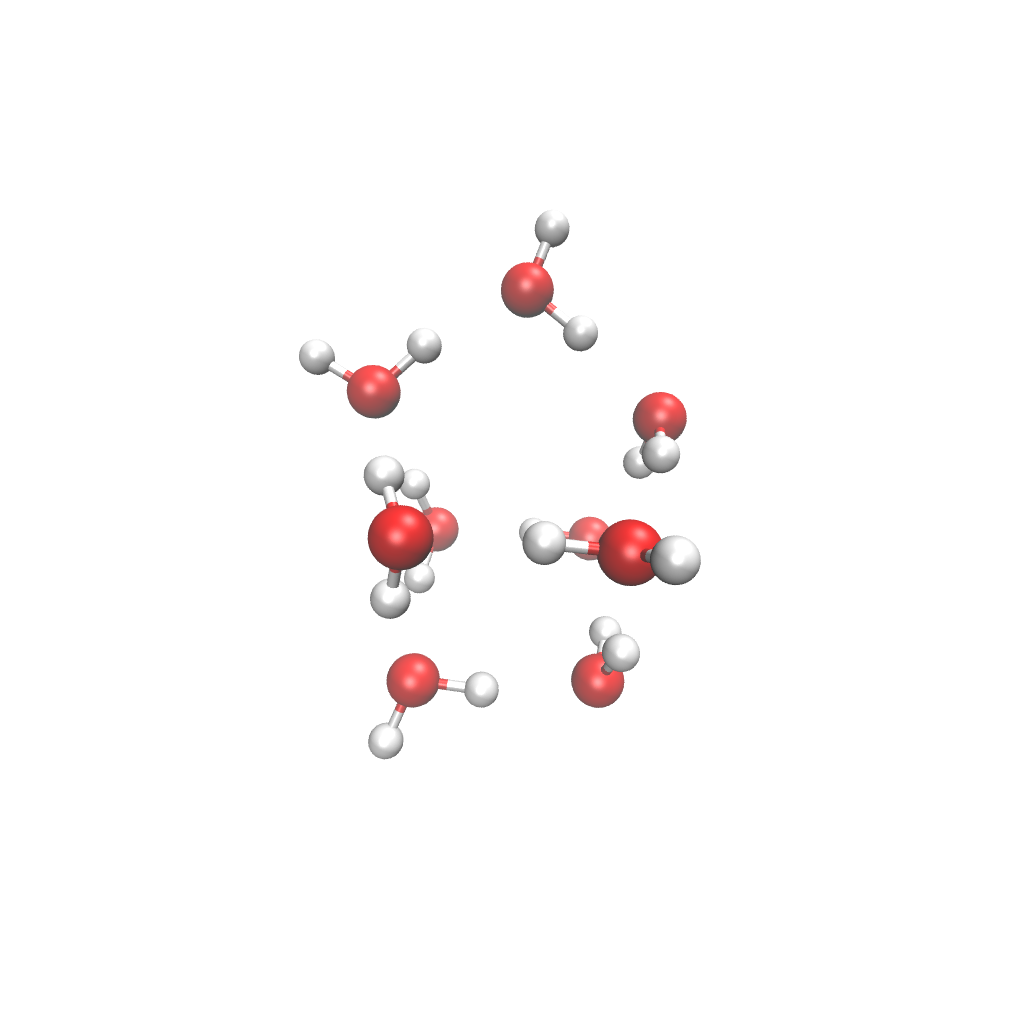} &   \includegraphics[width=35mm]{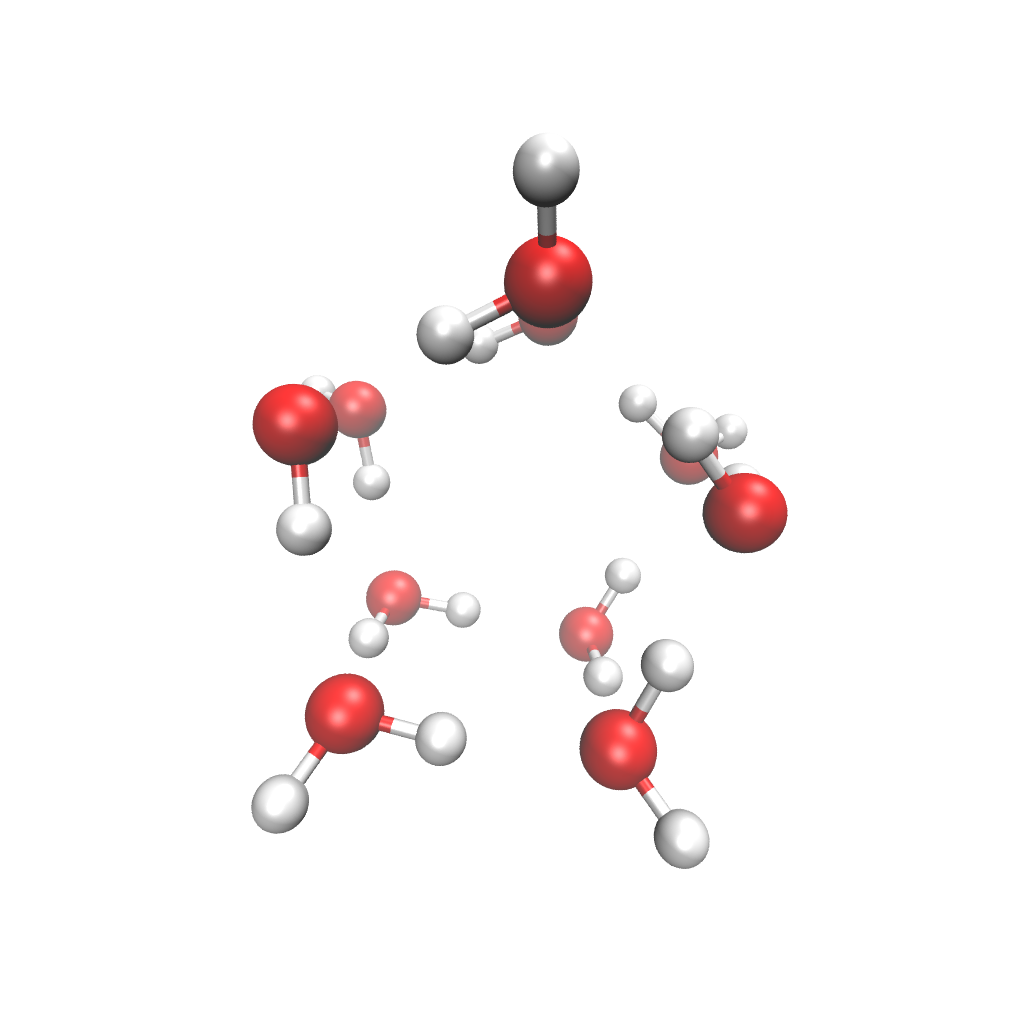} \\
(N=7)  & (N=8)  & (N=9) & (N=10)   \\[6pt]
  \includegraphics[width=35mm]{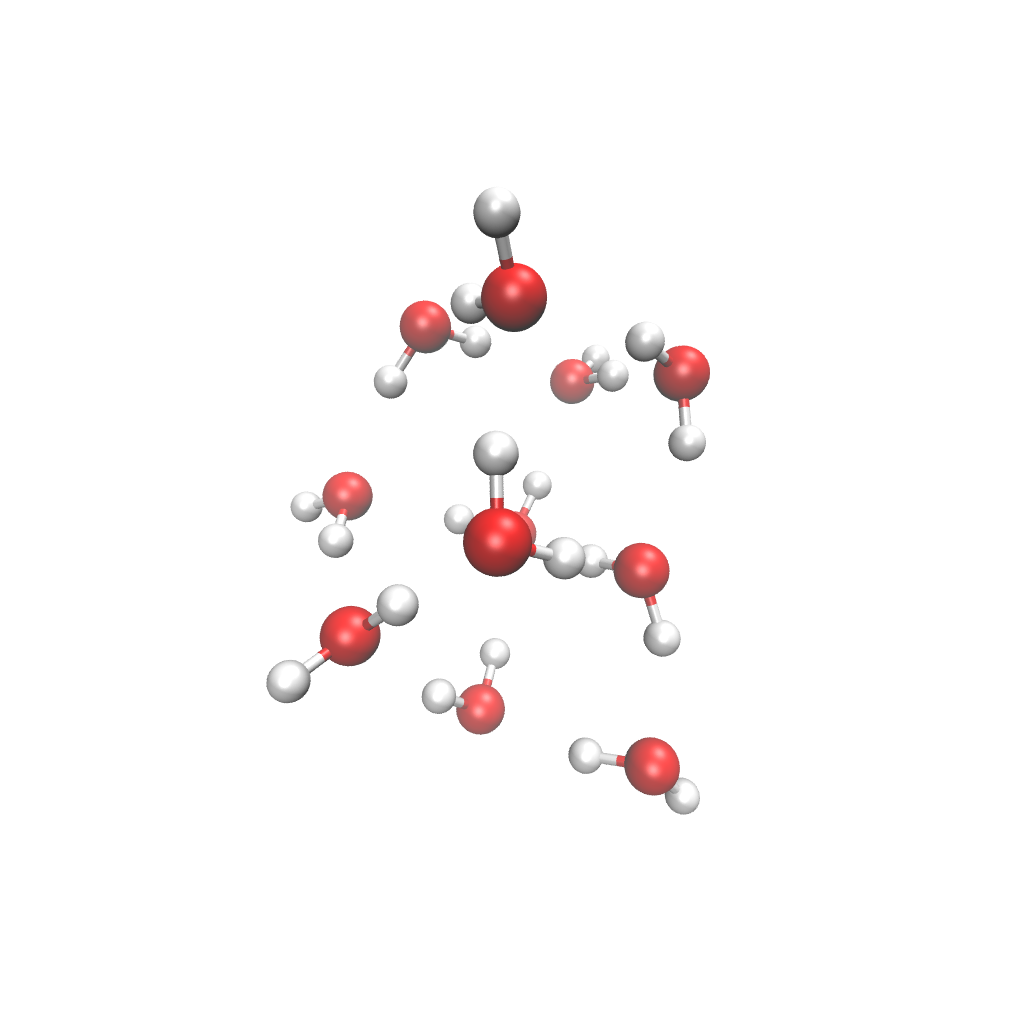} &   \includegraphics[width=35mm]{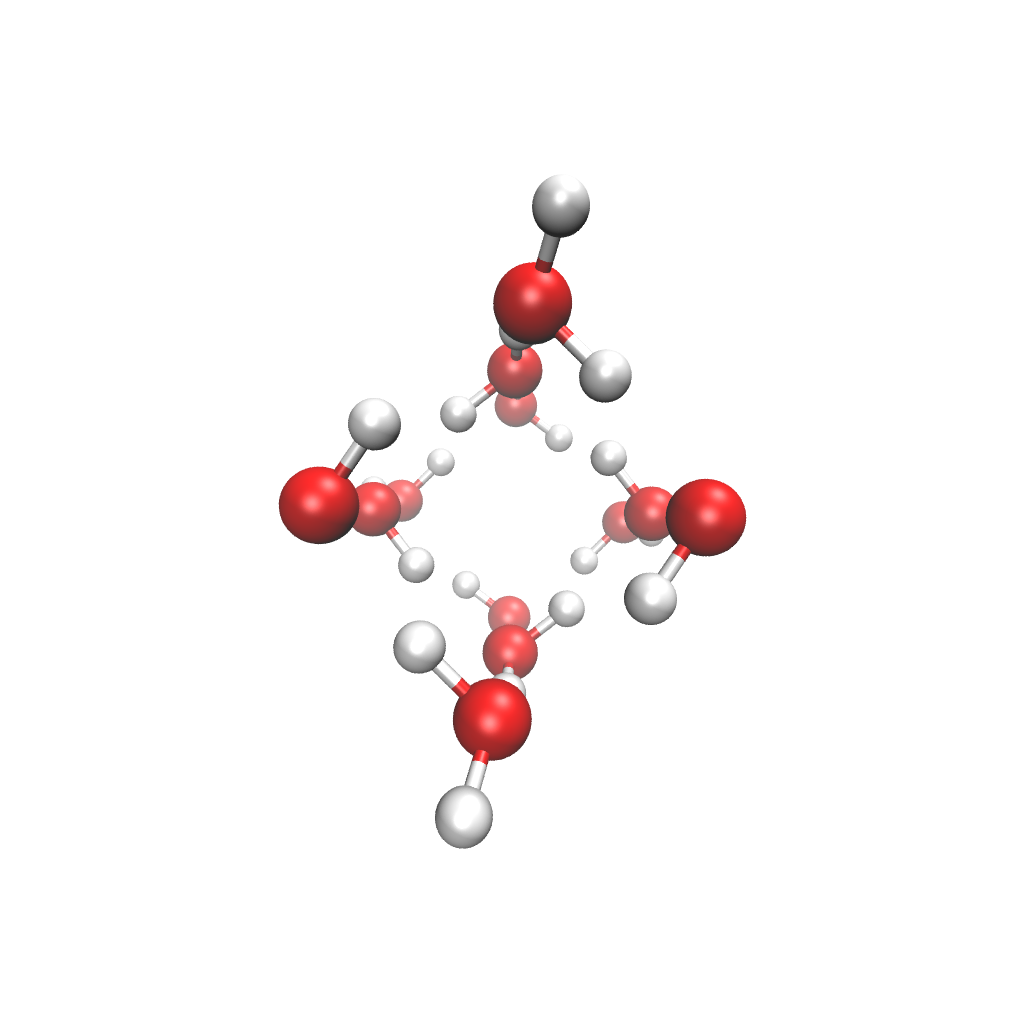} &   \includegraphics[width=35mm]{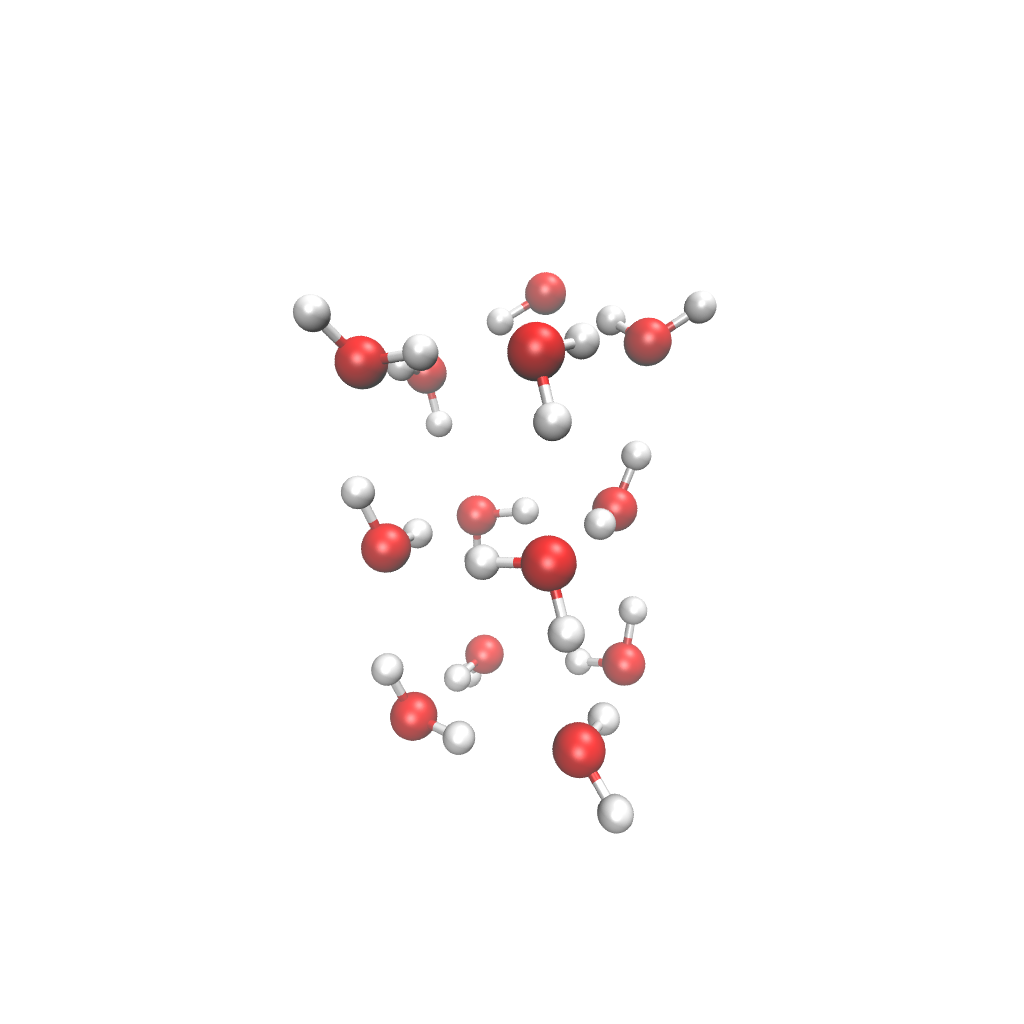} &   \includegraphics[width=35mm]{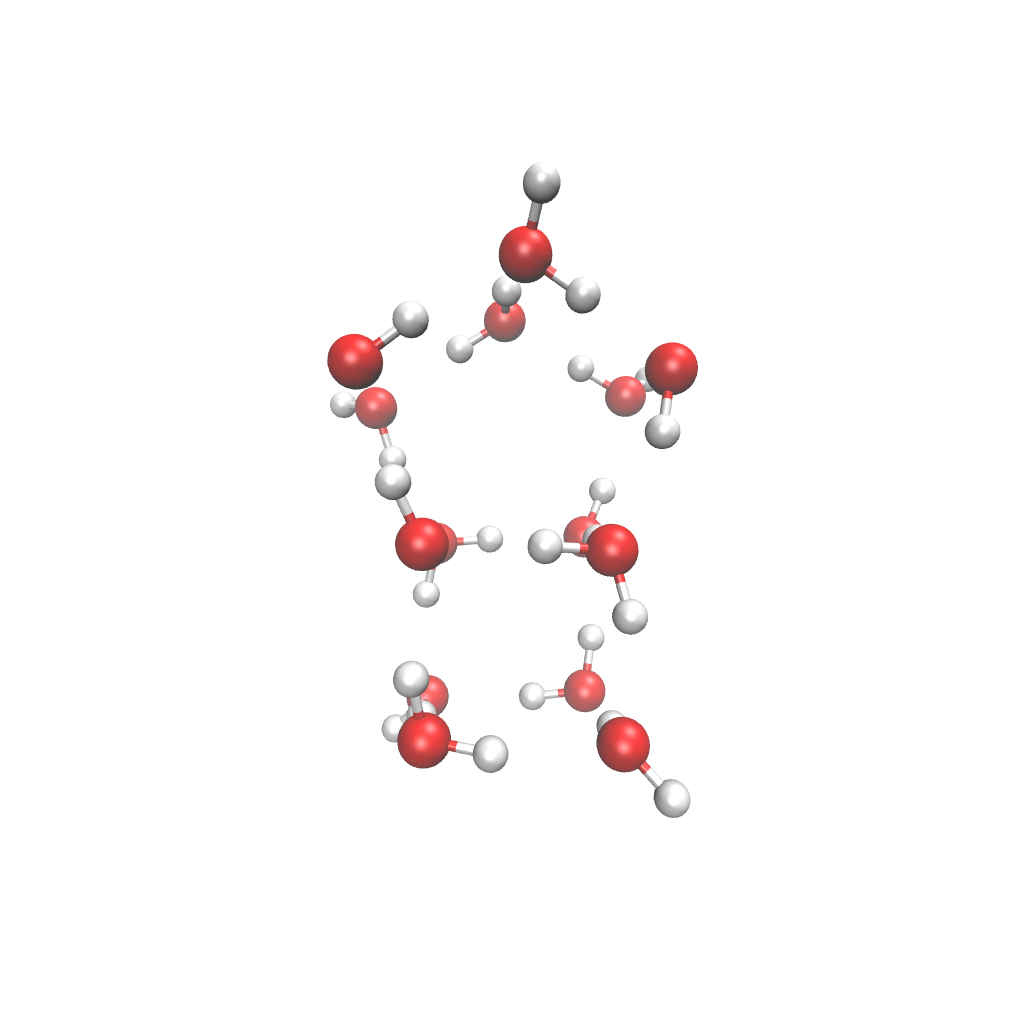} \\
(N=11)  & (N=12)  & (N=13) & (N=14)    \\[6pt]
  \includegraphics[width=35mm]{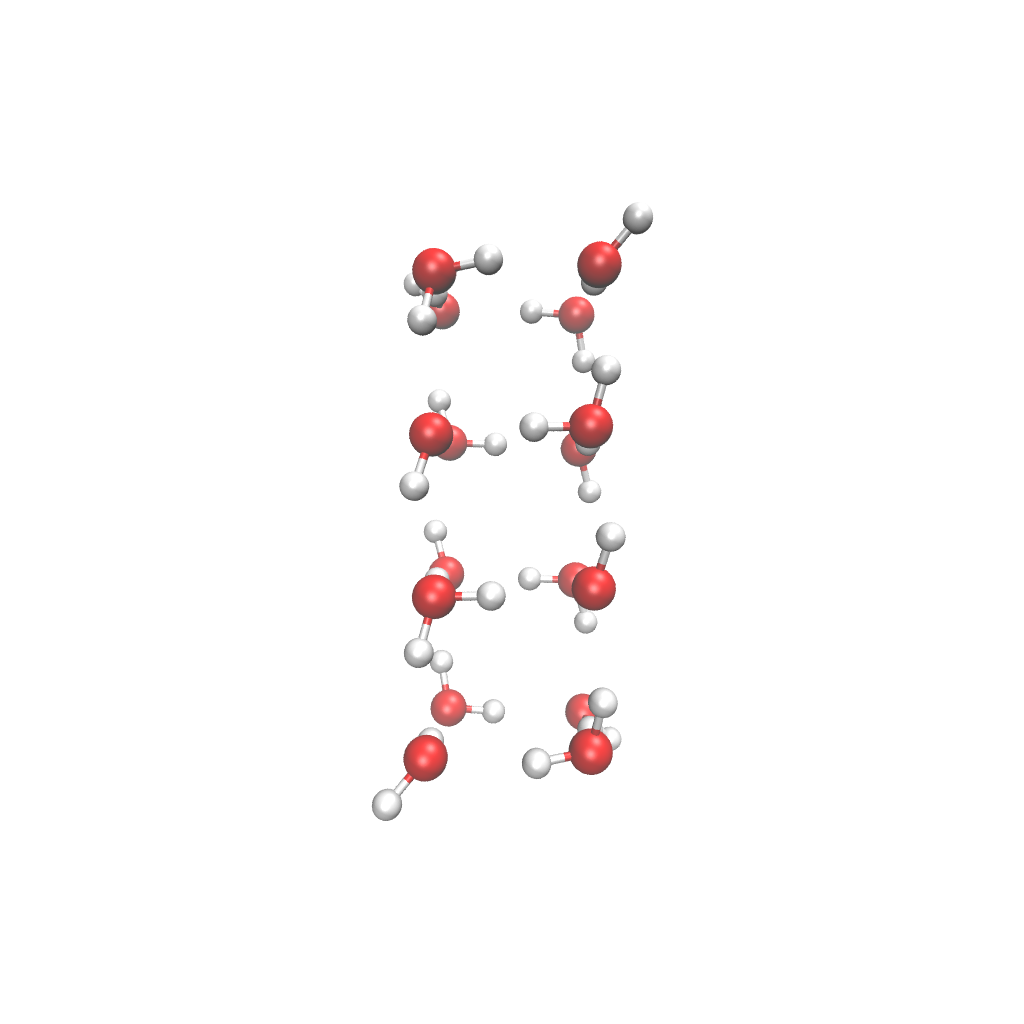} &   \includegraphics[width=35mm]{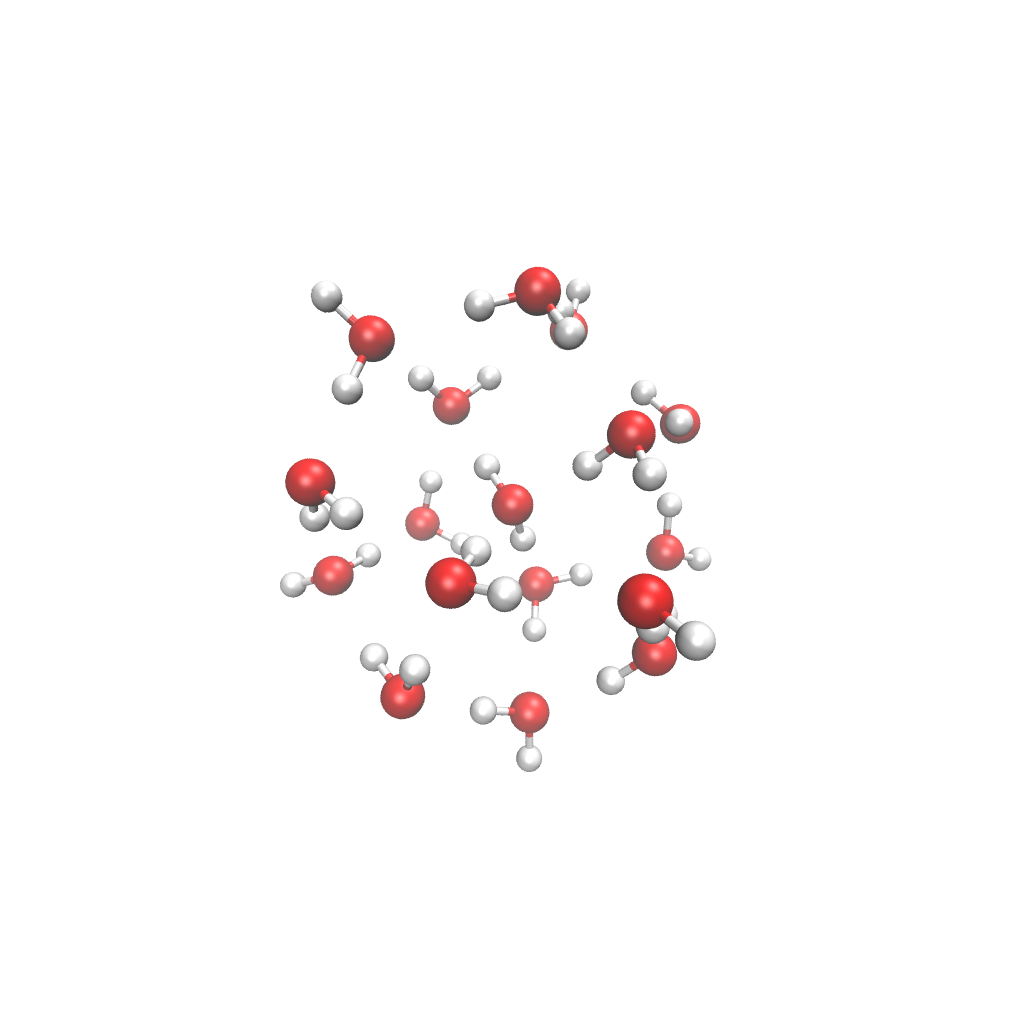} &   \includegraphics[width=35mm]{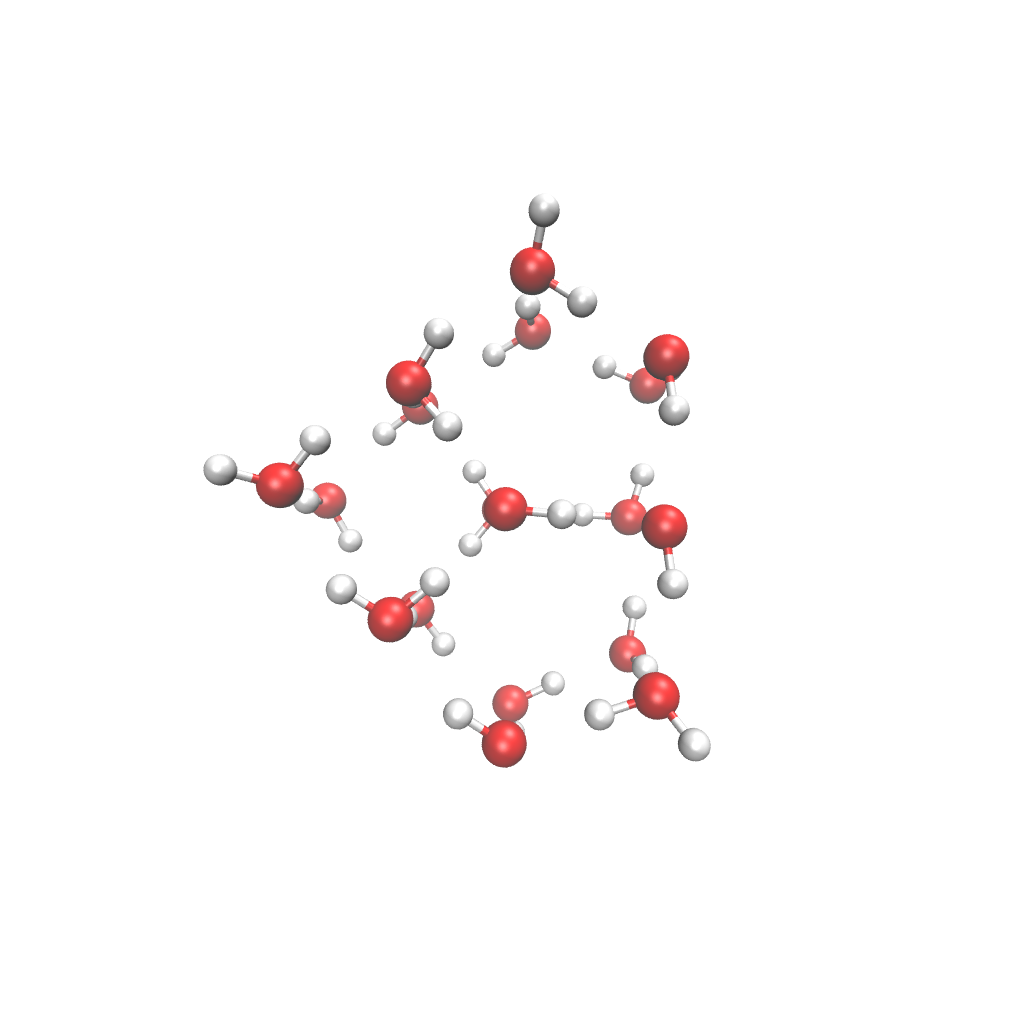} &   \includegraphics[width=35mm]{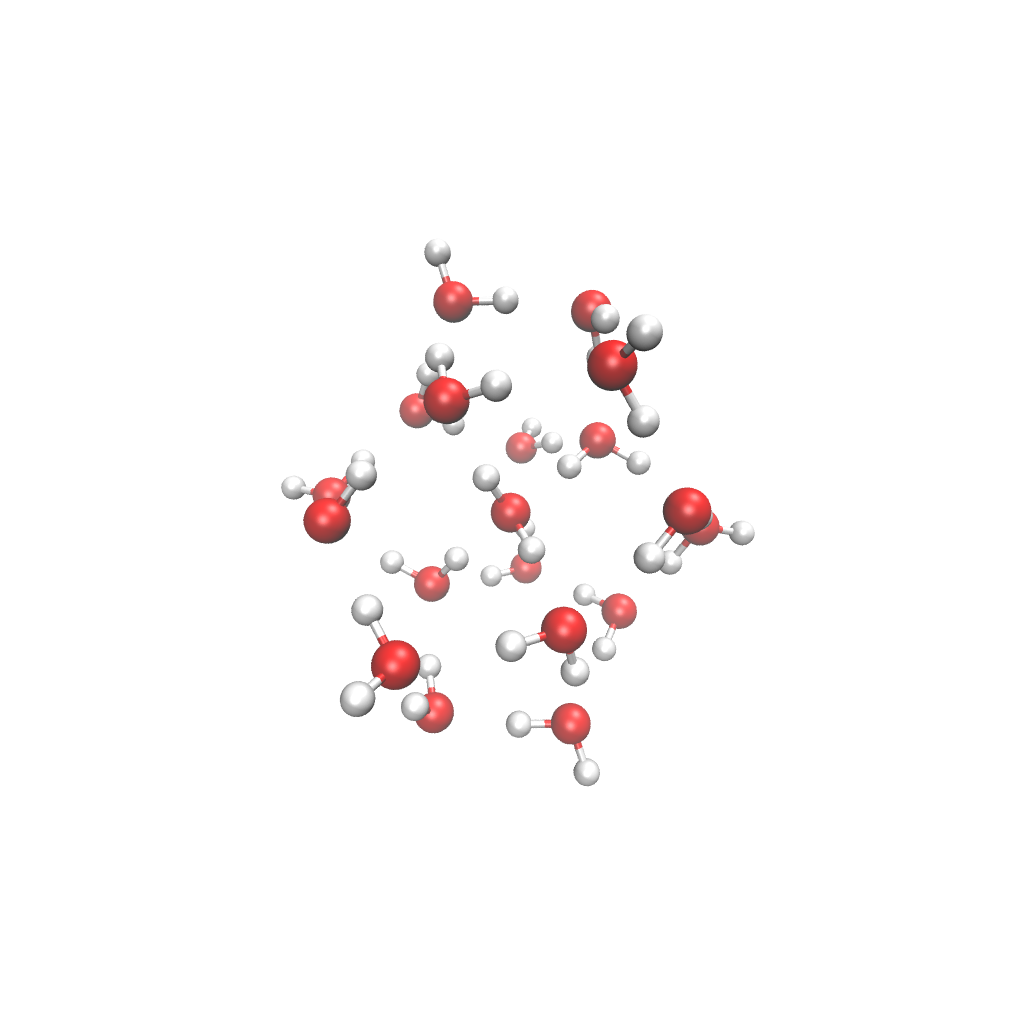} \\
(N=16)  & (N=17)  & (N=18) & (N=19)   \\[6pt]
  \includegraphics[width=35mm]{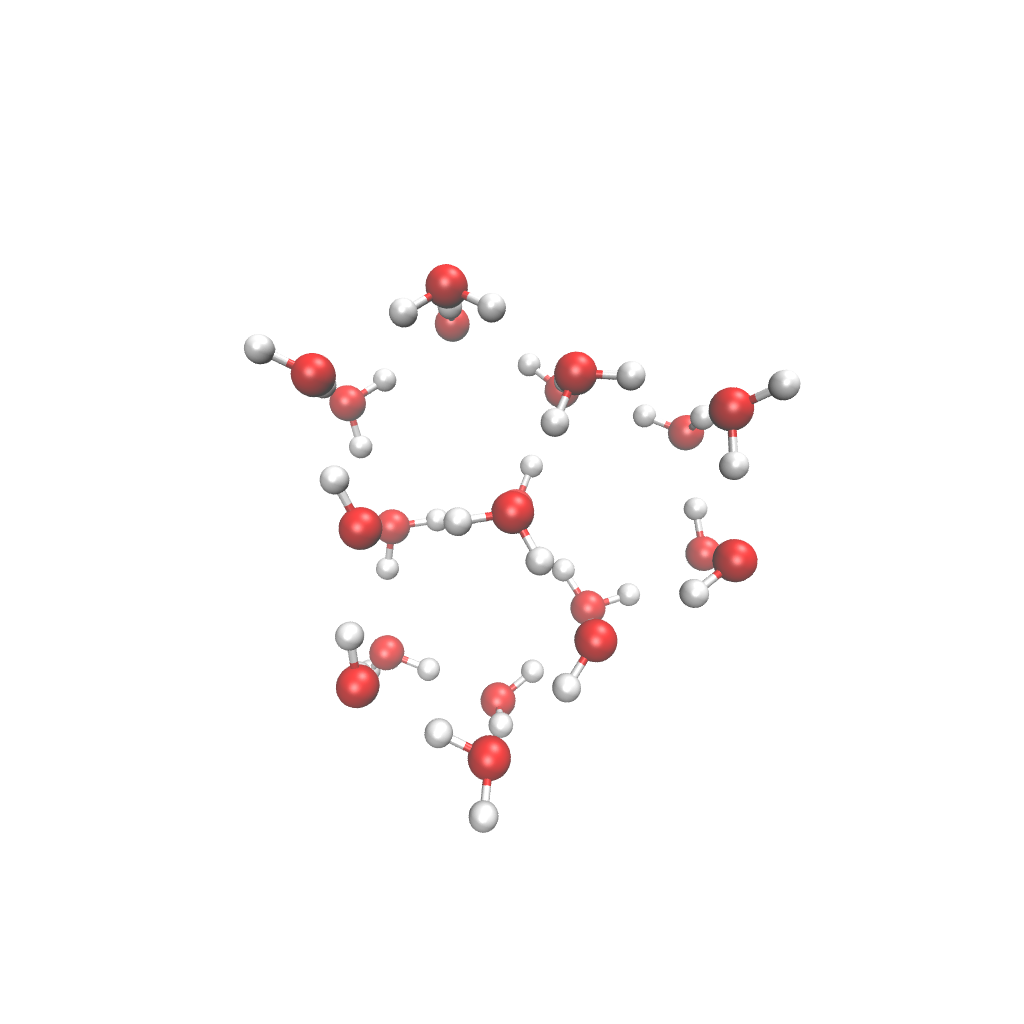} & &  & \\
 (N=20)  &   &  &
\end{tabular}
\caption{Structures of the water clusters used in Sec.~\ref{sec:numerics}. Red and white balls represent oxygen and hydrogen respectively.  Geometries from \cite{Rakshit:2019}. For visualization, the visual molecular dynamics  software package (VMD) was used \cite{HUMP96}.}
\label{fig:water_clusters}
\end{figure}

\subsection{Structure of individual force operators}
Any fermionic operator can be represented after a Jordan-Wigner transformation by an operator defined in qubit space, $A= c_{i}P_{i}$,
where $P_i$ is a $n$-qubit Pauli operator and $c_i$ its coefficient. In Fig.~\ref{fig:pauli_structure}, we show the distribution of the Pauli coefficients $c_i$ after a Jordan-Wigner transformation of the Hamiltonian and the force operators for a hydrogen chain with $6$ hydrogen and a water molecule. For both system, the operators were obtained from canonical Hartree-Fock orbitals in the STO-6G basis.

\subsection{Additional calculations using canonical molecular orbitals}
In Fig.~\ref{fig:sparse_cmo}, we provide additional results on $\lambda^\mathrm{(sparse)}$ for the Hamiltonian and the force operators using CMOs instead of LMOs. For both systems, we find a worse scaling compared to the scaling obtained by using LMOs. 
\begin{figure*}
    \centering
    \begin{minipage}{0.49\textwidth}
        \centering
        \includegraphics[width=1\textwidth]{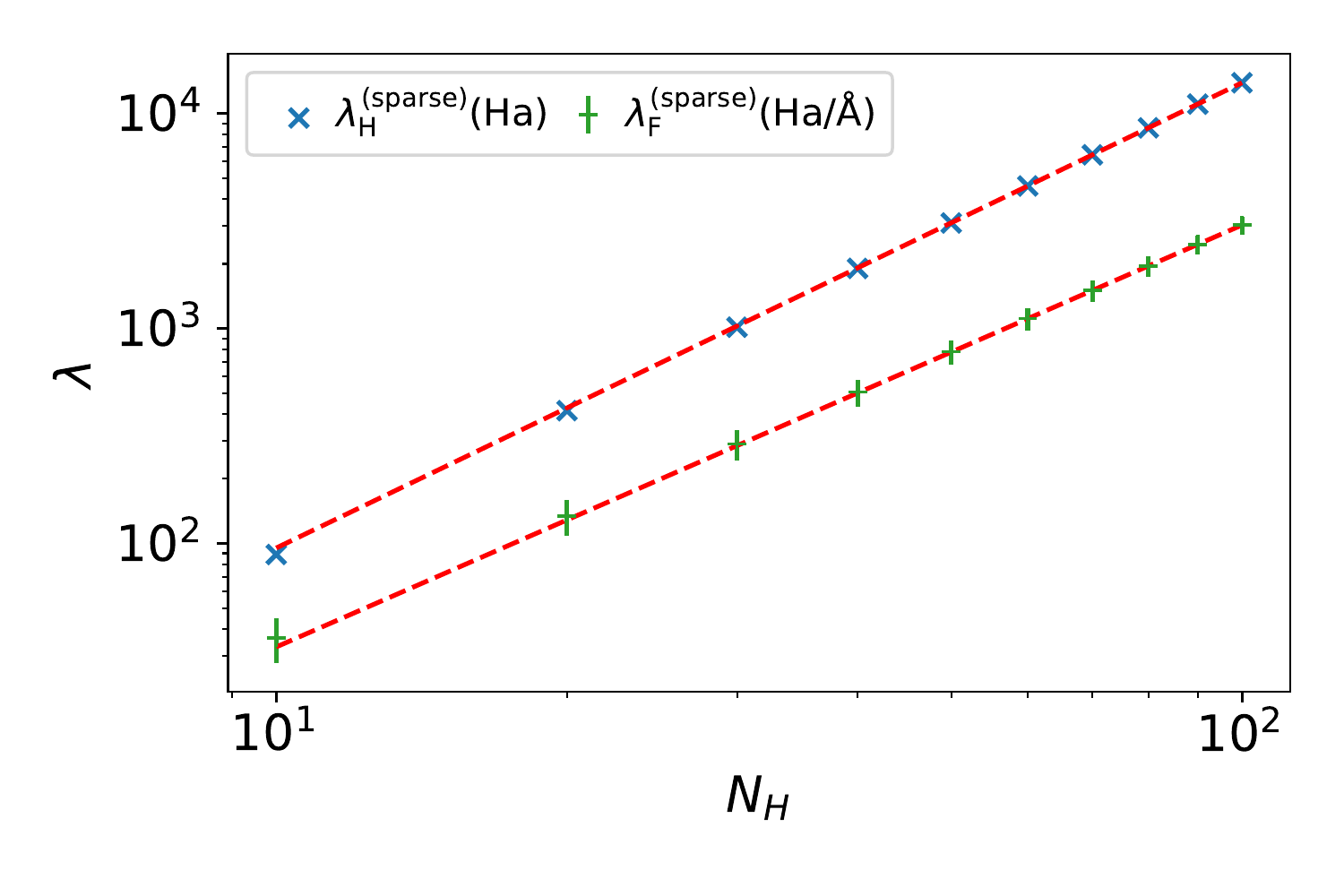}\llap{
  \parbox[b]{6.0in}{(a)\\\rule{0ex}{2.2in}
  }}
    \end{minipage}
    \begin{minipage}{0.49\textwidth}
        \centering
        \includegraphics[width=1\textwidth]{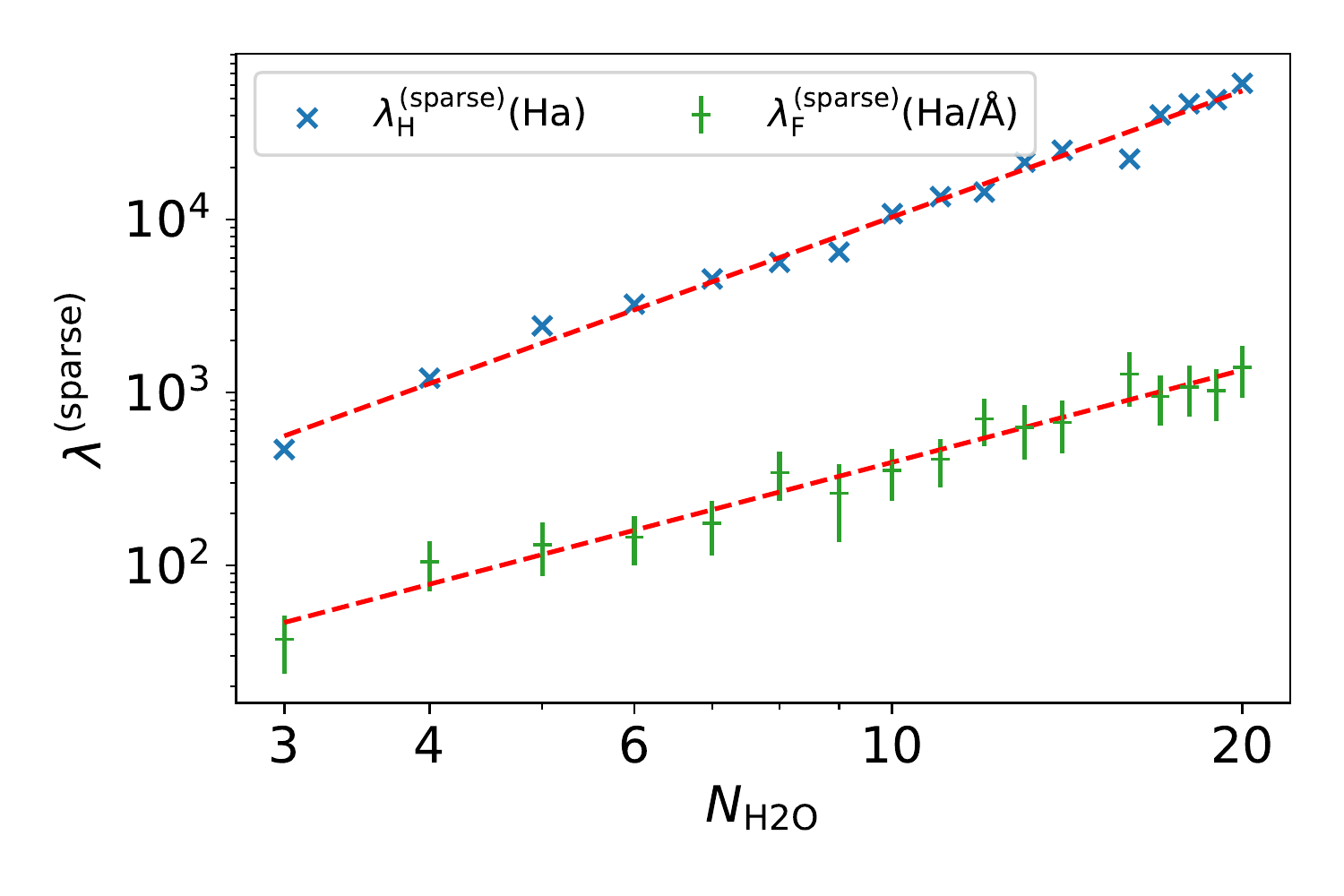}\llap{
  \parbox[b]{6.0in}{(b)\\\rule{0ex}{2.2in}
  }}
    \end{minipage}\hfill
\caption{Same plots as in Fig.~\ref{fig:lambdas_hydrogen_chain} and Fig.~\ref{fig:lambda_h4_and_water_cluster}(a) respectively, but using CMOs instead of LMOs. The fits suggest scalings of $\lambda^{\mathrm{(sparse)}}_{\mathrm{H}}=\mathcal{O}(N_H^{2.165\pm 0.001})$ and 
$\lambda^{\mathrm{(sparse)}}_{\mathrm{F}}=\mathcal{O}(N_H^{1.963\pm 0.001})$ for the hydrogen chain (a), and $\lambda^{\mathrm{(sparse)}}_{\mathrm{H}}=\mathcal{O}(N_{H2O}^{2.421\pm 0.065})$ and
$\lambda^{\mathrm{(sparse)}}_{\mathrm{F}}=\mathcal{O}(N_{H2O}^{1.772\pm 0.086})$ for the water clusters (b).}
\label{fig:sparse_cmo}
\end{figure*}
\begin{figure*}
    \centering
    \begin{minipage}{0.49\textwidth}
        \centering
        \includegraphics[width=1\textwidth]{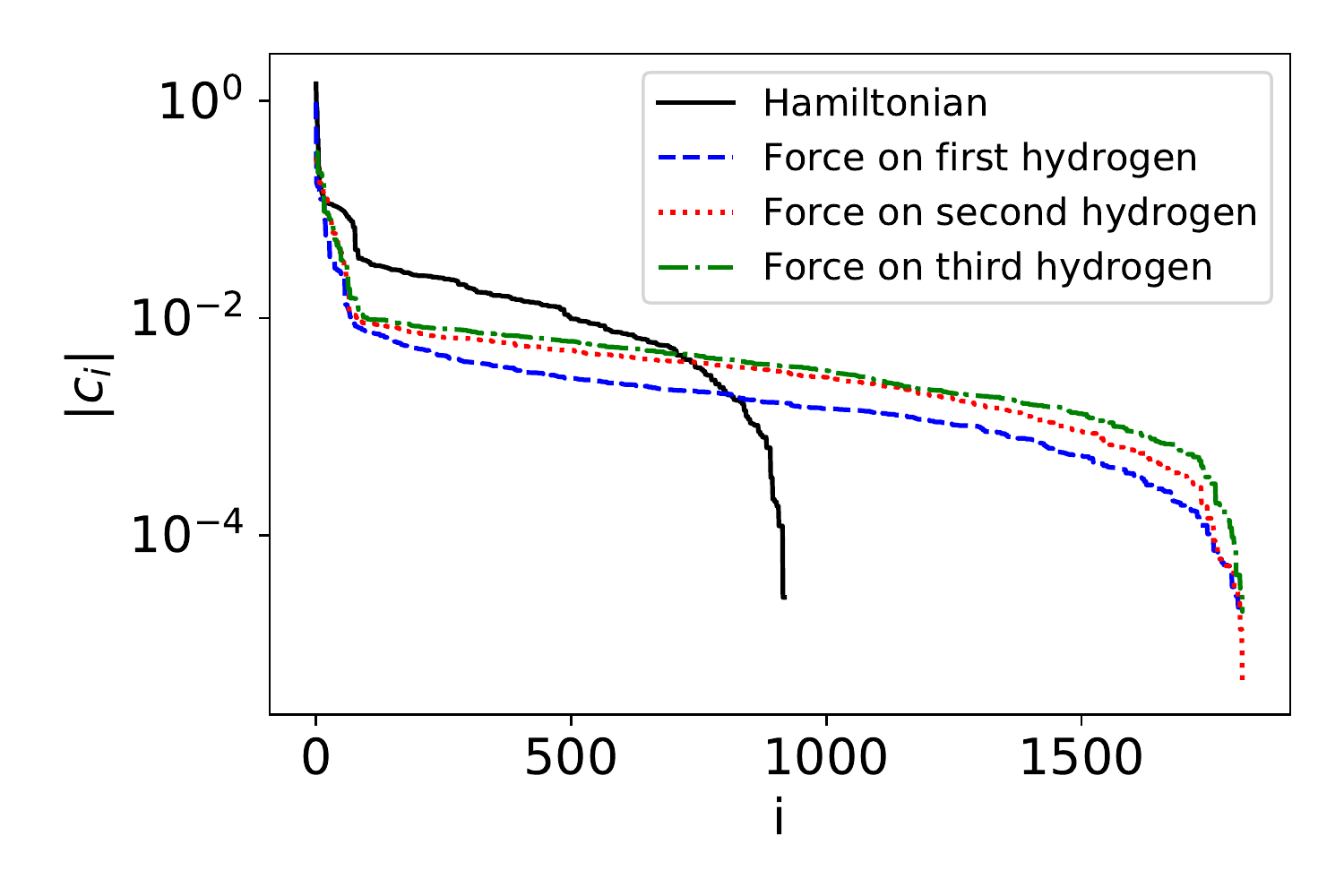}
    \end{minipage}
    \begin{minipage}{0.49\textwidth}
        \centering
        \includegraphics[width=1\textwidth]{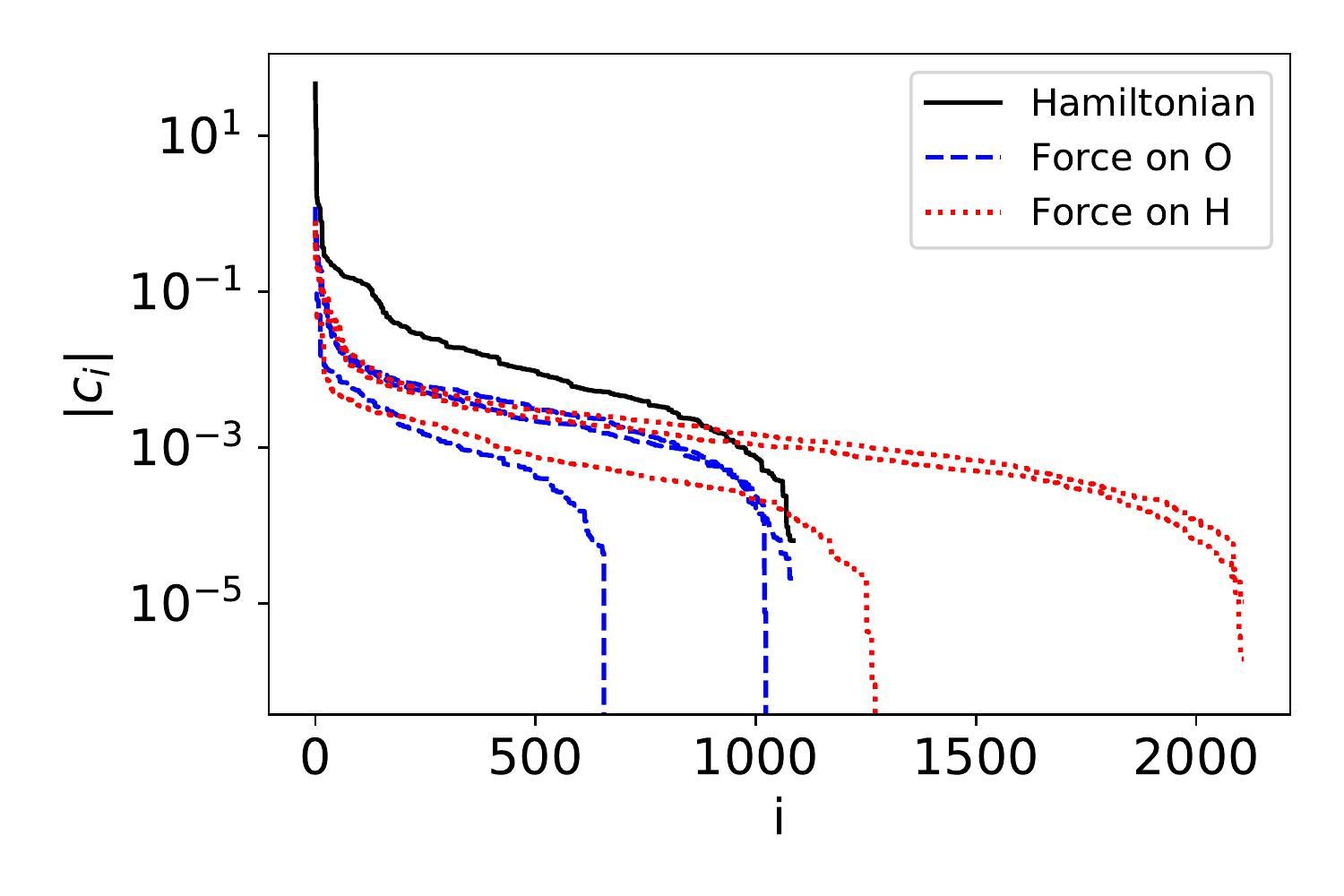}
    \end{minipage}\hfill
\caption{Distribution of the absolute values of Pauli weights $|c_i|$ of the Hamiltonian and force operators after a Jordan-Wigner transformation. For the calculations, canonical Hartree-Fock orbitals were used. (a) An one-dimensional chain of $6$ hydrogen atoms with spacing of $0.74084$\r{A}. Only forces in direction of the connecting line are shown. (b) For a single water molecule, with the geometry obtained from \cite{johnson2020nist}.}
\label{fig:pauli_structure}
\end{figure*}

\end{document}